\let\bigopsize\bigoplus
\def\bigominus{{\scalerel*{\boldsymbol\ominus}{\bigopsize}}}
\newcolumntype{L}{>{$}r<{$}} 
\newcolumntype{C}{>{$}c<{$}} 
\newcommand{\Z}{\mathbb Z}
\newcommand{\T}{\mathcal T}
\newcommand{\R}{\mathbb R}
\newcommand{\Q}{\mathbb Q}
\newcommand{\N}{\mathbb N}
\newcommand{\Qlp}{\mathcal Q}
\newcommand{\Ztpm}{\mathcal Z}
\newcommand{\tZ}{\tilde{\Ztpm}}
\newcommand{\bR}{\mathbf R}
\newcommand{\tQlp}{\tilde{\Qlp}}
\newcommand{\im}{\operatorname{im}}
\newcommand{\Hom}{\operatorname{Hom}}
\newcommand{\HomZ}{\Hom_{\Z}}
\newcommand{\HomlocZ}{\Hom^{\mathrm{loc}}_{\Z}}
\newcommand{\cA}{\mathcal A}
\newcommand{\cR}{\mathcal R}
\newtheorem{theorem}{Theorem}[section]
\newtheorem{proposition}[theorem]{Proposition}
\newtheorem{lemma}[theorem]{Lemma}
\newtheorem{corollary}[theorem]{Corollary}
\theoremstyle{definition}
\newtheorem{defn}[theorem]{Definition}
\newtheorem{remark}[theorem]{Remark}
\newtheorem{example}[theorem]{Example}
\newcommand{\m}{m}
\newcommand{\id}{\mathrm{id}}
\newsavebox{\@brx}
\newcommand{\llangle}[1][]{\savebox{\@brx}{\(\m@th{#1\langle}\)}%
  \mathopen{\copy\@brx\kern-0.5\wd\@brx\usebox{\@brx}}}
\newcommand{\rrangle}[1][]{\savebox{\@brx}{\(\m@th{#1\rangle}\)}%
  \mathclose{\copy\@brx\kern-0.5\wd\@brx\usebox{\@brx}}}
\newcommand{\Tor}{\mathrm{Tor}}
\begin{document}

\title{Excitation-detector principle and the algebraic theory of planon-only abelian fracton orders}

\author{Evan Wickenden}
\affiliation{Department of Physics, University of Colorado, Boulder, CO 80309, USA}
\affiliation{Center for Theory of Quantum Matter, University of Colorado, Boulder, CO 80309, USA}
\author{Wilbur Shirley}
\affiliation{Kadanoff Center for Theoretical Physics, University of Chicago, Chicago, IL 60637, USA}
\affiliation{School of Natural Sciences, Institute for Advanced Study, Princeton, NJ}
\author{Agn\`es Beaudry}
\affiliation{Department of Mathematics, University of Colorado, Boulder, CO 80309, USA}
\author{Michael Hermele}
\affiliation{Department of Physics, University of Colorado, Boulder, CO 80309, USA}
\affiliation{Center for Theory of Quantum Matter, University of Colorado, Boulder, CO 80309, USA}
\date{\today}

\begin{abstract}

We study abelian planon-only fracton orders: a class of three-dimensional (3d) gapped quantum phases in which all fractional excitations are abelian particles restricted to move in planes with a common normal direction. In such systems, the mathematical data encoding fusion and statistics comprises a finitely generated module over a Laurent polynomial ring $\mathbb{Z}[t^\pm]$ equipped with a quadratic form giving the topological spin. The principle of remote detectability requires that every planon braids nontrivially with another planon. While this is a necessary condition for physical realizability, we observe -- via a simple example -- that it is not sufficient. This leads us to propose the \emph{excitation-detector principle} as a general feature of gapped quantum matter. For planon-only fracton orders, the principle requires that every detector -- defined as a string of planons extending infinitely in the normal direction -- braids nontrivially with some finite excitation.  We prove this additional constraint is satisfied precisely by perfect theories of excitations -- those whose quadratic form induces a perfect Hermitian form. To justify the excitation-detector principle, we consider the 2d abelian anyon theory obtained by spatially compactifying a planon-only fracton order in a transverse direction.  We prove the compactified 2d theory is modular if and only if the original 3d theory is perfect, showing that the excitation-detector principle gives a necessary condition for physical realizability that we conjecture is also sufficient. A key ingredient is a structure theorem for finitely generated torsion-free modules over $\mathbb{Z}_{p^k} [t^\pm]$, where $p$ is prime and $k$ a natural number. Finally, as a first step towards classifying perfect theories of excitations, we prove that every theory of prime fusion order is equivalent to decoupled layers of 2d abelian anyon theories.
\end{abstract}

\maketitle

\tableofcontents


\section {Introduction}
\label{sec:intro}

Fracton orders in three spatial dimensions lie at the frontier of quantum matter \cite{Chamon_2005, Haah_2011, Pretko_2017,Nandkishore_2019, Pretko_2020_review, You_2025} .  Among other challenges, we lack a general theory of fracton order that efficiently encodes universal properties free of microscopic details.  A successful model is the algebraic theory of anyons \cite{Moore_1989,Kitaev_2006}, a mathematical structure describing the fusion and braiding of excitations in a two-dimensional (2d) topologically ordered system, which takes a particularly simple form in abelian anyon systems \cite{Belov_2005, Stirling_2008, Kapustin_2011, Galindo_2016, Lee_2018}. It is an important problem to develop a similar algebraic theory of abelian fracton orders that characterizes the universal properties of excitations above the ground state.

The defining characteristic of fracton orders is the presence of restricted-mobility point-like excitations.  These can be excitations restricted to move within a plane (planons), along a line (lineons), or that are completely immobile when isolated from other excitations (fractons).  In this paper, we propose an algebraic theory of fracton orders in the special case where all the nontrivial excitations are abelian planons of finite fusion order.\footnote{The fusion order $n$ of an excitation is the smallest natural number such that a composite of $n$ copies of the excitation is a trivial (\emph{i.e.} locally-createable) excitation.}  While naively this may seem like an oversimplification, planon-only fracton orders are interesting because they include examples that are not simply decoupled layers of 2d topological orders \cite{MacDonald_1989, MacDonald_1990, Sondhi_2000, Sondhi_2001,Shirley_2020, Ma_2022}.

Our algebraic theory is comprised of data capturing fusion, mobility and statistical properties of planon excitations.  While the data that should be included is clear, and some of the conditions are straightforward generalizations from the theory of abelian anyons, we identify a new condition that is necessary for physical realizability. This condition originates from our observation that the principle of remote detectability \cite{Kitaev_2006, Lan_2018}, which plays a fundamental role in the theory of topological orders, needs to be upgraded for fracton orders to a property that we dub the \emph{excitation-detector principle}.

While we focus on systems with only planon excitations, our results have important consequences for more general fracton orders.  First, we expect the excitation-detector principle to play an essential role in developing more general algebraic theories of gapped fracton matter.   Second, as an application of our algebraic theory, we obtain a result with consequences for entanglement renormalization group (RG) flows of p-modular (planon-modular) fracton orders, which are a simple class of fracton orders recently introduced in Ref.~\onlinecite{Wickenden_2024}. The class of p-modular fracton orders includes examples such as Chamon's model \cite{Chamon_2005}, the X-cube model \cite{Vijay_2016}, and the four color cube (FCC) model \cite{Ma_2017}, but does not encompass Haah's cubic code \cite{Haah_2011} and other type II fracton orders.  It was shown previously that if a p-modular fracton order admits an entanglement RG fixed point, the system that is integrated out in the course of an RG step must be a planon-only fracton order \cite{Wickenden_2024}.  If this system is a stack of decoupled 2d layers, then by definition we have a foliated RG in the sense of Ref.~\onlinecite{Shirley_2018}, and we say that the fracton order is foliated.  In this paper, we prove that if a planon-only fracton order has prime fusion order $p$ for all nontrivial excitations, then the fracton order is a stack of 2d layers.  Therefore, a p-modular fracton order where all excitations have prime $p$ fusion order is a foliated fracton order (if it admits an entanglement RG fixed point at all).  This result also tells us that the interesting examples of planon-only fracton order, \emph{i.e.} systems that are not simply decoupled 2d layers, must have excitations of composite fusion order.

To set the stage for a more detailed overview of our main results, we first review the algebraic theory of abelian anyons \cite{Belov_2005, Stirling_2008, Kapustin_2011, Galindo_2016, Lee_2018}, focusing throughout on bosonic systems; see Sec.~\ref{sec:theories} for formal definitions. For a given system one specifies two pieces of data, a finite abelian group $\cA$, and a quadratic form $\theta : \cA \to \Q / \Z$.\footnote{To fully characterize a 2d gapped phase, one also needs to include the chiral central charge.  We omit this piece of data because we are only concerned with the fusion and braiding properties of the anyon excitations.}  Elements of $\cA$ are superselection sectors (also referred to as topological charges or particle types), which are equivalence classes of point-like excitations under the action of local operators.  The group operation is fusion of excitations, and the identity element is the class of locally createable excitations. The quadratic form gives the topological spin of an excitation $x \in \cA$ by $e^{2\pi i \theta(x)}$.  There is also a symmetric bilinear form $b(x,y) = \theta(x+y) - \theta(x) - \theta(y)$ giving the mutual braiding statistics of two excitations $x,y \in \cA$ by $e^{2\pi i b(x,y)}$.

The data $(\cA, \theta)$ obeys a condition coming from the following:
\begin{framed}
\noindent
\textbf{Principle of remote detectability.} Any nontrivial (\emph{i.e.} non-locally-createable) excitation can be detected by acting with some operator whose support lies very far from the excitation 
 \cite{Kitaev_2006, Lan_2018}.
\end{framed}
\noindent While this principle was formulated in the literature for topologically ordered systems, it is believed to hold very generally in gapped quantum systems, including fracton orders.  The intuition is that if an excitation is not locally createable, then it must be created by a non-local operator extending outside the region where the excitation itself is supported.  The action of the non-local operator should be detectable by some other operator, supported far away from the excitation in question, that does not commute with the non-local operator.  In anyon systems, the operators we need for remote detection are anyon string operators, and remote detectability reduces to braiding nondegeneracy, the statement that every nontrivial excitation should be remotely detectable by braiding. That is, given a nonzero $x \in \cA$, there exists $y \in \cA$ such that $b(x,y) \neq 0$. 
The mathematical term for this property is nondegeneracy of $\theta$ (and $b$), which is also referred to as modularity in this context.  It is known that every modular theory of abelian anyons can be realized by a multi-component ${\rm U}(1)$ Chern-Simons theory \cite{Wall_1963, Wall_1972, Nikulin_1980, Wang_2020}.  Moreover, it is strongly believed that a theory of abelian anyons can be realized by a quantum system with a gapped, local Hamiltonian in two dimensions if and only if the theory is modular.

Now, and throughout this paper, we consider abelian fracton orders in three-dimensional bosonic systems with a gap to all excitations.  We further focus on the case where all nontrivial excitations are point-like.  This excludes hybrid fracton orders \cite{Tantivasadakarn_2021}, which have extended loop excitations, but includes most of the examples studied in the literature.  Abelian fracton orders are those where the set of superselection sectors, denoted $S$, is an abelian group under fusion.  Unlike in theories of abelian anyons, $S$ is infinite, and is typically not even finitely generated. In order to describe the mobility of excitations in a precise manner, we impose three-dimensional lattice translation symmetry, with symmetry group $\T \cong \Z^3$.  Translations act on $S$, and this makes $S$ into a module over $\Z \T$, the group ring of $\T$ with integer coefficients.  Translation symmetry plays a subtle role in the theory of fracton orders; see Sec.~\ref{subsec:definitions} for a brief discussion.  It is well-established that the $\Z\T$-module $S$ encodes the fusion and mobility properties of excitations \cite{Haah_2013, Pai_2019} .

Braiding and exchange have been generalized in various ways to fractons and other restricted-mobility excitations, which thus have properties analogous to self and mutual statistics of anyons \cite{Song_2019, Pai_2019, You_2020, Song_2024}.  Unlike for theories of abelian anyons, it is not yet understood in general how to encapsulate these statistical properties in an algebraic theory.  However, this is clear in the case of planon-only fracton orders, because self and mutual statistics of planons behave as for anyons in 2d.  Any planon is mobile under translations that span a two-dimensional subspace of $\R^3$; we call this subspace the planon's orientation.  In planon-only fracton orders, all the nontrivial excitations are planons of the same orientation, \emph{i.e.} they all move within parallel planes.\footnote{If all the nontrivial excitations of an abelian fracton order are planons, it follows from Proposition~A.10 of Ref.~\onlinecite{Wickenden_2024} that all the planons have the same orientation.}  The data of an algebraic theory of planon-only fracton orders should thus consist of the pair $(S, \theta)$, where $S$ is the module of superselection sectors, and $\theta : S \to \Q / \Z$ is a quadratic form giving the topological spin of each planon as above.  As for abelian anyons, the quadratic form determines a symmetric bilinear form $b : S \times S \to \Q / \Z$ giving the mutual statistics between planons.  We assume that all planons are of finite  order under fusion; that is, for each $x \in S$ there is some $n \in \N$ with $n x = 0$.  Without this assumption, $\theta$ could take values in $\R / \Z$; indeed, there are interesting planon-only fracton orders with planons of infinite fusion order and irrational statistics \cite{MacDonald_1989,MacDonald_1990,Sondhi_2000, Sondhi_2001, Ma_2022, Chen_2023}. We will leave consideration of these examples to future work.

Some of the conditions that the data $(S,\theta)$ should satisfy are also clear.  First, the module $S$ should be a finitely generated $\Z\T$-module, even though it is not a finitely generated abelian group; this amounts to the property that only a finite number of excitation types can fit into a finite spatial volume, which is certainly true in lattice spin models where the Hilbert space associated with each crystalline unit cell is finite-dimensional.  Second, spatial locality demands that the mutual statistics between two planons should approach zero in the limit of large transverse spatial separation.  Finally, the principle of remote detectability requires that $\theta$ is nondegenerate.  Just like for abelian anyons in 2d, planons should be detectable via braiding with other planons, and remote detectability is equivalent to braiding nondegeneracy.

At this point, we might expect that every theory $(S, \theta)$ satisfying the above conditions can be realized by some quantum system. This is a reasonable expectation, because the principle of remote detectability is enough to ensure physical realizability for theories of 2d abelian anyons.  Surprisingly, the expectation is incorrect.  In Section~\ref{sec:example}, we give an example of a theory $(S, \theta)$ satisfying the above properties, including nondegeneracy of $\theta$, which cannot be realized.  The reason is that, upon compactifying to 2d space, we obtain a non-modular theory of anyons.  This raises the following question:  what is the missing ingredient needed to ensure that a theory of planon-only fracton order is physically realizable?

In fact, this is a special case of a more general question about the p-modular fracton orders mentioned above.  The defining property of a p-modular fracton order, dubbed p-modularity, is that every nontrivial point-like excitation (not necessarily a planon) can be detected by braiding with a planon; this property implies that the principle of remote detectability holds.  In planon-only fracton orders, p-modularity, braiding nondegeneracy and the principle of remote detectability are all identical.  In Ref.~\onlinecite{Wickenden_2024}, the module $S$ and data of braiding between planons and arbitrary excitations were packaged into an algebraic structure dubbed a p-theory, which in planon-only fracton orders essentially reduces to the pair $(S,b)$.
It would be reasonable to expect that every p-modular p-theory is realizable because remote detectability is satisfied, but the example of Sec.~\ref{sec:example} shows this is not the case.

In this paper, we propose that fracton orders should satisfy the following, which subsumes the principle of remote detectability:
\begin{framed}
\noindent
\textbf{Excitation-detector principle.}  Every nontrivial excitation can be detected by some detector, where detectors are operators whose support lies at spatial infinity; this is the principle of remote detectability.  \emph{In addition, every detector is effective, meaning that it detects at least one nontrivial excitation.}
\end{framed}
\noindent  Here we introduce the notion of detectors, which are certain operators supported at spatial infinity; this clearly needs a precise definition for the excitation-detector principle to be applied.  In this paper, we do not give a general definition, but in the case of planon-only fracton orders we identify the detectors as string operators of planons that may have infinite spatial support in the transverse direction.  The module of such planon excitations is denoted $\widetilde{S}$ (see Sec.~\ref{sec:infinite} for a precise definition), and we identify detectors with elements of $\widetilde{S}$.   The crucial new feature as compared to the  principle of remote detectability is the additional condition on effectiveness of detectors, which turns out to be superfluous in the case of abelian anyons in 2d.  For planon-only fracton orders, this condition becomes nondegeneracy of a bilinear form $\tilde{b} : S \times \widetilde{S}  \to \Q / \Z$, which records the mutual statistics between finite and infinite planon excitations.

To substantiate our proposal and obtain a convenient algebraic theory of planon-only fracton orders, we prove the following theorem, whose second and third enumerated properties are explained below.

\begin{theorem}[Informal]
Given a planon-only fracton order with data $(S, \theta)$, where $\theta$ is nondegenerate, the following are equivalent:
\begin{enumerate}
\item The excitation-detector principle holds.  Precisely, the bilinear form $\tilde{b} : S \times \widetilde{S}  \to \Q / \Z$ is nondegenerate.
\item There exists some $N_0 \in \N$ such that for all $N \geq N_0$, the compactified theory $(S_N, \theta_N)$ is a modular theory of 2d abelian anyons.
\item The theory $(S,\theta)$ is perfect.
\end{enumerate}
\label{thm:main-intro}
\end{theorem}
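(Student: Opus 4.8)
The plan is to take the purely algebraic condition~(3), perfectness, as the hub, and to show that the excitation--detector principle~(1) and modularity after compactification~(2) each encode it through a different functor applied to $S$; I would prove $(3)\Leftrightarrow(1)$ and $(3)\Leftrightarrow(2)$ separately. First I would repackage the braiding data over $\Z[t^{\pm}]$. Using the decay hypothesis, the values $b(x,t^{n}y)$ vanish for $|n|$ large, so $\overline{b}(x,y):=\sum_{n}b(x,t^{n}y)\,t^{-n}$ is a finite sum; it lands in the Laurent polynomials $(\Q/\Z)[t^{\pm}]\subseteq\Q(t)/\Z[t^{\pm}]$ (the two agree on modules of bounded exponent, which $S$ has), is Hermitian for the involution $t\mapsto t^{-1}$, carries the quadratic refinement induced by $\theta$, and recovers $b$ as the coefficient of $t^{0}$. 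Its adjoint is a homomorphism $\beta\colon S\to \bHom_{\Z[t^{\pm}]}\!\big(S,\Q(t)/\Z[t^{\pm}]\big)$, and sending a $\Z[t^{\pm}]$-linear functional to its $t^{0}$-component identifies the target canonically with the module $\HomlocZ(S,\Q/\Z)$ of finite-range functionals. One checks that $\beta$ is injective exactly when $\theta$ is nondegenerate -- which is assumed -- so perfectness is precisely surjectivity of $\beta$, i.e.\ $\operatorname{coker}(\beta)=0$. Two preparations complete the setup: (a) decay and nondegeneracy of $\theta$ force the support of $S$ in $\operatorname{Spec}\Z[t^{\pm}]$ to miss every cyclotomic curve $V(\Phi_{d})$ -- localizing there and applying Nakayama produces $x\neq0$ with $\Phi_{d}(t)x=0$, hence $(t^{d}-1)x=0$, so $n\mapsto b(x,t^{n}y)$ is $d$-periodic and therefore identically zero for all $y$, contradicting nondegeneracy; consequently $t^{N}-1$ is a nonzerodivisor on $S$, every $S_{N}:=S/(t^{N}-1)S$ is finite, and the $\Tor$ groups that appear below vanish; (b) the structure theorem for finitely generated torsion-free $\Z_{p^{k}}[t^{\pm}]$-modules, applied after the primary decomposition $S=\bigoplus_{p}S_{p}$ and a reduction using~(a), puts $(S,\overline{b})$ into an explicit normal form on which all claims below are verified summand by summand.

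\emph{Step $(3)\Leftrightarrow(1)$.} The detector module $\widetilde{S}$ is a completion of $S$ by bi-infinite planon strings, and $\widetilde{b}$ extends $b$; since $S\subseteq\widetilde{S}$ and $\theta$ is already nondegenerate, nondegeneracy of $\widetilde{b}$ amounts to the single assertion that no nontrivial detector is invisible, i.e.\ that the map $\widetilde{S}\to\HomZ(S,\Q/\Z)$ induced by $\widetilde{b}$ is injective. The first task is to identify this map -- again using the absence of cyclotomic support -- with the natural comparison from the $\Z[t^{\pm}]$-dual $\bHom_{\Z[t^{\pm}]}(S,\Q(t)/\Z[t^{\pm}])$ to the Pontryagin dual $\HomZ(S,\Q/\Z)$. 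Since $\beta$ factors through this comparison, surjectivity of $\beta$ forces the comparison injective, giving $(3)\Rightarrow(1)$ formally. For the converse I would use the normal form: on each elementary summand one writes out $\widetilde{S}$, $\widetilde{b}$ and $\beta$ explicitly and exhibits, from any nonzero class of $\operatorname{coker}(\beta)$, a nonzero detector paired trivially with all of $S$, so that $\neg(3)\Rightarrow\neg(1)$.

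\emph{Step $(3)\Leftrightarrow(2)$.} One shows that the theory obtained by compactifying the normal direction to circumference $N$ has underlying group $S_{N}$ and braiding form $b_{N}(\bar x,\bar y)=\sum_{k\in\Z}b(x,t^{kN}y)$, which is the $t^{0}$-coefficient of $\overline{b}$ reduced modulo $t^{N}-1$ and is well defined on $S_{N}$ because $\overline{b}$ is sesquilinear. Modularity of $(S_{N},\theta_{N})$ is then the vanishing of the radical of this reduced form, equivalently the statement that $\beta$ remains an isomorphism after $-\otimes_{\Z[t^{\pm}]}\Z[t^{\pm}]/(t^{N}-1)$, the two sides being matched by the adjunction $\bHom_{\Z[t^{\pm}]/(t^{N}-1)}\!\big(S_{N},(\Q/\Z)[t^{\pm}]/(t^{N}-1)\big)\cong\HomZ(S_{N},\Q/\Z)$. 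If $\beta$ is an isomorphism, the $\Tor$-vanishing from preparation~(a) makes this base change an isomorphism for all $N$ outside a finite exceptional set determined by $S$ and the range of $\overline{b}$ -- this set is $N_{0}$; conversely, if $\operatorname{coker}(\beta)\neq0$ it is a nonzero finitely generated torsion module supported at finitely many closed points of $\operatorname{Spec}\Z[t^{\pm}]$, and for every $N$ in a suitable infinite arithmetic progression the degeneracy persists modulo $t^{N}-1$, so $b_{N}$ is degenerate for arbitrarily large $N$ and (2) fails. Both halves again reduce, via the structure theorem, to the classical computation of linking forms on cyclic $\Z_{p^{k}}[t^{\pm}]/(g)$-modules and their quotients by $t^{N}-1$.

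\emph{Main obstacle.} The essential difficulty is the structure theorem over $\Z_{p^{k}}[t^{\pm}]$ together with the explicit description of $\overline{b}$, its dual, and its base changes on the elementary building blocks: because $\Z_{p^{k}}$ is not a domain, neither an elementary-divisor normal form nor classical Blanchfield-duality arguments are available off the shelf, and tracking which $N$ are exceptional -- the content of $N_{0}$ -- amounts to controlling the interaction of those building blocks with reduction modulo $t^{N}-1$. By comparison, the two bridges -- identifying $(\widetilde{S},\widetilde{b})$ with a $\Z[t^{\pm}]$-duality pairing, and identifying $b_{N}$ with the reduction of $\overline{b}$ modulo $t^{N}-1$ -- together with the implications $(3)\Rightarrow(1),(2)$, are essentially formal once the definitions of $\widetilde{S}$ and of compactification are in hand.
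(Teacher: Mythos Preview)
Your overall architecture --- taking perfectness as the hub, reducing to prime-power fusion order, and invoking a structure theorem over $\Z_{p^k}[t^{\pm}]$ --- matches the paper's. But two load-bearing steps are not right as you have them.

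First, the ``normal form'' you invoke does not exist in the form you need. You expect to decompose $(S,\overline{b})$ into summands and reduce to ``linking forms on cyclic $\Z_{p^k}[t^{\pm}]/(g)$-modules.'' For $k=1$ that is fine since $\Z_p[t^{\pm}]$ is a PID, but for $k\geq 2$ the ring $\Z_{p^k}[t^{\pm}]$ has zero divisors and no elementary-divisor decomposition is available. The paper's structure theorem (Section~\ref{sec:structure}) provides only a \emph{filtration} of $S$ whose associated graded pieces are free over the PID $A=\Z_p[t^{\pm}]$, together with a filtered basis $\{m^s_i\}$ and an explicit dual filtered basis $\{\lambda^s_i\}$ for $S^*$. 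The key technical steps --- surjectivity of $\rho_{N*}\colon S^*\to\Hom_{R_N}(S_N,R_N)$ (Lemma~\ref{lem:rhostar-surjective}) and the contrapositive $\neg(3)\Rightarrow\neg(2)$ --- are run on these filtrations, not summand by summand. In particular, the latter finds the obstruction to perfectness as a non-unit Smith entry $\Delta_1$ of one of the induced maps $\Phi_t\colon S_t\to S^*_t$ on associated gradeds (this is where a PID is genuinely available), and then produces $\xi_N\in R$ with $\xi_N\sigma(\Delta_1)=1-t^N$ for an arithmetic progression of $N$ via Lemma~\ref{lem:xi}. Your claim that ``$\operatorname{coker}(\beta)$ is a torsion module supported at finitely many closed points'' presumes $R$ is a domain; the paper sidesteps this by descending to the graded over $A$, where it \emph{is}.

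Second, your $(3)\Rightarrow(1)$ step is misformulated. You want to identify the detector map $\widetilde{S}\to\Hom_{\Z}(S,\Q/\Z)$ with the comparison $S^*\to\Hom_{\Z}(S,\Q/\Z)$, but the sources are not isomorphic: $\widetilde{S}=S\otimes_R\tilde{R}$ is a completion, and already for $S=R$ one gets $\Z_n[[t^{\pm}]]$ versus $\Z_n[t^{\pm}]$. So surjectivity of $\beta$ says nothing about injectivity on $\widetilde{S}$. The paper's argument is different: perfectness lets you write every $\lambda\in S^*$ as $B(x,-)$ for some $x\in S$, hence every $\widetilde{\lambda}=\lambda\otimes\operatorname{id}_{\tilde{R}}$ equals $\tilde{B}(x,-)$; then the dual filtered basis $\{\lambda^s_i\}$ is used to peel off the $\tilde{R}$-coefficients of an arbitrary $\tilde{y}=\sum_s x^s_i\otimes\tilde{r}^s_i$ one filtration level at a time, showing $\tilde{y}=0$. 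Your preparation~(a) is correct and amounts to the paper's Proposition~\ref{prop:nd-implies-tf} (nondegeneracy $\Rightarrow$ $S$ torsion-free), but it does not buy the identification you claim.
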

\noindent  We give the formal statement of the theorem in Sec.~\ref{sec:main-proof} as Theorem~\ref{thm:main-detailed}, adding two additional equivalent properties that are primarily technical in nature.
To explain the second property, if we compactify space by imposing periodic boundary conditions in a direction transverse to the planes of mobility, we obtain a theory of 2d anyons denoted $(S_N, \theta_N)$, where $N \in \N$ is the transverse size of the compactification.  For a given system, such compactifications are only guaranteed to be physical for sufficiently large $N$, \emph{i.e.} for $N \geq N_0$.  Property \#2 is certainly a necessary condition for physical realizability, so our theorem shows that the excitation-detector principle is also necessary, but is not guaranteed to be sufficient.  We conjecture it is both necessary and sufficient for physical realizability; establishing this is left for future work.

Perfectness, \emph{i.e.} the third property of Theorem~\ref{thm:main-intro}, is technically very useful, and is more convenient to work with than the other equivalent properties.  Schematically, the bilinear form $b$ associated with $\theta$ gives rise to a certain map $\Phi$.  Nondegeneracy is the condition that $\Phi$ is injective, while perfectness of $(S,b)$ means that $\Phi$ is an isomorphism.  Beyond technical utility, there is a direct physical interpretation of property \#3, which we give in Sec.~\ref{subsec:perfect}.

We now outline the remainder of the paper. In Section~\ref{sec:example}, we begin by giving an example of a planon-only fracton order that has braiding nondegeneracy, and thus satisfies the principle of remote detectability, but which cannot be realized physically. In Section~\ref{sec:remote}, we discuss the excitation-detector principle for general gapped quantum systems and give a heuristic discussion of the  set $D$ of detectors. We then specialize to planon-only fracton orders where we can identify $D$ precisely.

In Section~\ref{sec:theories}, we begin by recalling the mathematical framework we use to study planon-only fracton orders, reviewing background material on bilinear, sesquilinear and quadratic forms in Section~\ref{subsec:prelims}. In Section~\ref{subsec:definitions}, we define finite order planon-only p-theories and theories of excitations, the concept of perfectness discussed above, as well as other fundamental concepts used throughout the rest of the paper. A few examples of planon-only fracton orders are presented using our formalism in Section~\ref{subsec:examples}.  We provide additional physical intuition for perfectness in Section~\ref{subsec:perfect}.  Section~\ref{sec:compact} introduces the compactified theories associated to our theories of excitations; these are theories of abelian anyons obtained by forcing periodic boundary conditions in the direction transverse to the planes of mobility. Section~\ref{sec:infinite} is devoted to the study of planons with infinite support; this is where we introduce the module of detectors $\widetilde{S}$ and discuss the braiding between finitely and infinitely supported planons. A physical justification of the formal definitions of Section~\ref{subsec:infinite-formal} is given in Section~\ref{subsec:infinite-physical}. Section~\ref{sec:structure} is dedicated to the study of 
 the modules $S$ of superselection sectors of p-modular p-theories of finite fusion order. These are finitely generated and torsion free, and this section is dedicated to the study of such modules and their duals. In particular, we provide a structure theorem (Theorem~\ref{thm:f-presentations}) in the case when the fusion order is a prime power.
 
Section~\ref{sec:main-proof} is dedicated to the proof of the main result of the paper, Theorem~\ref{thm:main-intro}, restated as Theorem~\ref{thm:main-detailed}. In Section~\ref{sec:decoupled}, as a first illustration of the power of our algebraic theory of planon-only fracton orders, we prove that if a theory satisfying the equivalent conditions of Theorem~\ref{thm:main-intro} has $p x = 0$ for a prime $p$ and all $x \in S$, then the theory consists of decoupled 2d layers.  The paper concludes with a discussion of open issues in Section~\ref{sec:discussion}.  Appendix~\ref{app:local} contains the mathematical background which justifies our focus on prime power fusion order, and Appendix~\ref{app:ktheory} discusses certain $K$-theory invariants used in our study of the modules of superselection sectors.  

\section{An unphysical planon-only fracton order}
\label{sec:example}

Here we give an example of a planon-only fracton order that has braiding nondegeneracy, and thus satisfies the principle of remote detectability, but nonetheless cannot be physically realized.  We need to specify the data $(S, \theta)$.  We define $S$ as the abelian group with generators $p_i$ for $i \in \Z$, which satisfy the relations $2 p_i = 0$.  Each $p_i$ represents a planon moving in the $xy$-plane with $z = i$, with $\Z_2$ fusion due to the relation.  A general element of $S$ can be expressed as $x = \sum_{i \in \Z} a_i p_i$, with $a_i \in \Z_2$ and only finitely many $a_i$ nonzero.  Because translations within $xy$-planes act trivially on the planons, we only need to consider perpendicular translations with symmetry group $\T_\perp = \Z$.  Denoting the generator of $\T_\perp$ by $t$, we have $t p_i = p_{i+1}$.  Note that we are working in infinite three-dimensional space.  

\begin{figure}
	\includegraphics[width=0.6\textwidth]{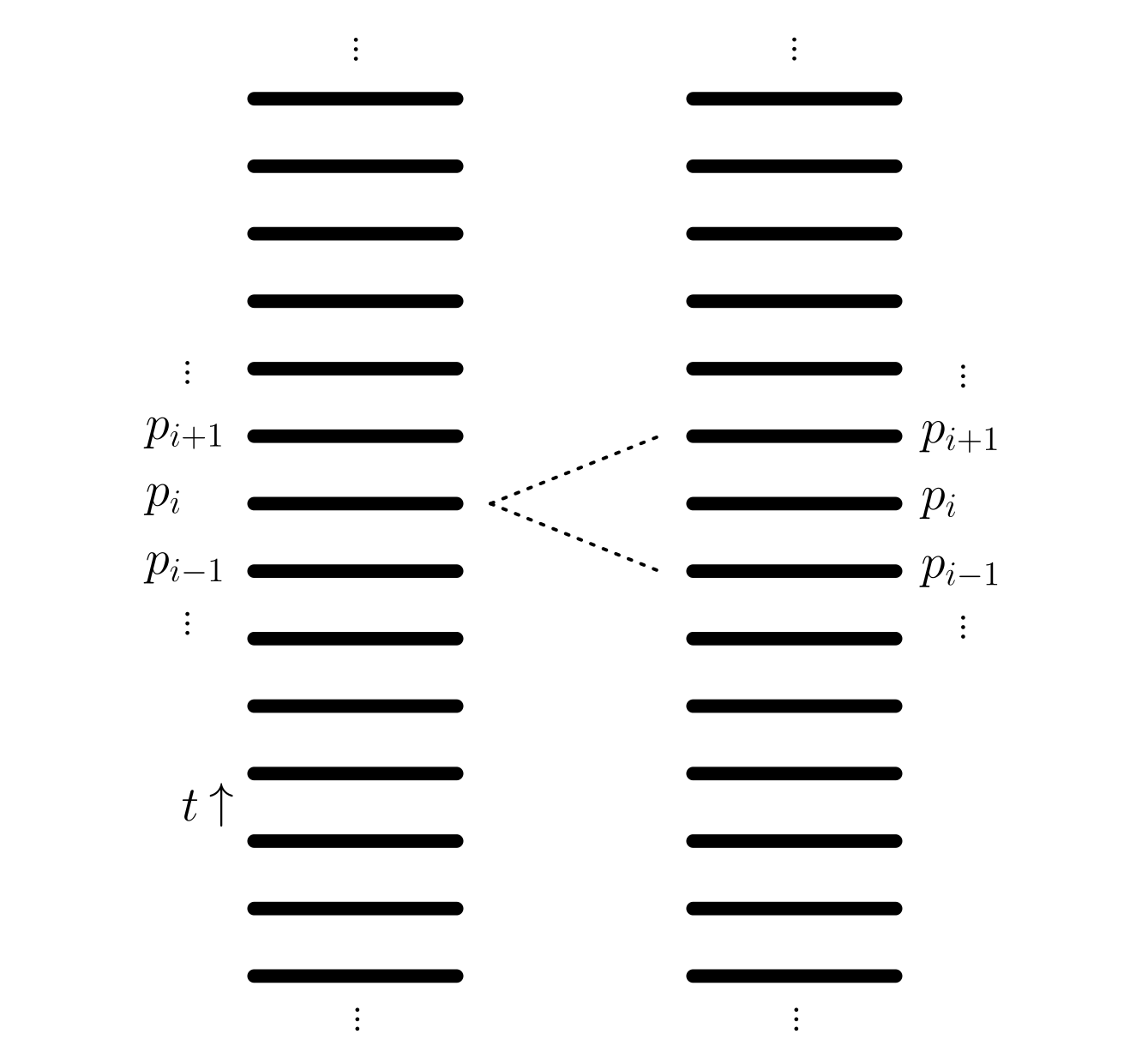}
	\caption{An unphysical planon-only fracton order. Each $p_i$ represents a planon moving in the $xy$-plane at $z = i$ (for $i \in \Z$).  The mutual statistics angle for braiding between nearest neighbors is $\pi$ (indicated by dashed lines); otherwise, the braiding is trivial.}
	\label{fig:example}
\end{figure}

To specify the statistics, we choose $\theta(p_i) = 0$ for all $i \in \Z$, and
\begin{equation}
b(p_i, p_j) = \left\{ \begin{array}{ll}
\tfrac{1}{2} , \qquad & j = i \pm 1 \\
0 , \qquad & \text{ otherwise}
\end{array}\right.  \label{eqn:example-b}
\end{equation}
This is illustrated in Fig.~\ref{fig:example}.  We recall  our convention that $\theta$ and $b$ take values in $\Q / \Z$, so $b(p_i, p_j) = 1/2$ corresponds to a mutual statistics angle of $\pi$.  By bilinearity, Equation~\ref{eqn:example-b} determines the value of $b$ on general elements of $S$.  We can also use $\theta(x+y) = \theta(x) + \theta(y) + b(x,y)$ to determine $\theta$ on all elements of $S$, and we find
\begin{equation}
\theta \Big( \sum_i a_i p_i \Big) =  \sum_{i < j} a_i a_j b(p_i, p_j) \text{,}  
\end{equation}
which can be evaluated using Equation~\ref{eqn:example-b}.
It is straightforward to check that $\theta$ satisfies the definition of a quadratic form (Definition~\ref{defn:quadratic}), and in particular 
 $b(x,y) = \theta(x+y) - \theta(x) - \theta(y)$ is satisfied for all $x,y \in S$, so $b$ is the bilinear form associated with $\theta$.
 
The theory $(S, \theta)$ satisfies the desired properties outlined in Sec.~\ref{sec:intro}.  In particular, $S$ is a finitely generated $\Z\T_\perp$-module, with any of the $p_i$ a single generator.  The bilinear form $b$ is compatible with spatial locality, with the mutual statistics vanishing between planons whose support in the $z$-direction is sufficiently far apart.  It is straightforward to check that $b$ satisfies the formal locality property given in Definition~\ref{defn:ti-local}.  Finally, $(S, \theta)$ has nondegenerate braiding.  Let $x = \sum_i a_i p_i$ be any nonzero element of $S$, and let $i_0 \in \Z$ be the largest integer for which $a_i \neq 0$.  Then $b(p_{i_0 + 1}, x) = 1/2 \neq 0$.  

We now argue that this planon-only fracton order cannot have a physical realization, despite satisfying the above properties.  We compactify space to two dimensions by imposing periodic boundary conditions in the $z$-direction.  We let $N \in \N$ be the thickness of the resulting slab along the $z$-direction, in units of the translation symmetry.  The resulting compactified theory has the group of superselection sectors $S_N \cong (\Z_2)^N$ generated by $p_0, \dots, p_{N-1}$, where $t p_{N-1} = p_0$. A general element $x \in S_N$ can thus be written $x = \sum_{i = 0}^{N-1} a_i p_i$.  The braiding statistics $b_N : S_N \times S_N \to \Q / \Z$ is still given by Equation~\ref{eqn:example-b}, interpreting the index $i + 1$ modulo $N$.  In order to guarantee that the compactification is physical, we assume $N \geq N_0$ for some sufficiently large $N_0$; see the more detailed comments on compactified theories in Sec.~\ref{sec:compact}.  We at least need $N_0 > 1$ for the compactified version of Equation~\ref{eqn:example-b} to make sense.  See Definition~\ref{defn:compact-theory} for a general formula giving the self-statistics quadratic form in the compactified theory.

We consider the excitation $y_N = \sum_{i=0}^{N-1} p_i$ of the compactified theory, which is a composite of a planon in every layer.  It is clear that $b(p_i, y_N) = 0$ for all the generators $p_i$, as illustrated in Fig.~\ref{fig:example}. Therefore $b(x, y_N) = 0$ for all $x \in S_N$.  The compactified theory is thus not modular and cannot be realized physically, hence the original theory $(S, \theta)$ must also be unphysical.

We can also reach the same conclusion without compactifying space if we allow for certain infinitely supported excitations.  We define $\widetilde{S}$ to be the module whose elements are sums $\tilde{y} = \sum_{j \in \Z} c_j p_j$, with no restriction on the $c_j$; in particular, infinitely many of the $c_j$ are allowed to be nonzero.  By locality of braiding statistics, we have a well-defined braiding between infinite excitations in $\widetilde{S}$ and finite excitations in $S$, namely $\tilde{b} : S \times \widetilde{S} \to \Q / \Z$ given by
\begin{equation}
\tilde{b}(x,\tilde{y}) = \sum_{i, j \in \Z} a_i c_j b(p_i, p_j) \text{,}
\end{equation}
where $x \in S$ is $x = \sum_{i \in \Z} a_i p_i$ and
 $\tilde{y} = \sum_{j \in \Z} c_j p_j$.  This is well-defined because only finitely many of the $a_i$ are nonzero.  Let $i_{{\rm max}}$ and $i_{{\rm min}}$ the largest and smallest values of $i$ for which $a_i \neq 0$.  Then we observe that the summand is only nonzero for $i_{{\rm min}} - 1 \leq j \leq i_{{\rm max}} + 1$.  

In parallel to the compactified case, we define the infinite excitation $\tilde{y}_{\infty} = \sum_{j \in \Z} p_j$, which again is an (infinite) composite of a planon in every layer.  This excitation is clearly nontrivial, in the sense that it cannot be created by any operator supported within an infinite cylinder extending along the $z$-direction and of finite radius in the $xy$-plane, so it should have non-vanishing statistics with some other excitation.  But it is clear that $\tilde{b}(x,\tilde{y}_\infty) = 0$ for all $x \in S$, so the theory $(S, \theta)$ has an (infinite) transparent excitation.

\section{The excitation-detector principle}
\label{sec:remote}

In this section, we discuss the principle of remote detectability and the excitation-detector principle for general gapped quantum systems.  We then specialize to planon-only fracton orders.  To begin, we consider systems in $d$ spatial dimensions with a finite range Hamiltonian and an energy gap.  Working in infinite Euclidean space, we assume there is a unique ground state.\footnote{This assumption \emph{does} hold for topological and fracton orders. The well-known ground state degeneracy is a consequence of periodic (or other nontrivial) boundary conditions.}  For simplicity, we focus on bosonic systems where all the nontrivial excitations are point-like and abelian.  We also assume discrete translation invariance with symmetry group $\T \cong \Z^d$.

As discussed in Sec.~\ref{sec:intro}, the set $S$ of superselection sectors is an abelian group and a module over $\Z\T$, the group ring of $\T$ with integer coefficients.  We also introduce the set $D$ of detectors.  We emphasize that we do not give a precise general definition of $D$, so our discussion is heuristic until we specialize to planon-only fracton orders where we can identify $D$ precisely.  In upcoming work, we will give a precise definition of $D$ that is valid more generally \cite{Shirley_inprogress}.  

Detectors are certain unitary operators whose support lies at spatial infinity.  We can think of them as large operators in a limit where the size of the operator is taken to infinity, so the operator has no support within any ball of fixed size.  Each detector thus moves excitations around at infinity, but does not create any excitations.  This implies that detectors commute when acting on the ground state, and we expect -- and assume -- that they can be chosen to commute in general, so $D$ is an abelian group.  Even though the group operation corresponds to operator multiplication, we think of $D$ as an additive group.  Translation symmetry acts on detectors, making $D$ into a module over $\Z\T$.

Remote detection is expressed through a bilinear pairing (dubbed the {\bf remote detection pairing}) between $D$ and $S$, namely
\begin{equation}
\Theta : S\times D  \to \R / \Z \text{,}
\end{equation}
where for $x \in S$ and $\hat{d} \in D$, $e^{2\pi i \Theta(x,\hat{d})}$ is the eigenvalue of $\hat{d}$ in a state containing the excitation $x$.  This does not depend on the representative state realizing a superselection sector $x$, because the support of $\hat{d}$ lies at spatial infinity.

In this language, the  principle of remote detectability says that every nontrivial excitation is detected by some detector.  That is, for any nonzero $x \in S$, there exists some $\hat{d} \in D$ such that $\Theta(x,\hat{d}) \neq 0$.  This is the statement of left-nondegeneracy of the bilinear pairing $\Theta$.  The excitation-detector principle is simply the statement that $\Theta$ is nondegenerate.  That is, $\Theta$ is both left- and right-nondegenerate, where the latter condition means that for any nonzero $\hat{d} \in D$, there is some $x \in S$ with $\Theta(x,\hat{d}) \neq 0$.  Every nontrivial detector is thus effective, in the sense that it detects some nontrivial excitation.  Here we observe a subtlety that arises in thinking about the definition of detectors.  Suppose we consider a commuting Pauli Hamiltonian, and construct a detector out of the local stabilizers that appear as terms in the Hamiltonian, using only stabilizers supported ``at infinity.''  Such an operator will not detect any excitations at all, and right-nondegeneracy of $\Theta$ would be impossible.  Therefore we only want to consider certain ``operators at spatial infinity'' as detectors.

For abelian anyons in 2d, detectors are simply anyon string operators, and we identify $D = S = \cA$.  The remote detection pairing $\Theta$ thus becomes the familiar symmetric bilinear form $b : \cA \times \cA \to \Q / \Z$ encoding mutual statistics.  By symmetry of $b$, there is no difference between remote detectability and the excitation-detector principle; the additional condition of effectiveness of detectors is superfluous.

Now we turn to planon-only fracton orders.  A first guess would be to again identify $D = S$, which would again give rise to a symmetric bilinear form $b : S \times S \to \R / \Z$.  However, if we proceed in this way, the condition that detectors are effective is again superfluous, and the excitation-detector principle does not rule out the unphysical example of Sec.~\ref{sec:example}.

Given that detectors are operators supported at spatial infinity, it is unnatural to restrict to string operators of planons of finite transverse support, where transverse refers to the common plane of mobility of the planons.  Instead we choose $D$ to be the module of planon string operators including those of infinite transverse support, and we identify $D = \widetilde{S}$, the module of possibly infinite planon excitations as defined in Sec.~\ref{sec:infinite}.  The remote detection pairing  is then $\tilde{b} : S \times \widetilde{S} \to \R / \Z$, and encodes the braiding between finite and infinite planons.  With these choices, the excitation-detector principle identifies the example of Sec.~\ref{sec:example} as unphysical, as discussed above.

Because planon mutual statistics should be local in the normal direction as discussed in Sec.~\ref{sec:intro}, it should not be necessary to work with infinite excitations and their braiding to determine whether a theory of a planon-only fracton order can be physically realized.  That is, we expect that all mutual statistics information encoded by $\tilde{b}$ is also encoded in the braiding $b$ between finite planon excitations.  Therefore, we should ask whether $b$ more directly satisfies some condition that is equivalent to the excitation-detector principle, \emph{i.e.} to nondegeneracy of $\tilde{b}$.  Indeed we find in Theorem~\ref{thm:main-intro} that the desired condition is that $(S,b)$ is a perfect p-theory, which is defined in Sec.~\ref{subsec:definitions}

In planon-only fracton orders, it is relatively straightforward to identify $D$ because we fully understand the statistical processes.  In more general fracton orders, the understanding of generalized self and mutual statistics of restricted mobility excitations is limited to specific examples.  Therefore, even for relatively simple fracton orders beyond the planon-only case, it is nontrivial to describe all  possible remote detection processes.  In fact,  $D$ can be determined given the module $S$, as will be discussed in future work \cite{Shirley_inprogress}.

\section{Theories of planon-only fracton orders}
\label{sec:theories}

The goal of this section is to introduce our formal framework for planon-only fracton orders.
We begin in Section~\ref{subsec:prelims} by reviewing concepts from the theory of bilinear, sesquilinear and quadratic forms that will be used in the rest of the paper. In Section~\ref{subsec:definitions} we introduce the formal framework, giving definitions of theory of excitations and p-theory for planon-only fracton orders. This is also where we define the notion of perfectness for such theories, one of the criteria for physical realizability. We illustrate our framework in Section~\ref{subsec:examples} by using it to describe a few known examples.  In Section~\ref{subsec:perfect}, we provide additional physical intuition for perfectness.

\subsection{Mathematical preliminaries: bilinear, sesquilinear and quadratic forms}
\label{subsec:prelims}

Here we review some standard definitions and facts on bilinear, sesquilinear and quadratic forms that will be used throughout the paper.  First we briefly establish some notational conventions.  The natural numbers $\N$ is the set $\N = \{1, 2, 3, \dots\}$, \emph{i.e.} we do not consider $0$ to be an element of $\N$.  All rings have an identity element.  For any commutative ring $R$,  we denote by $M_n(R)$ the set of $n \times n$ square matrices with elements in $R$, and let $M_{n,m}(R)$ be the set of $n \times m$ matrices with elements in $R$.

We often work with Laurent polynomial rings $R[t^\pm]$, where $R$ is a commutative ring and where the elements of $R[t^\pm]$ are polynomials of the form $f = \sum_{m \in \Z} c_m t^m$ with only finitely many $c_m \in R$ nonzero.  In particular we denote $\Ztpm = \Z[t^\pm]$. The set of Laurent polynomials with coefficients in $\Q / \Z$ is denoted $\Qlp$; this is not a ring, but it is a $\Ztpm$-module.  The coefficient of $t^m$ in a polynomial $f$ is written $(f)_m$.   The ring of {\bf ordinary polynomials} with coefficients in $R$, \emph{i.e.} those with no terms of negative degree, is denoted by $R[t]$.  

\begin{defn}
Let $f = \sum_{m \in \Z} c_m t^m$ be a Laurent polynomial; either $f \in R[t^\pm]$ or $f \in \Qlp$.  The {\bf upper degree} (resp. {\bf lower degree})  of $f$ is the largest (resp. smallest) value of $m \in \Z$ with $c_m \neq 0$.  These degrees are denoted $\deg_+ f$ and $\deg_- f$, respectively.  If $f = 0$, we define $\deg_+ f = -\infty$ and $\deg_- f = \infty$. The {\bf total degree} is the pair $\delta f = (\deg_- f, \deg_+ f)$.  If $f \neq 0$, we define the {\bf relative degree} to be $\deg f = \deg_+ f - \deg_- f$.
\label{defn:degrees}
\end{defn}

\begin{defn}
A {\bf commutative ring with involution} is a commutative ring $R$ together with a map $R \to R$ (called the involution), written $r \mapsto \overline{r}$, satisfying $\overline{r_1 + r_2} = \overline{r_1} + \overline{r_2}$, $\overline{r_1 r_2} = \overline{r_1} \, \overline{r_2}$, $\overline{1} = 1$, and $\overline{\overline{r}} = r$.
\end{defn}
\noindent Using the commutativity of $R$, the involution is clearly a ring automorphism.  Any commutative ring can be given the trivial involution $\overline{r} =r$.  When working with a Laurent polynomial ring $R[t^\pm]$, we always consider the involution defined by $\overline{a t^m} = a t^{-m}$ for all $a \in R$ and $m \in \Z$.

\begin{defn}
If $R$ is a commutative ring with involution and $M$ and $N$ are $R$-modules, a function $\phi : M \to N$ is an $R${\bf-linear map} of modules, also referred to as an $R$-module {\bf homomorphism}, if $\phi(m_1 + m_2) = \phi(m_1) + \phi(m_2)$ and $\phi( r m_1) = {r} \phi(m_1)$ for all $m_1, m_2 \in M$ and $r \in R$. The set of $R$-linear maps is denoted ${\Hom_R}(M,N)$, and made into an $R$-module by letting $(r \phi)(m) = \phi(\bar{r}  m) = \bar{r} \phi(m)$ for $r \in R$ and $\phi \in {\Hom_R}(M,N)$.

 A function $\psi : M \to N$ is an $R${\bf-antilinear map} of modules, also referred to as an $R$-module {\bf anti-homomorphism}, if $\psi(m_1 + m_2) = \psi(m_1) + \psi(m_2)$ and $\psi( r m_1) = \bar{r} \psi(m_1)$ for all $m_1, m_2 \in M$ and $r \in R$. The set of $R$-antilinear maps is denoted $\overline{\Hom_R}(M,N)$, and made into an $R$-module by letting $(r \psi)(m) = \psi(\bar{r}  m) = {r} \psi(m)$ for $r \in R$ and $\psi \in \overline{\Hom_R}(M,N)$.
  \end{defn}
\noindent Note that the action of $R$ on $\Hom_R(M, N)$ is twisted by the involution.
\begin{defn}
Let $R$ be a commutative ring with involution.  An $R${\bf-module with involution} $M$ is an $R$-module equipped with an $R$-antilinear isomorphism $M \to M$, written $m \mapsto \overline{m}$ for $m \in M$, which is its own inverse, \emph{i.e.} $\overline{\overline{m}} = m$ for all $m \in M$.
\end{defn}

The proof of the following proposition is routine:
\begin{proposition} 
\label{prop:iso-duals}
If $R$ is a commutative ring with involution, $M$ is an $R$-module, and $L$ is an $R$-module with involution, then $\Hom_R(M, L) \cong \overline{\Hom_R}(M,L)$ as $R$-modules. The $R$-linear isomorphism $\Hom_R(M, L) \to \overline{\Hom_R}(M,L)$ and its inverse are both given by $\phi \mapsto \overline{\phi}$, where by definition $\overline{\phi}(m) = \overline{\phi(m)}$.  
\end{proposition}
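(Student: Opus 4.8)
The plan is to verify directly that the assignment $\phi \mapsto \overline{\phi}$, with $\overline{\phi}(m) := \overline{\phi(m)}$, sends $\Hom_R(M,L)$ into $\overline{\Hom_R}(M,L)$ and (by the symmetric recipe) sends $\overline{\Hom_R}(M,L)$ back into $\Hom_R(M,L)$, and then to check that these two maps are mutually inverse $R$-linear bijections. Every step is a matter of unwinding the definition of the involution on $L$ and the twisted $R$-module structures on the two Hom-spaces; there is no real content beyond bookkeeping.

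First I would show that $\overline{\phi} \in \overline{\Hom_R}(M,L)$ whenever $\phi \in \Hom_R(M,L)$. Additivity of $\overline{\phi}$ follows from additivity of $\phi$ and additivity of the involution on $L$, since $\overline{\phi}(m_1+m_2) = \overline{\phi(m_1)+\phi(m_2)} = \overline{\phi(m_1)}+\overline{\phi(m_2)}$. For the antilinearity, using that $L\to L$ is $R$-antilinear and $\phi$ is $R$-linear, $\overline{\phi}(rm) = \overline{\phi(rm)} = \overline{r\,\phi(m)} = \overline{r}\,\overline{\phi(m)} = \overline{r}\,\overline{\phi}(m)$. The identical computation, now invoking antilinearity of $\psi$ instead of linearity, shows $\overline{\psi} \in \Hom_R(M,L)$ whenever $\psi \in \overline{\Hom_R}(M,L)$. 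Next I would note that the two assignments are inverse to one another: $\overline{\overline{\phi}}(m) = \overline{\,\overline{\phi}(m)\,} = \overline{\,\overline{\phi(m)}\,} = \phi(m)$ by the defining property $\overline{\overline{\ell}}=\ell$ of the involution on $L$, so $\phi\mapsto\overline{\phi}$ is a bijection with inverse $\psi\mapsto\overline{\psi}$.

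Finally I would verify $R$-linearity of $\Phi\colon \Hom_R(M,L)\to\overline{\Hom_R}(M,L)$, $\Phi(\phi)=\overline{\phi}$. Additivity is immediate. For the scalar action, one recalls that $(r\phi)(m) = \overline{r}\,\phi(m)$ in $\Hom_R(M,L)$ while $(r\psi)(m) = r\,\psi(m)$ in $\overline{\Hom_R}(M,L)$; then $\Phi(r\phi)(m) = \overline{(r\phi)(m)} = \overline{\overline{r}\,\phi(m)} = \overline{\overline{r}}\;\overline{\phi(m)} = r\,\overline{\phi}(m) = \bigl(r\,\Phi(\phi)\bigr)(m)$, using antilinearity of the involution on $L$ once more. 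The same argument applied to $\psi\mapsto\overline{\psi}$ shows the inverse map is $R$-linear too, giving the claimed isomorphism.

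The only point demanding care — and the sole realistic source of error — is that the $R$-actions on the two Hom-modules are twisted by the involution in \emph{opposite} ways, so one must track consistently where the conjugation bars land; otherwise the argument is entirely routine and presents no obstacle.
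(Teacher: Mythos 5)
Your verification is correct, and it is exactly the routine check the paper has in mind: the paper states this proposition without proof, declaring it routine. You track the twisted module structures on the two Hom-modules correctly, which is the only delicate point.
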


Every commutative ring with involution is a module with involution over itself.  An example of a module with involution that is not a ring is $\Qlp$ with the usual involution for Laurent polynomials; it is a $\Ztpm$-module with involution. 

\begin{defn}
 \label{def:dual}
Let $R$ be a ring with involution (which could be the trivial involution), $L$ a module with involution, and $M$ an $R$-module.  The $L$-valued {\bf dual} of $M$ is $M^* = {\Hom_R}(M, L)$.  Typically the modifier ``$L$-valued'' will be clear from context and is thus omitted. 
\end{defn}

\begin{remark}
 \label{remark:duals}
In the literature it is more common to work with $R$-valued duals, but in our applications it is convenient to consider more general $L$-valued duals.  In particular, when working with $R = \Ztpm$ we usually take $L = \Qlp$, which generalizes the Pontryagin dual.  
\end{remark}

\begin{defn}
If $R$ is a commutative ring with involution, and $M$, $N$ and $L$ are $R$-modules, an $L$-valued {\bf sesquilinear form} is a function $b : M \times N \to L$ satisfying the following properties:
\begin{enumerate} 
\item For all $m,m' \in M$ and $n, n' \in N$, $b(m + m', n) = b(m,n) + b(m',n)$  and $b(m, n + n') = b(m,n) + b(m,n')$.
\item For all $r \in R$, $m \in M$ and $n \in N$, $r b(m,n) = b( m, rn) = b(\overline{r} m,  n)$.
\end{enumerate}
A function where the above properties hold for the trivial involution is called a {\bf bilinear form}.  
\end{defn}

Sometimes we refer to $R$-bilinear or $R$-sesquilinear forms to make it clear which ring we are working with.  For example, a function can be $\Z$-bilinear (\emph{i.e.} a bilinear form on abelian groups) but not $R$-bilinear, even if $M$, $N$ and $L$ are $R$-modules.

\begin{defn}
A sesquilinear form $b : M \times N \to L$ is {\bf left-nondegenerate} if for any nonzero $m \in M$, there exists $n \in N$ such that $b(m,n) \neq 0$, and is {\bf right-nondegenerate} if for any nonzero $n \in M$, there exists $m \in M$ such that $b(m,n) \neq 0$.  A sesquilinear form is {\bf nondegenerate} if it is both left- and right-nondegenerate.
\end{defn}

\begin{defn}
Let $R$ be a commutative ring.  An element $r \in R$ is {\bf regular} if it is not a zero divisor.
\end{defn}

\begin{defn}
Let $R$ be a commutative ring and $M$ an $R$-module.  An element $m \in M$ is {\bf torsion} if $r m = 0$ for some regular $r \in R$.  The module $M$ is {\bf torsion-free} if it has no nonzero torsion elements.
\end{defn}
\noindent A ring is always torsion-free as a module over itself.

The following simple result plays an important role in the proof of Theorem~\ref{thm:main-intro}, because it allows us to assume that the module of superselection sectors is torsion-free.

\begin{proposition}
 \label{prop:nd-implies-tf}
Let $R$ be a commutative ring with involution, $M$ and $N$ $R$-modules, $L$ a torsion-free $R$-module, and $b : M \times N \to L$ a nondegenerate sesquilinear form.  Then both $M$ and $N$ are torsion-free. 
\end{proposition}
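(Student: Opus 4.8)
The plan is to argue at the level of elements, establishing the contrapositive in the form: every torsion element of $M$ is annihilated by $b$ against all of $N$, and dually every torsion element of $N$ is annihilated by $b$ against all of $M$. Left- and right-nondegeneracy of $b$ will then immediately force these torsion elements to vanish, giving torsion-freeness of $M$ and of $N$ respectively.

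First I would handle $M$. Suppose $m \in M$ is torsion, so $rm = 0$ for some regular $r \in R$, and fix an arbitrary $n \in N$. Using the sesquilinearity identity $b(\overline{r}\,m',n) = r\,b(m',n)$ with $r$ replaced by $\overline{r}$ (legitimate since $\overline{\overline{r}} = r$), I get $b(rm,n) = \overline{r}\,b(m,n)$, hence $\overline{r}\,b(m,n) = 0$ in $L$. The small auxiliary point needed here is that $\overline{r}$ is again regular: since the involution is a ring automorphism, $\overline{r}s = 0$ with $s \neq 0$ would give $r\overline{s} = 0$ with $\overline{s} \neq 0$, contradicting regularity of $r$. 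Since $L$ is torsion-free and $\overline{r}$ is regular, $\overline{r}\,b(m,n) = 0$ yields $b(m,n) = 0$; as $n$ was arbitrary, left-nondegeneracy of $b$ gives $m = 0$. The argument for $N$ is the same but requires no manipulation of the involution: if $rn = 0$ for regular $r$, then for every $m \in M$ we have $r\,b(m,n) = b(m,rn) = 0$, and torsion-freeness of $L$ together with right-nondegeneracy forces $n = 0$.

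The only place that calls for care — and the one I would flag as the "obstacle," though it is minor — is the bookkeeping with the involution on the $M$-side: because the sesquilinearity axiom pulls a scalar $r$ out of the first argument as $\overline{r}$ rather than $r$, one must first record that the involution preserves the set of regular elements before invoking torsion-freeness of $L$. Beyond that, the proof is a direct unwinding of the definitions of torsion, regularity, and (left/right) nondegeneracy.
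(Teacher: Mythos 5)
Your proof is correct and follows essentially the same route as the paper: pair the torsion element against an arbitrary element of the other module, use sesquilinearity to pull out the regular scalar (with the same observation that the involution preserves regularity, needed on the $M$-side), invoke torsion-freeness of $L$, and conclude by left- or right-nondegeneracy. The only difference is presentational — you phrase it as a direct contrapositive while the paper argues by contradiction — so there is nothing further to flag.
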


\begin{proof}
Suppose $n \in N$ is a nonzero torsion element, then there exists a regular $r \in R$ with $r n = 0$.  By nondegeneracy, there exists $m \in M$ with $b(m,n) \neq 0$.  Then $r b(m,n) \neq 0$ because $L$ is torsion-free.  But $r b(m, n) = b( m, rn) = 0$, a contradiction.  The argument that $M$ is torsion-free is identical, using the fact that $r$ is regular if and only if $\overline{r}$ is regular, which follows from the fact that involution is a ring automorphism.
\end{proof}

\begin{defn}
  \label{defn:amap}
Given a sesquilinear form $b : M \times N \to L$, the {\bf associated maps} $\Phi_{\mathrm{left}} : M \to {\Hom_R}(N, L)$ and 
$\Phi_{\mathrm{right}} : N \to \overline{\Hom_R} (M,L)$ are given by $\Phi_{\mathrm{left}}(m) = b(m, \cdot)$ and $\Phi_{\mathrm{right}}(n) = b(\cdot, n)$. 
\end{defn}
\noindent Clearly both $\Phi_{\mathrm{left}}$ and $\Phi_{\mathrm{right}}$ are $R$-linear.  Moreover, $b$ is left- (resp. right-) nondegenerate if and only if $\Phi_{\mathrm{left}}$ (resp. $\Phi_{\mathrm{right}}$) is injective.

\begin{defn}
A sesquilinear form $b : M \times N \to L$  is {\bf left-perfect} if $\Phi_{\mathrm{left}}$ is an isomorphism, and is {\bf right-perfect} if $\Phi_{\mathrm{right}}$ is an isomorphism. A sesquilinear form is {\bf perfect} if it is both left- and right-perfect.
\end{defn}

\begin{defn}
Let $R$ be a commutative ring with involution and let $M, L$ be $R$-modules.  A bilinear form $b : M \times M \to L$ is {\bf symmetric} if $b(m, m') = b(m', m)$ for all $m, m' \in M$.  Now let $L$ be an $R$-module with involution.  A sesquilinear form $b : M \times M \to L$ is {\bf Hermitian} if  $b(m, m') = \overline{b(m', m)}$ for all $m, m' \in M$.  We refer to an $R$-sesquilinear Hermitian form as an $R$-Hermitian form.
\end{defn}

\noindent For symmetric bilinear forms, we have $\Phi_{\mathrm{right}} = \Phi_{\mathrm{left}}$, while for Hermitian forms we have $\Phi_{\mathrm{right}} = \overline{\Phi_{\mathrm{left}}}$.  In both cases we thus define a single associated map $\Phi \equiv \Phi_{\mathrm{left}}$, and nondegeneracy is equivalent to injectivity of $\Phi$. Similarly, perfectness is equivalent to the condition that $\Phi$ be an isomorphism.

\begin{remark}
Given an $L$-valued $R$-Hermitian form, the associated map takes values in $M^*$, \emph{i.e.} $\Phi \colon M \to M^* ={\Hom_R}(M, L)$.  We could have instead defined the associated map to be $\Phi_{\mathrm{right}}\colon M \to \overline{\Hom_R}(M, L)$.
These choices are equivalent because $\Phi_{\mathrm{right}} = \overline{\Phi_{\mathrm{left}}}$, \emph{i.e.} these two maps are related by composition with the isomorphism $\overline{\Hom_R}(M, L) \to \Hom_R(M,L)$ defined in Proposition~\ref{prop:iso-duals}.
\end{remark}

Associated maps encode exactly the same data as forms:
\begin{proposition}
  \label{prop:b-Phi}
Let $\mathfrak b$ be the set of Hermitian forms $M \times M \to L$, and let $\mathfrak F \subset \Hom_R(M, {\Hom_R}(M, L))$ consist of all maps $\Phi$ such that $\Phi(m)(m') = \overline{\Phi(m')(m)}$ for all $m, m' \in M$.   Respectively, let $\mathfrak b$ be the set of symmetric bilinear forms $M \times M \to L$, and let $\mathfrak F \subset \Hom_R(M, \Hom_R(M, L))$ consist of all maps $\Phi$ such that $\Phi(m)(m') = \Phi(m')(m)$ for all $m, m' \in M$.  In both cases, there is a bijection $\mathfrak b \to \mathfrak F$ given by taking the associated $R$-linear map of $b \in \mathfrak b$.  That is, $b \mapsto \Phi_b$ where $\Phi_b(m) = b(m, \cdot)$. 
\end{proposition}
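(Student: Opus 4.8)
The plan is to verify that the map $b \mapsto \Phi_b$ is well-defined (lands in $\mathfrak F$), is injective, and is surjective, treating the Hermitian and symmetric-bilinear cases in parallel since the arguments differ only by the presence or absence of the overline. First I would check well-definedness: given $b \in \mathfrak b$, the associated map $\Phi_b(m) = b(m,\cdot)$ is $R$-linear into $\Hom_R(M,L)$ by Definition~\ref{defn:amap} and the remark following it, and $\Phi_b$ itself is $R$-linear because $\Phi_b(rm) = b(rm,\cdot) = b(\cdot, \bar r\, m) \circ (\text{swap})$; more carefully, $\Phi_b(rm)(m') = b(rm,m') = \overline{r}\, b(m,m')$ hmm—one must track the twist in the $R$-action on $\Hom_R(M,L)$: by definition $(r\phi)(m') = \bar r\,\phi(m')$, so $(r\Phi_b(m))(m') = \bar r\, b(m,m')$, while $\Phi_b(rm)(m') = b(rm,m') = ?$. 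Using sesquilinearity property (2), $b(\bar r m, m') = \bar r\, b(m,m')$, hence $\Phi_b$ is $R$-linear once one notes the involution is its own inverse. (In the symmetric bilinear case the involution is trivial and this is immediate.) The Hermitian/symmetry condition $\Phi_b(m)(m') = \overline{\Phi_b(m')(m)}$ is literally the defining identity $b(m,m') = \overline{b(m',m)}$, so $\Phi_b \in \mathfrak F$.

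Next I would construct the inverse. Given $\Phi \in \mathfrak F$, define $b_\Phi(m,m') := \Phi(m)(m')$. Then $b_\Phi$ is additive in each slot because $\Phi$ is additive and each $\Phi(m)$ is additive; it satisfies the sesquilinear scaling axiom because $\Phi(rm)(m') = (r\Phi(m))(m') = \bar r\,\Phi(m)(m')$ gives $b_\Phi(\bar r m, m') = \bar r\, b_\Phi(m,m')$ (equivalently $b_\Phi(rm,m')=\overline{r}\,b_\Phi(m,m')$ after renaming), and $\Phi(m)(rm') = \bar r\,\Phi(m)(m')$ wait—need $b_\Phi(m,rm') = r\, b_\Phi(m,m')$; since $\Phi(m) \in \Hom_R(M,L)$ is $R$-linear in the twisted sense, $\Phi(m)(rm') = \overline{r}\,\Phi(m)(m')$, which is the correct convention matching property (2) that reads $r\, b(m,n) = b(m, rn) = b(\bar r m, n)$—so one must be careful that property (2) as stated pairs $rb(m,n)$ with $b(m,rn)$, meaning $b$ is linear (not antilinear) in the second slot and antilinear in the first. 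I would just match this against the two $R$-actions and confirm consistency. Finally the Hermitian (resp. symmetric) condition on $b_\Phi$ is exactly the condition defining $\mathfrak F$.

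Then I would observe the two assignments are mutually inverse: $b_{\Phi_b}(m,m') = \Phi_b(m)(m') = b(m,m')$ and $(\Phi_{b_\Phi})(m)(m') = b_\Phi(m,m') = \Phi(m)(m')$, both tautologically. This establishes the bijection. I would present the two cases (Hermitian, symmetric bilinear) together, remarking at the start that the symmetric case is recovered by taking $L$ to carry the trivial involution and dropping all overlines.

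The only real obstacle is bookkeeping: the $R$-action on $\Hom_R(M,L)$ is twisted by the involution (as the paper emphasizes right after the definition of $R$-linear maps), and the sesquilinear form is antilinear in its \emph{first} argument and linear in its second per property (2) of the definition. One must thread these conventions so that $\Phi_b$ genuinely lands in $\Hom_R(M, \Hom_R(M,L))$ rather than in some mix of $\Hom$ and $\overline{\Hom}$; the remark preceding Proposition~\ref{prop:b-Phi} (that $\Phi_{\mathrm{right}} = \overline{\Phi_{\mathrm{left}}}$, related by the isomorphism of Proposition~\ref{prop:iso-duals}) is what reconciles the two natural choices. Beyond this sign/twist tracking, the proof is entirely routine, which is presumably why the paper states it with only the phrase ``the proof is straightforward.''
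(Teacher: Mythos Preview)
Your proposal is correct and follows essentially the same approach as the paper's proof: verify $\Phi_b \in \mathfrak F$ via the Hermitian/symmetric identity, define the candidate inverse $b_\Phi(m,m') = \Phi(m)(m')$, check it is sesquilinear and Hermitian, and observe the two assignments are mutual inverses. Your meandering over the twist conventions contains one momentary slip (an element $\Phi(m) \in \Hom_R(M,L)$ is genuinely $R$-linear, so $\Phi(m)(rm') = r\,\Phi(m)(m')$, not $\bar r$), but you immediately flag and correct this against property~(2); the paper simply asserts ``anti-linear in the first argument and linear in the second'' without belaboring the point.
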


\begin{proof}
We prove the proposition in the Hermitian case; the proof of the symmetric bilinear case is essentially the same. Given $b \in \mathfrak b$, the associated map $\Phi_b$ satisfies $\Phi_b(m)(m') = b(m, m') = \overline{b(m',m)} = \overline{\Phi_b(m')(m)}$, so indeed $\Phi_b \in \mathfrak F$.

We define the inverse map by $\Phi \mapsto b_\Phi$, where $b_{\Phi}(m, m') = \Phi(m)(m')$.  This is anti-linear in the first argument and linear in the second, and $b_{\Phi}(m', m) = \Phi(m')(m) = \overline{\Phi(m)(m')} = \overline{b_\Phi(m, m')}$, so $b_\Phi$ is Hermitian.   It is straightforward to check this is in fact the inverse map.
\end{proof}

\begin{defn}
\label{defn:ortho-sum} 
Let $b_i : M_i \times N_i \to L$ be a finite collection of bilinear or sesquilinear forms, all valued in $L$ and indexed by $i$.  Then their {\bf orthogonal sum} $b = \perp_i b_i$ is a bilinear or sesquilinear form $b : M \times N \to L$, where $M = \bigoplus M_i$ and $N = \bigoplus N_i$, and where 
$b(m, n) = \sum_i b_i(m_i, n_i)$ for any $m \in M$ and $n \in N$.  Here $m = \sum_i m_i$ and $n = \sum_i n_i$ are the unique expressions of $m$ and $n$ in terms of elements $m_i \in M_i$ and $n_i \in N_i$.  For the orthogonal sum of two forms, we write $b = b_1 \perp b_2$. 
\end{defn}
\noindent  It is clear that if the $b_i$ are symmetric (or Hermitian), then so is their orthogonal sum $b$.

\begin{proposition}
If $b = \perp_i b_i$ as in Definition~\ref{defn:ortho-sum}, then $\Phi_{\mathrm{left}} = \bigoplus_i \Phi_{\mathrm{left},i}$ and 
$\Phi_{\mathrm{right}} = \bigoplus_i \Phi_{\mathrm{right},i}$, where $\Phi_{\mathrm{left}}$ (resp. $\Phi_{\mathrm{left},i}$) is the left associated map of $b$ (resp. $b_i$), and similarly for $\Phi_{\mathrm{right}}$ and $\Phi_{\mathrm{right},i}$.
\end{proposition}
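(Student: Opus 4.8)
The plan is to unwind the definitions of orthogonal sum and associated map and observe that the claim amounts to a statement about how a direct-sum module maps to a product of Hom-modules. Recall that $M = \bigoplus_i M_i$ and $N = \bigoplus_i N_i$, so that $\Hom_R(N, L) = \Hom_R(\bigoplus_i N_i, L) \cong \prod_i \Hom_R(N_i, L)$, and since the index set is finite this product coincides with $\bigoplus_i \Hom_R(N_i, L)$. Under this canonical identification, a homomorphism $N \to L$ is the tuple of its restrictions to each summand $N_i$. So to prove $\Phi_{\mathrm{left}} = \bigoplus_i \Phi_{\mathrm{left},i}$ it suffices to check that for each $m = \sum_i m_i \in M$, the homomorphism $\Phi_{\mathrm{left}}(m) = b(m,\cdot) \colon N \to L$ restricted to the $j$-th summand $N_j$ equals $\Phi_{\mathrm{left},j}(m_j) = b_j(m_j, \cdot)$.

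The key computational step is then immediate from the formula defining the orthogonal sum. For $n_j \in N_j$, viewed as an element of $N$ with all other components zero, Definition~\ref{defn:ortho-sum} gives $b(m, n_j) = \sum_i b_i(m_i, (n_j)_i) = b_j(m_j, n_j)$, since $(n_j)_i = 0$ for $i \neq j$. This is exactly $\Phi_{\mathrm{left},j}(m_j)(n_j)$, so the restriction of $\Phi_{\mathrm{left}}(m)$ to $N_j$ is $\Phi_{\mathrm{left},j}(m_j)$, as desired. Running this over all $j$ identifies $\Phi_{\mathrm{left}}(m)$ with the tuple $(\Phi_{\mathrm{left},i}(m_i))_i$, which is precisely the image of $m$ under $\bigoplus_i \Phi_{\mathrm{left},i}$. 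The argument for $\Phi_{\mathrm{right}}$ is word-for-word the same, swapping the roles of $M$ and $N$ and using that $\overline{\Hom_R}(M,L) \cong \prod_i \overline{\Hom_R}(M_i, L) = \bigoplus_i \overline{\Hom_R}(M_i,L)$ for the finite index set.

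I do not expect any real obstacle here; the only thing to be slightly careful about is keeping the canonical isomorphism $\Hom_R(\bigoplus_i N_i, L) \cong \bigoplus_i \Hom_R(N_i, L)$ explicit, so that the equality $\Phi_{\mathrm{left}} = \bigoplus_i \Phi_{\mathrm{left},i}$ is interpreted correctly (it is an equality only after this identification, and the identification does genuinely use finiteness of the indexing set — for an infinite family one would only get the product, not the direct sum). Everything else is a direct unwinding of Definitions~\ref{defn:ortho-sum} and~\ref{defn:amap}, so the proof is a couple of lines.
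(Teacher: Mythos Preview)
Your proposal is correct and follows essentially the same approach as the paper: identify $\Hom_R(N,L)$ with $\bigoplus_i \Hom_R(N_i,L)$ via restriction to summands, then verify componentwise that $\Phi_{\mathrm{left}}(m)|_{N_j} = \Phi_{\mathrm{left},j}(m_j)$ using the defining formula for the orthogonal sum. Your remark that the product equals the direct sum only because the index set is finite is a nice point the paper leaves implicit.
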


\begin{proof}
We give the proof for the left associated maps; the proof for the right associated maps is similar.  We have ${\Hom_R}(N,L) \cong 
\bigoplus_i {\Hom_R}(N_i,L)$, where the isomorphism is $\lambda \mapsto ( \lambda|_{N_i} )_i$.  We identify ${\Hom_R}(N,L)$ with $\bigoplus_i {\Hom_R}(N_i,L)$, so we can think of $\Phi_{\mathrm{left}}$ as a map $\Phi_{\mathrm{left}} : \bigoplus_i M_i \mapsto \bigoplus_i {\Hom_R}(N_i,L)$, where $\Phi_{\mathrm{left}}(m) = ( \Phi_{\mathrm{left}}(m)|_{N_i} )_i =  ( \Phi_{\mathrm{left},i}(m_i) |_{N_i} )_i$, because
$\Phi_{\mathrm{left}}(m)(n) = \sum_i \Phi_{\mathrm{left},i}(m_i)(n_i)$ by definition of the orthogonal sum.  
\end{proof}

\begin{proposition}\label{prop:nd-respects-stacking}
Let $b_i : M_i \times N_i \to L$ be a finite collection of bilinear or sesquilinear forms.  Their orthogonal sum $b = \perp_i b_i$ is left-nondegenerate (resp. right-nondegenerate) if and only if all the $b_i$ are left-nondegenerate (resp. right-nondegenerate).  Similarly,  $b $ is left-perfect (resp. right-perfect) if and only if all the $b_i$ are left-perfect (resp. right-perfect).
\end{proposition}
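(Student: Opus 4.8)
The plan is to reduce both claims to a single elementary fact about finite direct sums of module homomorphisms, after which the preceding proposition does essentially all of the remaining work. Recall that, by the remark following Definition~\ref{defn:amap}, a bilinear or sesquilinear form $c : A \times B \to L$ is left-nondegenerate exactly when its left associated map $\Phi_{\mathrm{left}} : A \to \Hom_R(B, L)$ is injective, and left-perfect exactly when $\Phi_{\mathrm{left}}$ is an isomorphism; the analogous characterizations hold on the right via $\Phi_{\mathrm{right}} : B \to \overline{\Hom_R}(A, L)$. By the preceding proposition, under the canonical identifications $\Hom_R(N, L) \cong \bigoplus_i \Hom_R(N_i, L)$ and $\overline{\Hom_R}(M, L) \cong \bigoplus_i \overline{\Hom_R}(M_i, L)$ (valid because the index set is finite), the associated maps of $b = \perp_i b_i$ decompose as $\Phi_{\mathrm{left}} = \bigoplus_i \Phi_{\mathrm{left}, i}$ and $\Phi_{\mathrm{right}} = \bigoplus_i \Phi_{\mathrm{right}, i}$.

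It therefore suffices to establish the following lemma: given a finite family of $R$-module homomorphisms $f_i : A_i \to B_i$, the direct sum $f = \bigoplus_i f_i : \bigoplus_i A_i \to \bigoplus_i B_i$ is injective if and only if every $f_i$ is injective, and is an isomorphism if and only if every $f_i$ is an isomorphism. For injectivity: if all $f_i$ are injective and $f(a) = 0$, then $f_i(a_i) = 0$, hence $a_i = 0$ for all $i$ and $a = 0$; conversely, precomposing $f$ with the inclusion $A_j \hookrightarrow \bigoplus_i A_i$ shows $\ker f_j \subseteq \ker f$, so injectivity of $f$ forces injectivity of each $f_j$. For surjectivity: if all $f_i$ are surjective, then writing $b = \sum_i b_i$ and choosing $a_i$ with $f_i(a_i) = b_i$ gives $f(\sum_i a_i) = b$; conversely, if $f$ is surjective, a preimage of $b_j \in B_j$ (viewed in $\bigoplus_i B_i$) has $j$-th component mapping to $b_j$ under $f_j$. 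Since a bijective module homomorphism has a module-homomorphism inverse, $f$ is an isomorphism precisely when each $f_i$ is bijective, i.e.\ an isomorphism.

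Combining the two ingredients yields the result: $b$ is left-nondegenerate $\iff$ $\Phi_{\mathrm{left}}$ is injective $\iff$ every $\Phi_{\mathrm{left}, i}$ is injective $\iff$ every $b_i$ is left-nondegenerate, and replacing ``injective'' by ``an isomorphism'' throughout gives the statement for left-perfectness; the right-handed versions are identical, with $\Phi_{\mathrm{right}, i}$ and $\overline{\Hom_R}$ in place of $\Phi_{\mathrm{left}, i}$ and $\Hom_R$. I do not anticipate a genuine obstacle here, since this is a routine corollary of the decomposition of the associated maps; the only point meriting care is that the identification of a $\Hom$ out of a direct sum with the direct sum of the $\Hom$'s relies on finiteness of the index set (for an infinite family it would instead be a direct product), which is part of the hypothesis.
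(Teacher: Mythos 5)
Your proof is correct and follows essentially the same route as the paper's: both reduce nondegeneracy and perfectness to injectivity/bijectivity of the associated maps, invoke the decomposition $\Phi_{\mathrm{left}} = \bigoplus_i \Phi_{\mathrm{left},i}$ (and likewise on the right) from the preceding proposition, and then use the fact that a finite direct sum of module homomorphisms is injective (resp.\ surjective) if and only if each component is. The only difference is that you spell out the proof of that elementary direct-sum fact, which the paper simply asserts.
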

\begin{proof}
We deal only with left-nondegeneracy and left-perfectness, the arguments for the right case being similar. The conditions of left-nondegeneracy and of left-perfectness are 
 equivalent to conditions on the associated maps.  We have $\Phi_{\mathrm{left}} = \bigoplus_i \Phi_{\mathrm{left},i}$. The direct sum of $R$-module maps is injective (resp. surjective) if and only if each of its components are  injective (resp. surjective).
\end{proof}

Now we turn to quadratic forms.  We start with a proposition showing that three properties used to define quadratic forms in the literature are equivalent.

\begin{proposition}
Let $A$ and $C$ be abelian groups and $q : A \to C$ a function such that $b : A \times A \to C$ defined by
\begin{equation}
b(x,y) = q(x+y) - q(x) - q(y) \nonumber
\end{equation}
for all $x,y \in A$ is $\Z$-bilinear.  Then $q(0) = 0$ and the following are equivalent:
\begin{enumerate}
\item $b(x,x) = 2 q(x)$ for all $x \in A$.
\item $q(n x) = n^2 q(x)$ for all $x \in A$ and all $n \in \N$.
\item $q(x) = q(-x)$ for all $x \in A$.
\end{enumerate}  \label{prop:q-equiv}
\end{proposition}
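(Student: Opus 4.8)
The plan is to first establish the two preliminary facts, then prove the cycle of implications $(1)\Rightarrow(2)\Rightarrow(3)\Rightarrow(1)$, using the defining relation for $b$ repeatedly. To see $q(0) = 0$: set $x = y = 0$ in the definition of $b$ to get $b(0,0) = q(0) - q(0) - q(0) = -q(0)$; separately, $\Z$-bilinearity of $b$ forces $b(0,0) = b(0+0,0) = b(0,0)+b(0,0)$, hence $b(0,0) = 0$, so $q(0) = 0$. It will also be convenient to record the polarization-type identity obtained by induction from the definition: for any $x$ and any $n \in \N$,
\begin{equation}
q(nx) = n\,q(x) + \binom{n}{2} b(x,x) \nonumber
\end{equation}
which follows by writing $q((n{+}1)x) = q(nx) + q(x) + b(nx,x)$ and using $\Z$-bilinearity to get $b(nx,x) = n\,b(x,x)$.

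For $(1)\Rightarrow(2)$: substituting $b(x,x) = 2q(x)$ into the identity above gives $q(nx) = n\,q(x) + \binom{n}{2}\cdot 2q(x) = \big(n + n(n{-}1)\big)q(x) = n^2 q(x)$, which is exactly (2) (the case $n=1$ being trivial). For $(2)\Rightarrow(3)$: apply the definition of $b$ with $y = -x$ to get $b(x,-x) = q(0) - q(x) - q(-x) = -q(x) - q(-x)$, while $\Z$-bilinearity gives $b(x,-x) = -b(x,x)$; on the other hand, using (2) with $n=2$, $q(2x) = 4q(x)$, and expanding $q(2x) = q(x+x) = 2q(x) + b(x,x)$ shows $b(x,x) = 2q(x)$. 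Combining, $-q(x) - q(-x) = -2q(x)$, so $q(-x) = q(x)$, which is (3). For $(3)\Rightarrow(1)$: by the definition of $b$ with $y = -x$, $-q(x) - q(-x) = b(x,-x) = -b(x,x)$, so $b(x,x) = q(x) + q(-x)$; using (3), this equals $2q(x)$, which is (1).

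I do not anticipate a genuine obstacle here — the argument is a routine manipulation of the defining bilinearity relation — but the one point requiring mild care is the bookkeeping around the binomial identity $q(nx) = n\,q(x) + \binom{n}{2}b(x,x)$ and the repeated use of $b(nx,x) = n\,b(x,x)$, which itself is just $\Z$-linearity of $b$ in the first slot applied inductively. One should also double-check that the implications chosen genuinely close the cycle (each of the three statements is used as hypothesis exactly once and derived exactly once), and that $q(0)=0$ is invoked legitimately wherever $b(x,-x)$ is expanded. An alternative, slightly more symmetric route would be to prove all three statements equivalent to the single auxiliary identity $b(x,x) = 2q(x)$ and then read off the equivalences, but the direct cycle is cleaner to write.
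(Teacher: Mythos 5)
Your proposal is correct and follows essentially the same route as the paper: the same cycle $(1)\Rightarrow(2)\Rightarrow(3)\Rightarrow(1)$, with your auxiliary identity $q(nx)=n\,q(x)+\binom{n}{2}b(x,x)$ being exactly the paper's inductive step for $(1)\Rightarrow(2)$, and your $(3)\Rightarrow(1)$ matching the paper's verbatim. The only divergence is at $(2)\Rightarrow(3)$, where the paper tersely says ``take $n=-1$'' (implicitly extending property 2 beyond $\N$), whereas you derive $q(-x)=q(x)$ from the $n=2$ case together with $b(x,-x)=-b(x,x)$ and $q(0)=0$ — if anything, the more careful version of the same step.
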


\begin{proof}
We have $q(0) = q(0 + 0) = q(0) + q(0) +b(0,0) = q(0) + q(0)$, so $q(0) = 0$.

$1 \implies 2$:  Property \#2 holds trivially for $n = 1$.  Assume it holds for $n \leq n_0$.  Then take $x \in A$ and consider
\begin{equation}
q((n_0 + 1) x ) = q(n_0 x) + q(x) + n_0 b(x,x) = (n_0^2 + 1 + 2 n_0) q(x) = (n_0 + 1)^2 q(x) \text{.} \nonumber
\end{equation}

$2 \implies 3$:  Take $n=-1$.

$3 \implies 1$:  $b(x,x) = - b(x,-x) = - ( q(x-x) - q(x) - q(-x) ) = q(x) + q(-x) = 2 q(x)$.
\end{proof}

\begin{defn}
Let $A$ and $C$ be abelian groups. A {\bf quadratic form} on $A$ is a function $q : A \to C$ such that $b : A \times A \to C$ defined by
\begin{equation}
b(x,y) = q(x+y) - q(x) - q(y) \nonumber
\end{equation}
for all $x,y \in A$ is $\Z$-bilinear, and such that $b(x,x) = 2 q(x)$ for all $x \in A$.  We say $q$ is nondegenerate if the bilinear form $b$ is nondegenerate.   \label{defn:quadratic}
\end{defn}

We will be interested in quadratic forms taking values in $\R / \Z$, where $e^{2 \pi i q(x)}$ is the topological spin of a point-like excitation $x$, and $e^{2 \pi i b(x,y)}$ is the mutual statistics between $x$ and $y$.  In this setting, the property $b(x,x) = 2 q(x)$ corresponds to the fact that a full braid of $x$ around itself is topologically equivalent to two half-braids (\emph{i.e.} two exchanges).

\begin{defn}
The {\bf exponent} of an abelian group $A$ is the smallest natural number $n \in \N$ such that $n x = 0$ for all $x \in A$.  If no such natural number exists, we say $A$ has infinite exponent.  \label{defn:exponent}
\end{defn}

We note that $\R / \Z$ has a unique subgroup isomorphic to $\Z_n$ (\emph{i.e.} the integers modulo $n$), generated by $1/n$.  In the following proposition, $\Z_n$ refers to this subgroup.

\begin{proposition}
Let $A$ be an abelian group and $q : A \to \R / \Z$ a quadratic form.  If $x \in A$ has order $n$, then $q(x) \in \Z_n$ for $n$ odd and $q(x) \in \Z_{2n}$ for $n$ even.

If all elements of $A$ are of finite order, then $q$ takes values in $\Q / \Z \subset \R / \Z$.   If $A$ has exponent $n$, then $q$ takes values in $\Z_{2n}$ and $b$ takes values in $\Z_n$.  
\end{proposition}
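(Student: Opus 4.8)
The plan is to extract everything from the identities already established for quadratic forms: by Proposition~\ref{prop:q-equiv} we have $q(0) = 0$ and $q(nx) = n^2 q(x)$ for all $n \in \N$, and by definition $b(x,y) = q(x+y) - q(x) - q(y)$ is $\Z$-bilinear with $b(x,x) = 2q(x)$. The only other input is the elementary fact that the subgroup $\Z_m \subset \R/\Z$ is precisely the set of elements annihilated by $m$.

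First I would treat a single element $x \in A$ of order $n$. From $q(nx) = n^2 q(x)$ and $q(0) = 0$ I get $n^2 q(x) = 0$. Separately, $\Z$-bilinearity of $b$ gives $n\, b(x,x) = b(nx, x) = b(0,x) = 0$, and since $b(x,x) = 2q(x)$ this yields $2n\, q(x) = 0$. Therefore the order of $q(x)$ in $\R/\Z$ divides $\gcd(n^2, 2n)$. A short computation gives $\gcd(n^2, 2n) = n\cdot\gcd(n,2)$, which equals $n$ when $n$ is odd and $2n$ when $n$ is even; hence $q(x) \in \Z_n$ for $n$ odd and $q(x) \in \Z_{2n}$ for $n$ even, as claimed.

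The remaining two assertions are then immediate consequences. If every element of $A$ has finite order, then applying the first part to each $x \in A$ shows $q(x) \in \Q/\Z$, so $q$ is $\Q/\Z$-valued. If $A$ has exponent $n$, then $nx = 0$ for all $x$, so $n\,b(x,y) = b(nx,y) = 0$ for all $x,y \in A$, meaning $b$ takes values in $\Z_n$; and $2n\,q(x) = n\,b(x,x) = 0$, so $q$ takes values in $\Z_{2n}$. The only point requiring any care is the combination of the two divisibility constraints $n^2 q(x) = 0$ and $2n\, q(x) = 0$ via their gcd, together with getting the parity split right — in particular noting that the bound $\Z_{2n}$ in the even case is genuinely the best one obtains from these relations, so no further case analysis is needed. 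Everything else is routine.
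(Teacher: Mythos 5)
Your argument is correct and follows essentially the same route as the paper: both derive $n^2 q(x) = 0$ from $q(nx) = n^2 q(x)$ and $2n\,q(x) = 0$ from $n\,b(x,x) = 0$ together with $b(x,x) = 2q(x)$, and then combine the two constraints (the paper phrases this as $\Z_{2n} \cap \Z_{n^2}$, you as $\gcd(n^2,2n)$, which is the same computation). The exponent case is likewise handled the same way, so there is nothing to change.
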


\begin{proof}
Because $b(x,x) \in \Z_n$, property \#1 of Proposition~\ref{prop:q-equiv} gives $q(x) \in \Z_{2n}$.  Moreover, property \#2 gives $q(x) \in \Z_{n^2}$.  We have $\Z_{2n} \cap \Z_{n^2} = \Z_{n}$ if $n$ is odd and $\Z_{2n} \cap \Z_{n^2} = \Z_{2n}$ if $n$ is even.  If all elements of $A$ are of finite order, applying the above result to each element implies $q$ takes values in $\Q / \Z$.  If $A$ has exponent $n$, then the order of every $x \in A$ divides $n$, and the result follows.
\end{proof}

The following definition is analogous to orthogonal sum of bilinear and sesquilinear forms:
\begin{defn}
Let $q_i : A_i \to C$ be a collection of quadratic forms all valued in $C$ and indexed by $i$.  Their {\bf orthogonal sum} is a quadratic form $q = \perp_i q_i$, where $q : A \to C$ with $A = \bigoplus_i A_i$, and $q(x) = \sum_i q_i(x_i)$ for $x \in A$, with $x = \sum_i x_i$ the unique expression with $x_i \in A_i$.
\end{defn}
\noindent It is easy to see that $q$ is indeed a quadratic form.  Moreover, if $b_i$ are the bilinear forms associated with each $q_i$, then $b = \perp_i b_i$, where $b$ is the bilinear form associated with $q$.

\subsection{Formal framework for planon-only fracton orders}
\label{subsec:definitions}

Here we give the formal definitions of theory of excitations and p-theory for planon-only fracton orders.  Our presentation follows in many respects the development of p-theories in Ref.~\onlinecite{Wickenden_2024}, but with considerable simplifications from the planon-only assumption.  Throughout we are interested in 1-planar fracton orders of bosonic gapped systems in three spatial dimensions, where all the nontrivial excitations are abelian planons.  We assume discrete translation symmetry with the group $\T \cong \Z^3$, and we have $\T \cong \T_\parallel \times \T_\perp$, where $\T_\parallel \cong \Z^2$ is the group of symmetries within the common plane of mobility of planon excitations.  Nonzero elements of $\T_\perp \cong \Z$ are translations transverse to the plane of mobility and act nontrivially on planon excitations.

As discussed in previous works, translation symmetry allows us to discuss mobility of excitations in a precise manner \cite{Haah_2013, Pai_2019}.  However, this comes with the disadvantage -- for our purposes -- of additional phase invariants coming from the translation symmetry, similar to phenomena in symmetry-enriched topological (SET) phases \cite{Pretko_2020}.  A solution is to impose translation symmetry coarsely, which means that when defining phase equivalence we should allow for breaking of $\T$ down to subgroups $\T' \subset \T$ with $\T' \cong \Z^3$; that is, we allow for enlarging the crystalline unit cell by a finite amount \cite{Haah_2013, Haah_2014, Pai_2019, Hermele_KITP_talk, Hermele_UQM_talk, Hermele_inprep}.  Upon suitable coarsening of the translation symmetry, we can assume that $\T_\parallel$ acts trivially on all superselection sectors, and henceforth we ignore $\T_\parallel$ and focus on $\T_\perp$. In the present paper we are not concerned with classification or invariants of phases, and indeed we do not even need a definition of phase, so coarse translation symmetry will play no further role.

From now on we identify $\T_\perp = \Z$ (rather than the slightly weaker assumption that $\T_\perp$ is isomorphic to $\Z$), which simplifies the presentation somewhat from the more general treatment of Ref.~\onlinecite{Wickenden_2024}.  We will work with modules over $\Z\T_\perp$, the group ring of $\T_\perp$ with integer coefficients. It is natural to identify $\Z\T_\perp = \Ztpm$, the ring of Laurent polynomials in the indeterminate $t$, \emph{i.e.} recall $\Ztpm = \Z[t^\pm]$.

\begin{defn}
A {\bf planon-only fusion theory} is a finitely generated $\Ztpm$-module $S$, where for any nonzero $x \in S$ the subgroup of $\T_\perp$ leaving $x$ invariant is trivial.
\end{defn}

The latter condition is the statement that there are no nontrivial fully mobile excitations.  We will not rely on this condition, because we will only be interested in p-modular theories, and p-modularity implies this condition. Nonetheless, we keep it to parallel the earlier treatment of Ref.~\onlinecite{Wickenden_2024}.

Given a fusion theory where every element of $S$ is of finite order (\emph{i.e.} for all $x \in S$ there is some $m \in \N$ with $mx=0$), because $S$ is finitely generated, the exponent of $S$ is finite (see Definition~\ref{defn:exponent}).  

\begin{defn}
A planon-only fusion theory is of {\bf finite order} if every $x \in S$ is of finite order. In a finite order fusion theory, the exponent $n \in \N$ of $S$ is called the {\bf fusion order} of $S$.
\end{defn}

While there are planon-only fracton orders with infinite-order excitations \cite{MacDonald_1989,MacDonald_1990,Sondhi_2000, Sondhi_2001,Ma_2022}, in this paper we will exclusively be interested in finite order fusion theories.  We will comment briefly on the more general case below.  Given fusion theory with fusion order $n$, we observe the ideal $(n) \subset \Z\T_\perp$ annihilates $S$, and we set $R = \Ztpm/(n) = \Z_n[t^\pm]$.  It is important to emphasize that when discussing a finite order fusion theory, as well as the definitions below based on it, we always reserve the symbol $R$ for the ring $R = \Z_n[t^\pm]$.  It is often convenient to view $S$ as an $R$-module.

Like anyons in two spatial dimensions, planons are characterized by self and mutual statistics as discussed in Sec.~\ref{sec:intro}.  We incorporate the statistical data in Definition~\ref{defn:theory} below.

\begin{defn}
Let $\cR$ be a commutative ring, $M$ an $\cR[t^\pm]$-module, $L$ an $\cR$-module, and $b : M \times M \to L$ a symmetric $\cR$-bilinear form.  We say that $b$ is {\bf translation-invariant} if $b(x,y) = b(t^k x, t^k y)$ for all $x,y \in M$ and $k \in \Z$.  A translation-invariant bilinear form $b$ is {\bf local} if for any $x,y \in M$, there exists $\ell \in \N$ such that $|k| \geq \ell$ implies $b(t^k x, y) = 0$.

Letting $C$ be an abelian group and $q : M \to C$ a quadratic form, we say $q$ is {\bf translation-invariant} if $q(x) = q(t^k x)$ for all $k \in \Z$.  (Note this implies the associated bilinear form is translation-invariant.)  A translation-invariant quadratic form $q$ is local if the associated bilinear form is local. \label{defn:ti-local}
\end{defn}

\begin{defn}
A {\bf finite order planon-only theory of excitations} $(S, \theta)$ is a finite order planon-only fusion theory $S$ and a translation-invariant, local quadratic form $\theta : S \to \Q / \Z$.  In this paper we are only concerned with finite order planon-only theories, so we omit the qualifiers and refer simply to a {\bf theory of excitations}.  A theory is {\bf p-modular} if $\theta$ is nondegenerate.   An isomorphism $\alpha : (S, \theta) \to (S', \theta')$ between two theories is a $\Ztpm$-module isomorphism $\alpha : S \to S'$ such that $\theta'(\alpha(x)) = \theta(x)$ for all $x \in S$.  We write $(S, \theta) \cong (S', \theta')$ to indicate that two theories are isomorphic.
\label{defn:theory}
\end{defn}

While the quadratic form $\theta$ is important physically, in this paper the associated bilinear form $b$ plays a more important role.  The following definition reflects this emphasis.

\begin{defn}
A {\bf p-theory} $(S, b)$ is a finite order planon-only fusion theory $S$ together with a translation-invariant, local symmetric $\Z$-bilinear form $b : S \times S \to  \Q / \Z$.  A p-theory is {\bf quadratic} if $b$ is induced by some translation-invariant quadratic form $\theta : S \to  \Q / \Z$.  The theory is {\bf p-modular} if $b$ is nondegenerate. An isomorphism $\alpha : (S, b) \to (S', b')$ between two p-theories is a $\Ztpm$-module isomorphism $\alpha : S \to S'$ such that $b'(\alpha(x), \alpha(y)) = b(x,y)$ for all $x,y \in S$. We write $(S, b) \cong (S', b')$ to indicate that two p-theories are isomorphic.
\end{defn}

\begin{remark}
In the mathematical literature, with perhaps some additional assumptions, a theory of excitations $(S,\theta)$ may be called a \emph{quadratic space} and a p-theory $(S,b)$ a \emph{Hermitian space}. In that context, isomorphisms are typically called \emph{isometries}. 
\end{remark}

There is an obvious forgetful map from theories of excitations to p-theories that sends $(S,\theta) \mapsto (S,b)$; quadratic p-theories are precisely those lying in the image of this map.  In the physical setting of planon-only fracton orders, we are only interested in quadratic p-theories.  However we separate this condition from the definition of p-theory, partly to parallel the treatment of Ref.~\onlinecite{Wickenden_2024}.  That work studied p-modular fracton orders with planons of more than one orientation, where -- unlike the planon-only case -- the full statistical data beyond planon mutual statistics is not known.  In addition, none of the results proved for p-theories in this paper rely on a p-theory being quadratic.  The following result shows there is no real distinction between p-theories and theories of excitations in the case of odd fusion order:

\begin{proposition}
Let $(S,b)$ be a p-theory of odd fusion order $n$.  Then $(S,b)$ is quadratic, and moreover there is a unique quadratic form $\theta : S \to \Q / \Z$ such that $b$ is the bilinear form associated with $\theta$.
\end{proposition}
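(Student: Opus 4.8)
The plan is to define $\theta$ explicitly by the formula $\theta(x) = \tfrac{1}{2} b(x,x)$, where the factor of $\tfrac{1}{2}$ makes sense because $n$ is odd: in $\Q/\Z$, multiplication by $2$ is an automorphism of the $n$-torsion subgroup $\Z_n$ (since $\gcd(2,n)=1$), so there is a unique element $\tfrac{1}{2} c$ with $2 \cdot (\tfrac12 c) = c$ for any $c$ of order dividing $n$. Since $b$ takes values in $\Z_n \subset \Q/\Z$ (the fusion order $n$ annihilates $S$, so $b(x,y)$ has order dividing $n$), the expression $\theta(x) = \tfrac12 b(x,x)$ is well-defined.

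First I would verify that this $\theta$ is a quadratic form in the sense of Definition~\ref{defn:quadratic}. Writing $b_\theta(x,y) = \theta(x+y) - \theta(x) - \theta(y)$, bilinearity and symmetry of $b$ give $b_\theta(x,y) = \tfrac12\big(b(x+y,x+y) - b(x,x) - b(y,y)\big) = \tfrac12\big(b(x,y) + b(y,x)\big) = b(x,y)$. So $b_\theta = b$ is $\Z$-bilinear (indeed $\cR$-bilinear), and then $b_\theta(x,x) = b(x,x) = 2\theta(x)$ by construction, which is exactly condition \#1 of Proposition~\ref{prop:q-equiv}. Hence $\theta$ is a quadratic form with associated bilinear form $b$, so $(S,b)$ is quadratic. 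I would also remark that $\theta$ is automatically translation-invariant: $\theta(t^k x) = \tfrac12 b(t^k x, t^k x) = \tfrac12 b(x,x) = \theta(x)$ using translation-invariance of $b$ (so in particular $\theta$ is the bilinear form's partner in the sense required by the definition of quadratic p-theory, and $(S,\theta)$ is a genuine theory of excitations).

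For uniqueness, suppose $\theta'$ is any quadratic form with $b$ as its associated bilinear form. Then by the defining property of quadratic forms, $2\theta'(x) = b(x,x)$ for all $x$. Since $\theta'$ necessarily takes values in the $2n$-torsion of $\Q/\Z$ by the proposition on exponents — but more simply, for odd $n$ we know $q(x) \in \Z_n$ for each $x$ of order dividing $n$ — both $\theta'(x)$ and $\theta(x)$ lie in $\Z_n$, and multiplication by $2$ is injective on $\Z_n$. From $2\theta'(x) = b(x,x) = 2\theta(x)$ we conclude $\theta'(x) = \theta(x)$. Thus $\theta$ is unique.

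The only real subtlety — and the step most worth stating carefully — is pinning down where the values of $\theta'$ lie so that "multiplication by $2$ is injective" can be invoked; this is exactly the content of the earlier proposition that for $x$ of odd order $n$, a $\Q/\Z$-valued quadratic form satisfies $q(x) \in \Z_n$ (using $\Z_{2n} \cap \Z_{n^2} = \Z_n$ for odd $n$). Everything else is a routine bilinearity computation, so I do not anticipate a genuine obstacle; the proof is short.
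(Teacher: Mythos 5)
Your proof is correct and follows essentially the same route as the paper: define $\theta(x) = 2^{-1}b(x,x)$ using invertibility of $2$ in $\Z_n$ for odd $n$, check it is a quadratic form inducing $b$, and derive uniqueness from $2\theta'(x) = b(x,x)$. Your explicit appeal to the earlier proposition pinning $\theta'(x)$ in $\Z_n$ is a slightly more careful rendering of the same uniqueness computation the paper performs with $\theta' \colon S \to \Z_n$.
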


\begin{proof}
We view $S$ as a $\Z_n$-module, and $b$ as a bilinear form $b : S \times S \to \Z_n \subset \Q / \Z$. We observe that $2$ is invertible in $\Z_n$ because $n$ is odd.  Defining $\theta : S \to \Z_n$ by $\theta(x) = 2^{-1} b(x,x)$ for $x \in S$, we have $b(x,y) = \theta(x+y) - \theta(x) - \theta(y)$ for $x,y \in S$, and moreover $\theta$ is clearly a quadratic form.  To show uniqueness, suppose $\theta' : S \to \Z_n$ is a quadratic form for which $b$ is the associated bilinear form.  Then we have
\[
\theta(x) = 2^{-1} b(x,x) = 2^{-1} \big( \theta'(2 x) - \theta'(x) - \theta'(x) \big) = \theta'(x) \text{.} \qedhere 
\]
\end{proof}

Given a p-theory, it is often convenient to repackage the symmetric bilinear form $b : S \times S \to \Q / \Z$ into a $\Ztpm$-Hermitian form  $B : S \times S \to \Qlp$, where we recall that  $\Qlp$ is the $\Ztpm$-module with involution consisting of Laurent polynomials with coefficients in $\Q / \Z$.  The following proposition explains how this works.

\begin{proposition}
 \label{prop:bijection}
Let $S$ be a $\Ztpm$-module.  Let $\mathfrak b_{{\rm loc}}(S)$ be the set of symmetric, translation-invariant, local $\Z$-bilinear forms $S \times S \to \Q / \Z$.  Let $\mathfrak B (S)$ be the set of $\Ztpm$-Hermitian forms $S \times S \to \Qlp$.  Then there is a bijection $\mathfrak b_{{\rm loc}}(S) \to \mathfrak B(S)$ given by $b \mapsto B_b$, where
\begin{equation}
B_b(x,y) = \sum_{m \in \Z} b(t^{m} x, y) t^m  \text{.} \label{eqn:B-from-b}
\end{equation}
for $x,y \in S$.
Moreover, the inverse map $\mathfrak B(S) \to \mathfrak b_{{\rm loc}}(S)$ is given by $B \mapsto b_B$, where $b_B(x,y) = (B(x,y))_0$, \emph{i.e.} one takes the coefficient of $t^0$ in $B(x,y)$.

In addition, $b$ is nondegenerate if and only if $B_b$ is nondegenerate.
\end{proposition}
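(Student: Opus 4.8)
The plan is to verify directly that the two assignments $b \mapsto B_b$ and $B \mapsto b_B$ are well-defined maps between $\mathfrak b_{{\rm loc}}(S)$ and $\mathfrak B(S)$, that they are mutually inverse, and finally that nondegeneracy is preserved. First I would check that $B_b$ as defined in Equation~\ref{eqn:B-from-b} actually lands in $\Qlp$: since $b$ is local, for fixed $x,y$ there is $\ell \in \N$ with $b(t^m x, y) = 0$ for $|m| \geq \ell$, so only finitely many coefficients of $B_b(x,y)$ are nonzero, as required for an element of $\Qlp$. Next I would verify the three defining properties of a $\Ztpm$-Hermitian form for $B_b$. Additivity in each argument is immediate from bilinearity of $b$. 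For the sesquilinear property, I would compute $B_b(t^k x, y) = \sum_m b(t^{m+k} x, y) t^m = \sum_{m'} b(t^{m'} x, y) t^{m' - k} = t^{-k} B_b(x,y) = \overline{t^k}\, B_b(x,y)$, using translation-invariance in the form $b(t^{m+k}x,y) = b(t^m x, t^{-k}y)$ to also get $B_b(x, t^k y) = t^k B_b(x,y)$; this gives the two halves of $\Ztpm$-sesquilinearity. For the Hermitian property $B_b(x,y) = \overline{B_b(y,x)}$, I would use symmetry and translation-invariance of $b$: the coefficient of $t^m$ in $B_b(x,y)$ is $b(t^m x, y) = b(x, t^{-m} y) = b(t^{-m} y, x)$, which is the coefficient of $t^{-m}$ in $B_b(y,x)$, i.e.\ the coefficient of $t^m$ in $\overline{B_b(y,x)}$.

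Then I would check the reverse direction: given $B \in \mathfrak B(S)$, define $b_B(x,y) = (B(x,y))_0$. This is clearly $\Z$-bilinear and $\Q/\Z$-valued. Symmetry follows from the Hermitian property, since $(B(x,y))_0 = (\overline{B(y,x)})_0 = \overline{(B(y,x))_0} = (B(y,x))_0$ (the involution on $\Qlp$ fixes the constant term, and $\Q/\Z$-valued means it equals its own negative-index conjugate in degree zero, which is just itself). Translation-invariance of $b_B$ follows from $B(t x, t y) = t \overline{t}\, B(x,y) = B(x,y)$. Locality of $b_B$ follows because $b_B(t^m x, y) = (B(t^m x, y))_0 = (t^{-m} B(x,y))_0 = (B(x,y))_m$, which vanishes for $|m|$ large since $B(x,y) \in \Qlp$ has finitely many nonzero coefficients. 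That $b \mapsto B_b$ and $B \mapsto b_B$ are mutually inverse is then the computation $(B_{b_B}(x,y))$ has $m$-th coefficient $b_B(t^m x, y) = (B(x,y))_m$ for all $m$, hence $B_{b_B} = B$; and conversely $b_{B_b}(x,y) = (B_b(x,y))_0 = b(t^0 x, y) = b(x,y)$.

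Finally, for the nondegeneracy claim, I would argue both directions using the relation $b(t^m x, y) = (B_b(x,y))_m$ established above. If $b$ is nondegenerate and $x \neq 0$, pick $y$ with $b(x,y) \neq 0$; then $(B_b(x,y))_0 \neq 0$, so $B_b(x,y) \neq 0$, giving left-nondegeneracy of $B_b$, and right-nondegeneracy is symmetric. Conversely, if $B_b$ is nondegenerate and $x \neq 0$, pick $y$ with $B_b(x,y) \neq 0$; then some coefficient $(B_b(x,y))_m = b(t^m x, y)$ is nonzero, so $b$ detects $x$ (via $t^m x \mapsto$ no, rather: $b(t^m x, y) \neq 0$ with $t^m x$ possibly different from $x$—but we want to detect $x$ itself). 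Here I would instead observe that $b(t^m x, y) = b(x, t^{-m} y) \neq 0$, so $y' = t^{-m} y$ witnesses $b(x, y') \neq 0$; this gives left-nondegeneracy of $b$, and the right case is identical. I expect the main obstacle to be purely bookkeeping: keeping the sign of the exponent shift and the placement of the involution consistent across translation-invariance (which can be written either as $b(t^k x, t^k y) = b(x,y)$ or $b(t^k x, y) = b(x, t^{-k} y)$), symmetry, and the $\Qlp$-involution $\overline{t^m} = t^{-m}$; none of the individual steps is deep, but a careless index could make the Hermitian property come out as the wrong conjugate.
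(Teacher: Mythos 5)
Your proposal is correct and follows essentially the same route as the paper: direct verification that locality makes $B_b$ land in $\Qlp$, index-chasing to establish $\Ztpm$-sesquilinearity and Hermiticity, the routine checks that $b_B$ is symmetric, translation-invariant and local, the mutual-inverse computations via $(B(x,y))_m = b(t^m x,y)$, and the nondegeneracy equivalence extracted from the degree-zero (or shifted) coefficient. The only differences are cosmetic — you argue the ``$B_b$ nondegenerate $\Rightarrow$ $b$ nondegenerate'' direction directly via a coefficient and a translate $t^{-m}y$, where the paper phrases it contrapositively — and your self-correction in that step lands on the right witness, so no gap remains.
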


\begin{proof}
First we observe that Equation~\ref{eqn:B-from-b} is well-defined because $b$ is local, which guarantees only finitely many terms in the sum are nonzero.  We check that $B_b$ is $\Ztpm$-sesquilinear.  It is clear that $B_b$ is $\Z$-bilinear, so $\Ztpm$-sesquilinearity follows from the fact that  $B_b(t^k x, y) = B_b(x, t^{-k} y) = t^{-k} B_b(x,y)$ for all $k \in \Z$. Hermiticity is also straightforward to check, using translation-invariance and symmetry of $b$.

Next, given $B \in \mathfrak B(S)$, it is clear that $b_B$ is $\Z$-bilinear, and routine to check that $b_B$ is symmetric, translation-invariant and local.

It is clear that $b_{B_b} = b$.  Finally we consider
\begin{equation}
B_{b_B} (x,y) =   \sum_{m \in \Z} ( B(t^{m} x, y) )_0  t^m 
= \sum_{m \in \Z} (  t^{-m} B( x, y) )_0  t^m 
= \sum_{m \in \Z} (  B( x, y) )_m  t^m  = B(x,y) \text{.} \nonumber
\end{equation}
Thus the maps are inverses as claimed.

Now suppose $b$ is nondegenerate, so for any $x \in S$ there exists $y \in S$ with $b(x,y) \neq 0$.  Clearly also $B_b(x,y) \neq 0$ because $(B_b(x,y))_0 = b(x,y)$.  Suppose $b$ is degenerate, then there is some $x \in S$ for which $b(x,y) = 0$ for all $y \in S$.  Then also
\begin{equation}
B_b(x,y) = \sum_{m \in \Z} b(x, t^{-m} y) t^m = 0 \nonumber
\end{equation}
for all $y \in S$.
\end{proof}

By this result, we can equivalently specify the data of a p-theory as a pair $(S,b)$ or $(S,B)$.  By Proposition~\ref{prop:b-Phi}, we can also replace $B$ with its associated map $\Phi : S \to S^*$, where we set $S^* = {\Hom_{\Ztpm}}(S, \Qlp)$, 
and specify the p-theory as a pair $(S, \Phi)$.  We will use these different ways of giving the data of a p-theory interchangeably.  We note that nondegeneracy of $b$, nondegeneracy of $B$, and injectivity of $\Phi$ are all equivalent, and are the same as p-modularity.  Physically, this corresponds to the principle of remote detectability.

For a p-theory with fusion order $n$, $b$ takes values in $\Z_n \subset \Q / \Z$, and thus $B$ takes values in $R = \Z_n[t^\pm] \subset \Qlp$.  Moreover, because $n S = 0$, $B$ can be viewed as an $R$-Hermitian form $B : S \times S \to R$.  Similarly we can identify $S^* = {\Hom_{\Ztpm}}(S, \Qlp) = {\Hom_R}(S, R)$ and view $\Phi$ as an $R$-linear map $\Phi : S \to  {\Hom_R}(S, R)$. This point of view will be convenient when working with theories of a specific fixed fusion order.  However, when working with theories of different fusion orders, as when we consider stacking (see below), it is easier to consider $\Ztpm$-modules and forms valued in $\Q / \Z$ or $\Qlp$.  

We can now define perfectness of a p-theory, which is property \#3 of Theorem~\ref{thm:main-intro}:

\begin{defn}
A p-theory is {\bf perfect} if its associated map $\Phi : S \to S^*$ is an isomorphism.  A theory of excitations is perfect if its associated p-theory is perfect.
\end{defn}

As discussed in Sections~\ref{sec:intro} and~\ref{sec:example}, not every p-modular theory of excitations is physically realizable, but we conjecture that every perfect theory of excitations can be realized by a quantum lattice model with a spatially local Hamiltonian.  The term ``perfect'' comes from the fact that bilinear forms whose associated map is an isomorphism are sometimes called perfect pairings.  

Now we define stacking, which is induced by the usual stacking operation on quantum systems.

\begin{defn}
Let $(S_i, \theta_i)$ be a finite collection of theories of excitations indexed by $i$.  
Their {\bf stack} is a theory of excitations $(S, \theta) = \bigominus_i (S_i, \theta_i)$, where $S = \bigoplus_i S_i$ and $\theta = \perp_i \theta_i$.  Stacking of p-theories is defined in the same way, producing a stacked p-theory $(S, b)$ with $b = \perp_i b_i$.

\end{defn}
\noindent  Note also that $B = \perp_i B_i$. If each $(S_i, \theta_i)$ has fusion order $n_i$, then the fusion order $n$ of the stack is the least common multiple of the $n_i$.  In determining whether a theory is a stack of some other theories, it is enough to consider p-theories:

\begin{proposition}\label{prop:thetasandbswithstacking}
Let $(S,\theta)$ be a theory of excitations with $(S,b)$ the corresponding p-theory.  If $(S,b) = \bigominus_i (S_i, b_i)$, then 
$(S,\theta) = \bigominus_i (S_i, \theta_i)$, where $\theta_i(x_i) = \theta(x_i)$ for $x_i \in S_i$.  Moreover, each $(S_i, b_i)$ is quadratic and is
 the p-theory associated to $(S_i, \theta_i)$.  
\end{proposition}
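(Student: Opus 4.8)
The plan is to show that the decomposition $(S,b) = \bigominus_i (S_i, b_i)$ automatically refines to a decomposition at the level of quadratic forms. The key observation is that $\theta$ restricted to each summand $S_i$ is itself a quadratic form, and that on a direct sum a translation-invariant quadratic form is determined by its restrictions to the summands together with the cross-terms, which are exactly recorded by $b$. So I would first define $\theta_i : S_i \to \Q/\Z$ by $\theta_i(x_i) = \theta(x_i)$ for $x_i \in S_i$; since the inclusion $S_i \hookrightarrow S$ is a $\Ztpm$-module map and $\theta$ is a translation-invariant quadratic form, each $\theta_i$ is a translation-invariant quadratic form on $S_i$. Its associated bilinear form is $b|_{S_i \times S_i}$, which by the hypothesis $(S,b) = \bigominus_i(S_i,b_i)$ equals $b_i$; hence $b_i$ is quadratic and $(S_i,b_i)$ is the p-theory associated to $(S_i,\theta_i)$. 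This also shows $\theta_i$ is local, since its associated bilinear form $b_i$ is local (being a summand of the local form $b$, using Proposition~\ref{prop:nd-respects-stacking}-type reasoning, or directly from Definition~\ref{defn:ti-local}).

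The main content is the identity $\theta = \perp_i \theta_i$, i.e. that for $x = \sum_i x_i \in S$ with $x_i \in S_i$ one has $\theta(x) = \sum_i \theta(x_i)$. For a sum of two elements this follows from $\theta(x_1 + x_2) = \theta(x_1) + \theta(x_2) + b(x_1,x_2)$ together with the fact that $b(x_1,x_2) = 0$ whenever $x_1 \in S_1$, $x_2 \in S_2$ lie in distinct summands — which is precisely the statement that $b = \perp_i b_i$ (Definition~\ref{defn:ortho-sum}). The general finite case then follows by induction on the number of nonzero components, using associativity of the direct sum and that $S_j$ and $\bigoplus_{i\neq j} S_i$ are orthogonal under $b$. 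This gives $(S,\theta) = \bigominus_i(S_i,\theta_i)$ as theories of excitations.

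I do not anticipate a serious obstacle here; the proposition is essentially a bookkeeping consequence of the relation between a quadratic form and its associated bilinear form, exactly as in the displayed formula $\theta(\sum_i a_i p_i) = \sum_{i<j} a_i a_j b(p_i,p_j)$ from Section~\ref{sec:example}. The one point requiring a word of care is uniqueness of the $\theta_i$: one should note that the $\theta_i$ are forced, since any quadratic decomposition $\theta = \perp_i \theta_i'$ must have $\theta_i'(x_i) = \theta(x_i)$ by evaluating $\theta$ on elements supported in a single summand. A second minor point is checking that $\theta_i$ inherits finite order and that $S_i$ inherits the no-fully-mobile-excitations and finite-generation conditions from $S$ — all immediate, since submodules of finitely generated modules over the Noetherian ring $\Ztpm$ are finitely generated, torsion passes to submodules, and the stabilizer condition is inherited by restriction.
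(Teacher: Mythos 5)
Your proposal is correct and follows essentially the same route as the paper's proof: define $\theta_i$ by restriction, use the polarization identity $\theta(x+y)=\theta(x)+\theta(y)+b(x,y)$ together with the vanishing of $b$ across distinct summands to get $\theta=\perp_i\theta_i$, and identify $b_i$ with the bilinear form of $\theta_i$ by restricting to $S_i$. The extra remarks on uniqueness and on the $S_i$ inheriting the fusion-theory conditions are harmless additions not needed for (or present in) the paper's argument.
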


\begin{proof}
Consider $x = \sum_i x_i \in S$, then  $\theta(x) = \sum_i \theta(x_i) + \sum_{i < j} b(x_i, x_j)$.  But for $i \neq j$, $b(x_i, x_j) = b_i(x_i,0) + b_j( 0, x_j) = 0$, so $\theta(x) = \sum_i \theta(x_i) = \sum_i \theta_i(x_i)$.  Moreover, if $x_i, y_i \in S_i$, we have $b_i(x_i,y_i) = 
b(x_i, y_i) = \theta(x_i + y_i) - \theta(x_i) - \theta(y_i) = \theta_i(x_i + y_i) - \theta_i(x_i) - \theta_i(y_i)$.
\end{proof}

\begin{defn}
A property of a theory of excitations {\bf respects stacking} if for any theory $(S, \theta) = \bigominus_i (S_i, \theta_i)$, the property holds for $(S, \theta)$ if and only if it holds for all the summands $(S_i, \theta_i)$.  The corresponding definition for p-theories is identical.  
\end{defn}

\begin{proposition}
Perfectness and p-modularity of theories of excitations (and of p-theories) respect stacking.  \label{prop:pmod-perfect-respect-stacking}
\end{proposition}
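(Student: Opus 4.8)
The plan is to reduce both statements to the analogous statements for the associated $R$-linear maps, which are already handled by Proposition~\ref{prop:nd-respects-stacking}. Recall that p-modularity of a p-theory $(S,b)$ is by definition nondegeneracy of $b$, and that by Proposition~\ref{prop:b-Phi} (applied in the Hermitian setting, after using Proposition~\ref{prop:bijection} to pass from $b$ to $B$) this is equivalent to injectivity of the associated map $\Phi_b \colon S \to S^*$; likewise perfectness is by definition the statement that $\Phi_b$ is an isomorphism. So everything will follow once I know that the associated map of a stack is the orthogonal sum of the associated maps of the summands.

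First I would observe that stacking of p-theories is exactly orthogonal sum of the underlying symmetric bilinear forms: if $(S,b) = \bigominus_i (S_i, b_i)$ then $S = \bigoplus_i S_i$ and $b = \perp_i b_i$ in the sense of Definition~\ref{defn:ortho-sum}. Hence by the proposition immediately preceding Proposition~\ref{prop:nd-respects-stacking} (the one computing $\Phi_{\mathrm{left}}$ of an orthogonal sum), we have $\Phi_b = \bigoplus_i \Phi_{b_i}$ under the canonical identification $S^* = \Hom_{\Ztpm}(\bigoplus_i S_i, \Qlp) \cong \bigoplus_i \Hom_{\Ztpm}(S_i, \Qlp) = \bigoplus_i S_i^*$. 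A direct sum of $R$-module maps is injective (resp.\ an isomorphism) if and only if each summand is injective (resp.\ an isomorphism); this is precisely the content invoked in the proof of Proposition~\ref{prop:nd-respects-stacking}. Therefore $b$ is nondegenerate iff all $b_i$ are nondegenerate, giving that p-modularity of p-theories respects stacking, and $\Phi_b$ is an isomorphism iff all $\Phi_{b_i}$ are isomorphisms, giving that perfectness of p-theories respects stacking.

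To pass from p-theories to theories of excitations, I would use Proposition~\ref{prop:thetasandbswithstacking}: a theory of excitations $(S,\theta)$ is a stack $\bigominus_i (S_i,\theta_i)$ if and only if the associated p-theory $(S,b)$ is the stack $\bigominus_i (S_i,b_i)$ of the associated p-theories, with $\theta_i = \theta|_{S_i}$. Since p-modularity (resp.\ perfectness) of $(S,\theta)$ is \emph{defined} to be p-modularity (resp.\ perfectness) of $(S,b)$, and likewise for each summand, the statement for theories of excitations is an immediate consequence of the statement for p-theories established above. I do not expect a serious obstacle here; the only point requiring care is the bookkeeping of the canonical isomorphism $\Hom_{\Ztpm}(\bigoplus_i S_i, \Qlp) \cong \bigoplus_i \Hom_{\Ztpm}(S_i, \Qlp)$ — valid because the index set is finite — and checking that $\Phi_b$ corresponds to $\bigoplus_i \Phi_{b_i}$ under it, which is exactly the orthogonal-sum-of-associated-maps proposition already proved in Section~\ref{subsec:prelims}.
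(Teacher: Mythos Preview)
Your proposal is correct and follows the same approach as the paper, which simply records that the result is an immediate consequence of Proposition~\ref{prop:nd-respects-stacking}; you have merely unpacked that reference in more detail. One minor point: for the passage to theories of excitations you only need the easy direction (that the $b$ associated to $\perp_i \theta_i$ is $\perp_i b_i$, noted right after the definition of orthogonal sum of quadratic forms), not the converse contained in Proposition~\ref{prop:thetasandbswithstacking}.
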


\begin{proof}
This is an immediate consequence of Proposition~\ref{prop:nd-respects-stacking}.
\end{proof}

In our proof of Theorem~\ref{thm:main-intro}, and for many other purposes, it turns out to be enough to focus on the case of prime-power fusion order due to the following result.  We note that an analogous result for Pauli codes was proved as Proposition~29 of Ref.~\onlinecite{Ruba_2022}.

\begin{theorem}
Let $(S, b)$ be a $p$-theory with fusion order $n = p_1^{k_1} \cdots p_m^{k_m}$, where the $p_i$ are the distinct prime factors of $n$.  Then
$(S, b) \cong \bigominus_{i=1}^m (S_i, b_i)$, where the fusion order of $(S_i, b_i)$ is $p_i^{k_i}$.  
\label{thm:prime-decomp}
\end{theorem}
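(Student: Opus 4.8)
The plan is to decompose both the module $S$ and the form $b$ according to the prime-power factorization of $n$, using the Chinese Remainder Theorem at the level of the ring $R = \Z_n[t^\pm]$. First I would write $n = \prod_{i=1}^m p_i^{k_i}$ and set $q_i = p_i^{k_i}$, so that $R \cong \prod_{i=1}^m \Z_{q_i}[t^\pm]$ as rings with involution (the CRT isomorphism $\Z_n \cong \prod_i \Z_{q_i}$ is compatible with the trivial involution on coefficients and with the standard $t \mapsto t^{-1}$ involution, since it is coefficientwise). Correspondingly, every $R$-module, in particular $S$, splits canonically as $S \cong \bigoplus_{i=1}^m S_i$, where $S_i = e_i S$ and $e_i \in R$ is the idempotent projecting onto the $i$-th factor; concretely $S_i$ is the $p_i$-primary part of $S$, i.e. the submodule of elements annihilated by some power of $p_i$. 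Since $S$ is finitely generated over $\Ztpm$ with $nS = 0$, each $S_i$ is a finitely generated $\Ztpm$-module annihilated by $q_i$, and by construction its fusion order (exponent) divides $q_i$; I would note that because the exponent of $S$ is exactly $n$, the exponent of $S_i$ is exactly $q_i$ (if it were a proper divisor, the least common multiple of the exponents would be a proper divisor of $n$, contradicting $\mathrm{exp}(S) = n$).

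Next I would show the form respects this decomposition, i.e. $b(S_i, S_j) = 0$ for $i \neq j$. Viewing $B : S \times S \to R$ as the associated $R$-Hermitian form (as in the discussion after Proposition~\ref{prop:bijection}), $R$-sesquilinearity gives $B(e_i x, e_j y) = \bar e_i e_j B(x,y) = e_i e_j B(x,y)$ since the idempotents $e_i$ are fixed by the coefficientwise involution. But $e_i e_j = 0$ for $i \neq j$ because they are orthogonal idempotents, so $B(S_i, S_j) = 0$, and taking the $t^0$-coefficient, $b(S_i, S_j) = 0$ as well. Therefore, setting $b_i = b|_{S_i \times S_i}$, each $b_i$ is a symmetric, translation-invariant, local $\Z$-bilinear form $S_i \times S_i \to \Q/\Z$ (localness and translation-invariance are inherited immediately from $b$ by restriction), so $(S_i, b_i)$ is a p-theory of fusion order $q_i = p_i^{k_i}$. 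It remains to check $b = \perp_i b_i$ in the sense of Definition~\ref{defn:ortho-sum}: given $x = \sum_i x_i$ and $y = \sum_j y_j$ with $x_i, y_j \in S_i, S_j$, bilinearity gives $b(x,y) = \sum_{i,j} b(x_i, y_j) = \sum_i b(x_i, y_i) = \sum_i b_i(x_i, y_i)$, using the orthogonality just established. This exhibits the required isomorphism $(S,b) \cong \bigominus_{i=1}^m (S_i, b_i)$, with the identity map on the underlying module under the canonical splitting $S \cong \bigoplus_i S_i$.

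The only subtlety — and what I expect to be the main point requiring care rather than a genuine obstacle — is verifying that the primary decomposition of $S$ as an abelian group, or equivalently as a module over $R = \Z_n[t^\pm]$, is automatically a decomposition of $\Ztpm$-modules, i.e. that each primary component is $t$-stable. This is immediate once one works over $R$: the idempotents $e_i$ are central elements of $R$ commuting with the $t$-action, so $S_i = e_i S$ is a $\Ztpm$-submodule; alternatively, the $p_i$-primary part is characterized by annihilation by a power of $p_i \in \Z$, a condition visibly preserved by multiplication by $t$. A minor bookkeeping step is confirming that finite generation passes to the summands: $S_i$ is a quotient of $S$ (via the projection $e_i$), hence finitely generated over $\Ztpm$. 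With these observations in place, all the defining properties of a p-theory for each $(S_i, b_i)$ follow by restriction from those of $(S, b)$, and the theorem is proved. One could remark that this is the exact analogue of the classical primary decomposition of a finite-exponent bilinear form over $\Z$, lifted to the Laurent polynomial setting via the compatibility of CRT with the involution.
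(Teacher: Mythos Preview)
Your proof is correct. The paper's own proof (Appendix~\ref{app:local}) reaches the same decomposition via the machinery of localization at primes: it defines $(S_{(p)}, b_{(p)})$ by localizing the associated map $\Phi$, and then invokes a general splitting lemma (Proposition~\ref{lem:sumloc}) built from the localization maps $\eta_{M,p}$ and their canonical sections $\nu_{M,p}$ to assemble $(S,b) \cong \bigominus_p (S_{(p)}, b_{(p)})$. Your argument via the Chinese Remainder Theorem and orthogonal idempotents in $R = \Z_n[t^\pm]$ is the same decomposition in more elementary language: for $\Z_n$-modules, localization at $p$ \emph{is} projection onto the $p$-primary summand, and the paper itself makes this identification explicit in its discussion of $R_{(p)} \cong \Z_{p^k}[t^\pm]$. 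The practical difference is that your route is shorter and entirely self-contained (no appeal to flatness or naturality of localization), while the paper's localization framework is packaged so that compatibility with exact sequences, tensor products, and $\Hom$ comes for free---overhead that is not really needed for this particular statement but is standard commutative-algebra style. Your key step, that the idempotents $e_i$ lie in $\Z_n \subset R$ and hence are fixed by the involution, is exactly what makes the Hermitian form respect the splitting; this is the point the paper handles by localizing $\Phi$ and using Proposition~\ref{lem:homdual}.
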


The proof of Theorem~\ref{thm:prime-decomp} uses the standard theory of localization from commutative algebra and is given in Appendix~\ref{app:local}. As a consequence of Proposition~\ref{prop:thetasandbswithstacking}, the same decomposition holds for theories of excitations:

\begin{corollary}
Let $(S, \theta)$ be a theory of excitations with fusion order $n = p_1^{k_1} \cdots p_m^{k_m}$, where the $p_i$ are the distinct prime factors of $n$.  Then
$(S, \theta) \cong \bigominus_{i=1}^m (S_{i}, \theta_i)$, where the fusion order of $(S_{i}, \theta_{i} )$ is $p_i^{k_i}$. 
\end{corollary}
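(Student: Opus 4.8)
The plan is to deduce this corollary directly from Theorem~\ref{thm:prime-decomp} together with Proposition~\ref{prop:thetasandbswithstacking}, which is exactly the logical structure signalled by the phrase ``As a consequence of Proposition~\ref{prop:thetasandbswithstacking}.'' First I would let $(S, b)$ be the p-theory associated to $(S, \theta)$ under the forgetful map; since $(S, \theta)$ is a theory of excitations, $(S,b)$ is a quadratic p-theory of the same fusion order $n = p_1^{k_1} \cdots p_m^{k_m}$. Applying Theorem~\ref{thm:prime-decomp} gives an isomorphism of p-theories $(S,b) \cong \bigominus_{i=1}^m (S_i, b_i)$ with $(S_i, b_i)$ of fusion order $p_i^{k_i}$. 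Transporting the quadratic form along this isomorphism (isomorphisms of p-theories are $\Ztpm$-module isomorphisms preserving $b$, but I want to preserve $\theta$ as well), I may as well assume $S = \bigoplus_i S_i$ as a $\Ztpm$-module with $b = \perp_i b_i$; the subtlety that the isomorphism of Theorem~\ref{thm:prime-decomp} is only stated to preserve $b$ and not $\theta$ is handled below.

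The key step is then to invoke Proposition~\ref{prop:thetasandbswithstacking}: since $(S, b) = \bigominus_i (S_i, b_i)$, that proposition yields $(S, \theta) = \bigominus_i (S_i, \theta_i)$, where $\theta_i(x_i) = \theta(x_i)$ for $x_i \in S_i$, and moreover each $(S_i, b_i)$ is quadratic and is the p-theory associated to $(S_i, \theta_i)$. This is precisely the claimed decomposition, with $(S_i, \theta_i)$ of fusion order $p_i^{k_i}$ since its underlying module $S_i$ has that exponent. Finally, I would address the transport-of-structure point: the isomorphism $\alpha : (S, b) \cong \bigominus_i (S_i, b_i)$ from Theorem~\ref{thm:prime-decomp} lets us \emph{define} a quadratic form $\theta'$ on $\bigoplus_i S_i$ by $\theta'(x) = \theta(\alpha^{-1}(x))$, which is automatically translation-invariant and local and has $b' = \perp_i b_i$ as its associated bilinear form; then Proposition~\ref{prop:thetasandbswithstacking} applied to $(\bigoplus_i S_i, \theta')$ gives the orthogonal-sum decomposition, and pulling back along $\alpha$ gives the isomorphism $(S, \theta) \cong \bigominus_i (S_i, \theta_i)$ of theories of excitations.

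There is essentially no obstacle here: the corollary is a formal consequence of two results stated earlier, and the only thing requiring a moment's care is keeping the bookkeeping straight between ``isomorphic to an orthogonal sum'' and ``equal to an orthogonal sum of submodules,'' which is resolved by the standard transport-of-structure argument just sketched. The fusion orders come out correctly because exponent is preserved under $\Ztpm$-module isomorphism and the exponent of $S_i$ is $p_i^{k_i}$ by construction in Theorem~\ref{thm:prime-decomp}. I would write the proof in two or three sentences: apply Theorem~\ref{thm:prime-decomp} to the associated p-theory, then apply Proposition~\ref{prop:thetasandbswithstacking} to recover the quadratic forms, noting that the transported $\theta$ is a legitimate theory of excitations so the hypotheses of Proposition~\ref{prop:thetasandbswithstacking} are met.
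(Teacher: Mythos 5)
Your proposal is correct and follows the paper's own route exactly: apply Theorem~\ref{thm:prime-decomp} to the associated p-theory and then Proposition~\ref{prop:thetasandbswithstacking} to recover the quadratic forms on the summands. The transport-of-structure remark you add (passing from the isomorphism of p-theories to an honest orthogonal-sum decomposition before invoking the proposition) is a minor bookkeeping point the paper leaves implicit, and your handling of it is fine.
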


To conclude this section, we explain why perfectness is not a familiar property in theories of abelian anyons; in that setting, it is superfluous.  This is also a good opportunity to give a formal definition of a theory of abelian anyons, which will be needed later on.

\begin{defn}\label{defn:theoryofabeliananyons}
A {\bf theory of abelian anyons} is a pair $(\cA, \theta)$, where $\cA$ is a finite abelian group and $\theta : \cA \to \Q / \Z$ is a quadratic form.  We say $(\cA, \theta)$ is {\bf modular} if $\theta$ is nondegenerate.  A {\bf p-theory of abelian anyons} is a pair $(\cA, b)$, where $b : \cA \times \cA \to \Q / \Z$ is a symmetric bilinear form. The p-theory $(\cA, b)$ is modular if $b$ is nondegenerate.
\end{defn} 

As discussed in the introduction, modularity is a necessary and sufficient condition for a theory of abelian anyons to be physically realizable.  We will find it useful to allow for non-modular (and thus unphysical) theories of abelian anyons, so we do not include modularity in the definition.  

Given a theory of abelian anyons, we consider the mutual statistics bilinear form $b$ induced by $\theta$.  We have the associated map $\phi : \cA \to \cA^* = \Hom_{\Z}(\cA, \Q / \Z)$ given by $\phi(x) = b(x, \cdot)$, and also -- because $\cA$ is a finite abelian group -- we have $\cA \cong \cA^*$.  A homomorphism between isomorphic finite groups is injective if and only if it is surjective.  Therefore modularity implies that $\phi$ is an isomorphism and there is no need to consider this as an additional condition.

\begin{remark}
As noted above, there are interesting planon-only fracton orders with infinite-order excitations \cite{MacDonald_1989,MacDonald_1990,Sondhi_2000,Sondhi_2001,Ma_2022}.  These are realized by infinite-component Chern-Simons theories, and they have the property that the mutual statistics between excitations is irrational and decays exponentially with distance.  To extend our formal setup to encompass such fracton orders, we would need a different definition of locality, which should be a condition roughly of the form that for any $x, y \in S$, $b(t^k x, y) \to 0$ as $|k| \to \infty$.  We might also want to impose a condition on the rate at which $b(t^k x , y)$ approaches zero.  This will be interesting to explore in future work.
\end{remark}

\subsection{Examples of planon-only fracton orders}
\label{subsec:examples}

Here we give a few well-known examples of planon-only fracton orders within the formal framework presented above.  All of the examples can be realized physically by an exactly solvable lattice model or an infinite-component Chern-Simons theory, and all are perfect theories of excitations.  In each case we specify the module of superselection sectors $S$ in terms of a free $R$-module, where $R = \Z_n[t^\pm]$ and $n$ is the fusion order.  If $\Omega$ is a set, we recall the standard notation that $R\Omega$ is the free $R$-module with basis $\Omega$.  The quadratic form $\theta : S \to \Q / \Z$ is fully specified by giving the values of $\theta$ and $B$ on a generating set for $S$.

By forming a 3d system out of decoupled layers of any 2d topological order, we obtain a very simple class of planon-only fracton orders.  Even though such examples are somewhat trivial, it is worthwhile to see how they are described within our formalism.

\begin{example}[Toric code layers]
	This is perhaps the simplest fracton order, consisting of layers of the 2d $\Z_2$ toric code, which can obviously be realized as a commuting Pauli Hamiltonian.  The fusion order is $n=2$ and the theory of excitations is given by
	\begin{align}
		S &= R\{e, m\},
		\\
		\theta(e) &= \theta(m) = 0,
		\\
		B(e, m) &= \tfrac{1}{2},
		\\
		B(e, e) &= B(m, m) = 0.
	\end{align}
\end{example}

\begin{example}[Three-fermion layers]
This example also consists of decoupled layers of 2d topological orders, and it is physically realizable because the underlying 2d topological order is physical.  The module $S$ and planon mutual statistics $B$ is the same as for toric code layers, and thus the two examples have the same p-theory.  The only way to tell these examples apart is by planon self-statistics; that is, by the values of the quadratic form.  Here, we have $\theta(e) = \theta(m) = 1/2$.  Note that together with $B(e,m) = 1/2$, this implies $\theta(f) = 1/2$, where $f = e+m$ is the third nontrivial particle in a single layer; hence all three nontrivial particles in each layer are fermions.
\end{example}

\begin{example}[Twisted 1-foliated model]
\label{ex:t1f}
	This example, introduced in Ref.~\onlinecite{Shirley_2020}, is the simplest planon-only fracton order that does not consist of decoupled layers of a 2d topological order. It was constructed by condensing pairs of charges starting from decoupled layers of twisted $\Z_2 \times \Z_2$ toric codes.  It can be realized in an exactly solvable lattice model, and also as an infinite-component Chern-Simons theory.
	
	The fusion order is $n = 4$, and the module $S$ is not free but can be conveniently specified as a quotient of a free module, by
	\begin{equation}
	S = R\{e, m\} / \langle 2 m - (t + \bar t) e, 2 e \rangle \text{,} \nonumber
	\end{equation}
	where $\langle x_1, \dots, x_k \rangle$ denotes the $R$-module generated by $x_1, \dots, x_k$.
	Moreover, we have
	\begin{align}
		\theta(e) &= \theta(m) = 0, \nonumber
		\\
		B(m, m) &= \tfrac{1}{4} (t + \bar t), \nonumber
		\\
		B(e, m) &= \tfrac{1}{2}, \nonumber
		\\
		B(e, e) &= 0. \nonumber
	\end{align}
	In Remark~\ref{rem:t1f}, we express the presentation of $S$ in the form given in the structure theorem (Theorem~\ref{thm:f-presentations}) proved in Section~\ref{sec:structure}.
\end{example}

\subsection{Physical interpretation of perfect p-theories}
\label{subsec:perfect}

The main purpose of this section is to give a description of perfectness that may provide some physical intuition.  We introduce a $\Ztpm$-module of local detection maps, whose elements detect excitations in a slab of finite extent in the direction normal to the planes of mobility.  Perfectness translates to the statement that every excitation in $S$ realizes a distinct local detection map, and every local detection map is realized by some excitation.

Above we discussed packaging the mutual statistics of a p-theory $(S,b)$ in the map $\Phi : S \to S^*$ associated to the $\Ztpm$-Hermitian form $B$.  The same information is also encoded in the map associated to $b$, namely ${\phi \colon S \to \Hom_\Z(S, \Q/\Z)}$ defined by $\phi(x) = b(x, \cdot)$.  Injectivity of $\phi$ is again the same as p-modularity.  However, we should not expect $\phi$ to be an isomorphism even for a perfect p-theory, because $\HomZ(S, \Q / \Z)$ is expected to be uncountably large, as is certainly the case when $S$ is a free $R$-module.  It turns out that $\phi$ actually takes values in a better-behaved submodule.

First we make $\HomZ(S, \Q / \Z)$ into a $\Ztpm$-module by defining $(z \varphi)(x) = \varphi(\overline{z} x)$ for any $\varphi \in \HomZ(S, \Q / \Z)$, $x \in S$ and $z \in \Ztpm$.  With this choice, the map $\phi : S \to \HomZ(S, \Q / \Z)$ is a $\Ztpm$-module homomorphism, because $\phi$ is $\Z$-linear and $\phi(t^k x)(y) = b(t^k x, y) = b(x, t^{-k} y) = \phi(x)(t^{-k} y) = (t^k \phi(x) )(y)$.  Note that this relies on translation-invariance of $b$.

\begin{defn}
Let $S$ be a $\Ztpm$-module.  We say that $\varphi \in \HomZ(S, \Q / \Z)$ is {\bf local} if for each $x \in S$ there exists $\ell_x \in \N$ such that $\varphi(t^k x) = 0$ for $|k| \geq \ell_x$.
\end{defn}

\begin{proposition}
Let $S$ be a $\Ztpm$-module.   We denote by $\HomlocZ(S, \Q / \Z)$ the subset of local elements of $\HomZ (S, \Q / \Z)$.  Then $\HomlocZ(S, \Q / \Z)$ is a $\Ztpm$-submodule of $\HomZ(S, \Q / \Z)$.
\end{proposition}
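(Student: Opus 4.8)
The plan is to verify directly that $\HomlocZ(S,\Q/\Z)$ satisfies the three submodule axioms inside the already-established $\Ztpm$-module $\HomZ(S,\Q/\Z)$: it contains $0$, it is closed under addition, and it is closed under the twisted scalar action $(z\varphi)(x) = \varphi(\bar z x)$. The first two are immediate, so the real content of the argument is entirely in tracking how the locality radius behaves under the $\Ztpm$-action.

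First I would note that the zero homomorphism is local (with $\ell_x = 1$ for every $x$), so $0 \in \HomlocZ(S,\Q/\Z)$. For closure under addition, given local $\varphi, \psi$ and $x \in S$, I would take $\ell_x := \max(\ell_x^{\varphi}, \ell_x^{\psi})$; then for $|k| \geq \ell_x$ we get $(\varphi + \psi)(t^k x) = \varphi(t^k x) + \psi(t^k x) = 0$, so $\varphi + \psi$ is local.

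The main (though still routine) step is closure under scalars. Fix a local $\varphi$ and $z = \sum_{m} c_m t^m \in \Ztpm$ with only finitely many $c_m$ nonzero; since the involution on $\Ztpm$ sends $t^m \mapsto t^{-m}$, we have $\bar z = \sum_m c_m t^{-m}$, and hence for any $x \in S$ and $k \in \Z$,
\[
(z\varphi)(t^k x) \;=\; \varphi(\bar z\, t^k x) \;=\; \varphi\!\Big( \sum_m c_m\, t^{k-m} x \Big) \;=\; \sum_m c_m\, \varphi(t^{k-m} x),
\]
using $\Z$-linearity of $\varphi$. Setting $D := \max\{\, |m| : c_m \neq 0 \,\}$ (and $D := 0$ if $z = 0$), whenever $|k| \geq \ell_x^{\varphi} + D$ every term with $c_m \neq 0$ satisfies $|k - m| \geq |k| - |m| \geq \ell_x^{\varphi}$, so $\varphi(t^{k-m} x) = 0$ and therefore $(z\varphi)(t^k x) = 0$. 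Thus $z\varphi$ is local with locality radius $\ell_x^{\varphi} + D$, completing the verification. I do not expect any genuine obstacle; the only point requiring care is using the twisted action with the correct involution, so that $\bar z$ and $z$ have the same support ``width'' $D$ and the enlargement of the locality radius stays finite.
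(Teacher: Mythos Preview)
Your proof is correct and follows essentially the same approach as the paper's: verify $0$ is local, take the max of the two radii for additive closure, and then check closure under the $\Ztpm$-action. The only cosmetic difference is that the paper, having already established additive closure, reduces the scalar step to the monomial case $z = t^m$, whereas you handle a general $z$ directly with an explicit radius $\ell_x^{\varphi} + D$; both arguments are equivalent.
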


\begin{proof}
The zero homomorphism is obviously local. 
Suppose $\varphi_1, \varphi_2 \in \HomlocZ(S, \Q / \Z)$ and consider $\varphi_1 + \varphi_2$.  We have $(\varphi_1 + \varphi_2)(t^k x) = 0$ for $|k| \geq \operatorname{max}(\ell^1_x, \ell^2_x)$, so for the new homomorphism $\varphi_1 + \varphi_2$ we can choose $\ell_x = \operatorname{max}(\ell^1_x, \ell^2_x)$.  Further, we need to check that if $\varphi \in \HomlocZ(S, \Q / \Z)$, then $z \varphi \in \HomlocZ(S, \Q / \Z)$ for any $z \in \Ztpm$.  It is enough to take $z = t^m$, then $(t^m \varphi)(t^k x) = \varphi(t^{k-m} x)$, which clearly vanishes for sufficiently large $|k|$.  
\end{proof}

The module $\HomlocZ(S,\Q / \Z)$ is the desired module of local detection maps.  Given $b \in \mathfrak b_{{\rm loc}}(S)$ (see Proposition~\ref{prop:bijection}), we clearly have $\phi(x) = b(x, \cdot) \in \HomlocZ(S, \Q / \Z)$, so we view $\phi$ as a $\Ztpm$-module map $\phi : S \to \HomlocZ(S, \Q / \Z)$.  The analog of Proposition~\ref{prop:b-Phi} holds in the local setting:

\begin{proposition}
Let $S$ be a $\Ztpm$-module.  Let $\mathfrak b_{{\rm loc}}(S)$ be the set of symmetric, translation-invariant local $\Z$-bilinear forms $S \times S \to \Q / \Z$, and let $\mathfrak f(S) \subset \Hom_\Ztpm(S, \HomlocZ(S, \Q/\Z))$ be the subset of maps $\phi$ such that $\phi(x)(y) = \phi(y)(x)$ for all $x,y \in S$.  Then there is a bijection $\mathfrak b_{{\rm loc}}(S) \to \mathfrak f (S)$ given by $b \mapsto \phi_b$ where $\phi_b(x) = b(x, \cdot)$.
\end{proposition}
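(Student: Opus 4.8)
The plan is to imitate the proof of Proposition~\ref{prop:b-Phi} almost verbatim, the only new ingredient being the bookkeeping of the locality conditions on both sides. First I would write down the candidate inverse $\mathfrak f(S) \to \mathfrak b_{{\rm loc}}(S)$ by $\phi \mapsto b_\phi$, where $b_\phi(x,y) = \phi(x)(y)$, exactly as in Proposition~\ref{prop:b-Phi} but now keeping in mind that $\phi$ takes values in the submodule $\HomlocZ(S, \Q/\Z)$. The argument then splits into three parts: that $b \mapsto \phi_b$ actually lands in $\mathfrak f(S)$, that $\phi \mapsto b_\phi$ actually lands in $\mathfrak b_{{\rm loc}}(S)$, and that the two assignments are mutually inverse.

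For the first part I would check that $\phi_b(x) = b(x,\cdot)$ is a \emph{local} homomorphism for each $x \in S$: for any $y \in S$, translation-invariance and symmetry give $\phi_b(x)(t^k y) = b(x, t^k y) = b(t^{-k} x, y)$, which vanishes for $|k|$ large by locality of $b$. That $\phi_b \colon S \to \HomlocZ(S, \Q/\Z)$ is $\Ztpm$-linear is precisely the computation already carried out in the paragraph preceding the proposition (it uses translation-invariance of $b$ together with the twisted action $(z\varphi)(x) = \varphi(\overline{z} x)$), and the symmetry condition $\phi_b(x)(y) = \phi_b(y)(x)$ is immediate from symmetry of $b$.

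For the second part I would verify the four defining conditions of $\mathfrak b_{{\rm loc}}(S)$ for $b_\phi$: $\Z$-bilinearity (because each $\phi(x)$ is a homomorphism and $\phi$ is additive), symmetry (which is exactly the defining property of $\mathfrak f(S)$), translation-invariance (from $\Ztpm$-linearity of $\phi$: $b_\phi(t^k x, t^k y) = (t^k \phi(x))(t^k y) = \phi(x)(t^{-k} t^k y) = \phi(x)(y)$), and locality (from $\phi(x) \in \HomlocZ(S,\Q/\Z)$: $b_\phi(t^k x, y) = (t^k\phi(x))(y) = \phi(x)(t^{-k} y) = 0$ for $|k|$ large). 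The mutual-inverse claim then follows by direct substitution, $b_{\phi_b}(x,y) = \phi_b(x)(y) = b(x,y)$ and $\phi_{b_\phi}(x)(y) = b_\phi(x,y) = \phi(x)(y)$.

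I do not anticipate a genuine obstacle here; the one point requiring care is keeping the involution $t \mapsto t^{-1}$ straight in the twisted $\Ztpm$-module structure on $\HomlocZ(S,\Q/\Z)$, and noticing the precise match between the two locality notions: locality of $b$ in its first argument is exactly what forces $\phi_b(x)$ into $\HomlocZ(S,\Q/\Z)$, while locality of each homomorphism $\phi(x)$ is exactly what guarantees that $b_\phi$ is a local bilinear form.
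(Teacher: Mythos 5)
Your proposal is correct and follows essentially the same route as the paper's proof: both define the inverse $\phi \mapsto b_\phi$, verify the locality and translation-invariance conditions on each side (the only cosmetic difference being that you deduce locality of $b_\phi$ from locality of $\phi(x)$ via the twisted module action, while the paper uses locality of $\phi(y)$ after invoking symmetry), and conclude by direct substitution that the maps are mutually inverse.
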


\begin{proof}
Given $b \in \mathfrak b_{{\rm loc}}(S)$, we have seen above that $\phi_b$ is a $\Ztpm$-module map where $\phi_b(x) \in \HomlocZ(S, \Q / \Z)$.  Moreover $\phi_b(x)(y) = b(x,y) = \phi_b(y)(x)$, so $\phi_b \in \mathfrak f(S)$.  

We define the inverse map by $\phi \mapsto b_\phi$, where $b_{\phi}(x,y) = \phi(x)(y)$.  This is $\Z$-linear in both arguments and symmetric because $\phi \in \mathfrak f(S)$.  Using the fact that $\phi$ is a $\Ztpm$-module map, and the $\Ztpm$-module structure on $\HomlocZ(S,\Q / \Z)$,
\begin{equation}
b_\phi(t^k x , t^k y) = \phi(t^k x)(t^k y) = (t^k \phi(x) )(t^k y) = \phi(x)(y) = b_\phi(x,y) \text{,}
\end{equation}
so $b_\phi$ is translation-invariant.    Locality of $b_\phi$ follows from $b_\phi(t^k x, y) = \phi(y)(t^k x)$, which vanishes for sufficiently large $|k|$ because $\phi(y)$ is local.  

It is routine to check the functions $b \mapsto \phi_b$ and $\phi \mapsto b_\phi$ are inverses.
\end{proof}

We have seen that data of $b$ in a p-theory $(S,b)$ can be packaged into either $\Phi : S \to S^* = {\Hom_\Ztpm}(S,\Qlp)$ or $\phi : S \to \HomlocZ(S, \Q / \Z)$, which are both $\Ztpm$-module maps.  The following result tells us that it is entirely a matter of taste whether we work with $\Phi$ or $\phi$.  In particular, perfectness of $(S,b)$ is equivalent to $\Phi$ or $\phi$ being an isomorphism.

\begin{proposition}
 \label{prop:RZnloc}
Let $S$ be a $\Ztpm$-module.  The map $\alpha : S^* \to \HomlocZ(S,\Q / \Z)$ given by $\alpha(\lambda)(x) = (\lambda(x))_0$ for $\lambda \in S^*=\Hom_\Ztpm(S,\Qlp)$ is an $\Ztpm$-module isomorphism.

In addition, if $b$ is a symmetric, translation-invariant local $\Z$-bilinear form $b : S \times S \to \Q / \Z$ with associated map $\phi : S \to \HomlocZ(S,\Q / \Z)$, and $\Phi : S \to {\Hom_\Ztpm}(S,\Qlp)$ is the map associated to $B : S \times S \to \Qlp$, then $\phi = \alpha(\Phi)$. 
\end{proposition}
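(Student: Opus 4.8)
The plan is to first verify that $\alpha$ is a well-defined $\Ztpm$-module homomorphism, then exhibit an explicit inverse, and finally check the compatibility identity $\phi = \alpha(\Phi)$ by unwinding definitions. For well-definedness, given $\lambda \in S^* = \Hom_\Ztpm(S, \Qlp)$ and $x \in S$, I need to see that $\alpha(\lambda) = \big(x \mapsto (\lambda(x))_0\big)$ lies in $\HomlocZ(S, \Q/\Z)$: it is clearly $\Z$-linear in $x$, and for locality I would use $\Ztpm$-linearity of $\lambda$, namely $\lambda(t^k x) = t^{-k}\lambda(x)$, so that $\alpha(\lambda)(t^k x) = (t^{-k}\lambda(x))_0 = (\lambda(x))_k$, which vanishes for $|k|$ large since $\lambda(x)$ is a Laurent polynomial with finitely many nonzero coefficients. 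The same computation shows $\Ztpm$-linearity of $\alpha$ itself: one checks on $z = t^m$ that $\alpha(t^m \lambda)(x) = ((t^m\lambda)(x))_0 = (\lambda(\bar{t}^m x))_0 = (\lambda(t^{-m}x))_0$, and compares with $(t^m \alpha(\lambda))(x) = \alpha(\lambda)(\bar t^m x) = \alpha(\lambda)(t^{-m}x) = (\lambda(t^{-m}x))_0$; these agree.

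Next I would construct the inverse $\beta : \HomlocZ(S,\Q/\Z) \to S^*$. Given $\varphi \in \HomlocZ(S,\Q/\Z)$, define $\beta(\varphi)(x) = \sum_{m \in \Z} \varphi(t^m x)\, t^m \in \Qlp$; this is a finite sum by locality of $\varphi$, so it is well-defined. One checks $\beta(\varphi)$ is $\Z$-linear in $x$ and that it is $\Ztpm$-antilinear? No: $\beta(\varphi)(t^k x) = \sum_m \varphi(t^{m+k}x) t^m = \sum_{m'} \varphi(t^{m'}x) t^{m'-k} = t^{-k}\beta(\varphi)(x)$, so $\beta(\varphi) \in \Hom_\Ztpm(S,\Qlp) = S^*$ (using the involution convention $\overline{at^m} = at^{-m}$). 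The two composites: $\alpha(\beta(\varphi))(x) = (\beta(\varphi)(x))_0 = \varphi(t^0 x) = \varphi(x)$, and $\beta(\alpha(\lambda))(x) = \sum_m (\lambda(t^m x))_0 t^m = \sum_m (t^{-m}\lambda(x))_0 t^m = \sum_m (\lambda(x))_m t^m = \lambda(x)$, the last computation being exactly the one already performed in the proof of Proposition~\ref{prop:bijection}. So $\alpha$ is an isomorphism with inverse $\beta$.

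For the second claim, I would simply compute: by Equation~\ref{eqn:B-from-b}, $\Phi(x)(y) = B(x,y) = \sum_m b(t^m x, y)\, t^m$, so $\alpha(\Phi)(x)(y) = (\Phi(x)(y))_0 = (B(x,y))_0 = b(x,y) = \phi(x)(y)$, giving $\alpha(\Phi) = \phi$ as maps $S \to \HomlocZ(S,\Q/\Z)$. Note in passing that $\beta$ is precisely the map sending $\phi_b \mapsto \Phi_{B_b}$, so this packaging is consistent with Proposition~\ref{prop:bijection}. I do not anticipate a genuine obstacle here --- every step is a short formal manipulation with Laurent polynomials --- but the one place to be careful is keeping the involution bookkeeping straight: the $\Ztpm$-module structures on $S^*$ and on $\HomlocZ(S,\Q/\Z)$ are both twisted by the involution, and one must consistently use $\lambda(t^k x) = t^{-k}\lambda(x)$ and $(z\varphi)(x) = \varphi(\bar z x)$ so that $\alpha$ really is $\Ztpm$-linear rather than antilinear. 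Everything else is routine verification of the sort already carried out in Proposition~\ref{prop:bijection}.
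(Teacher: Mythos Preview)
Your approach is essentially the paper's: verify $\alpha$ lands in the local Hom, construct an explicit inverse $\beta$, and unwind definitions for the last statement. However, you have a sign error in exactly the place you flagged as delicate. Elements $\lambda \in S^* = \Hom_\Ztpm(S,\Qlp)$ are $\Ztpm$-\emph{linear}, so $\lambda(t^k x) = t^k\lambda(x)$, not $t^{-k}\lambda(x)$; the involution enters only in the module structure on the Hom set, via $(r\lambda)(x) = \lambda(\bar r x) = \bar r\,\lambda(x)$. With the correct convention, your $\beta(\varphi)(x) = \sum_m \varphi(t^m x)\,t^m$ satisfies $\beta(\varphi)(t^k x) = t^{-k}\beta(\varphi)(x)$, which is \emph{antilinear}, so $\beta(\varphi) \in \overline{\Hom}_\Ztpm(S,\Qlp)$, not $S^*$. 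The paper instead takes $\beta(\varphi)(x) = \sum_m \varphi(t^{-m}x)\,t^m$, which is genuinely $\Ztpm$-linear; then $(\beta\alpha)(\lambda)(x) = \sum_m (\lambda(t^{-m}x))_0\,t^m = \sum_m (\lambda(x))_m\,t^m = \lambda(x)$, and $(\alpha\beta)(\varphi)(x) = \varphi(x)$ as you wrote. Your two sign errors happen to cancel in the $\beta\alpha$ computation, which is why it looked right, but your $\beta$ as written does not land in $S^*$. Once this sign is fixed, your argument is identical to the paper's. The verification of $\phi = \alpha \circ \Phi$ is correct as written.
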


\begin{proof}
First we observe that $\alpha(\lambda) \in \HomlocZ(S,\Q / \Z)$, because $\alpha(\lambda)(t^k x) = (\lambda(t^k x) )_0 = (t^{k} \lambda(x))_0 = (\lambda(x))_{-k}$, which vanishes for $|k|$ sufficiently large.  Clearly $\alpha$ is a $\Z$-module map.  To show it is a $\Ztpm$-module map, it enough to observe $\alpha(t^k \lambda)(x) = ( (t^k \lambda)(x) )_0 = ( \lambda(t^{-k} x) )_0 = (t^k \alpha(\lambda))(x)$.  

We claim the map $\beta : \HomlocZ(S,\Q / \Z) \to S^*$ defined by $\beta(\varphi)(x) = \sum_{m \in \Z} \varphi(t^{-m} x) t^m$ for $\varphi \in \HomlocZ(S,\Q / \Z)$ is the inverse of $\alpha$.  It is easy to check that $\beta(\varphi)(t^k x) = t^{k} \beta(\varphi)(x)$, so indeed $\beta(\varphi) \in \Hom_\Ztpm(S,\Qlp)$.  Moreover we have $\beta(t^k \varphi)(x) = t^{-k} \beta(\varphi)(x) = (t^k \beta(\varphi))(x)$, so $\beta$ is $\Ztpm$-linear.  Now
\begin{equation}
(\beta \alpha)(\lambda)(x) = \sum_{m \in \Z} ( \lambda(t^{-m} x) )_0 t^m = \sum_{m \in \Z} (\lambda(x))_m t^m = \lambda(x) \nonumber \text{.}
\end{equation}
It is easy to see that $(\alpha \beta)(\varphi)  = \varphi$.

Finally, let $b : S \times S \to \Q / \Z$ be a $\Z$-bilinear form as specified.  Then the associated maps are $\phi(x) = b(x, - )$, and $\Phi(x) = \sum_{m \in \Z} b(t^{m} x, - ) t^m$.  We have $\alpha( \Phi(x) )(y) = b(x,y) = \phi(x)(y)$, so $\phi = \alpha(\Phi)$.  
\end{proof}

\section{Compactified theories}
\label{sec:compact}

Given a finite order planon-only theory of excitations,  we would like to understand how to obtain a theory of abelian anyons in two spatial dimensions by compactifying a spatial direction transverse to the planes of mobility.  At the level of translation symmetry, compactification corresponds to taking the quotient $\T_\perp / N \T_\perp = \Z / N \Z = \Z_N$, which identifies $t^N$ with the identity; $N \in \N$ is thus the length of the compactified dimension.  At the level of the Hamiltonian and operator algebra, compactification means we identify lattice sites and single-site operators related by $t^N$.

For a given system, compactification is only guaranteed to behave nicely for sufficiently large $N$.  Suppose we start with a system realizing a planon-only fracton order with a correlation length $\xi$ in the transverse direction.  Then if $N \gg \xi$, and if $N$ is also much larger than the range of terms in the Hamiltonian, we expect the local density matrices of the compactified ground state in three-dimensional regions of linear size much less than $N$ will be close to those of the original ground state, with equality holding asymptotically as $N \to \infty$ if we keep the size of a region fixed.  In contrast, if $N \sim \xi$, then the properties of the compactified system may have nothing to do with the original three-dimensional system.  

For sufficiently large $N$, we also expect the ground state to remain unique for a compactified system of infinite extent in the two remaining spatial directions, because no zero-form or subsystem symmetries arise upon compactification that could be spontaneously broken.  (We note this does not hold for more general fracton orders, where compactification can lead to broken zero-form symmetries.  It also does not hold for compactifications of planon-only fracton orders oriented so that there are finite closed planon string operators, which lead to spontaneously broken subsystem symmetries.)  Finally, we expect nontrivial point-like excitations of size much less than $N$ to survive compactification as nontrivial point-like excitations.

Now we give a formal description of compactification, starting from a theory of excitations $(S,\theta)$.  We give a definition 
for all values of $N \in \N$, which is guaranteed to be physically meaningful for $N \geq N_0$ for some $N_0 \in \N$.\footnote{It is not \emph{a priori} obvious that a nice mathematical definition should be possible for all $N \in \N$ for a general theory of excitations $(S,\theta)$. However, we find that our definition is well behaved even down to $N =1$. }   Let $I_N \subset \Ztpm$ be the ideal $I_N = (1 - t^N)$, then $\Ztpm_N \equiv \Ztpm / I_N \cong \Z[ \Z_N ]$.  Elements of $\Ztpm_N$ are ``periodic polynomials'' of the form $\sum_{m =0}^{N-1} a_m t^m$, where $a_m \in \Z$.  Here, $t$ is the generator of $\Z_N$, so $t^{m} t^{m'} = t^{(m + m') \operatorname{mod} N}$.  
 The involution on $\Ztpm_N$ is given by $\overline{t} = t^{-1} = t^{N-1}$, and this agrees with the usual involution on the group ring $\Z[\Z_N]$ coming from the inverse operation on $\Z_N$.  We have the quotient ring homomorphism $\rho_N : \Ztpm \to \Ztpm_N$, where $\rho_N(z) = z + I_N$ for $z \in \Ztpm$.  When it is clear from context, we suppress the ``$N$'' subscript and write $\rho = \rho_N$.  The homomorphism $\rho$ is compatible with the involution, namely $\rho(\overline{z}) = \overline{ \rho(z) }$.

In the compactified theory, we want to identify $x \in S$ with $t^{\pm N} x$.  This is accomplished by the following:
\begin{defn}
Given a finite order planon-only fusion theory $S$ and $N \in \N$, the {\bf length-$N$ compactification} is the $\Ztpm$-module $S_N = S / I_N S$.
\end{defn}
\noindent  We have the quotient map $\rho_N : S \to S_N$ defined by $\rho_N(x) = x + I_N S$; we slightly abuse notation by using the same symbol as for the ring homomorphism above.  We have $t^{\pm N} \rho(x) = \rho( t^{\pm N} x) = \rho(x)$.  While $S_N$ is defined as a $\Ztpm$-module, it can be viewed as a $\Ztpm_N$-module by defining $\rho(z) \rho(x) = z \rho(x)$.  

Letting $n$ be the fusion order of $S$, and recalling $R = \Z_n[t^\pm]$, we also denote $I_N = (1-t^N) \subset R$, and define $R_N = R / I_N = \Ztpm / (n, 1-t^N) = \Ztpm_N / (n)$.  We can clearly view $S_N$ as an $R_N$-module in the same way we view $S$ as an $R$-module, since $n S_N = 0$.  We also denote the natural ring homomorphism $R \to R_N$ by $\rho_N$.  

We still need to specify the statistics of excitations in the compactified theory; this is accomplished by the following two definitions.

\begin{defn}
 \label{defn:compact-ptheory}
Let $(S,b)$ be a p-theory.  For $N \in \N$, the {\bf length-$N$ compactification} is the p-theory of abelian anyons $(S_N, b_N)$, where $b_N : S_N \times S_N \to \Q / \Z$ is defined by
\begin{equation}
b_N( \rho(x), \rho(y)) \equiv \sum_{m \in \Z} b(t^{m N} x, y)  =  \sum_{m \in \Z} b( x, t^{m N}y)  \nonumber 
\end{equation}
for all $x,y \in S$.
\end{defn}

\begin{defn}\label{defn:compact-theory}
Given a theory of excitations $(S, \theta)$ and $N \in \N$, the {\bf length-$N$ compactification} is the theory of abelian anyons $(S_N, \theta_N)$, where $\theta_N : S_N \to \Q / \Z$ is defined by
\begin{equation}
\theta_N( \rho(x) ) \equiv \theta(x) + \sum_{m > 0} \big( \theta(x + t^{m N} x) - 2 \theta(x) \big) = \theta(x) + \sum_{m > 0} b(x, t^{m N} x)  \nonumber
\end{equation}
for all $x \in S$. 
\end{defn}

Intuitively, the expression for $b_N$ says that, upon compactification, an excitation $x$ has mutual statistics with $y$ and all its translates by multiples of $t^N$.  While the formula for $\theta_N$ is less intuitive, we give a more careful physical justification for both formulas below.
It may seem strange that the formula for $\theta_N$ involves a sum over positive integers $m$, but we observe that $\sum_{m > 0} b(x, t^{m N} x) = \sum_{m < 0} b(x, t^{m N} x)$ by symmetry and translation-invariance of $b$. Before proceeding to physical justification, we check that the definitions are mathematically sensible.  The sums defining $b_N$ and $\theta_N$ have only a finite number of nonzero terms by locality of $b$ and $\theta$.  In addition, we have the following:

\begin{proposition}
The functions $b_N$ and $\theta_N$ given in Definitions~\ref{defn:compact-ptheory} and~\ref{defn:compact-theory} are well-defined, with $b_N(\rho(x), \rho(y))$ and $\theta_N(\rho(x))$  depending only on $\rho(x), \rho(y) \in S_N$.  Moreover, $b_N$ is $\Z$-bilinear, symmetric and translation-invariant, while $\theta_N$ is a translation-invariant quadratic form.  Further, Definitions~\ref{defn:compact-ptheory} and~\ref{defn:compact-theory} are compatible, in the sense that $(S_N, b_N)$ is the p-theory of abelian anyons associated to the theory of abelian anyons $(S_N, \theta_N)$.
\end{proposition}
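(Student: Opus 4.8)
The plan is to reduce everything to elementary manipulations with the finite sums appearing in Definitions~\ref{defn:compact-ptheory} and~\ref{defn:compact-theory}, using only locality, translation-invariance and symmetry of $b$, translation-invariance of $\theta$, and the identity $b(x,x)=2\theta(x)$. Finiteness of every sum is immediate from locality of $b$ (and $\theta$): for fixed $x,y$ there is $\ell$ with $b(t^{k}x,y)=0$ for $|k|\ge\ell$. I would also note at the outset that $S_N=S/I_NS$ is finitely generated over $\Ztpm_N$ and killed by $n$, hence a finitely generated abelian group of finite exponent, hence finite, so that $(S_N,\theta_N)$ is a legitimate candidate for a theory of abelian anyons.

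\textbf{Well-definedness.} First note the two displayed expressions for $b_N$ agree, since $\sum_m b(t^{mN}x,y)=\sum_m b(x,t^{-mN}y)=\sum_m b(x,t^{mN}y)$ by translation-invariance and reindexing. For independence of representatives, write a second representative of $\rho(x)$ as $x+v$ with $v=(1-t^N)w\in I_NS$. In $b_N$ the change is $\sum_m b(t^{mN}v,y)=\sum_m b(t^{mN}w,y)-\sum_m b(t^{(m+1)N}w,y)=0$ by reindexing a finite telescoping sum; independence in the second slot follows from the symmetric expression. For $\theta_N$, expand $\theta_N(\rho(x+v))-\theta_N(\rho(x))$ using $\theta(x+v)=\theta(x)+\theta(v)+b(x,v)$ and bilinearity of $b$. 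Grouping, the part linear in $x$ collapses (using $b(v,t^{mN}x)=b(x,t^{-mN}v)$) to $\sum_{m\in\Z}b(x,t^{mN}v)$, which telescopes to $0$ because $v=(1-t^N)w$, while the remaining, $x$-free part is exactly $\theta(v)+\sum_{m>0}b(v,t^{mN}v)$. This last expression I would compute directly: setting $c_m=b(w,t^{mN}w)$ (so $c_m=c_{-m}$), one finds $b(v,t^{mN}v)=2c_m-c_{m+1}-c_{m-1}$, whose sum over $m>0$ telescopes to $c_1-c_0$ (using $c_m\to0$); and $\theta(v)=\theta(w)+\theta(t^Nw)-b(w,t^Nw)=2\theta(w)-c_1$ using $\theta(-a)=\theta(a)$ and translation-invariance. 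Hence $\theta(v)+\sum_{m>0}b(v,t^{mN}v)=2\theta(w)-c_0=2\theta(w)-b(w,w)=0$, so $\theta_N$ is well-defined.

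\textbf{Algebraic properties and compatibility.} That $b_N$ is $\Z$-bilinear, symmetric and translation-invariant follows slotwise from the corresponding properties of $b$ and the fact that $\rho$ is a $\Ztpm$-module map (for symmetry also invoke the two equivalent expressions). For $\theta_N$ I would carry out the single computation of $\theta_N(\rho(x+y))-\theta_N(\rho(x))-\theta_N(\rho(y))$: after cancellation only the cross-terms survive, and using $b(y,t^{mN}x)=b(x,t^{-mN}y)$ they combine into $b(x,y)+\sum_{m>0}b(x,t^{mN}y)+\sum_{m<0}b(x,t^{mN}y)=\sum_{m\in\Z}b(x,t^{mN}y)=b_N(\rho(x),\rho(y))$. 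This shows at once that the symmetrized second difference of $\theta_N$ equals the $\Z$-bilinear form $b_N$, giving the bilinearity required by Definition~\ref{defn:quadratic}; and $b_N(\rho(x),\rho(x))=\sum_{m\in\Z}b(x,t^{mN}x)=b(x,x)+2\sum_{m>0}b(x,t^{mN}x)=2\theta(x)+2\sum_{m>0}b(x,t^{mN}x)=2\theta_N(\rho(x))$, using $b(x,x)=2\theta(x)$. Thus $\theta_N$ is a quadratic form whose associated bilinear form is precisely $b_N$, which is exactly the compatibility statement. Translation-invariance of $\theta_N$ is then immediate: $\theta_N(\rho(t^kx))=\theta(t^kx)+\sum_{m>0}b(t^kx,t^{mN+k}x)=\theta(x)+\sum_{m>0}b(x,t^{mN}x)=\theta_N(\rho(x))$.

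\textbf{Main obstacle.} The only step requiring genuine care is the well-definedness of $\theta_N$: because $\theta$ is quadratic rather than additive, one cannot simply telescope, and must instead split the change of representative into the $x$-linear piece (which telescopes) and the pure-shift piece $\theta(v)+\sum_{m>0}b(v,t^{mN}v)$, evaluating the latter by hand and invoking $b(w,w)=2\theta(w)$ at the very end. Everything else is routine bookkeeping with finite sums.
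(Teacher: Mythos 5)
Your proof is correct and takes essentially the same route as the paper's: well-definedness of $b_N$ by telescoping, well-definedness of $\theta_N$ by splitting off the representative-change terms and killing them using locality, translation-invariance and $b(w,w)=2\theta(w)$, and compatibility via the second-difference computation $\theta_N(\rho(x+y))-\theta_N(\rho(x))-\theta_N(\rho(y))=b_N(\rho(x),\rho(y))$. The only cosmetic difference is that you verify the quadratic-form condition directly through $b_N(\rho(x),\rho(x))=2\theta_N(\rho(x))$, whereas the paper checks $\theta_N(m\rho(x))=m^2\theta_N(\rho(x))$; these are equivalent by Proposition~\ref{prop:q-equiv}.
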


\begin{proof}
We first check that $b_N$ is well-defined.  Suppose $x'$ and $y'$ are different representatives for $\rho(x)$ and $\rho(y)$, then we have $x' = x + (1-t^N)x''$ and $y' = y + (1-t^N)y''$, and
\begin{eqnarray}
\sum_{m \in \Z} b(t^{m N} x', y') &=& \sum_{m \in \Z} b(t^{m N} x, y') + \sum_{m \in \Z} b(t^{m N} (1 - t^N) x'', y')  \nonumber \\
&=& \sum_{m \in \Z} b(t^{m N} x, y') + \sum_{m \in \Z} b(t^{m N}  x'', y') - \sum_{m \in \Z} b(t^{(m+1) N} x'', y')  \nonumber \\
&=& \sum_{m \in \Z} b(t^{m N} x, y') \nonumber \\
&=& \sum_{m \in \Z} b(t^{m N} x, y) + \sum_{m \in \Z} b(t^{m N} x, y'') - \sum_{m \in \Z} b(t^{m N} x, t^N y'') \nonumber \\
&=& \sum_{m \in \Z} b(t^{m N} x, y)  \text{.}  \nonumber
\end{eqnarray}
Moreover, $b_N$ is clearly $\Z$-bilinear, symmetric and translation-invariant.

Next we check that $\theta_N$ is well-defined.  Putting $x \to x'$ we have
\begin{eqnarray}
\theta_N(\rho(x)) &=& \theta(x') + \sum_{m > 0} b(x', t^{mN} x')  \nonumber \\
&=& \theta(x) + \theta( (1-t^N) x'') + b(x, (1-t^N)x'')  \nonumber \\
&+& \sum_{m > 0} \Big( b(x, t^{m N} x) + b( (1-t^N) x'', t^{m N} x)
+  b(x, t^{m N} (1-t^N) x'') + b( (1-t^N)x'' , t^{m N} (1-t^N) x'') \Big) \nonumber \\
&=& \theta_N(\rho(x)) + \theta( (1-t^N) x'') + \sum_{m > 0}  b( (1-t^N)x'' , t^{m N} (1-t^N) x'') +
\sum_{m \in \Z} b(t^{m N} x, (1-t^N) x'') \label{eqn:qN1} \\
&=& \theta_N(\rho(x)) + \theta( (1-t^N) x'') + \sum_{m > 0}  b( (1-t^N)x'' , t^{m N} (1-t^N) x'') \nonumber \\
&=& \theta_N(\rho(x)) + \theta(x'') + \theta(-x'') - b(x'', t^N x'') + b( (1-t^N)x'', t^N x'') \nonumber \\
&=& \theta_N(\rho(x)) + \theta(x'') + \theta(-x'') - b(x'', x'') \nonumber \\
&=& \theta_N(\rho(x)) + \theta(x'' - x'') = \theta_N(\rho(x)) \nonumber \text{.}
\end{eqnarray}
Here, the last term in Equation~\ref{eqn:qN1} is equal to $b_N(\rho(x), \rho(0)) = 0$.  Translation-invariance of $\theta_N$ follows from translation-invariance of $\theta$.  

Finally we compute
\begin{align}
\theta_N(\rho(x+y)) - \theta_N(\rho(x)) - \theta_N(\rho(y)) &= 
\theta(x+y) - \theta(x) - \theta(y)  \nonumber \\ 
&+ \sum_{m > 0} \Big( b(x+y, t^{m N}(x+y) ) - b(x, t^{m N} x) - b(y, t^{m N}y ) \Big) \nonumber \\
&= b(x,y) + \sum_{m > 0} \Big( b(x, t^{m N} y) + b(y, t^{m N} x) \Big ) \nonumber \\
&= b_N(\rho(x), \rho(y)) \text{.} \nonumber 
\end{align} 
Moreover, $\theta_N (m \rho(x)) = m^2 \theta_N( \rho(x) )$ for all $x \in S$ and $m \in \N$ because the same property holds for $\theta$, so $\theta_N$ is a quadratic form with $b_N$ the associated bilinear form.
\end{proof}

\begin{proposition}
Fixing $N \in \N$, modularity of $(S_N, b_N)$ respects stacking.   That is, if $(S,b) = \bigominus_i (S_i, b_i)$, then $(S_N, b_N)$ is modular if and only if all the $(S_{iN}, b_{iN})$ are modular.   \label{prop:mod-respects-stacking}
\end{proposition}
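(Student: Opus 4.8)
The plan is to show that the length-$N$ compactification functor commutes with stacking, both at the level of modules and of bilinear forms, and then to invoke Proposition~\ref{prop:nd-respects-stacking}, which states that an orthogonal sum of forms is nondegenerate if and only if each summand is. Since modularity of $(S_N, b_N)$ is by definition nondegeneracy of $b_N$, this yields the claim immediately.

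First I would establish the module-level identification $S_N \cong \bigoplus_i S_{iN}$. Writing $S = \bigoplus_i S_i$, the ideal $I_N = (1 - t^N) \subset \Ztpm$ acts componentwise on a direct sum, so $I_N S = \bigoplus_i I_N S_i$, and hence $S_N = S / I_N S \cong \bigoplus_i (S_i / I_N S_i) = \bigoplus_i S_{iN}$ as $\Ztpm$-modules. Under this identification the quotient map $\rho_N : S \to S_N$ decomposes as a direct sum of the quotient maps $\rho_{N,i} : S_i \to S_{iN}$; in particular, for $x = \sum_i x_i \in S$ with $x_i \in S_i$, the component of $\rho_N(x)$ in $S_{iN}$ is $\rho_{N,i}(x_i)$.

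Next I would check that $b_N = \perp_i b_{iN}$ under this identification. Using the formula of Definition~\ref{defn:compact-ptheory} together with $b = \perp_i b_i$, for $x = \sum_i x_i$ and $y = \sum_i y_i$ we compute
\begin{equation}
b_N(\rho_N(x), \rho_N(y)) = \sum_{m \in \Z} b(t^{mN} x, y) = \sum_{m \in \Z} \sum_i b_i(t^{mN} x_i, y_i) = \sum_i \sum_{m \in \Z} b_i(t^{mN} x_i, y_i) = \sum_i b_{iN}(\rho_{N,i}(x_i), \rho_{N,i}(y_i)) \text{,} \nonumber
\end{equation}
which is exactly the defining formula for $(\perp_i b_{iN})(\rho_N(x), \rho_N(y))$; interchanging the two finite sums is harmless since locality of each $b_i$ ensures only finitely many terms are nonzero. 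Hence $(S_N, b_N) = \bigominus_i (S_{iN}, b_{iN})$ as p-theories of abelian anyons.

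Finally, applying Proposition~\ref{prop:nd-respects-stacking} to the orthogonal sum $b_N = \perp_i b_{iN}$ shows that $b_N$ is nondegenerate if and only if every $b_{iN}$ is, i.e.\ $(S_N, b_N)$ is modular if and only if all the $(S_{iN}, b_{iN})$ are modular. I do not anticipate a serious obstacle here: the only point requiring care is the bookkeeping that the compactification functor $(-)_N$ genuinely commutes with finite direct sums and carries orthogonal sums of forms to orthogonal sums, and this is routine once the definitions are unwound; well-definedness of the relevant forms was already established in the preceding proposition.
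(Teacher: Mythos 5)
Your proposal is correct and follows essentially the same route as the paper: both identify $S_N \cong \bigoplus_i S_{iN}$ via $I_N S = \bigoplus_i I_N S_i$, verify $b_N = \perp_i b_{iN}$ directly from Definition~\ref{defn:compact-ptheory} and $b = \perp_i b_i$, and then invoke Proposition~\ref{prop:nd-respects-stacking}. No gaps.
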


\begin{proof}
Recall that modularity is the same as nondegeneracy of the compactified bilinear form.  We have $I_N S = \bigoplus_i I_N S_i$, so $S_N \cong \bigoplus_i S_{i N}$.  Moreover, for any $x,y \in S$ with $x = \sum_i x_i$ and $y = \sum_i y_i$, we have 
$b(x,y) = \sum_i b_i(x_i, y_i)$, and from the definition of $b_N$ it follows that $b_N(\rho(x), \rho(y)) = \sum_i b_{i N}(\rho_i (x_i), \rho_i(y_i) )$, where $\rho_i : S_i \to S_{i N}$ is the natural map, which satisfies $\rho = \bigoplus_i \rho_i$.  Therefore $b_N = \perp_i b_{i N}$, and the result follows from Proposition~\ref{prop:nd-respects-stacking}.
\end{proof}

Modularity of $(S_N, \theta_N)$ is defined as modularity of $(S_N, b_N)$. Therefore, we have immediately:

\begin{corollary}
Fixing $N \in \N$, modularity of $(S_N, \theta_N)$ respects stacking. 
\end{corollary}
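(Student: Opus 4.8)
The plan is to reduce the statement directly to Proposition~\ref{prop:mod-respects-stacking}, which already establishes that modularity of the compactified p-theory $(S_N, b_N)$ respects stacking. The first step is to recall that, by definition, modularity of the compactified theory of abelian anyons $(S_N, \theta_N)$ \emph{means} modularity of its associated p-theory $(S_N, b_N)$, i.e.\ nondegeneracy of the bilinear form $b_N$ obtained from $\theta_N$. Thus the only content left to verify is that passing from a stack of theories of excitations to the corresponding data of p-theories is compatible with compactification.

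Concretely, suppose $(S, \theta) = \bigominus_i (S_i, \theta_i)$. By the definition of stacking of theories of excitations together with the remark immediately following it, the associated p-theory is $(S, b)$ with $b = \perp_i b_i$, where each $b_i$ is the bilinear form associated to $\theta_i$; equivalently, by Proposition~\ref{prop:thetasandbswithstacking}, $(S,b) = \bigominus_i (S_i, b_i)$ and each $(S_i, b_i)$ is the p-theory of $(S_i, \theta_i)$. Then Proposition~\ref{prop:mod-respects-stacking} applies verbatim: $(S_N, b_N)$ is modular if and only if all the $(S_{iN}, b_{iN})$ are modular. Unwinding the definition of modularity for theories of abelian anyons on both sides gives the corollary.

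There is no genuine obstacle here; the only point worth flagging — and it is already handled inside the proof of Proposition~\ref{prop:mod-respects-stacking} — is that compactification commutes with direct sums, since $I_N S = \bigoplus_i I_N S_i$ forces $S_N \cong \bigoplus_i S_{iN}$ and the defining sum for $b_N$ then decomposes as $b_N = \perp_i b_{iN}$. Given this, the statement is an immediate consequence of the corresponding result at the level of p-theories, which is why it is stated as a corollary rather than proved afresh.
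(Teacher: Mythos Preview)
Your proof is correct and follows essentially the same route as the paper: the paper simply observes that modularity of $(S_N,\theta_N)$ is \emph{defined} as modularity of $(S_N,b_N)$ and then invokes Proposition~\ref{prop:mod-respects-stacking} directly. Your appeal to Proposition~\ref{prop:thetasandbswithstacking} to justify that the p-theory of a stack is the stack of p-theories is a reasonable extra explicit step, though the paper treats this as immediate from the definition of stacking.
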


Now we give a physical justification of Definition~\ref{defn:compact-theory}.  The idea is to choose a finite set of generators for $S_N$ that correspond to planons whose transverse support is small compared to $N$.  Then we check that $\theta_N$ and $b_N$ behave as expected physically for these ``small'' excitations.  It is not clear what to expect for the statistics of ``large'' excitations of transverse size on the order of $N$.  However, the identity $\theta_N(x + y) = \theta_N(x) + \theta_N(y) + b_N(x,y)$ tells us that $\theta_N$ (and thus $b_N$) is completely determined by the values of $\theta_N$ and $b_N$ on generators.

Let $x_1, \dots, x_k \in S$ be $\Ztpm$-module generators for $S$.  Each generator $x_i$ corresponds to a point-like excitation, which in particular has finite extent in the transverse direction.  By locality of $b$, there exist $\ell_{i j} \in \N$ such that $b(x_i, t^m x_j) = 0$ for $|m| \geq \ell_{i j}$.  We set $\ell = \operatorname{max} \{ \ell_{i j} \}$, and consider a compactification of size $N \geq 2 \ell$.  We observe that $\{ t^m \rho( x_i)  \mid 0 \leq m < N , 1 \leq i \leq k \}$ is a set of $\Z$-module generators for $S_N$ -- these will be the small excitations of the compactified theory.

For each small excitation, we have $\theta_N ( t^m \rho(x) ) = \theta_N ( \rho(x)) = \theta(x)$, using $N > \ell$.  That is, the small excitations have the same topological spin in the compactified theory that they had prior to compactification, which is precisely what we expect for a sufficiently large compactification.  Now we consider mutual statistics between the small excitations; physically, we expect  $b_N ( t^m \rho(x_i), t^{m'} \rho(x_j) ) = b(t^{m + a N} x_i, t^{m'} x_j )$ for a unique integer $a$ (which depends on $m$ and $m'$) chosen to bring $t^m x_i$ and $t^{m'} x_j$ sufficiently close that the mutual statistics may be nonzero.  That is, mutual statistics between small excitations in the compactified theory is given by the mutual statistics of the unique pair of representative excitations prior to compactification that are close enough to have nontrivial braiding.  
 We have
\begin{equation}
b_N ( t^m \rho(x_i), t^{m'} \rho(x_j) ) = \sum_{a \in \Z} b( t^{m + a N} x_i, t^{m'} x_j ) \text{.} \label{eqn:ms}
\end{equation}
If $|m' - m| < \ell$, then only the $a = 0$ term in Equation~\ref{eqn:ms} may be nonzero.  For if $a \neq 0$, we have
\begin{equation}
| m' - m - a N | \geq \big| |a | N - |m' - m| \big| \geq  2 |a | \ell - |m' - m| > (2 |a| - 1) \ell \geq \ell \text{.}
\end{equation}
If $|m' - m| = \ell$, then we have $| m' - m - a N | \geq \ell$ for all $a \in \Z$, and $b_N( t^m \rho(x_i), t^{m'} \rho(x_j) ) = 0$.  Finally if $|m'- m| > \ell$,  we may have either $\ell < m' - m < N$, or $-N < m' - m < - \ell$.  In the first case, only the $a = 1$ term may be nonzero, while in the second, only the $a = -1$ term may be nonzero.  Therefore in all cases, $b_N( t^m \rho(x_i), t^{m'} \rho(x_j) )$ is equal to the mutual statistics of a single pair of small excitations prior to compactification, as expected.

Now we describe the compactified theories in a more formal manner.  We define $\Qlp_N = \Qlp / I_N \Qlp$, which is a $\Ztpm_N$-module consisting of periodic polynomials with $\Q / \Z$ coefficients and with terms of degree $0$ through $N-1$.  As in $\Ztpm_N$, $t^m t^{m'} = t^{(m + m') \mod N}$.  We again denote the natural map by $\rho_N : \Qlp \to \Qlp_N$. 

\begin{proposition}
There is a bijection between the set of symmetric, translation-invariant $\Z$-bilinear forms $S_N \times S_N \to \Q / \Z$, and the set of $\Ztpm_N$-Hermitian forms $S_N \times S_N \to \Qlp_N$.  The bijection is given by $b_N \mapsto B_N$, where $B_N : S_N \times S_N \to \Qlp_N$ is defined by
\begin{equation}
B_N\big(\rho(x),\rho(y)\big) = \sum_{g \in \Z_N} b_N \big(g \rho(x), \rho(y)\big) g  \label{eqn:BN} \text{,}
\end{equation}
for $x,y \in S$.
The associated $\Ztpm_N$-linear map $\Phi_N : S_N \to {\Hom_{\Ztpm_N}}(S_N, \Qlp_N)$ is given by $\Phi_N(x) = B_N(x, \cdot)$.  \label{prop:cpct-bijection}
\end{proposition}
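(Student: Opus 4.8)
The plan is to follow the proof of Proposition~\ref{prop:bijection} almost verbatim, with the Laurent polynomial ring $\Ztpm$ replaced by the quotient ring $\Ztpm_N = \Z[\Z_N]$ and the module $\Qlp$ replaced by $\Qlp_N$.  The one simplification is that no analogue of the locality hypothesis is needed: because $\Z_N$ is finite, the sum defining $B_N$ has only $N$ terms, so $B_N$ automatically takes values in $\Qlp_N$ and is a well-defined function of $\rho(x),\rho(y) \in S_N$.

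First I would check that $B_N$ is a $\Ztpm_N$-Hermitian form.  $\Z$-bilinearity is immediate from that of $b_N$.  For sesquilinearity, reindexing the sum over $\Z_N$ by $g \mapsto g(g')^{-1}$ gives $B_N(g'\rho(x),\rho(y)) = \overline{g'}\,B_N(\rho(x),\rho(y))$, and combining the same reindexing with translation-invariance of $b_N$ gives $B_N(\rho(x),g'\rho(y)) = g'\,B_N(\rho(x),\rho(y))$; since $\{t^m \mid 0 \le m < N\}$ spans $\Ztpm_N$ over $\Z$, this suffices.  Hermiticity, $B_N(\rho(x),\rho(y)) = \overline{B_N(\rho(y),\rho(x))}$, follows by using symmetry and translation-invariance of $b_N$ inside the sum and then reindexing $g \mapsto g^{-1}$, exactly as in the non-compact case.

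Next I would define the candidate inverse by $B_N \mapsto b_N$ with $b_N(\rho(x),\rho(y)) = \big(B_N(\rho(x),\rho(y))\big)_0$, the coefficient of $t^0$, and verify routinely that the resulting $b_N$ is $\Z$-bilinear, symmetric (reading off the $t^0$ coefficient of the Hermiticity identity for $B_N$), and translation-invariant (from the sesquilinearity of $B_N$).  That the two assignments are mutually inverse follows from the identity $\big(\overline{g}\,f\big)_0 = (f)_g$ in $\Qlp_N$: going $b_N \mapsto B_N \mapsto b_N$ just picks out the $g = e$ term, while going $B_N \mapsto b_N \mapsto B_N$ reconstitutes $B_N$ coefficient by coefficient, precisely as in the computation verifying $B \mapsto b_B \mapsto B$ in the proof of Proposition~\ref{prop:bijection}.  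Finally, the statement about the associated map is the instance of Proposition~\ref{prop:b-Phi} with $R = \Ztpm_N$, $M = S_N$ and $L = \Qlp_N$, which yields the $\Ztpm_N$-linear map $\Phi_N \colon S_N \to \Hom_{\Ztpm_N}(S_N,\Qlp_N)$, $\Phi_N(x) = B_N(x,\cdot)$, together with the bijection between Hermitian forms and maps of this kind.

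I expect no genuine obstacle here: the structural content is identical to Proposition~\ref{prop:bijection}.  The only thing requiring care is the index bookkeeping modulo $N$ inside $\Ztpm_N$ and $\Qlp_N$ — in particular tracking the identity $\big(\overline{g}\,f\big)_0 = (f)_g$ and checking that each reindexing of a sum over the cyclic group $\Z_N$ is legitimate.
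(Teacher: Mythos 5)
Your proposal is correct and follows essentially the same route as the paper, which simply declares the proof analogous to Proposition~\ref{prop:bijection} and notes that the last statement is Definition~\ref{defn:amap}; you have merely spelled out the reindexing over $\Z_N$ and the coefficient-extraction inverse that the paper leaves implicit.
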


\begin{proof}
The proof is analogous to that of Proposition~\ref{prop:bijection}.  The last statement is simply Definition~\ref{defn:amap}.
\end{proof}

Next we will see how to obtain $\Phi_N$ directly from $\Phi$.

\begin{proposition}
There is a $\Ztpm$-module map $\rho_{N*} : S^* \to \Hom_{\Ztpm_N}(S_N, \Qlp_N)$ characterized by the property that, for any $\lambda \in S^*$,
$\rho_{N*} (\lambda)$ is the unique element of $\Hom_{\Ztpm_N}(S_N, \Qlp_N)$ for which $\rho_{N*} (\lambda) ( \rho(x) ) = (\rho  \lambda)(x)$ for all $x \in S$.
\label{prop:rhostar}
\end{proposition}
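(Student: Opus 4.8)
The plan is to build $\rho_{N*}$ by hand from the universal property of quotient modules, exploiting that the ideal $I_N = (1-t^N)$ annihilates the coefficient module $\Qlp_N$. Given $\lambda \in S^* = \Hom_{\Ztpm}(S,\Qlp)$, I would first form the $\Ztpm$-linear composite $\rho_N\circ\lambda\colon S \to \Qlp_N$, where $\rho_N\colon \Qlp\to\Qlp_N$ is the quotient map. For $f\in I_N$ and $s\in S$ we have $(\rho_N\lambda)(fs) = \rho_N(f\lambda(s)) = \rho_N(f)\,\rho_N(\lambda(s)) = 0$, since $\rho_N(f) = 0$ in $\Ztpm_N$; hence $\rho_N\lambda$ vanishes on $I_N S$. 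By the universal property of $S_N = S/I_N S$ there is a unique $\Ztpm$-linear map $\rho_{N*}(\lambda)\colon S_N \to \Qlp_N$ satisfying $\rho_{N*}(\lambda)(\rho_N(x)) = \rho_N(\lambda(x))$ for all $x\in S$. Since $\rho_N\colon S\to S_N$ is surjective, this is exactly the uniqueness statement asserted in the proposition, and the identity $\rho_{N*}(\lambda)(\rho_N(x)) = (\rho_N\lambda)(x)$ is its defining property.

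The remaining checks are bookkeeping. That $\rho_{N*}(\lambda)$ lands in $\Hom_{\Ztpm_N}(S_N,\Qlp_N)$ rather than merely in $\Hom_{\Ztpm}(S_N,\Qlp_N)$ is automatic: the $\Ztpm_N$-actions on $S_N$ and $\Qlp_N$ are defined so that $\rho_N(z)\rho_N(x) = \rho_N(zx)$, and $\Ztpm\to\Ztpm_N$ is surjective, so any $\Ztpm$-linear map between these modules is $\Ztpm_N$-linear; explicitly, $\rho_{N*}(\lambda)(\rho_N(z)\rho_N(x)) = \rho_N(\lambda(zx)) = \rho_N(z)\rho_N(\lambda(x)) = \rho_N(z)\,\rho_{N*}(\lambda)(\rho_N(x))$. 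Next I would check that $\lambda\mapsto\rho_{N*}(\lambda)$ is a homomorphism of $\Ztpm$-modules, where $\Hom_{\Ztpm_N}(S_N,\Qlp_N)$ carries its natural involution-twisted $\Ztpm_N$-module structure, viewed as a $\Ztpm$-module by restriction of scalars along $\rho_N$. Additivity follows at once from the defining identity together with surjectivity of $\rho_N\colon S\to S_N$. For compatibility with the ring action, recall $(z\lambda)(x) = \overline{z}\,\lambda(x)$ and that $\rho_N$ intertwines the involutions, \emph{i.e.} $\rho_N(\overline{z}) = \overline{\rho_N(z)}$; then $\rho_{N*}(z\lambda)(\rho_N(x)) = \rho_N(\overline{z}\lambda(x)) = \overline{\rho_N(z)}\,\rho_{N*}(\lambda)(\rho_N(x)) = \bigl(\rho_N(z)\cdot\rho_{N*}(\lambda)\bigr)(\rho_N(x))$, so $\rho_{N*}(z\lambda) = z\cdot\rho_{N*}(\lambda)$ by the uniqueness clause.

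I do not anticipate a genuine obstacle here: the content is just that $\Hom$ is compatible with the base change $\Ztpm\to\Ztpm/I_N$ because $I_N$ kills $\Qlp_N$, a standard fact phrased concretely. The only thing that demands care is the notational overloading of $\rho$ (and $\rho_N$) for the three distinct quotient maps on $\Ztpm$, on $S$, and on $\Qlp$, together with the twisted $R$-module structure on $\Hom$-modules; keeping these straight is essentially the entire argument.
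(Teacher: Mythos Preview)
Your proposal is correct and follows essentially the same approach as the paper: post-compose $\lambda$ with the quotient map $\rho_N\colon\Qlp\to\Qlp_N$, observe that the result vanishes on $I_N S$, and invoke the universal property of $S_N = S/I_N S$ to obtain the unique induced map. The paper presents this via a commutative diagram and leaves the $\Ztpm$-linearity of $\rho_{N*}$ implicit, whereas you spell out both the factoring argument and the compatibility with the twisted module action in more detail; the content is the same.
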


\begin{proof}
In this proof, and elsewhere when it is clear from the context, we omit the ``$N$'' subscript and write $\rho_*$.  The construction of $\rho_*$ is illustrated by the following commutative diagram, where $\lambda$ is an arbitrary element of $S^*$, and explained below:
\begin{equation}
\begin{tikzcd}
S \arrow[r, "\lambda"] \arrow[d, "\rho"] & \Qlp \arrow[r, "\rho"] & \Qlp_N \\
S_N \arrow[rru,dashed, swap, "\rho_* (\lambda)"] & & 
\end{tikzcd} \nonumber
\end{equation}
We get a map  $S^* \to {\Hom_\Ztpm}(S, \Qlp_N)$ by post-composition with $\rho$, \emph{i.e.} $\lambda \mapsto \rho \lambda$, corresponding to the top row of the diagram.  Then we can identify ${\Hom_\Ztpm}(S, \Qlp_N) \cong  {\Hom_\Ztpm}(S_N, \Qlp_N)$; any $\varphi \in {\Hom_\Ztpm}(S, \Qlp_N)$ factors uniquely through $\rho : S \to S_N$, because $\varphi( I_N S) = 0$.  This gives the unique map $\rho_* (\lambda) \in {\Hom_\Ztpm}(S_N, \Qlp_N)$ shown as the dashed arrow of the diagram.  Finally, it is clear that ${\Hom_\Ztpm}(S_N, \Qlp_N) = {\Hom_{\Ztpm_N}}(S_N, \Qlp_N)$, as the action of $\Ztpm$ on both $S_N$ and $\Qlp_N$ factors through $\rho : \Qlp \to \Qlp_N$.  Therefore given $\lambda \in S^*$, we get $\rho_* (\lambda) \in {\Hom_{\Ztpm_N}}(S_N, \Qlp_N)$.  The claimed property of $\rho_*$ is just commutativity of the diagram, and uniqueness is clear because $\rho$ is surjective.
\end{proof}

Making use of $\rho_*$, we repeat a similar argument to get $\Phi_N$ from $\Phi$.
\begin{proposition}
There is a unique $\Ztpm$-module map $\Phi_N : S_N \to {\Hom_{\Ztpm_N}}(S_N, \Qlp_N)$ satisfying 
$\Phi_N( \rho(x)) = \rho_* \Phi ( x)$ for all $x \in S$.  \label{prop:PhiN}
\end{proposition}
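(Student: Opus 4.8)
The plan is to construct $\Phi_N$ by showing that the composite
\[
\rho_{N*} \circ \Phi \colon S \longrightarrow \Hom_{\Ztpm_N}(S_N, \Qlp_N)
\]
factors uniquely through the quotient map $\rho = \rho_N \colon S \to S_N$. First I would observe that this composite is a $\Ztpm$-module homomorphism: $\Phi$ is $\Ztpm$-linear as the associated map of the $\Ztpm$-Hermitian form $B$ (see the discussion following Definition~\ref{defn:amap}), and $\rho_{N*}$ is $\Ztpm$-linear by Proposition~\ref{prop:rhostar}.

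The key point is that the target $\Hom_{\Ztpm_N}(S_N, \Qlp_N)$ is naturally a module over $\Ztpm_N = \Ztpm / I_N$, so the ideal $I_N = (1 - t^N)$ acts as zero on it. Hence, for every $s \in S$ and $z \in \Ztpm$,
\[
(\rho_{N*} \circ \Phi)\bigl( (1-t^N)\, z\, s \bigr) = (1-t^N)\, z \cdot (\rho_{N*} \circ \Phi)(s) = 0 \text{,}
\]
using $\Ztpm$-linearity of $\rho_{N*}\circ\Phi$. Since $I_N S = (1-t^N) S$, this shows $(\rho_{N*}\circ\Phi)(I_N S) = 0$. By the universal property of the quotient $S_N = S/I_N S$, there is therefore a unique $\Ztpm$-module map $\Phi_N \colon S_N \to \Hom_{\Ztpm_N}(S_N, \Qlp_N)$ with $\Phi_N \circ \rho = \rho_{N*}\circ\Phi$, that is, $\Phi_N(\rho(x)) = \rho_{N*}\Phi(x)$ for all $x \in S$. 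As with the other Hom-modules appearing here, this $\Ztpm$-module map is automatically $\Ztpm_N$-linear, since the $\Ztpm$-action on both source and target factors through $\rho_N \colon \Ztpm \to \Ztpm_N$. Uniqueness as a map satisfying the stated identity is immediate from surjectivity of $\rho$: the value of $\Phi_N$ on every element of $S_N$ is forced by $\Phi_N(\rho(x)) = \rho_{N*}\Phi(x)$.

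I do not expect a genuine obstacle here; the only thing requiring care is bookkeeping of which ring each Hom-module is a module over, and in particular the recognition that the target $\Hom_{\Ztpm_N}(S_N,\Qlp_N)$ is annihilated by $I_N$, which is precisely what permits the factorization. This is the same mechanism already used to define $\rho_{N*}$ in Proposition~\ref{prop:rhostar}, so the argument is a direct analogue and the commutative diagram there can be extended by one more arrow to illustrate the construction.
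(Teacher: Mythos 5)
Your proof is correct and follows essentially the same route as the paper: both show that $\rho_{N*}\circ\Phi$ annihilates $I_N S$ because $1-t^N$ acts as zero on the $\Ztpm_N$-module $\Hom_{\Ztpm_N}(S_N,\Qlp_N)$, then factor through the quotient $S_N = S/I_N S$ and get uniqueness from surjectivity of $\rho$. The paper merely phrases the factorization via a commutative diagram rather than invoking the universal property explicitly, which is the same argument.
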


\begin{proof}
We consider the commutative diagram
\begin{equation}
\begin{tikzcd}
S \arrow[r, "\Phi"] \arrow[d, "\rho"] & S^* \arrow[r, "\rho_*"] & {\Hom_{\Ztpm_N}}(S_N, \Qlp_N) \\
S_N \arrow[rru,dashed, swap, "\Phi_N"] & & 
\end{tikzcd} \nonumber
\end{equation}
As above, the map $\rho_*  \Phi$, shown in the top row of the diagram, factors uniquely through $\rho : S \to S_N$.  To see this, take $x \in I_N S$, so $x = (1-t^N)x'$.  Then $\rho_* \Phi (x) = \rho_* \Phi ((1-t^N) x') = (1-t^{N}) \rho_* \Phi(x') = 0$, so $\rho_* \Phi(I_N S) = 0$.    Again, the claimed property of $\Phi_N$ is  commutativity of the diagram, and uniqueness is clear.
\end{proof}

When working with a theory of fixed fusion order $n$, $B_N$ takes values in $R_N \subset \Qlp_N$, and we can think of $B_N$ as $R_N$-sesquilinear because $n S_N = 0$.  Similarly we can view $S_N$ as an $R_N$-module and identify ${\Hom_{\Ztpm_N}}(S_N, \Qlp_N) = {\Hom_{R_N}}(S_N, R_N)$, with $\Phi_N$ an $R_N$-linear map
$\Phi_N : S_N \to  {\Hom_{R_N}}(S_N, R_N)$.  Within this viewpoint, we have $\rho_* : {\Hom_R}(S, R) \to {\Hom_{R_N}}(S_N, R_N)$.

We have given two different constructions of $\Phi_N$; one directly from $\Phi$ as above, and one in Proposition~\ref{prop:cpct-bijection}, \emph{i.e.} $\Phi_N(x) = B_N(x, \cdot)$, where $B_N$ is given by Equation~\ref{eqn:BN}.  We need to check that these two constructions are the same:

\begin{proposition}\label{prop:PhiPhiNBBN}
Let $(S,b)$ be a p-theory with associated map $\Phi : S \to S^*$, and let $(S_N, b_N)$ be the length-$N$ compactification. We let $\Phi_N : S_N \to \Hom_{\Ztpm_N}(S_N, \Qlp_N)$ be the map defined in Proposition~\ref{prop:PhiN}.  Then $\Phi_N$ is the associated map of $B_N : S_N \times S_N \to \Qlp_N$, where $B_N$ is given in terms of $b_N$ in Equation~\ref{eqn:BN}. That is,  $\Phi_N(\rho(x))(\rho(y)) = B_N(\rho(x), \rho(y))$ for all $x,y \in S$.
\end{proposition}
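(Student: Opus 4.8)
The plan is to chase definitions, reducing everything to the explicit formula \(B_b(x,y) = \sum_{m\in\Z} b(t^m x, y)\, t^m\) of Equation~\ref{eqn:B-from-b} together with the elementary behaviour of the quotient map \(\rho_N \colon \Qlp \to \Qlp_N\). First I would unpack the left-hand side using the characterizing properties established just above. By Proposition~\ref{prop:PhiN} we have \(\Phi_N(\rho(x)) = \rho_*\Phi(x)\), and by the defining property of \(\rho_*\) in Proposition~\ref{prop:rhostar}, \(\bigl(\rho_*\Phi(x)\bigr)(\rho(y)) = \rho\bigl(\Phi(x)(y)\bigr)\). Since \(\Phi\) is the associated map of \(B = B_b\), we have \(\Phi(x)(y) = B(x,y)\), so the left-hand side equals \(\rho_N\bigl(\sum_{m\in\Z} b(t^m x, y)\, t^m\bigr) = \sum_{m\in\Z} b(t^m x, y)\,\rho_N(t^m)\), a finite sum by locality of \(b\).

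The next step is to rewrite this as the right-hand side. Because \(\Qlp_N = \Qlp / I_N\Qlp\) with \(I_N = (1-t^N)\), the map \(\rho_N\) sends \(t^m\) to \(t^{\,m \bmod N}\). Grouping the sum over \(m \in \Z\) by residue class modulo \(N\), i.e.\ writing \(m = j + kN\) with \(0 \le j \le N-1\) and \(k \in \Z\), I obtain \(\sum_{j=0}^{N-1} t^j \sum_{k\in\Z} b\bigl(t^{kN}(t^j x),\, y\bigr)\). For each fixed \(j\), the inner sum is exactly \(b_N(\rho(t^j x),\rho(y))\) by Definition~\ref{defn:compact-ptheory} applied with \(t^j x\) in place of \(x\); moreover \(\rho(t^j x) = t^j\rho(x)\), and as \(j\) ranges over \(\{0,\dots,N-1\}\) this runs precisely over \(g\rho(x)\) for \(g \in \Z_N\), since \(t^N\) acts trivially on \(S_N\). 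Hence the left-hand side equals \(\sum_{g\in\Z_N} b_N(g\rho(x),\rho(y))\, g\), which is \(B_N(\rho(x),\rho(y))\) by Equation~\ref{eqn:BN}. Because every element of \(S_N\) has the form \(\rho(x)\), this identity simultaneously shows that \(\Phi_N\) coincides with the associated map of \(B_N\) as defined in Proposition~\ref{prop:cpct-bijection}, completing the proof.

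The only genuine content beyond formal manipulation is the reindexing of the sum over \(\Z\) into the \(N\) residue classes modulo \(N\); this is legitimate since locality of \(b\) makes all the sums involved finite, so no rearrangement subtlety arises. The main (minor) obstacle is bookkeeping the two module structures in parallel — the \(\Ztpm\)-action on \(\Qlp\) versus the \(\Ztpm_N\)-action on \(\Qlp_N\) under which \(t\) is a generator of the cyclic group \(\Z_N\) — and checking that the reduction \(t^m \mapsto t^{\,m\bmod N}\) is applied consistently on both sides of Equation~\ref{eqn:BN}. Everything else is a routine unwinding of Propositions~\ref{prop:rhostar} and~\ref{prop:PhiN} against the defining formulas.
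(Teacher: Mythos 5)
Your proof is correct and follows essentially the same route as the paper: unwind $\Phi_N(\rho(x))(\rho(y)) = \rho(\Phi(x)(y))$ via Propositions~\ref{prop:rhostar} and~\ref{prop:PhiN}, then regroup the sum $\sum_{m\in\Z} b(t^m x,y)t^m$ by residue classes modulo $N$ and identify each inner sum with $b_N(t^j\rho(x),\rho(y))$, recovering Equation~\ref{eqn:BN}. Your added remarks on finiteness of the sums (via locality of $b$) and on surjectivity of $\rho$ are sound bookkeeping that the paper leaves implicit.
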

\begin{proof}
We have
\begin{equation}
\Phi_N( \rho(x) )(\rho(y)) = (\rho_* \Phi) (x) (\rho(y)) = \rho_* ( \Phi(x)) ( \rho(y)) = \rho ( \Phi(x)(y) ) \text{.} \nonumber
\end{equation}
Now $\Phi(x)(y) = \sum_{m \in \Z} b(t^{m} x, y) t^m$, so
\[
\rho ( \Phi(x)(y) ) = \sum_{m = 0}^{N-1} \Big[ \sum_{k \in \Z} b(t^{k N} t^{m} x, y )  \Big] t^m
= \sum_{m = 0}^{N-1} b_N(\rho(t^{m} x), \rho(y) ) t^m = B_N(\rho(x), \rho(y)) \text{.}  \qedhere
\]
\end{proof}

We give $\Hom_{\Z}(S_N, \Q / \Z)$ a $\Ztpm_N$-module structure by $(r \psi)(\chi) = \psi (\overline{r} \chi)$ for $r \in \Ztpm_N$, $\psi \in \Hom_{\Z}(S_N, \Q / \Z)$ and $\chi \in S_N$. 
\begin{proposition}
Let $S$ be a fusion theory and $S_N$ the length-$N$ compactification.  Then $S_N$ is finite.  Moreover, ${\Hom_{\Ztpm_N}}(S_N, \Qlp_N) \cong \Hom_{\Z}(S_N, \Q / \Z)$ as $\Ztpm_N$-modules, and $S_N \cong {\Hom_{\Ztpm_N}}(S_N, \Qlp_N)$ as abelian groups.  \label{prop:finite}
\end{proposition}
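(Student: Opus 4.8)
The plan is to prove the three assertions in turn: that $S_N$ is finite, then the $\Ztpm_N$-module isomorphism $\Hom_{\Ztpm_N}(S_N,\Qlp_N)\cong\HomZ(S_N,\Q/\Z)$, and finally the abelian-group isomorphism $S_N\cong\Hom_{\Ztpm_N}(S_N,\Qlp_N)$, the last of which will be a formal consequence of the first two. For finiteness, I would argue that since $S$ is a finitely generated $\Ztpm$-module of fusion order $n$, it is annihilated by $n$ and hence finitely generated over $R=\Z_n[t^\pm]$; therefore $S_N=S/I_NS$ is finitely generated over the ring $R_N$ introduced above. But $R_N\cong\Z_n[\Z_N]$ is finite, being a free $\Z_n$-module of rank $N$ (so $|R_N|=n^N$), and any finitely generated module over a finite ring is a quotient of some $R_N^k$ and hence finite; thus $S_N$ is finite.

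For the Hom isomorphism I would construct it explicitly, mirroring the proof of Proposition~\ref{prop:RZnloc}: set $\alpha_N(\lambda)(\chi)=(\lambda(\chi))_0$, the coefficient of $t^0$, with candidate inverse $\beta_N(\psi)(\chi)=\sum_{m=0}^{N-1}\psi(t^{-m}\chi)\,t^m$. The verifications are routine: one checks that $\alpha_N(\lambda)$ lands in $\HomZ(S_N,\Q/\Z)$ and that $\beta_N(\psi)$ lands in $\Hom_{\Ztpm_N}(S_N,\Qlp_N)$ — the latter using $t^N=1$ so that the finite sum is compatible with the $\Ztpm_N$-action — that both maps are $\Ztpm_N$-linear for the module structures fixed in the text (each twisted by the involution $\overline t=t^{-1}$), and that $\alpha_N$ and $\beta_N$ are mutually inverse, using the identity $(t^{-m}\lambda(\chi))_0=(\lambda(\chi))_m$. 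An equivalent, more conceptual route is to first observe that $\Qlp_N\cong\HomZ(\Ztpm_N,\Q/\Z)$ as $\Ztpm_N$-modules, via $f\mapsto\bigl(r\mapsto(\overline r f)_0\bigr)$, and then apply the tensor--Hom adjunction $\Hom_{\Ztpm_N}\bigl(S_N,\HomZ(\Ztpm_N,\Q/\Z)\bigr)\cong\HomZ\bigl(\Ztpm_N\otimes_{\Ztpm_N}S_N,\Q/\Z\bigr)=\HomZ(S_N,\Q/\Z)$; I would likely present the explicit version since it parallels existing arguments in the paper.

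The abelian-group isomorphism then follows immediately: by the previous step it is equivalent to $S_N\cong\HomZ(S_N,\Q/\Z)$, and a finite abelian group is (non-canonically) isomorphic to its character group $\HomZ(-,\Q/\Z)$; since $S_N$ is finite by the first step, this gives the claim.

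I expect the only real obstacle — more a pitfall than a genuine difficulty — to be keeping the involution twists straight, so that $\alpha_N$ and $\beta_N$ come out genuinely $\Ztpm_N$-linear rather than merely $\Z$-linear, and recognizing that one should \emph{not} try to deduce this proposition from Proposition~\ref{prop:RZnloc} applied with $S_N$ in place of $S$: once $t^N$ acts as the identity, the submodule of local homomorphisms $\HomlocZ(S_N,\Q/\Z)$ collapses to $0$, so the correct dualizing module in the compactified setting is $\Qlp_N$ itself rather than $\Qlp$. Everything else is bookkeeping.
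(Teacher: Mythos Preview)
Your proposal is correct and follows essentially the same approach as the paper: the paper proves finiteness by noting that $S_N$ is generated over the finite ring $R_N$ by finitely many elements, constructs the exact same explicit maps $\alpha(\varphi)(\chi)=(\varphi(\chi))_0$ and $\beta(\psi)(\chi)=\sum_{m=0}^{N-1}\psi(t^{-m}\chi)\,t^m$ (citing Proposition~\ref{prop:RZnloc} as the model), and then invokes Pontryagin self-duality of finite abelian groups for the last step. Your additional remarks about the adjunction viewpoint and the pitfall with $\HomlocZ$ are accurate but do not appear in the paper.
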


\begin{proof}
Viewing $S$ as an $R$-module, let $x_1, \dots, x_k \in S$ be a set of generators for $S$.  A general element $\chi \in S_N$ can be written $\chi = \rho_N (\sum_i r_i x_i ) = \sum_i \rho_N(r_i) \rho_N(x_i)$, and there are only a finite number of expressions of this form because $R_N$ is finite.

Along the same lines as Proposition~\ref{prop:RZnloc}, we define maps $\alpha : {\Hom_{\Ztpm_N}}(S_N, \Qlp_N) \to \Hom_{\Z}(S_N, \Q / \Z)$ and  $\beta : \Hom_{\Z}(S_N, \Q / \Z) \to {\Hom_{\Ztpm_N}}(S_N, \Qlp_N)$ by $\alpha(\varphi)(\chi) = ( \varphi(\chi) )_0$ and $\beta(\psi)(\chi) = \sum_{m = 0}^{N-1} \psi(t^{-m} \chi) t^m$.  It is straightforward to check that these are $\Ztpm_N$-module maps and inverses of one another.  

Because $S_N$ is a finite abelian group, we have $S_N \cong \Hom_{\Z}(S_N, \Q / \Z)$.
\end{proof}

\begin{corollary}
Let $(S, b)$ be a p-theory and $(S_N, b_N)$ the length-$N$ compactification.  Then the following are equivalent properties of the associated map $\Phi_N : S_N \to {\Hom_{\Ztpm_N}}(S_N, \Qlp_N)$:
\begin{enumerate}
\item $\Phi_N$ is an isomorphism.
\item $\Phi_N$ is injective.
\item $\Phi_N$ is surjective.
\end{enumerate}
\label{cor:finite}
\end{corollary}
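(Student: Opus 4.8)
The plan is to reduce the statement entirely to the finiteness results already collected in Proposition~\ref{prop:finite}. First I would recall from that proposition that the compactified module $S_N$ is a finite abelian group, and that there is an isomorphism $S_N \cong \Hom_{\Ztpm_N}(S_N, \Qlp_N)$ of abelian groups (obtained there by composing $S_N \cong \Hom_\Z(S_N, \Q/\Z)$, which is Pontryagin duality for finite abelian groups, with the isomorphism $\Hom_{\Ztpm_N}(S_N,\Qlp_N) \cong \Hom_\Z(S_N,\Q/\Z)$). Consequently both the domain and the codomain of $\Phi_N$ are finite sets of the same cardinality.

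The second step is to invoke the elementary pigeonhole fact that a function between two finite sets of equal cardinality is injective if and only if it is surjective, and that either of these is equivalent to bijectivity. Applying this to $\Phi_N$ — viewed simply as a map of sets, forgetting its $\Ztpm_N$-module structure — immediately yields the equivalence of conditions (2) and (3), and hence of (1), since a morphism of $\Ztpm_N$-modules is an isomorphism precisely when it is bijective as a map of underlying sets. This gives the chain $(1) \Rightarrow (2) \Rightarrow (3) \Rightarrow (1)$ with no further work.

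There is no genuine obstacle here: all of the content is packaged in Proposition~\ref{prop:finite}, which in turn rests on the finiteness of the ring $R_N = \Z_n[t^\pm]/(1-t^N)$ (itself a consequence of $S$ having finite fusion order and being finitely generated) together with Pontryagin duality for finite abelian groups. The only point requiring a little care is that one should not try to run the counting argument at the level of $\Ztpm_N$-modules; it suffices — and is cleaner — to use only that $\Phi_N$ is a homomorphism of finite abelian groups of equal order.
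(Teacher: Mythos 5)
Your proposal is correct and follows essentially the same route as the paper: both rest entirely on Proposition~\ref{prop:finite} (finiteness of $S_N$ and the abelian-group isomorphism $S_N \cong \Hom_{\Ztpm_N}(S_N,\Qlp_N)$) and then conclude by a counting argument, the paper via $|S_N|/|\ker\Phi_N| = |\im\Phi_N|$ and you via the equivalent pigeonhole statement for maps between finite sets of equal cardinality. No gap.
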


\begin{proof}
The proposition tells us that $\Phi_N$ is a homomorphism between isomorphic finite abelian groups.  We have $| S_N | / | \operatorname{ker} \Phi_N | = | \operatorname{im} \Phi_N |$.  Moreover, $\Phi_N$ injective is equivalent to $| \operatorname{ker} \Phi_N | = 1$, and $\Phi_N$ surjective is equivalent to $| \operatorname{im} \Phi_N | = | S_N|$. The result follows.
\end{proof}

\section{Planons of infinite support}
\label{sec:infinite}

In this section we discuss how to work with planon excitations of infinite spatial support.  Given p-theory $(S, b)$, we define the module $\widetilde{S}$ of infinitely supported planon excitations.  We then introduce the bilinear form $\tilde{b} : S \times \widetilde{S} \to \Q / \Z$ that gives the mutual statistics between infinite and finite planons, and plays the role of the remote detection pairing in planon-only fracton orders.  In Sec.~\ref{subsec:infinite-physical}, we provide a physical justification of our definition of $\widetilde{S}$.

\subsection{Formal treatment}
\label{subsec:infinite-formal}

\begin{defn}
Let $\cA$ be an abelian group.  The set of {\bf Laurent formal sums} with $\cA$ coefficients is denoted $\cA[[t^\pm]]$ and consists of elements $\sum_{m \in \Z} c_m t^m$, where $c_m \in \cA$.  In particular, we denote $\tZ = \Z[[t^\pm]]$, while $\tQlp$ denotes the set of Laurent formal sums with $\Q / \Z$ coefficients and $\tilde{R} = \Z_n[[t^\pm]]$ denotes the set of Laurent formal sums with $\Z_n$ coefficients.
\end{defn}
\noindent It is clear that $\cA[[t^\pm]]$ is an abelian group and moreover a $\Ztpm$-module.  We note that even if $\cA$ is a ring, $\cA[[t^\pm]]$ is not a ring because there is no sensible way to multiply arbitrary elements, in contrast to the case of formal sums with only positive degree terms.

\begin{remark}
As a $\Ztpm$-module, $\tZ$ is isomorphic to the $\Z$-linear dual $\Hom_{\Z}(\Ztpm, \Z)$, where the module structure on the dual is given by $(z \psi)(x) = \psi (\overline{z} x)$ for $z \in \Ztpm$. Similarly, $\tQlp$ is isomorphic to the Pontryagin dual  $\Hom_{\Z}(\Ztpm, \Q/\Z)$ and $\tilde{R}$ to the dual  $\Hom_{\Z}(\Ztpm, \Z_n)$. 
\end{remark}

\begin{defn}
Given finite order planon-only fusion theory $S$, the module of {\bf infinite excitations} is $\widetilde{S} =S\otimes_{\Ztpm} \tZ $.
\label{defn:infinite-excitation-module}
\end{defn}
\noindent Elements of $\widetilde{S}$ are superselection sectors of certain excitations of infinite spatial support, as discussed in more detail below. This definition gives us a way to formalize ill-defined but physically intuitive expressions for infinitely supported excitations like ``$\tilde{x} = \sum_{m \in \Z} c_m t^m x$''; instead, we write $\tilde{x} = ( \sum_{\m \in \Z} c_m t^m ) \otimes x$.  

Recall that we often view $S$ as an $R$-module, where $S$ is of fusion order $n$ and $R = \Z_n[t^\pm]$.  In this viewpoint, the role of $\tZ$ in defining the module of infinite excitations is played by the $R$-module $\tilde{R} = \Z_n[[t^\pm]]$.  We can identify $\tilde{R} = \tZ / (n) \tZ$, and thus we have a natural $\Ztpm$-module map $\tZ \to \tilde{R}$.  To identify $\widetilde{S}$ within this viewpoint, the following standard general result is helpful:

\begin{proposition}
Let $\cR$ be a commutative ring with $\cR$-modules $M$ and $N$, and let $I \subset \cR$ be an ideal with $I N = 0$.  Then the natural map $ N  \otimes_{\cR} M \to N  \otimes_{\cR} (M / I M) $ is an isomorphism.  \label{prop:qti}
\end{proposition}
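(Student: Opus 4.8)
The plan is to exploit right-exactness of the functor $N \otimes_\cR -$. Let $\iota : IM \hookrightarrow M$ be the inclusion and $\pi : M \to M/IM$ the quotient map, so that $IM \xrightarrow{\iota} M \xrightarrow{\pi} M/IM \to 0$ is exact. Applying $N \otimes_\cR -$ yields the exact sequence
\begin{equation}
N \otimes_\cR IM \xrightarrow{\,\id_N \otimes \iota\,} N \otimes_\cR M \xrightarrow{\,\id_N \otimes \pi\,} N \otimes_\cR (M/IM) \to 0 . \nonumber
\end{equation}
The map in the statement is $\id_N \otimes \pi$ (up to the identification $M / IM = M / IM$), so exactness at the right already gives surjectivity, and exactness in the middle identifies $\ker(\id_N \otimes \pi)$ with the image of $\id_N \otimes \iota$. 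Thus it remains only to show that $\id_N \otimes \iota$ is the zero map.

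For this, recall $N \otimes_\cR IM$ is generated as an abelian group by simple tensors $n \otimes x$ with $n \in N$ and $x \in IM$. Any such $x$ can be written as a finite sum $x = \sum_i a_i m_i$ with $a_i \in I$ and $m_i \in M$. Then
\begin{equation}
(\id_N \otimes \iota)(n \otimes x) = \sum_i n \otimes (a_i m_i) = \sum_i (a_i n) \otimes m_i = \sum_i 0 \otimes m_i = 0 , \nonumber
\end{equation}
where we used $a_i n \in I N = 0$. Hence $\id_N \otimes \iota = 0$, so $\id_N \otimes \pi$ is injective as well, and therefore an isomorphism.

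I expect no real obstacle here; the only points requiring a modicum of care are that elements of $IM$ are finite $\cR$-combinations of elements of $M$ with coefficients in $I$ (not merely single products $am$), and that $N \otimes_\cR IM$ is generated by simple tensors — both entirely standard. As an alternative one could instead use $M/IM \cong (\cR/I) \otimes_\cR M$ together with associativity of tensor products to get $N \otimes_\cR (M/IM) \cong (N \otimes_\cR \cR/I) \otimes_\cR M \cong (N/IN) \otimes_\cR M = N \otimes_\cR M$, the last equality because $IN = 0$; tracing through the identifications shows this composite is the natural map. I would present the first argument as the main proof since it is the most self-contained.
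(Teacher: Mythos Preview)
Your argument is correct. The paper actually does not supply a proof of this proposition at all --- it is simply stated as a ``standard general result'' and used without further justification --- so there is nothing to compare against. Your right-exactness argument is the standard one, and the alternative via $M/IM \cong (\cR/I)\otimes_\cR M$ and associativity is equally valid; either would be an appropriate addition if a proof were desired.
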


\noindent We therefore have $S \otimes_{\Ztpm}  \tZ \cong S  \otimes_{\Ztpm}  \tilde{R}  \cong S \otimes_R  \tilde{R}$, and it is consistent to define $\widetilde{S} \cong S \otimes_R \tilde{R}  $.

There is a well-defined notion of mutual statistics between finite and infinite excitations.
\begin{proposition}
Given a p-theory $(S,B)$, there is a sesquilinear form $\tilde{B} : S \times \widetilde{S} \to \tQlp$, defined by $\tilde{B}( x,   y \otimes \tilde{z} ) = B(x,y) \tilde{z}$, for any $\tilde{z} \in \tZ$ and $x,y \in S$.  If $S$ has fusion order $n$, then $\tilde{B}$ takes values in $\tilde{R} \subset \tQlp$, and we can view $\tilde{B}$ as an $R$-sesquilinear form $\tilde{B} : S \times \widetilde{S} \to \tilde{R}$.  
\end{proposition}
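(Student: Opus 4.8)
The plan is to construct $\tilde{B}$ from the universal property of the tensor product $\widetilde{S} = S \otimes_{\Ztpm} \tZ$. There is no serious obstacle here: the one point requiring care is the well-definedness of $\tilde{B}$ on the tensor product, and once that is in place, sesquilinearity and the refinement of the codomain are routine extension-by-additivity arguments.

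First I would record the elementary fact that for $f \in \Qlp$ and $\tilde{z} \in \tZ$ the product $f\tilde{z} \in \tQlp$ is well-defined: the coefficient of $t^m$ in $f\tilde{z}$ is the \emph{finite} sum $\sum_j (f)_j (\tilde{z})_{m-j}$, finite precisely because $f$ has finite support. This multiplication makes $\tQlp$ a $\Qlp$-module; in particular it is distributive and compatible with the $\Ztpm$-actions in the sense that $(rf)\tilde{z} = f(r\tilde{z}) = r(f\tilde{z})$ for $r \in \Ztpm$, again by finiteness of the support of $f$. One should not silently assume $\Qlp \cdot \tZ \subseteq \tQlp$ without this observation, since products of two arbitrary Laurent formal sums are not defined.

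Next, fixing $x \in S$, I would consider $\beta_x \colon S \times \tZ \to \tQlp$, $\beta_x(y,\tilde{z}) = B(x,y)\tilde{z}$. This is biadditive in each slot, and it is $\Ztpm$-balanced: for $r \in \Ztpm$, sesquilinearity of $B$ in its second argument gives $B(x,ry) = rB(x,y)$, so $\beta_x(ry,\tilde{z}) = (rB(x,y))\tilde{z} = B(x,y)(r\tilde{z}) = \beta_x(y, r\tilde{z})$ using the compatibility noted above. By the universal property of $\otimes_{\Ztpm}$, $\beta_x$ factors through a group homomorphism $\widetilde{S} \to \tQlp$ sending $y \otimes \tilde{z} \mapsto B(x,y)\tilde{z}$; I define $\tilde{B}(x,-)$ to be this homomorphism, which is additive in its second argument by construction and takes the stated value on simple tensors.

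Finally I would check sesquilinearity in $x$ and the fusion-order refinement. Additivity of $\tilde{B}$ in $x$, and the relations $r\,\tilde{B}(x,\omega) = \tilde{B}(x, r\omega) = \tilde{B}(\overline{r} x, \omega)$, hold on simple tensors $\omega = y\otimes \tilde{z}$ by the corresponding relations for $B$ together with $r(y\otimes\tilde{z}) = y\otimes(r\tilde{z})$ in $\widetilde{S}$, and then extend to all $\omega \in \widetilde{S}$ by additivity of $\tilde{B}$ in its second slot. For the codomain claim: if $S$ has fusion order $n$ then $nS = 0$, so $B$ takes values in $R = \Z_n[t^\pm] \subset \Qlp$ (Sec.~\ref{subsec:definitions}), whence $n\,\tilde{B}(x,\omega) = 0$ identically; since the $n$-torsion subgroup of $\tQlp$ is exactly $\tilde{R} = \Z_n[[t^\pm]]$, $\tilde{B}$ lands in $\tilde{R}$. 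As $\widetilde{S}$ and $\tilde{R}$ are annihilated by $n$, their $\Ztpm$-module structures factor through $R$, and a $\Ztpm$-sesquilinear form valued in an $R$-module annihilated by $n$ is automatically $R$-sesquilinear (the three defining relations are insensitive to the choice of lift of a scalar from $R$ to $\Ztpm$); equivalently, one may invoke Proposition~\ref{prop:qti} to identify $\widetilde{S} \cong S \otimes_R \tilde{R}$ and rerun the construction over $R$ directly.
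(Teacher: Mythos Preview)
Your argument is correct and follows exactly the route the paper takes: the paper's proof is the single sentence that $\tilde{B}$ arises from the function $S \times S \times \tZ \to \tQlp$, $(x,y,\tilde{z}) \mapsto B(x,y)\tilde{z}$, anti-linear in the first argument and linear in the last two, and then declares the fusion-order statement clear. You have simply unpacked that sentence carefully---checking well-definedness of the product $\Qlp \cdot \tZ \subset \tQlp$, verifying $\Ztpm$-balancedness to invoke the universal property, and spelling out the $n$-torsion argument for the codomain refinement.
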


\begin{proof}
The form $\tilde{B}$ arises from a function $S   \times S \times \tZ \to \tQlp$, which is anti-linear argument and linear in the last two, given by $(x, y,  \tilde{z}) \mapsto B(x,y)  \tilde{z}$.  The last statement is clear.
\end{proof}
\noindent The form $\tilde{B}$ has essentially the same physical interpretation as that of $B$; the value $\tilde{B}( x, y \otimes \tilde{z} )$ gives the mutual statistics between the excitations $x$ and $y \otimes \tilde{z}  $ and all of their relative translates by arbitrary powers of $t$.  This takes values in $\tQlp$ because there may be an infinite number of relative translates with nonzero mutual statistics.  

\begin{remark}\label{rem:Qcomments}
It is natural to ask about the relationship between $\Qlp  \otimes_{\Ztpm} \tZ $ and $\tQlp$.  There is an injective -- but not surjective -- $\Ztpm$-linear map $\zeta :\Qlp \otimes_{\Ztpm}  \tZ   \to \tQlp$ defined by $\zeta ( q \otimes  \tilde{z}  ) = q \tilde{z}$ for any $\tilde{z} \in \tZ$ and $q \in \Qlp$. The form $\tilde{B}$ can be expressed using $\zeta$ by $\tilde{B}( x, y  \otimes \tilde{z} ) = \zeta(  B(x,y) \otimes \tilde{z}  ) = B(x,y) \tilde{z}$. 

Note that the $\Ztpm$-modules $\Qlp  \otimes_{\Ztpm} \tZ $ and $\tQlp$ are not isomorphic. Any element of $\Qlp \otimes_{\Ztpm} \tZ  $ can be written in the form $ (1/d) \otimes\tilde{z} $ for some $\tilde{z} \in \tZ$ and $d \in \N$. It follows that for any $w \in \Qlp \otimes_{\Ztpm} \tZ  $, there exists $d \in \N$ with $d w = 0$. However, for $\tilde{q} = \sum_{m = 1}^\infty (1/m) t^m \in \tQlp$, there is no $d \in \N$ with $d \tilde{q} = 0$. 
\end{remark}

It is useful to unwind the definition of $\tilde{B}$ to extract a bilinear form that simply gives the mutual statistics between its two arguments.
\begin{defn}
Given a p-theory $(S, B)$, there is a $\Z$-bilinear form $\tilde{b} : S \times \widetilde{S} \to \Q / \Z$ given by
\begin{equation}
\tilde{b}(x, y\otimes \tilde{z}  ) = \Big( \tilde{B}(x, y\otimes \tilde{z}) \Big)_0 = \Big( B(x,y) \tilde{z} \Big)_0 \text{.} \nonumber
\end{equation} \label{defn:wtb}
\end{defn}
\noindent The physical interpretation is that $\tilde{b}(x, y\otimes \tilde{z})$, as the $0$-component of $\tilde{B}(x, y\otimes \tilde{z})$, is precisely the mutual statistics between $x$ and $y\otimes  \tilde{z}$, with no relative translation.  This is further substantiated by the following:

\begin{proposition}
Given a p-theory $(S,b)$, for any $x, y \in S$ and any $\sum_{m \in \Z} c_m t^m \in \tZ$, we have
\begin{equation}
\tilde{b} \Big(  x, y \otimes  \big( \sum_{m\in \Z} c_m t^m \big)  \Big) = \sum_{m \in \Z} c_m b( x, t^my) \text{.} \label{eqn:Theta}
\end{equation}
In addition, for any $\tilde{z} \in \tZ$, $x,y \in S$ and $m \in \Z$, we we have $\tilde{b}(t^m x, t^m y \otimes  \tilde{z}  ) = \tilde{b}( x, y \otimes \tilde{z} )$ and also
\begin{equation}
\tilde{b}(  t^m x, y \otimes \tilde{z} ) = \Big( t^{-m} \tilde{B}( x,  y \otimes \tilde{z} ) \Big)_0 = \Big( \tilde{B}( x,  y \otimes \tilde{z}) \Big)_m  \text{.} \label{eqn:tb-translates}
\end{equation}
\end{proposition}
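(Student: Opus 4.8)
The plan is to reduce every assertion to the defining formula $\tilde b(x, y\otimes\tilde z) = \bigl(B(x,y)\tilde z\bigr)_0$ of Definition~\ref{defn:wtb}, combined with the expression $B(x,y) = \sum_{k\in\Z} b(t^k x, y)\, t^k$ from Equation~\ref{eqn:B-from-b}, and then to read off coefficients of powers of $t$. Since $b$ is local, $B(x,y)$ has only finitely many nonzero coefficients, so the product $B(x,y)\tilde z$ of a Laurent polynomial with a Laurent formal sum is a genuine element of $\tQlp$ whose coefficients are finite sums; I would flag this at the outset so that the coefficient manipulations below are legitimate. I would also note that $\tilde B$, and hence $\tilde b$, is already known to descend to $\widetilde S = S\otimes_{\Ztpm}\tZ$ by the preceding proposition, so no separate check of independence of the representative $y\otimes\tilde z$ is needed.

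For Equation~\ref{eqn:Theta}, write $\tilde z = \sum_m c_m t^m$ and extract the constant coefficient of $B(x,y)\tilde z$: it equals $\sum_{k\in\Z}\bigl(B(x,y)\bigr)_k\, c_{-k} = \sum_{k\in\Z} b(t^k x, y)\, c_{-k}$, using that the coefficient of $t^k$ in $B(x,y)$ is $b(t^k x, y)$. Reindexing $k = -m$ and invoking translation-invariance of $b$ in the form $b(t^{-m}x, y) = b(x, t^m y)$ gives $\sum_{m\in\Z} c_m\, b(x, t^m y)$, as claimed.

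For the translation-invariance identity, I would first check $B(t^m x, t^m y) = B(x,y)$: expanding, $B(t^m x, t^m y) = \sum_k b(t^{k+m}x, t^m y)\, t^k = \sum_k b(t^k x, y)\, t^k = B(x,y)$ by translation-invariance of $b$ (equivalently, this follows from $B$ being anti-linear in its first slot and linear in its second, with $\overline{t^m} = t^{-m}$). Hence $\tilde b(t^m x, t^m y\otimes\tilde z) = \bigl(B(t^m x, t^m y)\tilde z\bigr)_0 = \bigl(B(x,y)\tilde z\bigr)_0 = \tilde b(x, y\otimes\tilde z)$. For Equation~\ref{eqn:tb-translates}, anti-linearity of $B$ in the first argument gives $B(t^m x, y) = t^{-m} B(x,y)$, so $\tilde b(t^m x, y\otimes\tilde z) = \bigl(t^{-m} B(x,y)\tilde z\bigr)_0 = \bigl(t^{-m}\tilde B(x, y\otimes\tilde z)\bigr)_0$; the final equality $\bigl(t^{-m} f\bigr)_0 = (f)_m$ holds for every $f\in\tQlp$, since multiplication by $t^{-m}$ moves the coefficient of $t^m$ into the constant term.

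I expect no serious obstacle here; the only points requiring care are the well-definedness of the Laurent-polynomial-times-formal-sum products (handled by locality of $b$) and keeping the involution convention straight — namely that $B$ is anti-linear in its first argument with $\overline{t} = t^{-1}$, which is precisely what converts a factor $t^m$ in the first slot into the shift by $t^{-m}$ acting on $\tilde B(x, y\otimes\tilde z)$.
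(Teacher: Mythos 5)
Your proposal is correct and follows essentially the same route as the paper: the paper's proof is simply "apply Definition~\ref{defn:wtb} together with $B(x,y)=\sum_m b(t^m x,y)t^m$, note locality makes the right-hand side of Equation~\ref{eqn:Theta} sensible, and verify the remaining identities routinely," and your coefficient extraction, the identities $B(t^mx,t^my)=B(x,y)$ and $B(t^mx,y)=t^{-m}B(x,y)$, and the shift rule $(t^{-m}f)_0=(f)_m$ are exactly those routine verifications spelled out. Nothing is missing.
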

\begin{proof}
Apply Definition~\ref{defn:wtb} and use $B(x,y) = \sum_{m \in \Z} b(t^{m} x, y) t^m$.  We note that right-hand side of Equation~\ref{eqn:Theta} is a sensible expression by locality of $b$.  Verification of the latter two statements is routine.
\end{proof}
\noindent Equation~\ref{eqn:Theta} is recognizable as the desired mutual statistics between $x$ and the infinite excitation ``${\tilde{y} = \sum_{m \in \Z} c_m t^m y}$.''

The right-hand side of Equation~\ref{eqn:Theta} looks similar to the definition of $b_N$ in the compactified p-theory $(S_N, b_N)$ (see Definition~\ref{defn:compact-ptheory}).  Indeed, the following result, which relates infinitely supported planons with the compactified theories, is an immediate application of Equation~\ref{eqn:Theta}.

\begin{proposition}
Given a p-theory $(S,b)$, for any $N \in \N$ we have
\begin{equation}
\tilde{b}( x, y \otimes \gamma_N) = b_N( \rho_N(x), \rho_N(y)) \label{eqn:tb-bN} \text{,}
\end{equation}
where $\gamma_N = \sum_{m \in \Z} t^{m N} \in \Ztpm$.
\end{proposition}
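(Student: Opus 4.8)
The plan is to read this off immediately from Equation~\ref{eqn:Theta} of the preceding proposition. First I would observe that $\gamma_N = \sum_{m \in \Z} t^{mN}$ is the Laurent formal sum $\sum_{k \in \Z} c_k t^k \in \tZ$ with $c_k = 1$ when $N \mid k$ and $c_k = 0$ otherwise, so that $y \otimes \gamma_N$ is a legitimate element of $\widetilde{S} = S \otimes_{\Ztpm} \tZ$. (The appearance of ``$\gamma_N \in \Ztpm$'' in the statement should presumably read $\gamma_N \in \tZ$, since the sum is infinite; this does not affect the argument.)

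Next I would apply Equation~\ref{eqn:Theta} with this choice of coefficients: it gives
\begin{equation}
\tilde{b}\big( x, y \otimes \gamma_N \big) = \sum_{k \in \Z} c_k\, b(x, t^k y) = \sum_{m \in \Z} b(x, t^{mN} y) \text{,} \nonumber
\end{equation}
where in the last step we keep only the terms with $k = mN$. The right-hand side is exactly the expression defining $b_N(\rho_N(x), \rho_N(y))$ in Definition~\ref{defn:compact-ptheory}, so the two are equal. The only point requiring a word of justification is that both sums are finite: this is precisely locality of $b$ (the same fact that makes Equation~\ref{eqn:Theta} and Definition~\ref{defn:compact-ptheory} well-posed), so nothing new is needed.

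There is no real obstacle here; the statement is essentially a bookkeeping identity, and the content has already been established in the proposition containing Equation~\ref{eqn:Theta} together with Definition~\ref{defn:compact-ptheory}. If one wanted to be completely self-contained rather than quoting Equation~\ref{eqn:Theta}, one could instead expand $\tilde{b}(x, y \otimes \gamma_N) = \big(\tilde{B}(x, y \otimes \gamma_N)\big)_0 = \big(B(x,y)\,\gamma_N\big)_0$ using Definition~\ref{defn:wtb}, substitute $B(x,y) = \sum_{m \in \Z} b(t^m x, y) t^m$ and multiply by $\gamma_N$, and extract the coefficient of $t^0$, which again yields $\sum_{m \in \Z} b(t^{mN} x, y) = \sum_{m \in \Z} b(x, t^{mN} y)$ by translation-invariance; but quoting Equation~\ref{eqn:Theta} is cleaner.
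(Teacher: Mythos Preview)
Your proposal is correct and matches the paper's approach exactly: the paper simply states that the result ``is an immediate application of Equation~\ref{eqn:Theta}'' and gives no further argument, which is precisely what you do by specializing $c_k$ to the indicator of $N \mid k$ and recognizing Definition~\ref{defn:compact-ptheory}. You also correctly flag the typo: $\gamma_N$ indeed lies in $\tZ$, not $\Ztpm$.
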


It is also interesting to consider a restricted subset of infinite excitations, namely those which are invariant under some translation $t^N$:
\begin{defn}
Given finite order planon-only fusion theory $S$, the module of {\bf periodic excitations} is
\begin{equation}
\widetilde{S}^{per} = \{ \tilde{x} \in \widetilde{S} \mid \exists N \in \N \text{ such that } t^N \tilde{x} = \tilde{x} \} \nonumber \text{.}
\end{equation}
\end{defn}

It is easy to see that $\widetilde{S}^{per}$ is indeed a submodule of $\widetilde{S}$.  Given a p-theory, mutual statistics between periodic and finite excitations is encoded in the $\Ztpm$-sesquilinear form $\tilde{B}^{per} : S \times \widetilde{S}^{per} \to \tQlp$ defined by restriction of $\tilde{B}$, \emph{i.e.} $\tilde{B}^{per} = \tilde{B} |_{S \times \widetilde{S}^{per}}$.  It is clear that $\tilde{B}^{per}$ takes values in $\tQlp^{per} = \{ \tilde{q} \in \tQlp \mid \exists N \in \N \text{ such that } t^N \tilde{q} = \tilde{q} \}$, the module of periodic Laurent formal sums with $\Q / \Z$ coefficients. Therefore we can also write $\tilde{B}^{per} : S \times \widetilde{S}^{per} \to \tQlp^{per}$.  We also have $\tilde{b}^{per} : S \times \widetilde{S}^{per} \to \Q / \Z$ given by the restriction $\tilde{b}^{per} = \tilde{b}|_{S \times \widetilde{S}^{per}}$.

\begin{proposition}
Given a p-theory $(S, B)$, $\tilde{B}$ is left-nondegenerate (resp. right-nondegenerate) if and only if $\tilde{b}$ is left-nondegenerate (resp. right-nondegenerate).  The same statement holds for $\tilde{B}^{per}$ and $\tilde{b}^{per}$.  \label{prop:Theta-Bt-nd}
\end{proposition}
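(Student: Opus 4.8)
The plan is to reduce the whole statement to the coefficient-extraction identities already recorded in Equation~\ref{eqn:tb-translates}, together with $\Ztpm$-sesquilinearity of $\tilde{B}$. Recall that for $x \in S$, $w \in \widetilde{S}$, and $m \in \Z$ one has $\tilde{b}(t^m x, w) = \big(\tilde{B}(x,w)\big)_m$. First I would note the companion identity obtained by moving the translation into the second argument: since $\tilde{B}$ is linear in its second slot, $\tilde{B}(x, t^k w) = t^k \tilde{B}(x,w)$, hence $\tilde{b}(x, t^k w) = \big(\tilde{B}(x,w)\big)_{-k}$. Taken together these say precisely that $\tilde{B}(x,w) = 0$ in $\tQlp$ if and only if $\tilde{b}(t^m x, w) = 0$ for all $m \in \Z$, if and only if $\tilde{b}(x, t^m w) = 0$ for all $m \in \Z$. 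This is the engine of the proof.

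\textbf{Key steps.} One implication in each equivalence is immediate: if $\tilde{b}(x,w) \neq 0$ then $\big(\tilde{B}(x,w)\big)_0 = \tilde{b}(x,w) \neq 0$, so $\tilde{B}(x,w) \neq 0$; this gives at once that $\tilde{b}$ left- (resp.\ right-) nondegenerate implies $\tilde{B}$ left- (resp.\ right-) nondegenerate. For the converses I would argue as follows. If $\tilde{B}$ is left-nondegenerate and $x \in S$ is nonzero, pick $w$ with $\tilde{B}(x,w) \neq 0$ and an $m$ with $\big(\tilde{B}(x,w)\big)_m \neq 0$; then $\tilde{b}(x, t^{-m} w) = \big(\tilde{B}(x,w)\big)_m \neq 0$, and $t^{-m}w \in \widetilde{S}$, so $\tilde{b}$ is left-nondegenerate. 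If $\tilde{B}$ is right-nondegenerate and $w \in \widetilde{S}$ is nonzero, pick $x$ with $\tilde{B}(x,w) \neq 0$ and an $m$ with $\big(\tilde{B}(x,w)\big)_m \neq 0$; then $\tilde{b}(t^m x, w) = \big(\tilde{B}(x,w)\big)_m \neq 0$ with $t^m x \in S$, so $\tilde{b}$ is right-nondegenerate.

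\textbf{Periodic case.} I would then observe the same argument applies verbatim with $\widetilde{S}$ replaced by $\widetilde{S}^{per}$: if $t^N w = w$ then $t^N(t^{-m}w) = t^{-m}w$, so the translations used above preserve $\widetilde{S}^{per}$ (which is already known to be a submodule), and since $\tilde{B}^{per}$, $\tilde{b}^{per}$ are restrictions of $\tilde{B}$, $\tilde{b}$, the identities of Equation~\ref{eqn:tb-translates} restrict directly. Hence left- and right-nondegeneracy of $\tilde{B}^{per}$ are equivalent to those of $\tilde{b}^{per}$ by the identical reasoning.

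\textbf{Anticipated difficulty.} I do not expect a genuine obstacle; the content is purely formal. The only point requiring a bit of care is bookkeeping: converting ``$\tilde{B}(x,w)$ has a nonzero coefficient in degree $m$'' back into a nonvanishing value of $\tilde{b}$ must be done by translating the \emph{first} argument by $t^m$ (for right-nondegeneracy, so the translated element still lies in $S$) or the \emph{second} argument by $t^{-m}$ (for left-nondegeneracy, so the translated element still lies in $\widetilde{S}$ or $\widetilde{S}^{per}$). Keeping these two translation directions straight, and checking closure of $\widetilde{S}^{per}$ under them, is the one place to be attentive.
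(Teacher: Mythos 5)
Your proof is correct and follows essentially the same route as the paper: the easy direction via the degree-zero coefficient, and the converses by locating a nonzero coefficient of $\tilde{B}(x,\tilde{y})$ and converting it to a nonzero value of $\tilde{b}$ via the translation identities of Equation~\ref{eqn:tb-translates} (the paper translates the first argument and treats the left case as "analogous," which amounts to your translation of the second argument). Your explicit check that $\widetilde{S}^{per}$ is preserved by these translations is a minor extra care the paper leaves implicit.
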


\begin{proof}
It is obvious that $\tilde{b}$ left-nondegenerate (resp. right-nondegenerate) implies $\tilde{B}$ left-nondegenerate (resp. right-nondegenerate).  Assume that $\tilde{B}$ is right-nondegenerate, then for any nonzero $\tilde{y} \in \widetilde{S}$ there exists $x \in S$ with $\tilde{B}(x, \tilde{y}) \neq 0$.  Therefore for some $m \in \Z$, we have $\big( \tilde{B}(x, \tilde{y}) \big)_m \neq 0$.  Moreover $\big( \tilde{B}(x, \tilde{y}) \big)_m = \big( \tilde{B}( t^m  x, \tilde{y}) \big)_0$, so $\tilde{b}(t^m x, \tilde{y}) \neq 0$.  Assuming $\tilde{B}$ is left-nondegenerate, an analogous argument shows that for any nonzero $x \in S$, there exists $\tilde{y} \in \widetilde{S}$ with $\tilde{b}(x, \tilde{y}) \neq 0$.  The proof in the periodic case is identical.
\end{proof}

\begin{proposition}
If $(S,b) = \bigominus_i (S_i, b_i)$, then $\widetilde{S} \cong \bigoplus \widetilde{S}_i$ and $\tilde{b} = \perp_i \tilde{b}_i$.  Moreover, nondegeneracy of $\tilde{b}$, and also of $\tilde{b}^{per}$, respects stacking.  \label{prop:ird-respects-stacking}
\end{proposition}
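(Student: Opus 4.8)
The plan is to push everything through the canonical isomorphism $\widetilde{S} \cong \bigoplus_i \widetilde{S}_i$, track the form $\tilde{b}$ across it, and then appeal to Proposition~\ref{prop:nd-respects-stacking}.

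First I would invoke the fact that the tensor product commutes with direct sums. Since $S = \bigoplus_i S_i$ as a $\Ztpm$-module, we obtain a $\Ztpm$-linear isomorphism
\[ \widetilde{S} = S \otimes_{\Ztpm} \tZ \cong \Big( \bigoplus_i S_i \Big) \otimes_{\Ztpm} \tZ \cong \bigoplus_i \big( S_i \otimes_{\Ztpm} \tZ \big) = \bigoplus_i \widetilde{S}_i , \]
under which the elementary tensor $y \otimes \tilde{z}$, with $y = \sum_i y_i$ and $y_i \in S_i$, corresponds to $\sum_i (y_i \otimes \tilde{z})$. I would then check that, under this identification together with $S = \bigoplus_i S_i$, the form $\tilde{b}$ is exactly the orthogonal sum $\perp_i \tilde{b}_i$ in the sense of Definition~\ref{defn:ortho-sum}. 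Using $B = \perp_i B_i$ and Definition~\ref{defn:wtb}, for $x = \sum_i x_i$ one computes
\[ \tilde{b}(x, y \otimes \tilde{z}) = \big( B(x,y)\,\tilde{z} \big)_0 = \Big( \sum_i B_i(x_i,y_i)\,\tilde{z} \Big)_0 = \sum_i \big( B_i(x_i,y_i)\,\tilde{z} \big)_0 = \sum_i \tilde{b}_i(x_i, y_i \otimes \tilde{z}) , \]
and the general case follows by $\Z$-bilinearity, since every element of $\widetilde{S}$ is a finite sum of elementary tensors. Hence $\tilde{b} = \perp_i \tilde{b}_i$, and Proposition~\ref{prop:nd-respects-stacking} shows that $\tilde{b}$ is left- (resp. right-) nondegenerate if and only if every $\tilde{b}_i$ is; in particular nondegeneracy of $\tilde{b}$ respects stacking.

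For the periodic statement, I would show that the isomorphism above restricts to $\widetilde{S}^{per} \cong \bigoplus_i \widetilde{S}_i^{per}$. If $\tilde{x} \in \widetilde{S}$ corresponds to $\sum_i \tilde{x}_i$ with $\tilde{x}_i \in \widetilde{S}_i$, then $t^N \tilde{x} = \tilde{x}$ holds if and only if $t^N \tilde{x}_i = \tilde{x}_i$ for all $i$, because the decomposition is $\Ztpm$-linear; so if $\tilde{x}$ is periodic then each $\tilde{x}_i$ is periodic, and conversely if each $\tilde{x}_i$ has period $N_i$ then $\tilde{x}$ has period $\operatorname{lcm}_i N_i$, which exists since the collection is finite. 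Consequently $\tilde{b}^{per} = \perp_i \tilde{b}_i^{per}$ under the induced identifications, and Proposition~\ref{prop:nd-respects-stacking} again finishes the argument. The only point that requires any attention is this use of finiteness to produce a common period; the remainder is bookkeeping with the functoriality of $\otimes$ and the definitions, so I do not anticipate a genuine obstacle.
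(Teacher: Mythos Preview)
Your proposal is correct and follows essentially the same approach as the paper's own proof: both use the tensor--direct-sum isomorphism, verify $\tilde{b} = \perp_i \tilde{b}_i$ via $B = \perp_i B_i$ on elementary tensors, and invoke Proposition~\ref{prop:nd-respects-stacking}. Your treatment of the periodic case is slightly more explicit than the paper's (which simply asserts $\widetilde{S}^{per} \cong \bigoplus_i \widetilde{S}^{per}_i$), but the argument is the same.
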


\begin{proof}
It is clear that $\widetilde{S} \cong \bigoplus_i \widetilde{S}_i$.  We have $b = \perp_i b_i$ and $B = \perp_i B_i$.  For any $x \in S$ and $\tilde{y} \in \widetilde{S}$ we have unique expressions ${x} = \sum_i {x}_i$ and $\tilde y = \sum_i  \tilde y_i$, with ${x}_i \in {S}_i$ and $\tilde y_i \in \tilde S_i$. For each $\tilde{y}_i$ we write $\tilde{y}_i = \sum_j  y^i_j\otimes \tilde{z}^i_j $, where $\tilde{z}^i_j \in \Ztpm$ and $y^i_j \in S_i$.  Now we consider
\begin{eqnarray}
\tilde{b}({x}, \tilde y) &=& \sum_{ i', i, j} \tilde{b}( x_{i'},  y^i_j\otimes  \tilde{z}^i_j ) =
\sum_{i', i,  j}\big( B( x_{i'}, y^i_j ) \tilde{z}^i_j \big)_0  \nonumber \\
&=& \sum_{i, j} \big( B_i (x_i, y^i_j) \tilde{z}^i_j \big)_0
= \sum_i \big( \tilde{B}_i ( x_i, \tilde{y}_i) \big)_0 = \sum_i \tilde{b}_i ( x_i, \tilde{y}_i) \text{.} \nonumber
\end{eqnarray}
Therefore $\tilde{b} = \perp_i \tilde{b}_i$, and nondegeneracy of $\tilde{b}$ thus respects stacking by Proposition~\ref{prop:nd-respects-stacking}.  The same argument goes through in the periodic case because $\widetilde{S}^{per} \cong \bigoplus_i \widetilde{S}^{per}_i$.  
\end{proof}

\subsection{Physical justification}
\label{subsec:infinite-physical}

To give a physical justification of Definition~\ref{defn:infinite-excitation-module}, we show that $\widetilde{S}$ is isomorphic to a quotient of certain infinitely supported excitations.  For this discussion we view $S$ and $\widetilde{S}$ as $R$-modules.  For the moment we restore the three-dimensional translation symmetry, and let $\bR = \Z_n[ u^\pm, v^\pm, t^\pm ]$.  Planons are invariant under translations in the $u$-$v$ plane, but are acted on nontrivially by $t$ which represents transverse translations.  Thus $S$ is an $\bR$-module for which $u x = v x = x$ for all $x \in S$, so the action of $\bR$ on $S$ factors through $\bR / I_{uv}$, where $I_{uv} = (1-u, 1-v)$.  We can view $S$ as an $R$-module because $R = \Z_n[t^\pm] \cong \bR / I_{uv}$.   To see this, the ring homomorphism $\bR \to R$ defined by $u, v \mapsto 1$, $t \mapsto t$ vanishes on $I_{uv}$ and thus factors through $\bR / I_{uv}$, giving a map $\bR / I_{uv} \to R$.  It is straightforward to show this map is an isomorphism.

Now we consider a free resolution of $S$ as an $\bR$-module, suggestively written as $\cdots \to P \to E \to S \to 0$, with finitely generated free $\bR$-modules $P$ and $E$.  (We know $S$ has a free resolution with all terms finitely generated because it is finitely generated and $\bR$ is Noetherian.)    For many fracton orders realized in commuting Pauli Hamiltonians, the description as a Pauli Hamiltonian provides the first terms of a free resolution.  Elements of $E$ are finitely supported configurations of excitations, while elements of $P$ are finitely supported Pauli operators.  More generally, for any free resolution of $S$, it is reasonable to interpret the first two terms in this manner, so that $S$ is a quotient of excitations by the submodule of excitations createable by operators of finite support.

We introduce the module $\tilde{\bR} = \Z_n[u^\pm, v^\pm][[t^\pm]]$, whose elements are formal sums $\sum_{m \in \Z} c_m t^m$, where each $c_m \in \Z_n[u^\pm, v^\pm]$, \emph{i.e.} it is a finite Laurent polynomial in $u$ and $v$.  We have $\tilde{R} \cong \tilde{\bR} / I_{uv} \tilde{\bR}$, which can be shown along the same lines as the isomorphism $R \cong \bR / I_{uv}$ discussed above.  
Starting with the exact sequence $P \to E \to S \to 0$, tensoring with $\tilde{\bR}$ gives
$P \otimes_{\bR} \tilde{\bR}  \to  E \otimes_{\bR} \tilde{\bR}  \to  S \otimes_{\bR}  \tilde{\bR}  \to 0$, which is exact by right-exactness of the tensor product.

Proposition~\ref{prop:qti} gives $S \otimes_{\bR}  \tilde{\bR}  \cong S \otimes_{\bR} \tilde{R} $, which in turn is identical to $S \otimes_R  \tilde{R}  \cong \widetilde{S}$, so we have $\widetilde{S} \cong (E \otimes_{\bR} \tilde{\bR} ) / \epsilon(P \otimes_{\bR}  \tilde{\bR} )$.  Now because $E \cong \bR^m$, we have $E \otimes_{\bR}  \tilde{\bR} \cong \tilde{\bR}^m$, and similarly for $P \otimes_{\bR}  \tilde{\bR} $.  We can thus interpret elements of $E  \otimes_{\bR}  \tilde{\bR}$ as configurations of excitations whose support is finite within every $uv$-plane, but may be infinite along the $t$-direction.  Elements of $P  \otimes_{\bR}  \tilde{\bR}$ are infinite products of Pauli operators, with the same restriction on their support.  Note that while the support in each $uv$-plane is finite, it is not necessarily the case that the infinite excitations and Pauli operators are supported within a cylinder of finite radius extending along the $t$-direction.

\section{Superselection sectors of p-modular p-theories}
\label{sec:structure}

We turn to the study of the modules $S$ of superselection sectors of p-modular p-theories of finite fusion order. In this case, $S$ is finitely generated and torsion-free as an $R$-module, and we give a detailed study of such modules. By Theorem~\ref{thm:prime-decomp}, it is sufficient to consider the case when the fusion order $n$ is a prime power $p^k$. So we assume $n=p^k$ and focus on $R=\mathbb{Z}_{p^k}[t^\pm]$. Recalling the associated map $\Phi : S \to S^*$, we are also interested in the duals of finitely generated torsion-free modules. The main results of our study are a concrete description of finitely generated torsion-free $\mathbb{Z}_{p^k}[t^\pm]$-modules and of their duals. In the special case when $k=1$, we denote $A=\mathbb{Z}_{p}[t^\pm]$. In this case, $A$ is a PID and finitely generated torsion-free $A$-modules are simply the free $A$-modules of finite rank. The key tool for $\mathbb{Z}_{p^k}[t^\pm]$ is the notion of a free $A$-filtration, which provides an inductive approach that allows us to bootstrap results for $A$-modules to results for $R$-modules.

\subsection{Preliminaries}
\label{subsec:st-prelims}

Here we collect some results that will be useful in what follows.  In particular, we will need to understand the zero divisors of $\Z_{p^k}[t^\pm]$.

\begin{proposition}
In the ring $\Z_{p^k} = \Z / p^k \Z$ where $p$ is prime, the set of zero divisors is the ideal $(p)$.
\end{proposition}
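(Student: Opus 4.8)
The plan is to show the two inclusions: every element of the ideal $(p)$ is a zero divisor in $\Z_{p^k}$, and conversely every zero divisor lies in $(p)$. The first inclusion is immediate: if $r \in (p)$, write $r = p s$ for some $s \in \Z_{p^k}$; then $r \cdot p^{k-1} = p^k s = 0$ in $\Z_{p^k}$, and since $p^{k-1} \neq 0$ in $\Z_{p^k}$ (as $k \geq 1$), $r$ is a zero divisor. (If $r = 0$ this is also fine, since $0$ is a zero divisor by the usual convention, or one can note $0 \in (p)$ trivially.)

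For the reverse inclusion, I would argue by contrapositive: suppose $r \notin (p)$, and show $r$ is regular (not a zero divisor). Since $r \notin (p)$, the residue of $r$ modulo $p$ is nonzero in the field $\Z_p = \Z_{p^k}/(p)$, so $\gcd(\tilde r, p) = 1$ where $\tilde r \in \Z$ is any lift of $r$. Hence $\gcd(\tilde r, p^k) = 1$, which means $\tilde r$ is a unit modulo $p^k$, i.e.\ $r$ is a unit in $\Z_{p^k}$. A unit is never a zero divisor, so $r$ is regular. This completes the proof that the set of zero divisors is exactly $(p)$.

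An alternative packaging of the same argument, which may be cleaner to write, is to simply observe that $\Z_{p^k}$ is a local ring with maximal ideal $(p)$, so every element outside $(p)$ is a unit; and separately that every element of $(p)$ is a zero divisor as computed above. Either way, there is no real obstacle here — the only point requiring the smallest amount of care is the verification that $p^{k-1} \neq 0$ in $\Z_{p^k}$, which holds precisely because $k$ is a natural number (so $k \geq 1$) and thus $0 \leq k-1 < k$.
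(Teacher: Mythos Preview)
Your proof is correct and takes essentially the same approach as the paper: both show directly that elements of $(p)$ annihilate a nonzero power of $p$, and that elements outside $(p)$ are regular (the paper argues this last point by noting $p^k \mid mn$ with $p \nmid n$ forces $p^k \mid m$, while you observe more strongly that such elements are units --- a cosmetic difference only).
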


\begin{proof}
We write $n \mapsto [n]$ for the natural map $\Z \to \Z_{p^k}$.  For all $[n] \in \Z_{p^k}$ we choose a representative $n \in \{ 0, 1, \dots, p^k - 1\}$.  Suppose $[n] \neq 0$ and $p \mid n$, then $n = p^s m$ for some $s < k$, and clearly $[p^{k-s} n] =0$.  Thus all elements of $(p)$ are zero divisors.

Now suppose $p \nmid n$.  Then the only $m \in \N$ for which $[m n] = 0$ must have $p^k \mid m$, \emph{i.e.} $[m] = 0$.  So $[n]$ is not a zero divisor.
\end{proof}

\begin{proposition}[{\cite[Theorem 2]{McCoy}}]
Let $R$ be a commutative ring and let $f \in R[t]$ be a zero divisor.  Then there is some non-zero $a \in R$ with $a f = 0$.  \label{prop:poly-zero-div}
\end{proposition}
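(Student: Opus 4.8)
The plan is to run McCoy's classical annihilator argument: among all nonzero polynomials annihilating $f$, select one of minimal degree, and show that its leading coefficient already annihilates $f$ as an element of $R$.

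Write $f = \sum_{i=0}^{n} a_i t^i$. Since $f$ is a zero divisor in $R[t]$, the set of nonzero $g \in R[t]$ with $fg = 0$ is nonempty, so I would pick such a $g = \sum_{j=0}^{m} b_j t^j$ with $m$ minimal and $b_m \neq 0$. The crux is the claim that $a_i g = 0$ in $R[t]$ for every $i$. Granting the claim, since $a_i g = 0$ forces $a_i b_m = 0$ for all $i$, we get $b_m f = 0$ with $b_m \neq 0$, so $a = b_m$ is the required nonzero annihilator in $R$.

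To establish the claim I would argue by downward induction, proving $a_{n-k} g = 0$ for $k = 0, 1, \dots, n$. When $k = 0$: the coefficient of $t^{m+n}$ in $fg = 0$ is $a_n b_m$, so $a_n b_m = 0$; hence $a_n g$ has degree strictly less than $m$, while $(a_n g) f = a_n (gf) = 0$, so minimality of $m$ forces $a_n g = 0$. For the inductive step, suppose $a_n g = a_{n-1} g = \dots = a_{n-k+1} g = 0$. Extracting the coefficient of $t^{m+n-k}$ in $fg = 0$ gives a sum of terms $a_{n-k+l} b_{m-l}$ over $l = 0, 1, \dots, k$; for each $l \geq 1$ the index $n-k+l$ lies in $\{n-k+1, \dots, n\}$, so $a_{n-k+l} g = 0$ by the inductive hypothesis, and in particular $a_{n-k+l} b_{m-l} = 0$. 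Thus the sum collapses to $a_{n-k} b_m = 0$, so again $\deg(a_{n-k} g) < m$ and $(a_{n-k} g) f = 0$, and minimality of $m$ gives $a_{n-k} g = 0$. This closes the induction and hence proves the claim.

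There is essentially no conceptual obstacle here; the one place to be careful is the bookkeeping in the inductive step — correctly identifying which products $a_i b_j$ contribute to the coefficient of $t^{m+n-k}$, and checking that every one of them except $a_{n-k} b_m$ is already known to vanish by the inductive hypothesis. Everything else is a routine consequence of the minimality of $\deg g$.
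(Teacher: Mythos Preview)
Your proof is correct and follows essentially the same argument as the paper: both take a nonzero $g$ of minimal degree with $fg=0$ and exploit that each $a_i g$ annihilates $f$ with strictly smaller degree, forcing $a_i g = 0$ by minimality. The paper packages this as a contradiction (pick the highest coefficient $f_n$ with $f_n g \neq 0$ and derive a smaller-degree annihilator), while you do a clean downward induction on $i$; the content is the same.
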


\begin{proof}
Assume the claim is false, so there a zero divisor $f \in R[t]$ where $a f \neq 0$ for all non-zero $a \in R$. Let $g \in R[t]$ be a non-zero polynomial of minimal degree such that $f g = 0$; by assumption $\operatorname{deg} g > 0$.  We write $f = f_0 + f_1 t + \cdots + f_m t^m$.  Let $f_n$ be the highest degree coefficient satisfying $f_i g \neq 0$.  (Such a coefficient must exist, for if not then $f_i g = 0$ for all $i$, which implies $f_i g_j = 0$ for all $i$ and $j$. But then $g_j f = 0$ contradicting the assumption.)  Write $g = g_0 + g_1 t + \cdots + g_\ell t^\ell$.  We have 
$f g = (f_0 + \cdots + f_n t^n)(g_0 + \cdots + g_\ell t^\ell) = 0$, which implies $f_n g_\ell = 0$ and thus $\operatorname{deg} f_n g < \ell$.  But $(f_n g) f = f_n (g f) = 0$, contradicting the minimality assumption on $g$. 
\end{proof}

This is also true for Laurent polynomial rings:

\begin{proposition}
Let $R$ be a commutative ring and let $f \in R[t^\pm]$ be a zero divisor.  Then there is some non-zero $a \in R$ with $a f = 0$.  \label{prop:laurent-zero-div}
\end{proposition}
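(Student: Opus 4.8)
The plan is to clear negative powers of $t$ and thereby reduce to Proposition~\ref{prop:poly-zero-div}, using that $t$ is a unit in $R[t^\pm]$. We may assume $f \neq 0$ (otherwise $a = 1$ works, as $R \neq 0$ whenever $R[t^\pm]$ contains a zero divisor), so fix a nonzero $g \in R[t^\pm]$ with $fg = 0$.

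First I would pass to ordinary polynomials. Choose $N, M \in \N$ large enough that $N \geq -\deg_- f$ and $M \geq -\deg_- g$; then $\hat f := t^N f$ and $\hat g := t^M g$ both lie in the subring $R[t] \subseteq R[t^\pm]$. Since $t^M$ is a unit in $R[t^\pm]$ and $g \neq 0$, we have $\hat g \neq 0$, and likewise $\hat f \neq 0$ because $f = t^{-N}\hat f$. Now $\hat f \hat g = t^{N+M}(fg) = 0$ holds in $R[t^\pm]$, and since $R[t] \hookrightarrow R[t^\pm]$ is an injective ring homomorphism with $\hat f, \hat g \in R[t]$, this identity also holds in $R[t]$. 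Hence $\hat f$ is a zero divisor in $R[t]$.

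Applying Proposition~\ref{prop:poly-zero-div} to $\hat f \in R[t]$ yields a nonzero $a \in R$ with $a\hat f = 0$. Then $t^N(af) = a\hat f = 0$ in $R[t^\pm]$, and multiplying by the unit $t^{-N}$ gives $af = 0$, as required.

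The argument is essentially bookkeeping, so there is no substantive obstacle; the only point requiring a little care is keeping the roles of $R[t]$ and $R[t^\pm]$ separate and observing that a product of two ordinary polynomials vanishes in $R[t^\pm]$ if and only if it vanishes in $R[t]$.
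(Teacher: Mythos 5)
Your proof is correct and follows essentially the same route as the paper's: multiply by a sufficiently high power of $t$ to land in $R[t]$, invoke Proposition~\ref{prop:poly-zero-div} there, and use that $t$ is a unit to transfer the annihilator back to $f$. The extra bookkeeping (nonvanishing of $\hat f,\hat g$, the $f=0$ case) is fine but not needed beyond what the paper already records.
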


\begin{proof}
Let $f \in R[t^\pm]$ be a zero divisor, so there is non-zero $g \in R[t^\pm]$ with $f g = 0$.  We can multiply both sides of this equation by a sufficiently high power of $t$ to obtain $f' g' = 0$ where $f'$ and $g'$ have non-negative degree.  Then $f'$ is a zero divisor in $R[t]$ and by Proposition~\ref{prop:poly-zero-div} there is non-zero $a \in R$ with $a f' = 0$.  Clearly also $a f = 0$.
\end{proof}

\begin{proposition}
The set of zero divisors in $\Z_{p^k}[t^\pm]$ is the ideal $(p)$.  \label{prop:R-zero-div}
\end{proposition}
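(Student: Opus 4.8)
The plan is to prove the two inclusions separately, using Proposition~\ref{prop:laurent-zero-div} together with the structure of the coefficient ring $\Z_{p^k}$. First I would check that every element of $(p)$ is a zero divisor: if $0 \ne f \in (p)$, write $f = p g$ with $g \in \Z_{p^k}[t^\pm]$, so that $p^{k-1} f = p^k g = 0$ while $p^{k-1}$ is a nonzero constant of $\Z_{p^k}[t^\pm]$; thus $f$ is a zero divisor, and $0 \in (p)$ counts as a zero divisor under the same convention adopted in the statement for $\Z_{p^k}$ above. Hence every element of $(p)$ is a zero divisor.

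For the reverse inclusion, let $f \in \Z_{p^k}[t^\pm]$ be a zero divisor. By Proposition~\ref{prop:laurent-zero-div}, applied with coefficient ring $\Z_{p^k}$, there is a nonzero $a \in \Z_{p^k}$ with $a f = 0$. Picking the representative $1 \le n \le p^k - 1$ of $a$ and writing $n = p^s m$ with $p \nmid m$, we have $0 \le s \le k-1$ since $a \ne 0$. Because $\gcd(m, p^k) = 1$ the residue $[m]$ is a unit in $\Z_{p^k}$, so multiplying $a f = 0$ by $[m]^{-1}$ gives $p^s f = 0$. Reading off coefficients, every coefficient $c$ of $f$ satisfies $p^s c = 0$ in $\Z_{p^k}$; if some $c$ were not divisible by $p$, then $[c]$ would be a unit and $p^s$ itself would vanish in $\Z_{p^k}$, contradicting $s \le k-1$. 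Therefore every coefficient of $f$ lies in $(p) \subseteq \Z_{p^k}$, i.e. $f \in (p)$, and combining the two inclusions gives the claim.

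I do not expect a serious obstacle here: the combinatorial content is already contained in Proposition~\ref{prop:laurent-zero-div}, and what remains is elementary arithmetic in $\Z_{p^k}$. The one step requiring slight care is the passage from ``$a f = 0$ for some nonzero $a$'' to ``$p^s f = 0$ for an explicit $s < k$'', which is exactly where one uses that every nonzero element of $\Z_{p^k}$ is a unit times a power of $p$ with exponent strictly below $k$; equivalently, that a non-zero-divisor of the finite ring $\Z_{p^k}$ is a unit.
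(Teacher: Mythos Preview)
Your proof is correct and follows essentially the same approach as the paper: both directions use $p^{k-1}f=0$ for $f\in(p)$, and the reverse inclusion invokes Proposition~\ref{prop:laurent-zero-div} to get a nonzero constant $a\in\Z_{p^k}$ with $af=0$, then deduces each coefficient of $f$ lies in $(p)$. The only cosmetic difference is that the paper appeals directly to the earlier characterization of zero divisors in $\Z_{p^k}$ (so ``$a f_i=0$ with $a\neq 0$ forces $f_i\in(p)$''), whereas you unpack this by writing $a$ as a unit times $p^s$ with $s<k$; the content is the same.
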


\begin{proof}
Take $f \in \Z_{p^k}[t^\pm]$ a zero divisor.  By Proposition~\ref{prop:laurent-zero-div} there is non-zero $a \in \Z_{p^k}$ with $a f = 0$.  In particular, each coefficient $f_i$ of $f$ is a zero divisor in $\Z_{p^k}$ and thus $p \mid f_i$ for all the coefficients, so $f = p g$ for some $g \in \Z_{p^k}[t^\pm]$ and $f \in (p)$.    Now if $f \in (p)$, then clearly $p^{k-1} f = 0$ and $f$ is a zero divisor.  
\end{proof}

\begin{theorem}[Smith normal form]
	Let $A$ be a principal ideal domain, and $X \in M_{r, s}(A)$. Then there exist
	invertible matrices $U \in M_{r}(A)$ and $V \in M_{s}(A)$ such that 
	$U X V = \begin{pmatrix} \textup{diag}(\alpha_1, \dots, \alpha_t) & 0 \\ 0 & 0 \end{pmatrix}$ for some $0 \leq t
	\leq \min(r, s)$, where the $\alpha_i \neq 0$ and $\alpha_i | \alpha_{j}$ if $i \leq j$. 
\end{theorem}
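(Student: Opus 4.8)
The plan is to argue by induction on $\min(r,s)$, peeling off one pivot $\alpha_i$ at each stage and recursing on a smaller block. The key structural fact is that multiplying $X$ on the left by an invertible $U \in M_r(A)$ and on the right by an invertible $V \in M_s(A)$ realizes the following operations: permuting rows or columns, multiplying a row or column by a unit, adding an $A$-multiple of one row (resp. column) to another, and --- crucially --- left- or right-multiplying a chosen pair of rows (resp. columns) by an arbitrary element of $\mathrm{GL}_2(A)$. The last operation is what replaces division with remainder, which is unavailable in a general PID: given $a,b \in A$ not both zero, set $d = \gcd(a,b)$ and write $d = xa + yb$ with $x,y \in A$ (possible since $A$ is a PID, hence B\'ezout); then $\begin{pmatrix} x & y \\ -b/d & a/d \end{pmatrix} \in \mathrm{GL}_2(A)$ has determinant $1$, and applying it to two rows sends a column with entries $a,b$ in those rows to one with entries $d,0$.

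First I would dispose of $\min(r,s) = 0$ and $X = 0$ trivially (take $t=0$, $U,V$ identity). Otherwise I would select a good pivot via the ascending chain condition: among all matrices $X'$ obtainable from $X$ by invertible row and column operations, consider the principal ideals $(X'_{11})$; since any nonzero entry of $X$ can be moved to the $(1,1)$ position by permutations, these ideals include nonzero ones, and as $A$ is Noetherian we may choose $X'$ so that $(X'_{11})$ is maximal among the nonzero ones. Write $a = X'_{11}$. I claim $a$ divides every entry of $X'$: if $a \nmid X'_{1j}$ for some $j \geq 2$, applying the $\mathrm{GL}_2(A)$ move to columns $1$ and $j$ replaces $X'_{11}$ by $\gcd(a, X'_{1j})$, whose ideal strictly contains $(a)$, contradicting maximality; the first column is handled identically; and if, after the first row and column have been cleared by elementary operations, some entry $X'_{ij}$ with $i,j \geq 2$ is not divisible by $a$, then adding row $i$ to row $1$ brings $X'_{ij}$ into the first row and again contradicts maximality. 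Hence after finitely many steps $X'$ has the block form $\begin{pmatrix} a & 0 \\ 0 & X'' \end{pmatrix}$ with $a$ dividing every entry of $X'' \in M_{r-1,s-1}(A)$.

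Applying the inductive hypothesis to $X''$ produces invertible matrices transforming it into $\mathrm{diag}(\alpha_2, \dots, \alpha_t)$ bordered by zeros, with $\alpha_2 \mid \cdots \mid \alpha_t$; extending these to block-diagonal invertible matrices over $M_{r-1}(A)$ and $M_{s-1}(A)$ that fix the first row and column yields the desired normal form for $X$ with $\alpha_1 := a$. The divisibility $\alpha_1 \mid \alpha_2$ holds because $\alpha_2$ is an $A$-linear combination of entries of $X''$ (being obtained from $X''$ by invertible operations), each of which is divisible by $a$; combined with the chain from the inductive step this completes the proof. The main obstacle --- essentially the only subtle point --- is the maximality argument in the middle paragraph: one must check that the collection of ideals $(X'_{11})$ is genuinely one to which the ACC applies and that each of the three ``contradiction'' moves strictly enlarges it. Phrasing the argument as ``choose a maximal $(X'_{11})$ up front'' is the cleanest route, though one could instead run an explicit loop and invoke the ACC to see it terminates; everything else is routine bookkeeping with block-diagonal invertible matrices and elementary row and column operations.
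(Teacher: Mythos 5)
Your argument is correct. The paper does not prove this theorem at all --- it simply cites Theorems 3.8 and 3.9 of Jacobson's \emph{Basic Algebra I} (and the discussion in Haah's paper) --- so there is no in-paper proof to compare against; your write-up is a correct, self-contained version of the standard argument given in such references. The one organizational difference is that Jacobson runs the induction on the length (number of prime factors) of the pivot entry, whereas you select a pivot whose principal ideal $(X'_{11})$ is maximal via the Noetherian property of a PID; these are interchangeable ways of guaranteeing termination, and your maximality formulation is arguably cleaner. The only delicate points --- that $(a) \subsetneq (\gcd(a,b))$ exactly when $a \nmid b$, that the $\mathrm{GL}_2(A)$ row/column move realizes the replacement $(a,b) \mapsto (\gcd(a,b),0)$, and that $\alpha_1 \mid \alpha_2$ because every entry of the transformed lower block is an $A$-combination of entries of $X''$, all divisible by $\alpha_1$ --- are each addressed correctly in your proposal.
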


\begin{proof}
See theorems 3.8 and 3.9 of \cite{Jacobson_1985}. See also the discussion at the start of \S6 in
\cite{Haah_2013}.
\end{proof}

\begin{theorem}[Structure theorem for finitely generated modules over a PID]
Every finitely generated module $M$ over a PID $A$ is isomorphic to a module of the form
\begin{equation}
A^k \oplus A / (f_1) \oplus \cdots \oplus A / (f_\ell) \nonumber \text{,}
\end{equation}
where $k$ is a nonnegative integer,  each $f_i \in A$ is nonzero and not a unit, and $f_i \mid f_{i+1}$.  In particular, a finitely generated module over a PID is free if and only if it is torsion-free.  \label{thm:PID-structure}
\end{theorem}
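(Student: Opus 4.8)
The plan is to derive this from the Smith normal form theorem already stated, by first producing a finite presentation of $M$. Since $M$ is finitely generated, choose generators $x_1, \dots, x_n$ and let $\pi : A^n \to M$ be the associated surjection. A PID is Noetherian, so $A^n$ is a Noetherian $A$-module and the submodule $K = \ker \pi$ is finitely generated; picking $m$ generators for $K$ gives a homomorphism $\psi : A^m \to A^n$ with image $K$, so that $M \cong A^n / \operatorname{im} \psi$. Let $X \in M_{n,m}(A)$ be the matrix of $\psi$ in the standard bases.

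Next I would apply Smith normal form to $X$: there are invertible $U \in M_n(A)$ and $V \in M_m(A)$ with $U X V = \begin{pmatrix} \operatorname{diag}(\alpha_1, \dots, \alpha_t) & 0 \\ 0 & 0 \end{pmatrix}$, where each $\alpha_i \neq 0$ and $\alpha_i \mid \alpha_j$ for $i \le j$. Replacing the standard bases of $A^n$ and $A^m$ by the ones obtained through $U^{-1}$ and $V$ amounts to pre- and post-composing $\psi$ with isomorphisms, which does not change the isomorphism type of the cokernel. Hence $M \cong A^n / \operatorname{im}(UXV)$, and since the columns of $UXV$ are $\alpha_1 e_1, \dots, \alpha_t e_t$ followed by zero columns, this cokernel splits as $A/(\alpha_1) \oplus \cdots \oplus A/(\alpha_t) \oplus A^{\,n-t}$.

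To reach the stated form, discard those $\alpha_i$ that are units (each contributes a zero summand), relabel the remaining non-unit $\alpha_i$ as $f_1, \dots, f_\ell$ in the same order so that the divisibility relations restrict to $f_i \mid f_{i+1}$, and set $k = n - t$; this gives $M \cong A^k \oplus A/(f_1) \oplus \cdots \oplus A/(f_\ell)$ with each $f_i$ nonzero and not a unit. For the final assertion, a free module over the domain $A$ is visibly torsion-free. Conversely, if $M$ is torsion-free then the decomposition can have no summand $A/(f_i)$ with $f_i$ a nonzero non-unit, for $1 + (f_i)$ would be a nonzero element killed by the regular element $f_i$; thus $\ell = 0$ and $M \cong A^k$ is free.

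The only genuinely non-formal input beyond Smith normal form is the reduction to a finite presentation, i.e.\ the finite generation of $\ker \pi$; this is precisely where Noetherianness of the PID is used (equivalently, one may invoke the standard fact that a submodule of a finitely generated free module over a PID is free of rank at most $n$). Everything after that is bookkeeping built on top of the Smith normal form theorem, so I do not anticipate any real obstacle.
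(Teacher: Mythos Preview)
Your proof is correct and follows the standard route. The paper itself does not give a proof but simply cites \S3.8 of Jacobson, \emph{Basic Algebra I}; that reference derives the structure theorem from Smith normal form in exactly the way you outline, so your approach is essentially the same as the one the paper defers to.
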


\begin{proof}
	See \S3.8 of \cite{Jacobson_1985}.
\end{proof}

\begin{defn}
Let $M$ be a finitely generated module over a PID $A$.  The {\bf rank} of $M$ is the nonnegative integer $k$ of Theorem~\ref{thm:PID-structure}. We write  $\operatorname{rank} M = k$.
\end{defn}

\subsection{Structure theorems}
\label{subsec:st}

We let $p$ be a prime and take an integer $k \geq 1$, and set $R = \Z_{p^k}[t^\pm]$ throughout the rest of this subsection.  We define $A = R / (p) \cong \Z_p [t^\pm]$ and denote the natural map $R \to A$ by $r \mapsto [r]$.  Note that $A$ is a PID.  

\begin{defn}
An $A${\bf-filtration} for a finitely generated $R$-module $M$ is a filtration
\begin{equation}
0 = C^0 \subset C^1 \subset \cdots \subset C^\ell = M \nonumber \text{,}
\end{equation}
such that each quotient $C_s \equiv C^s / C^{s-1}$ is an $A$-module.  (This means that $(p) C_s = 0$, so $C_s$ can naturally be viewed as a module over $A$.) An $A$-filtration is {\bf free} if each $C_s$ is a free $A$-module.  We say that $\ell$ is the length of the filtration.  The {\bf height} of $x \in M$, denoted $h(x)$, is the smallest $s \in \{0 ,\dots, \ell\}$ for which $x \in C^s$. 
\end{defn}

Note that we use $\subset$ and $\subseteq$ interchangeably, so it is allowed to have $C^s = C^{s+1}$ and thus $C_s = 0$; this is compatible with an $A$-filtration being free because $\{0\}$ is  free -- it has the empty basis.  A filtration is an $A$-filtration if and only if $p C^s \subset C^{s-1}$ for all $s$.   Because $R$ is Noetherian and $M$ is finitely generated, each $C^s$ is finitely generated, and thus also the quotients $C_s$ are finitely generated.
We denote the natural maps $C^s \to C_s$ by $x \mapsto [x]_s$.  For any $x \in C^s$ and $r \in R$, note that $[r x]_s = r [x]_s = [r] [x]_s$.

We will see below that every finitely generated $R$-module has an $A$-filtration. Although the length of an $A$-filtration can vary, there is an invariant of $M$ which can be extracted from an $A$-filtration.

\begin{defn}Given a finitely generated $R$-module $M$ with an $A$-filtration by $C^s$ ($s = 0,\dots, \ell$), let $r_s$ denote the rank of $C_s=C^s/C^{s-1}$ as an $A$-module. We call the sum $\sum_{s=1}^\ell r_s$ the \textbf{total rank} of the filtration $C^s$. \end{defn}

We will explain the following result in Appendix~\ref{app:ktheory}.
\begin{theorem}\label{thm:totalrankinvariance}
The total rank is an additive invariant of $M$, which we denote by $[M]$. More specifically,  all $A$-filtrations of a finitely generated $R$-module $M$ have the same total rank. 
\end{theorem}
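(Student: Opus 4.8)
The plan is to show that the total rank of any $A$-filtration of $M$ equals a quantity intrinsic to $M$, by tensoring up to the field of fractions of $A$. Let $K = \operatorname{Frac}(A)$ be the field of fractions of $A = \Z_p[t^\pm]$; note that $A$ is a PID (hence a domain), so $K$ makes sense. The key observation is that $pM$ is annihilated by $p^{k-1}$, and more generally the associated graded pieces $C_s$ of an $A$-filtration are $A$-modules, so each $C_s \otimes_A K$ is a $K$-vector space of dimension equal to $\operatorname{rank}_A C_s = r_s$. The strategy is to extract from $M$ a finite filtration of $\Z_p$-vector spaces (or $K$-vector spaces after base change) whose dimensions sum to the total rank, using the $p$-adic filtration $M \supseteq pM \supseteq p^2 M \supseteq \cdots \supseteq p^k M = 0$, which is canonical and independent of any choice.

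First I would establish that for any $A$-filtration $0 = C^0 \subset C^1 \subset \cdots \subset C^\ell = M$, the total rank $\sum_s r_s$ is additive in short exact sequences: given $0 \to M' \to M \to M'' \to 0$ of finitely generated $R$-modules, an $A$-filtration of $M$ restricts to one of $M'$ and projects to one of $M''$, and rank is additive for $A$-modules over the PID $A$ (by the structure theorem, Theorem~\ref{thm:PID-structure}, rank is just $\dim_K(-\otimes_A K)$, which is exact on the relevant torsion-free-modulo-torsion level — more carefully, $\operatorname{rank}_A$ is additive on short exact sequences of finitely generated $A$-modules because $-\otimes_A K$ is exact and kills torsion). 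This additivity, combined with the filtration, shows $\sum_s r_s = \sum_s \operatorname{rank}_A(C_s)$ is forced to equal $[M]$ once we know it's well-defined on a single canonical filtration.

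The cleanest route: for each $j = 0, 1, \dots, k-1$, consider the subquotient $p^j M / p^{j+1} M$. This is annihilated by $p$, hence is naturally an $A$-module, and it is finitely generated over $A$; let $\rho_j = \operatorname{rank}_A(p^j M / p^{j+1} M)$. I claim $[M] := \sum_{j=0}^{k-1} \rho_j$ is well-defined (manifestly, since the $p$-adic filtration is canonical) and equals the total rank of any $A$-filtration. To prove the latter, induct on the length $\ell$ of the $A$-filtration. For $\ell = 1$, $M = C^1$ is itself a free $A$-module, $pM = 0$, so $\rho_0 = \operatorname{rank}_A M = r_1$ and all other $\rho_j$ vanish. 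For the inductive step, apply additivity of total rank (established above) to the short exact sequence $0 \to C^1 \to M \to M/C^1 \to 0$: the submodule $C^1$ has an $A$-filtration of length $1$ and $M/C^1$ inherits one of length $\ell - 1$, and one checks additivity of the invariant $\sum_j \rho_j$ on this sequence as well — this last check, relating the $p$-adic filtrations of $C^1$, $M$, and $M/C^1$, is where a little care is needed because $C^1$ need not equal $p^{k-1}M$ in general.

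The main obstacle I anticipate is precisely that compatibility check: an arbitrary $A$-filtration is not the $p$-adic filtration, so the submodule $C^1$ (which is $A$-free, i.e. $p$-torsion) sits inside $M$ but need not be $p^{k-1}M$ or any canonical piece. To push the induction through I would either (i) refine any $A$-filtration so that it is a refinement of the $p$-adic filtration — possible because $pC^s \subset C^{s-1}$ forces interleaving — reducing to the case where the filtration refines $M \supset pM \supset \cdots$, where additivity of ranks across the refinement is transparent; or (ii) avoid the $p$-adic filtration entirely and instead argue that $\operatorname{rank}_A$ descends to a homomorphism from the Grothendieck group $K_0$ of finitely generated $R$-modules (restricted to those admitting $A$-filtrations, which by the promised appendix is all of them), sending $[M]$ to $\sum_s \operatorname{rank}_A(C_s)$; since $K_0$ is generated by classes of $A$-modules and the relation $[M] = [C^1] + [M/C^1]$ holds, the total rank is forced to be the image of $[M]$ under this well-defined homomorphism and hence independent of the filtration. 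Approach (ii) is cleaner and is presumably what the reference to Appendix~\ref{app:ktheory} is setting up, so I would develop the $K_0$ argument as the primary proof and mention the refinement argument as an alternative.
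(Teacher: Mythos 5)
Your overall instinct—reduce the statement to a computation in the Grothendieck group of finitely generated $R$-modules—is exactly the framework of Appendix~\ref{app:ktheory}, but as written your primary argument (ii) is circular at the decisive point. The easy half of the paper's proof is the observation you make: any $A$-filtration gives $[M]=\big(\sum_s r_s\big)[A]$ in $G_0(R)$, since $\operatorname{rank}_A$ is additive over the PID $A$ and torsion $A$-modules have trivial class. But this alone does not prove filtration-independence of $\sum_s r_s$: a priori $[A]$ could be torsion (or zero) in $G_0(R)$, in which case distinct total ranks could represent the same class. To conclude, one needs an independently constructed additive invariant of finitely generated $R$-modules that detects $[A]$, i.e.\ a homomorphism $G_0(R)\to\Z$ sending $[A]$ to $1$; declaring that ``$\operatorname{rank}_A$ descends to a homomorphism sending $[M]$ to $\sum_s\operatorname{rank}_A(C_s)$'' presupposes precisely the well-definedness you are trying to prove. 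The paper supplies the detector via Serre's formula: the flat resolution $0\to R\xrightarrow{1-t}R\to\Z_{p^k}\to 0$ gives $\rho^*\colon G_0(R)\to G_0(\Z_{p^k})\cong\Z$ (the latter by Jordan--H\"older length), with $\rho^*([A])=[\Z_p]$ a generator, whence $G_0(R)\cong\Z$ generated by $[A]$ and the coefficient is pinned down.

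Your fallback (i) contains a concrete error and an unproved step. The claim that $pC^s\subset C^{s-1}$ lets you refine an arbitrary $A$-filtration into one refining the $p$-adic filtration is false: take $R=\Z_{p^3}[t^\pm]$, $M=R\oplus A$, and the $A$-filtration $C^1=p^2R\oplus A\subset C^2=pR\oplus A\subset C^3=M$; then $pM=pR\oplus 0$ is incomparable with $C^1$, so it cannot be inserted into the chain. Moreover the additivity of your canonical invariant $d(M)=\sum_j\operatorname{rank}_A\!\big(p^jM/p^{j+1}M\big)$ on short exact sequences—which you lean on but only flag—is not immediate, because the $p$-adic filtration of $M$ does not restrict to that of a submodule. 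Both defects are repairable: either use a Schreier--Zassenhaus common refinement of two given $A$-filtrations (the refined factors are subquotients of the $C_s$, hence finitely generated $A$-modules, and $\operatorname{rank}_A$ is additive over $A$), or prove additivity of $d$ by identifying $d(M)=\operatorname{length}_{R_{(p)}}\!\big(M_{(p)}\big)$, localization at the prime $(p)$ being exact and length additive; either repaired route, combined with $[M]=\big(\sum_s r_s\big)[A]$, yields a complete proof and would be a genuinely different detector from the paper's Tor--Euler characteristic along $R\to R/(1-t)\cong\Z_{p^k}$. As it stands, however, the proposal is missing the key ingredient.
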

\begin{remark}
The invariant $[M]$ is an additive invariant which belongs to a certain $K$-theory group, as will be explained in Appendix~\ref{app:ktheory}.
\end{remark}

Every finitely generated torsion-free $R$-module comes with a free $A$-filtration:

\begin{defn}
Given an $R$-module $M$, the $p${\bf-torsion filtration} of $M$ is
\begin{equation}
0 = M^0 \subset M^1 \subset \cdots \subset M^k = M \nonumber \text{,}
\end{equation}
where $M^s = \{ x \in M \mid p^s x = 0 \}$.  That is, $M^s$ is the submodule of elements with order at most $p^s$.  
\end{defn}

\begin{proposition}
The $p$-torsion filtration of a finitely generated $R$-module $M$ is an $A$-filtration.  It is a free $A$-filtration if $M$ is torsion-free.  \label{prop:pt-is-A}
\end{proposition}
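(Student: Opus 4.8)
The plan is to verify the two assertions separately, in both cases by unwinding definitions and then invoking the description of zero divisors in $R$ (Proposition~\ref{prop:R-zero-div}) together with the structure theorem over a PID (Theorem~\ref{thm:PID-structure}).

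\textbf{First assertion (the $p$-torsion filtration is an $A$-filtration).} I would begin with the routine checks: each $M^s$ is an $R$-submodule of $M$ since $x \mapsto p^s x$ is $R$-linear; $M^0 = 0$; $M^s \subseteq M^{s+1}$; and $M^k = M$ because $p^k = 0$ in $R = \Z_{p^k}[t^\pm]$. The one substantive point is that $p M^s \subseteq M^{s-1}$: if $p^s x = 0$, then $p^{s-1}(p x) = 0$, so $p x \in M^{s-1}$. This shows $(p)$ annihilates the quotient $M_s = M^s / M^{s-1}$, so each $M_s$ is naturally a module over $A = R/(p)$, which is exactly what is required for the filtration to be an $A$-filtration.

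\textbf{Second assertion (freeness when $M$ is torsion-free).} Since $R$ is Noetherian and $M$ is finitely generated, each $M^s$ is finitely generated, hence so is each quotient $M_s$ as an $A$-module. By Theorem~\ref{thm:PID-structure}, a finitely generated module over the PID $A$ is free if and only if it is torsion-free, so it suffices to show each $M_s$ is torsion-free as an $A$-module. As $A$ is a domain, every nonzero element of $A$ is regular, and every nonzero element of $A$ has the form $[a]$ for some $a \in R$ with $p \nmid a$; by Proposition~\ref{prop:R-zero-div} such an $a$ is regular in $R$. So suppose $[a]\,[x]_s = 0$ in $M_s$ for some regular $a \in R$ and $x \in M^s$; I claim $[x]_s = 0$. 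The hypothesis means $a x \in M^{s-1}$, i.e.\ $p^{s-1}(a x) = a\,(p^{s-1} x) = 0$ in $M$. Because $a$ is regular and $M$ is torsion-free as an $R$-module, this forces $p^{s-1} x = 0$, i.e.\ $x \in M^{s-1}$, i.e.\ $[x]_s = 0$. Hence $M_s$ is torsion-free over $A$, and therefore free.

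\textbf{Main obstacle.} There is no serious obstacle here; the argument is a short chain of definition-chasing. The only place demanding care is the identification of ``torsion-free as an $A$-module'' with ``not killed by any regular $a \in R$ with $p \nmid a$'', which is precisely the content of Proposition~\ref{prop:R-zero-div}, and the routine but necessary observation that the quotients $M_s$ are finitely generated (via the Noetherian property) so that Theorem~\ref{thm:PID-structure} applies.
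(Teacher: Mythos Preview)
Your proof is correct and follows essentially the same approach as the paper's: both verify $pM^s \subset M^{s-1}$ for the first part, and for the second part both lift a nonzero element of $A$ to a regular element of $R$ via Proposition~\ref{prop:R-zero-div}, deduce $p^{s-1}x = 0$ from $M$ torsion-free, and conclude $[x]_s = 0$. The only cosmetic difference is that the paper phrases the second part as a contradiction (exhibiting $p^{s-1}x$ as a nonzero torsion element) while you argue directly.
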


\begin{proof}
First observe that $p M^s \subset M^{s-1}$, so $(p)$ annihilates $M_s$, which is thus naturally a module over $A$.  We assume $M$ is torsion-free and show that $M_s$ is torsion-free and thus free by Theorem~\ref{thm:PID-structure}.  Suppose $x \in M^s$ has $[x]_s$ torsion and nonzero, so there is some nonzero $a \in A$ with $a [x]_s = 0$.  Let $\sigma : A \to R$ be a section of the natural map $R \to A$, and note that $\sigma(a)$ is not a zero divisor in $R$ by Proposition~\ref{prop:R-zero-div}, because $[\sigma(a)] = a \neq 0$ so $\sigma(a)$ does not lie in $(p)$.  We have $[ \sigma(a) x ]_s = a [x]_s = 0$, so $\sigma(a) x \in M^{s-1}$ and thus $\sigma(a) (p^{s-1} x ) = p^{s-1} \sigma(a) x = 0$.  We also have $p^{s-1} x \neq 0$ because $x \notin M^{s-1}$.  Therefore $p^{s-1} x$ is a nonzero torsion element of $M$, a contradiction.  
\end{proof}

\begin{proposition}
Let $M$ be a finitely generated $R$-module.  Then $M$ has a free $A$-filtration if and only if $M$ is torsion-free.
\end{proposition}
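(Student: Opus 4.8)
The plan is to prove the two implications separately, and in fact one of them is already available. If $M$ is torsion-free, then by Proposition~\ref{prop:pt-is-A} the $p$-torsion filtration $0 = M^0 \subset M^1 \subset \cdots \subset M^k = M$ is a free $A$-filtration, so that direction requires nothing further. The remaining task is the converse: assuming $M$ carries a free $A$-filtration $0 = C^0 \subset C^1 \subset \cdots \subset C^\ell = M$, show that $M$ is torsion-free.

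For this, I would take an arbitrary torsion element $x \in M$, say $r x = 0$ with $r \in R$ regular, and aim to conclude $x = 0$. The first step is to pass to the PID quotient $A = R/(p)$: since $r$ is not a zero divisor, Proposition~\ref{prop:R-zero-div} gives $r \notin (p)$, so its image $[r] \in A$ is nonzero, hence regular because $A$ is an integral domain. Next, suppose toward a contradiction that $x \neq 0$, and let $s = h(x)$ be its height; then $s \geq 1$ and the class $[x]_s \in C_s = C^s/C^{s-1}$ is nonzero. Applying the quotient map $C^s \to C_s$ to the relation $r x = 0$ and using $[rx]_s = [r][x]_s$ (which was noted when $A$-filtrations were introduced) yields $[r][x]_s = 0$ in $C_s$. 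But $C_s$ is a finitely generated free $A$-module, hence torsion-free by Theorem~\ref{thm:PID-structure}, so multiplication by the regular element $[r]$ is injective and $[x]_s = 0$, contradicting $s = h(x)$. Therefore every torsion element of $M$ is zero.

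I do not expect a genuine obstacle here; the argument is a single descent on the height function with no induction needed. The only two points that must be handled carefully are (i) converting the hypothesis ``$r$ is regular in $R$'' into ``$[r] \neq 0$ in $A$'', which is exactly the content of the zero-divisor description in Proposition~\ref{prop:R-zero-div}, and (ii) invoking that free modules over the PID $A$ are torsion-free. Everything else — finite generation of the $C^s$ and $C_s$ from Noetherianity, the behaviour of the quotient maps $[\,\cdot\,]_s$ under the $R$-action, and the fact that $h(x) \geq 1$ for $x \neq 0$ — is already in place in the preceding discussion.
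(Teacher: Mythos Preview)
Your proposal is correct and follows essentially the same approach as the paper: one direction via Proposition~\ref{prop:pt-is-A}, and for the converse you take a putative nonzero torsion element, pass to its image in $C_{h(x)}$, use Proposition~\ref{prop:R-zero-div} to see that $[r]\neq 0$ in the integral domain $A$, and derive a contradiction from torsion-freeness of the free $A$-module $C_s$. The paper's proof is the same argument, stated a bit more tersely.
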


\begin{proof}
By Proposition~\ref{prop:pt-is-A}, if $M$ is torsion-free then it has a free $A$-filtration, namely the $p$-torsion filtration.

Suppose that $M$ has a free $A$-filtration by $C^s \subset M$.  Let $x \in M$ be a nonzero torsion element of height $s = h(x)$, then there is a regular $r \in R$ with $r x = 0$.  We have $[r] \neq 0$ because as a regular element, $r \notin (p)$.  Then $[r x]_s = [r] [x]_s = 0$, implying $[x]_s = 0$ because $C_s$ is torsion-free.  Therefore $x \in C^{s - 1}$, a contradiction, so $x = 0$.
\end{proof}

\begin{proposition}
Let $M$ and $N$ be finitely generated torsion-free $R$-modules with $p$-torsion filtrations by $M^s \subset M$ and $N^s \subset N$, and let $f : M \to N$ be an $R$-linear map.  Then there are $R$-linear maps $f_s : M_s \to N_s$ defined by $f_s([x]_s) = [ f(x) ]_s$, and
\begin{enumerate}
\item $f$ is injective if and only if $f_s$ is injective for all $s$ .
\item $f_s$ surjective for all $s$ implies that $f$ is surjective.
\item $f$ is an isomorphism if and only if $f_s$ is an isomorphism for all $s$.
\end{enumerate}
\label{prop:filtration-maps}
\end{proposition}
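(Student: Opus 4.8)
The plan is to establish the well-definedness of the maps $f_s$ first, then prove (1) and (2) directly, and finally deduce (3) from them.

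The basic observation underlying everything is that, since $f$ is $R$-linear and $p^s x = 0$ for every $x \in M^s$, we have $p^s f(x) = f(p^s x) = 0$, so $f(M^s) \subseteq N^s$; in particular $f(M^{s-1}) \subseteq N^{s-1}$, so $f$ descends to a map $f_s : M_s \to N_s$ on each successive quotient, and this map is automatically $R$-linear — indeed $A$-linear, since $(p)$ annihilates both $M_s$ and $N_s$.

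For (1), the ``if'' direction uses heights: if every $f_s$ is injective and $x \neq 0$ has height $s = h(x)$, then $[x]_s \neq 0$ in $M_s$ but $f_s([x]_s) = [f(x)]_s$, so $f(x) \neq 0$. Conversely, if $f$ is injective and $f_s([x]_s) = 0$, then $f(x) \in N^{s-1}$, hence $f(p^{s-1} x) = p^{s-1} f(x) = 0$, hence $p^{s-1} x = 0$ by injectivity, so $x \in M^{s-1}$ and $[x]_s = 0$. For (2), I would induct on $s$ to prove the slightly stronger equality $f(M^s) = N^s$, the case $s = 0$ being trivial: given $y \in N^s$, surjectivity of $f_s$ produces $x \in M^s$ with $f(x) - y \in N^{s-1}$, and the inductive hypothesis $f(M^{s-1}) = N^{s-1}$ lets us correct $x$ by an element of $M^{s-1}$ so that $y \in f(M^s)$. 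Since $p^k = 0$ in $R$ we have $M^k = M$, so taking $s = k$ gives surjectivity of $f$. Finally, for (3): the ``if'' direction is immediate from (1) and (2); for ``only if'', an isomorphism $f$ is injective, so (1) gives injectivity of every $f_s$, and — using injectivity of $f$ again to see that $f^{-1}(N^s) \subseteq M^s$ — the map $f$ restricts to bijections $M^s \to N^s$ for all $s$, whence each $f_s$, being the induced map on quotients of these bijections, is itself bijective.

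I do not anticipate a genuine obstacle: the whole argument is a chain of short diagram chases together with one easy induction. The only points that call for a little care are using injectivity of $f$ itself — rather than of the $f_s$ — to pin down where preimages sit in the $p$-torsion filtration, and strengthening the inductive hypothesis in (2) to the honest equality $f(M^s) = N^s$ rather than a bare surjectivity statement modulo lower filtration degree.
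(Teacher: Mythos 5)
Your proof is correct and takes essentially the same route as the paper: the converse directions (using injectivity of $f$, via $p^{s-1}x=0$ and $p^s x = 0$, to get injectivity and then surjectivity of each $f_s$) are identical to the paper's, while for the forward directions you simply unroll by hand -- through the height argument for injectivity and the induction establishing $f(M^s)=N^s$ for surjectivity -- what the paper obtains by applying the five lemma inductively to the ladder of short exact sequences $0 \to M^s \to M^{s+1} \to M_{s+1} \to 0$. There are no gaps; the only stylistic difference is that your height argument gives injectivity of $f$ without any induction, a minor simplification over the paper's diagram-chase presentation.
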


\begin{proof}
First we have to show the maps $f_s$ are well-defined; it is enough to show $f(M^s) \subset N^s$.  If $x \in M^s$, then $p^s x = 0$, so $p^s f(x) = f(p^s x) = 0$, and $f(x) \in N^s$.  It is clear the maps $f_s$ are $R$-linear.

In what follows we define $f^s : M^s \to N^s$ by $f^s = f|_{M^s}$.  Note that $M^1 = M_1$, $N^1 = N_1$ and $f^1 = f_1$.  

For $s = 1,\dots, k-1$, we have the following commutative diagrams with exact rows:
\begin{equation}
\begin{tikzcd}
0 \arrow[r] & M^s \arrow[r] \arrow[d, "f^s"] & M^{s+1} \arrow[r] \arrow[d, "f^{s+1}"] & M_{s+1} \arrow[r] \arrow[d, "f_{s+1}"] & 0 \\
0 \arrow[r] & N^s \arrow[r] & N^{s+1} \arrow[r] & N_{s+1} \arrow[r] & 0 
\end{tikzcd}
\end{equation}
Suppose that $f_s$ is injective for all $s$.  Considering the above diagram for $s=1$ and noting $f^1 = f_1$, the five lemma implies that $f^2$ is injective.  Similarly, the $s=2$ diagram gives that $f^3$ is injective.  Continuing in this manner, we obtain $f^k = f$ is injective.  The same argument shows that $f_s$ surjective for all $s$ implies $f$ surjective, and that $f_s$ an isomorphism for all $s$ implies $f$ is an isomorphism.

Now assume that $f$ is injective.  Take $x \in M^s$ and suppose $f_s( [x]_s ) = 0$, then $[f(x)]_s = 0$, so $f(x) \in N^{s-1}$.  Therefore $0 = p^{s-1} f(x) = f( p^{s-1} x)$, so $p^{s-1} x = 0$ and hence $[x]_s = 0$.  We have shown $f_s$ is injective.

Finally assume $f$ is an isomorphism.  We have already shown that $f_s$ is injective; we need to show surjectivity.  Choose $y \in N^s$ and $x \in M$ with $y = f(x)$.  We have $p^s f(x) = f(p^s x) = 0$, so $p^s x = 0$, and thus $x \in M^s$.  Therefore $f_s([x]_s) = [y]_s$.
\end{proof}

\begin{defn}
Given a finitely generated $R$-module $M$ with a free $A$-filtration by $C^s$ ($s = 0,\dots, \ell$), a {\bf filtered basis} is a choice of elements $m^s_1, \dots, m^s_{r_s} \in C^s$ for each $s$.  Here $r_s$ is the rank of $C_s$ as a free $A$-module, and we require that $[m^s_1]_s, \dots, [m^s_{r_s}]_s$ is a basis for $C_s$.  If there is some $s$ for which $C_s = 0$, then $r_s = 0$ and the filtered basis contains no elements $m^s_i$.
\end{defn}

It is clear that a filtered basis always exists.  When working with filtered bases, we use implied summation notation for the lower index of $m^s_i$, but not for the upper index.  We will often write expressions like $\sum_{t = 1}^\ell r^t_i m^t_i$; if there is some $r_t = 0$, the convention is that we omit that value of $t$ from the sum over $t$.  The following proposition explains the name ``filtered basis'':

\begin{proposition}
Let $M$ be a finitely generated $R$-module with a free $A$-filtration by $C^s \subset M$, and let $\{ m^s_i \}$ be a filtered basis.  
Let $\sigma : A \to R$ be a set-theoretic section of the natural map $R \to A$ for which $\sigma(0) = 0$.  Then for any $x \in M$ there exist unique elements $a^t_1, \dots, a^t_{r_t} \in A$ such that $x = \sum_{t=1}^\ell \sigma(a^t_i) m^t_i$, where the sum on $i$ is implied, and where $a^t_i = 0$ for $t > h(x)$.   \label{prop:unique-expression}
\end{proposition}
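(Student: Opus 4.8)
The plan is to prove existence and uniqueness of the expansion separately, in each case leaning on the one defining property of a filtered basis: for each $s$, the classes $[m^s_1]_s, \dots, [m^s_{r_s}]_s$ form an $A$-basis of the free module $C_s = C^s / C^{s-1}$, and the projection $[\cdot]_s \colon C^s \to C_s$ is $A$-linear with kernel $C^{s-1}$ and satisfies $[\sigma(a) m]_s = [\sigma(a)][m]_s = a[m]_s$ for $a \in A$.

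For \textbf{existence} I would induct on the height $h(x)$. The base case $h(x) = 0$ forces $x \in C^0 = 0$, so $x = 0$, and taking all $a^t_i = 0$ works since $\sigma(0) = 0$. For the inductive step put $s = h(x)$, so $x \in C^s$ and $[x]_s \in C_s$; since $\{[m^s_i]_s\}_i$ is a basis there are unique $a^s_i \in A$ with $[x]_s = a^s_i [m^s_i]_s$. Then $x' := x - \sigma(a^s_i) m^s_i$ lies in $\ker[\cdot]_s = C^{s-1}$, so $h(x') \le s-1$, and the induction hypothesis gives $x' = \sum_{t=1}^\ell \sigma(a^t_i) m^t_i$ with $a^t_i = 0$ for $t > h(x')$, hence $a^t_i = 0$ for all $t \ge s$. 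Combining with the $a^s_i$ just produced yields $x = \sum_{t=1}^\ell \sigma(a^t_i) m^t_i$ with $a^t_i = 0$ for $t > s = h(x)$, as required.

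For \textbf{uniqueness}, suppose $\sum_{t=1}^\ell \sigma(a^t_i) m^t_i = \sum_{t=1}^\ell \sigma(b^t_i) m^t_i$ with $a^t_i = b^t_i = 0$ for $t > h(x)$, and let $s$ be the largest index for which $a^s_i \neq b^s_i$ for some $i$, assuming such an $s$ exists. Since $a^t_i = b^t_i$ for $t > s$, the corresponding terms are literally equal and cancel from the two sides, leaving the identity $\sum_{t \le s} \sigma(a^t_i) m^t_i = \sum_{t \le s} \sigma(b^t_i) m^t_i$ in $C^s$. Applying $[\cdot]_s$ annihilates every term with $t < s$, because then $m^t_i \in C^t \subseteq C^{s-1}$, so $a^s_i [m^s_i]_s = b^s_i [m^s_i]_s$ in $C_s$; linear independence of the basis $\{[m^s_i]_s\}_i$ forces $a^s_i = b^s_i$ for all $i$, contradicting the choice of $s$. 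Hence all coefficients coincide.

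I do not expect a genuine obstacle. The one point needing care is that $\sigma$ is only a set-theoretic section, so one cannot subtract two expansions term by term and invoke $A$-bilinearity; the uniqueness argument sidesteps this by isolating the highest index at which the coefficients differ and exploiting that, above that index, the terms agree exactly rather than merely up to an $A$-linear relation. The remaining verifications — that $C^0 = 0$, that $[\cdot]_s$ is $A$-linear with the stated kernel, and that heights drop appropriately — are routine consequences of the definitions.
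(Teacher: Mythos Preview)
Your proof is correct and follows essentially the same approach as the paper: existence by inductively peeling off the top layer of the filtration using the basis of $C_s$, and uniqueness by projecting to the top quotient where a discrepancy could live. The only cosmetic difference is that for uniqueness the paper starts at $s=\ell$ and works down one level at a time, whereas you jump directly to the largest index of disagreement; both arguments rely on the same observation that lower terms die under $[\cdot]_s$ and that $\{[m^s_i]_s\}$ is a basis.
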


\begin{proof}
Take $x \in M$.  If $x = 0$ then we choose all the $a^t_i = 0$.  Now assume $x \neq 0$.  Let $s = h(x)$; clearly $s \geq 1$ and $[x]_s \neq 0$.  If $s < \ell$, then we set $a^t_i = 0$ for $t > s$ (if any were nonzero, then it would imply $x \notin C^s$).  We choose $a^s_i$ by expanding $[x]_s = a^s_i [m^s_i]_s$, and then observe that
$[x - \sigma(a^s_i) m^s_i]_s = 0$, so $x - \sigma(a^s_i) m^s_i \in C^{s-1}$.  Repeat this procedure with $x' =  x - \sigma(a^s_i) m^s_i \in C^{s-1}$, and continue inductively to get the desired expression.

To check uniqueness, suppose that two choices $a^t_i$ and $\tilde{a}^t_i$ give the same $x \in M$.  Considering $[x]_\ell$, we have the expression $a^\ell_i [m^\ell_i]_\ell =  \tilde{a}^\ell_i [m^\ell_i]_\ell$, which implies $a^\ell_i = \tilde{a}^\ell_i$.  It follows that
\begin{equation}
x' = x - \sigma(a^\ell_i) m^\ell_i = \sum_{t =1}^{\ell-1} \sigma(a^t_i) m^t_i = \sum_{t =1}^{\ell-1} \sigma(\tilde{a}^t_i) m^t_i \nonumber \text{.}
\end{equation}
Considering $[x']_{\ell-1}$ we repeat the same argument and find $a^{\ell-1}_i = \tilde{a}^{\ell-1}_i$.  Clearly we can continue repeating this procedure to find $a^t_i = \tilde{a}^t_i$ for all $i$ and $t$.  
\end{proof}

\begin{theorem}\label{thm:f-presentations}
Let $R=\Z_{p^k}[t^\pm]$ for $k\geq 2$.
\begin{enumerate}[(i)]
\item Let $M$ be a finitely generated $R$ module with a free $A$-filtration by $C^s \subset M$ and a filtered basis $\{ m^s_i \}$, and let $\sigma : A \to R$ be a section of the natural map $R \to A$, with $\sigma(0) = 0$.  Then $M$ has a presentation
\begin{equation}
M \cong M(f) \equiv \frac{ R \{ x^s_i \} }{\langle p x^\ell_i - \sum_{t=1}^{\ell-1} f^{\ell t}_{i j} x^t_j , \dots,
p x^s_i - \sum_{t=1}^{s-1} f^{s t}_{i j} x^t_j ,
 \dots, p x^1_i \rangle}  \text{,}  \label{eqn:f-form}
\end{equation}
specified by matrices $f^{s t} \in M_{r_s, r_t}(R)$, defined for $s > t$ when both $r_s$ and $r_t$ are nonzero. The elements of each $f^{st}$ lie in the image of $\sigma$.  Here, $\{x^s_i \}$ is in bijective correspondence $x^s_i \leftrightarrow m^s_i$ with the filtered basis elements $\{ m^s_i\}$, and  $R \{ x^s_i \}$ denotes the free $R$-module with basis $\{ x^s_i \}$.  Letting $F = R\{x^s_i\}$, the isomorphism is induced by the surjective $R$-module map $\eta : F \to M$ defined by $\eta(x^s_i) = m^s_i$.  

\item If $M(f)$ is a finitely generated module of the form given in Equation~\ref{eqn:f-form} with $\ell \leq k$, then, denoting by $m^s_i$ the image of $x^s_i$ under the quotient map, $M(f)$ has a free $A$-filtration defined by $C(f)^s = \langle m^1, \dots, m^1_{r_1}, \dots, m^s_1, \dots, m^s_{r_s} \rangle \subset M(f)$, and $\{ m^s_i \}$ is a filtered basis.  
\end{enumerate}
\end{theorem}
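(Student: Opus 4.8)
The plan is to extract from the presentation a \emph{normal form} for elements, which is the engine behind both parts. Write $\bar m^s_i$ for the image of $x^s_i$ in $M(f)$, set $F=R\{x^s_i\}$, and let $K\subset F$ be the submodule generated by the listed relations $g^s_i=p x^s_i-\sum_{t<s}\sum_j f^{st}_{ij}x^t_j$ (with $g^1_i=p x^1_i$), so that $M(f)=F/K$. The first step is to show that every element of $F$ is congruent modulo $K$ to one of the form $\sum_{s,i}\sigma(a^s_i)x^s_i$ with $a^s_i\in A$. This is a finite top-down reduction: given $\xi=\sum_{s,i}r^s_i x^s_i$, process the levels $s=\ell,\ell-1,\dots,1$ in turn; at level $s$ write $r^s_i=\sigma([r^s_i])+p\rho^s_i$ and use $p x^s_i\equiv\sum_{t<s}\sum_j f^{st}_{ij}x^t_j\pmod K$ (and $p x^1_i\equiv 0$) to replace the coefficient of $x^s_i$ by $\sigma([r^s_i])$ at the cost of adding an $R$-linear combination of strictly lower-level generators, which is absorbed when those levels are processed. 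Since there are finitely many levels this terminates. I will use this reduction to get injectivity in part (i) and existence of normal forms in part (ii).

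For part (i), the map $\eta\colon F\to M$, $x^s_i\mapsto m^s_i$, is surjective because the $m^s_i$ generate $M$ (Proposition~\ref{prop:unique-expression}). Applying Proposition~\ref{prop:unique-expression} to $p m^s_i$, which lies in $C^{s-1}$ since $p C^s\subseteq C^{s-1}$, gives an expansion $p m^s_i=\sum_{t<s}\sum_j\sigma(a^{st}_{ij})m^t_j$ with all higher-level coefficients vanishing, and also $p m^1_i=0$ (as $p C^1\subseteq C^0=0$); setting $f^{st}_{ij}=\sigma(a^{st}_{ij})$, the relations $g^s_i$ lie in $\ker\eta$ with the matrices $f^{st}$ having entries in $\im\sigma$ by construction, so $\eta$ descends to a surjection $\bar\eta\colon M(f)\to M$. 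For injectivity of $\bar\eta$, take $z\in\ker\eta$; by the reduction above $z\equiv\sum_{s,i}\sigma(c^s_i)x^s_i\pmod K$, and then $0=\eta(z)=\sum_{s,i}\sigma(c^s_i)m^s_i$ in $M$ forces all $c^s_i=0$ by the uniqueness clause of Proposition~\ref{prop:unique-expression}, whence $z\in K$. So $\bar\eta$ is an isomorphism; note this argument does not use $\ell\le k$.

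For part (ii), the substantive point is \emph{uniqueness} of the normal form for $M(f)$, and this is exactly where $\ell\le k$ enters. Suppose $\sum_{s,i}d^s_i x^s_i\in K$ where each $d^s_i=\sigma(a^s_i)-\sigma(b^s_i)$ is a difference of section values (so $d^s_i=0\iff[d^s_i]=0$ in $A$). Writing the element as $\sum_{u,i'}\mu^{u}_{i'}g^{u}_{i'}$ with $\mu^u_{i'}\in R$ and comparing coefficients of $x^s_i$ in the free module $F$ gives $d^s_i=p\mu^s_i-\sum_{u>s}\sum_{i'}\mu^u_{i'}f^{us}_{i'i}$. I prove by descending induction on $s$ that $d^s_i=0$ for all $i$ and that $\mu^s_i\in(p^{\,k-1-(\ell-s)})$. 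The case $s=\ell$ is immediate since there are no terms with $u>\ell$. In the inductive step the hypothesis yields $\mu^u_{i'}\in(p^{\,k-\ell+s})$ for every $u>s$, so $\sum_{u>s}\sum_{i'}\mu^u_{i'}f^{us}_{i'i}\in(p^{\,k-\ell+s})$; since $\ell\le k$ forces $k-\ell+s\ge 1$ for all $s\ge 1$, this term (and $p\mu^s_i$) lies in $(p)$, hence $[d^s_i]=0$ and $d^s_i=0$; then $p\mu^s_i\in(p^{\,k-\ell+s})$ together with $\mathrm{Ann}_R(p)=(p^{k-1})$ gives $\mu^s_i\in(p^{\,k-1-(\ell-s)})$, completing the induction. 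Thus all $d^s_i=0$, i.e. the normal form is unique. If $\ell>k$ this bookkeeping fails at the lowest levels, consistent with the conclusion being false there.

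Granting the normal form, part (ii) is routine. Each $C(f)^s=\langle\bar m^t_j:t\le s\rangle$ is an $R$-submodule with $p C(f)^s\subseteq C(f)^{s-1}$, since $p\bar m^t_j=\sum_{u<t}\sum_k f^{tu}_{jk}\bar m^u_k$, so $C(f)_s=C(f)^s/C(f)^{s-1}$ is an $A$-module generated by the classes $[\bar m^s_i]_s$; these are $A$-independent because a relation $\sum_i a_i[\bar m^s_i]_s=0$ lifts to $\sum_i\sigma(a_i)\bar m^s_i\in C(f)^{s-1}$, and comparing this element's normal form (supported at level $s$) with that of an element of $C(f)^{s-1}$ (supported at levels $<s$) forces each $a_i=0$. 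Hence each $C(f)_s$ is free over $A$ with basis $\{[\bar m^s_i]_s\}$, which is exactly the statement that $\{C(f)^s\}$ is a free $A$-filtration with filtered basis $\{\bar m^s_i\}$. The main obstacle throughout is the uniqueness step: tracking how the $p$-adic valuations of the relation coefficients $\mu^s_i$ increase as one descends the filtration, and checking that the estimates close precisely under $\ell\le k$.
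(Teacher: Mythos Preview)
Your proof is correct, and for part (i) it is essentially identical to the paper's: both define $\eta$, note $K\subset\ker\eta$, reduce an arbitrary element of $F$ modulo $K$ to the normal form $\sum\sigma(a^s_i)x^s_i$ by the same top-down procedure, and then invoke the uniqueness in Proposition~\ref{prop:unique-expression} to conclude $\ker\eta\subset K$.

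For part (ii) your route genuinely differs from the paper's. The paper never proves a global uniqueness-of-normal-form statement for $M(f)$. Instead it introduces filtrations $F^s=R\{x^t_j:t\le s\}$ and $G^s=\langle g^t_j:t\le s\rangle$ of $F$ and $K$, shows (using $\ell\le k$ via the auxiliary fact $p^t x^t_i\in G^t$) that the induced maps $F^s/G^s\to F^{s+1}/G^{s+1}$ are injective, identifies $C(f)^s$ with $F^s/G^s$, and then applies the Snake Lemma to a diagram of short exact sequences to read off $C(f)_s\cong R\{x^s_i\}/(p x^s_i)\cong A^{r_s}$. By contrast, you control everything through a single descending induction on $s$ that tracks the $p$-adic valuation of the relation coefficients $\mu^s_i$, obtaining $\mu^s_i\in(p^{\,k-1-(\ell-s)})$ and hence $d^s_i=0$; the hypothesis $\ell\le k$ enters precisely to keep $k-\ell+s\ge 1$ so that $[d^s_i]=0$. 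Your argument is more elementary---no diagram chase, no Snake Lemma---and makes the dependence on $\ell\le k$ quantitatively explicit. The paper's approach, on the other hand, yields the structural identification $C(f)^s\cong F^s/G^s$ along the way, which is a slightly stronger intermediate statement and fits more naturally with the later dual-filtration machinery.
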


\begin{proof}
(i) Let $G = \ker \eta$, then $F / G \cong M$.  Because $p m^s_i \in C^{s-1}$, by Proposition~\ref{prop:unique-expression}, we can express $p m^s_i = \sum_{t = 1}^{s-1} \sigma(a^{s t}_{i j} ) m^t_j$ for unique $a^{s t}_{i j} \in A$.  Let $G' \subset F$ be the submodule generated by elements of the form $p x^s_i - \sum_{t = 1}^{s-1} \sigma(a^{s t}_{i j} ) x^t_j$ for all $s, i$, so $M(f) = F/G'$. Clearly $G' \subset G$ and we claim that in fact $G' = G$, so $M(f) \cong M$.  

Consider an arbitrary element $x = \sum_s r^s_i x^s_i \in F$.  Note that for any $r \in R$, there exist unique $b \in A$ and $c \in R$ such that $r = \sigma(b) + p c$, where $b = [r]$.  Thus we have $r^\ell_i = \sigma(b^\ell_i) + p c^\ell_i$ for $b^\ell_i \in A$ and $c^\ell_i \in R$, and
\begin{eqnarray}
x &=& \big( \sigma(b^\ell_i) + p c^\ell_i \big) x^\ell_i + \sum_{s = 1}^{\ell-1} r^s_i x^s_i  \nonumber \\
&\equiv& \Big( \sigma(b^\ell_i) x^\ell_i  +  \sum_{s = 1}^{\ell-1} \tilde{r}^s_i x^s_i \Big) \operatorname{mod} G' \text{,} \nonumber
\end{eqnarray} 
for some $\tilde{r}^s_i$ defined for $1 \leq s \leq \ell-1$.  Repeating this procedure we find $x \equiv ( \sum_s \sigma(b^s_i) x^s_i  )\operatorname{mod} G'$ for some $b^s_i \in A$.  Since $\eta$ vanishes on $G'$, we thus have $\eta(x) = \sum_s \sigma(b^s_i) m^s_i$.  Then taking $x \in G$, we have $\eta(x) = 0$, and by Proposition~\ref{prop:unique-expression} we have $b^s_i = 0$, so $x \equiv 0 \operatorname{mod} G'$ and $x \in G'$.  We set $f^{s t}_{i j} = \sigma(a^{s t}_{i j})$ to obtain Equation~\ref{eqn:f-form}.

(ii)  Now we suppose $M(f)$ is a module of the form given in Equation~\ref{eqn:f-form}.   
Clearly we have a filtration by $C(f)^s$, and $C(f)_s = C(f)^{s} / C(f)^{s-1}$ is naturally an $A$-module since  $p m^s_i = \sum_{t =1}^{s-1} f^{s t}_{i j} m^t_j \in C(f)^{s-1}$, so $p C(f)^s \subset C(f)^{s-1}$.  We denote the natural maps $C(f)^s \to C(f)_s$ by $m \mapsto [m]_s$.

We let $F$ be the free module in the numerator of Equation~\ref{eqn:f-form}, and $G \subset F$ the submodule in the denominator.  We have submodules $F^s = R\{ x^s_1, \dots, x^s_{r_s}, \dots, x^1_1, \dots, x^1_{r_1} \} \subset F$ and 
$G^s = \langle p x^s_i - \sum_{t=1}^{s-1} f^{s t}_{i j} x^t_j ,
 \dots, p x^1_i \rangle$, which fit into the filtrations $0 = F^0 \subset F^1 \subset \cdots \subset F^\ell = F$ and
 $0 = G^0 \subset G^1 \subset \cdots \subset G^\ell = G$.  In addition we have $G^s \subset F^s$.
 
It will be useful below to show that $p^t x^t_i \in G^t$.  We proceed by induction.  For $t=1$, clearly $p x^1_i \in G^1$.  Now suppose the claim is true for $t \leq s$, we will show it holds for $t = s+1$.  We have
\begin{equation}
p^{s+1} x^{s+1}_i = p^s \Big( p x_i^{s+1} - \sum_{t < s+1} f^{s+1,t}_{i j} x^t_j \Big) + p^s \sum_{t < s+1} f^{s+1,t}_{i j} x^t_j \nonumber \text{.}
\end{equation}
The first term lies in $G^{s+1}$ and the second term lies in $G^s$ by assumption.  
 
 We have the following diagram with short exact columns:
\begin{equation}
\begin{tikzcd}
 & 0 \arrow[d] & 0 \arrow[d] & 0 \arrow[d] &  & 0 \arrow[d] \\
\cdots \arrow[r, hook] & G^{s-1} \arrow[r, hook] \arrow[d, hook] & G^s \arrow[r ,hook] \arrow[d, hook] & G^{s+1} \arrow[r, hook] \arrow[d, hook] & \cdots \arrow[r, hook] & G^k = G \arrow[d, hook]  \\
\cdots \arrow[r, hook] & F^{s-1} \arrow[r, hook] \arrow[d] & F^s \arrow[r ,hook] \arrow[d] & F^{s+1} \arrow[r, hook] \arrow[d] & \cdots \arrow[r, hook] & F^k = F  \arrow[d] \\
\cdots \arrow[r, "i_{s-2,s-1}"] & F^{s-1} / G^{s-1} \arrow[r, "i_{s-1,s}"] \arrow[d] & F^s / G^s \arrow[r, "i_{s,s+1}"] \arrow[d] & F^{s+1} / G^{s+1} \arrow[d] \arrow[r, "i_{s+1,s+2}"] & \cdots \arrow[r, "i_{k-1,k}"] & F / G = M(f)  \arrow[d] \\
 & 0 & 0 & 0  &  & 0
\end{tikzcd}  \label{eqn:Mf-cd}
\end{equation}
The unlabeled arrows are all inclusions and natural maps.  The maps $i_{s,s+1} : F^s / G^s \to F^{s+1} / G^{s+1}$ are given by $x + G^s \mapsto x + G^{s+1}$, which is well-defined because $G^s \subset G^{s+1}$.  It is straightforward to check that the squares are all commutative.  Moreover, $i_{s, s+1}$ is injective; to see this, take $x \in F^s$ and suppose $i_{s,s+1}(x + G^s) = x + G^{s+1} = 0$, which is equivalent to $x \in G^{s+1}$.  We will show that $x \in G^s$, so $x + G^s = 0$.  We have
\begin{equation}
x = \sum_{t \leq s} r^t_i x^t_i = \sum_{t \leq s+1} q^t_i (p x^t_i - \sum_{t' < t} f^{t t'}_{i j} x^{t'}_j ) \text{,} \nonumber
\end{equation}
for some $q^t_i \in R$.  The first expression holds because $x \in F^s$ while the second holds because $x \in G^{s+1}$.  Comparing the second two expressions, we observe that $x^{s+1}_i$ only appears in the first term of the second expression for $t = s+1$.  Therefore we have $p q^{s+1}_i = 0$, and thus $q^{s+1}_i = p^{k-1} \tilde{q}^{s+1}_i$ for some $\tilde{q}^{s+1}_i \in R$.  Therefore,
\begin{equation}
x = - \tilde{q}^{s+1}_i \sum_{t' < s+1} f^{s+1, t'}_{i j} (p^{k-1} x^{t'}_j ) + \sum_{t \leq s} q^t_i (p x^t_i - \sum_{t' < t} f^{t t'}_{i j} x^{t'}_j ).
 \nonumber
 \end{equation}
Examining the first term, we can write $p^{k-1} x^{t'}_j = p^{k-1 -t'} p^{t'} x^{t'}_j \in G^{t'} \subset G^s$, which makes sense because $t' \leq s \leq \ell - 1 \leq k-1$.  Therefore the first term lies in $G^s$, and $x \in G^s$ as claimed.

Viewing the arrows in the bottom row as inclusions, we have another filtration $0 = F^0 / G^0 \subset F^1 / G^1 \subset \cdots \subset F^k / G^k = F/G = M(f)$.  Moreover, if we start at $F^s$ and follow the arrows rightward to $F$, then downward to $M(f)$, the image is $C(f)^s$ as defined above.  Following the arrows first downward and then to the right instead, we conclude that $C(f)^s = F^s / G^s$.
Furthermore, the Snake Lemma applied to the diagram
\[\xymatrix{0\ar[r] &  G^{s-1} \ar[d] \ar[r] & F^{s-1}\ar[d] \ar[r] & C(f)^{s-1} \ar[d]\ar[r] & 0 \\
0\ar[r] &  G^{s} \ar[r] & F^{s} \ar[r] & C(f)^{s} \ar[r] & 0 
}\]
and the injectivity of the vertical maps imply that there is a short exact sequence on cokernels
\[\xymatrix{ 0\ar[r] &  G^{s}/G^{s-1} \ar[r] & F^{s}/F^{s-1} \ar[r] & C(f)_s \ar[r] & 0 
}.\]
This allows us to identify $C(f)_s$ with $R \{ x^s_1, \dots, x^s_{r_s} \}/( p x^s_1 , \dots, p x^s_{r_s} ) \cong A^{r_s} $, 
the free $A$-module with basis $[x^s_1]_s, \dots, [x^s_{r_s}]_s$.  
\end{proof}

Now we characterize the $p$-torsion filtration in terms of its $f$-matrices.  First we need a few preliminary results.

\begin{defn}
A matrix $\alpha \in M_{n,m}(R)$ is $A${\bf-left-nondegenerate} ($A$-lnd for short) if for all $x \in R^n$ with $[x] \neq 0$, we have $[x \alpha] \neq 0$.
 This is equivalent to the property that for all $x \in R^n$ with $x \notin (p) R^n$, we have $x \alpha \notin (p) R^m$.
 \end{defn}

\begin{proposition}
$A$-left-nondegeneracy is a basis-independent property of a matrix.  That is, if $\alpha \in M_{n,m}(R)$ is $A$-lnd and $u \in M_n(R)$ and $v \in M_m(R)$ are invertible, then $u \alpha v$ is $A$-lnd.
\end{proposition}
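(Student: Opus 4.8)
The plan is to reduce everything modulo $(p)$ and observe that $A$-left-nondegeneracy is really a property of the reduced matrix $[\alpha] \in M_{n,m}(A)$ alone. Concretely, applying the ring homomorphism $R \to A$ entrywise gives a reduction map $M_{n,m}(R) \to M_{n,m}(A)$ that is compatible with matrix multiplication, so $[x\alpha] = [x][\alpha]$ for any $x \in R^n$. Hence $\alpha$ is $A$-lnd if and only if the $A$-linear map $A^n \to A^m$ sending the row vector $\bar x \mapsto \bar x[\alpha]$ is injective. In this way the condition is transferred entirely to $[\alpha]$.

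Next I would observe that reduction modulo $(p)$ carries invertible matrices to invertible matrices: if $u \in M_n(R)$ satisfies $u u^{-1} = u^{-1} u = I_n$, then $[u][u^{-1}] = [u u^{-1}] = [I_n] = I_n$ and likewise on the other side, so $[u] \in M_n(A)$ is invertible; similarly $[v] \in M_m(A)$ is invertible. Therefore $[u\alpha v] = [u][\alpha][v]$ is obtained from $[\alpha]$ by multiplying by invertible matrices on the left and right.

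Finally I would run the chain of implications. Take $x \in R^n$ with $[x] \neq 0$. Since $[u]$ is invertible in $M_n(A)$, we get $[xu] = [x][u] \neq 0$. Since $\alpha$ is $A$-lnd, this forces $[xu\alpha] = [xu][\alpha] \neq 0$. Since $[v]$ is invertible, $[x(u\alpha v)] = [(xu\alpha)v] = [xu\alpha][v] \neq 0$. Thus $u\alpha v$ is $A$-lnd. I do not expect a real obstacle here: the statement is immediate once one recognizes that $A$-left-nondegeneracy depends only on the mod-$p$ reduction $[\alpha]$ and that the reduction of a unit is a unit. The only points requiring the slightest care are the bookkeeping with row vectors rather than columns and the compatibility of entrywise reduction with matrix products, both of which are entirely routine.
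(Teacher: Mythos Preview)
Your proposal is correct and takes essentially the same approach as the paper: both arguments reduce modulo $(p)$, note that $[u]$ and $[v]$ are invertible over $A$, and then run the identical chain $[x]\neq 0 \Rightarrow [xu]\neq 0 \Rightarrow [xu\alpha]\neq 0 \Rightarrow [xu\alpha v]\neq 0$. Your additional framing that $A$-lnd depends only on $[\alpha]$ is a nice observation but not a different method.
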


\begin{proof}
Note that $[u]$ and $[v]$ are invertible matrices with elements in $A$, because \emph{e.g.} $[u] [u^{-1} ] = [u u^{-1} ] = [ \mathbbm{1} ] = \mathbbm{1}$.  So $[u]^{-1} = [u^{-1}]$.   Suppose $x \in R^n$ with $[x] \neq 0$.  Then $[x u] = [x][u] \neq 0$.  Since $\alpha$ is $A$-lnd, $[x u \alpha] = [x u][\alpha] \neq 0$.  Finally we have $[x u \alpha v] = [x u \alpha][v] \neq 0$.
\end{proof}

\begin{lemma}
Let $\alpha \in M_n(A)$ be invertible, and let $\sigma : A \to R$ be a section of the natural map $R \to A$.  Then $\sigma(\alpha) \in M_n(R)$, \emph{i.e.} the matrix obtained by applying $\sigma$ to $\alpha$ element-wise, is invertible.  \label{lem:invertible-matrix}
\end{lemma}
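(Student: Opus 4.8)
The plan is to exploit the fact that the kernel of the reduction map $R \to A$ is the ideal $(p) \subset R = \Z_{p^k}[t^\pm]$, which is \emph{nilpotent}: since $p^k = 0$ in $\Z_{p^k}$, we have $(p)^k = (p^k) = (0)$ in $R$. Invertibility of matrices lifts along reduction modulo a nilpotent ideal, and this is the whole content of the lemma.

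First I would record the elementary observation that in any ring (commutative or not) with identity $\mathbbm{1}$, an element of the form $\mathbbm{1} + \nu$ with $\nu$ nilpotent is a unit: if $\nu^N = 0$ then $(\mathbbm{1}+\nu)^{-1} = \sum_{j=0}^{N-1}(-\nu)^j$, as one checks by telescoping. I will apply this inside the ring $M_n(R)$. Next, set $\beta = \sigma(\alpha^{-1}) \in M_n(R)$, meaning the matrix obtained by applying $\sigma$ entrywise to $\alpha^{-1} \in M_n(A)$. Reducing the product $\beta\,\sigma(\alpha)$ entrywise modulo $(p)$ gives $\alpha^{-1}\alpha = \mathbbm{1}$ in $M_n(A)$, so $\beta\,\sigma(\alpha) = \mathbbm{1} + pC$ for some $C \in M_n(R)$. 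Since $(pC)^k = p^k C^k = 0$, the matrix $\mathbbm{1} + pC$ is invertible by the previous step, and therefore $\sigma(\alpha)$ has the left inverse $(\mathbbm{1}+pC)^{-1}\beta$. Running the identical argument on $\sigma(\alpha)\,\beta$ produces a right inverse. A square matrix over a commutative ring possessing both a left and a right inverse is invertible, the two one-sided inverses necessarily coinciding, so $\sigma(\alpha) \in M_n(R)$ is invertible, as claimed.

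An alternative route, slightly cleaner, goes through the determinant: $\det\sigma(\alpha) \equiv \det\alpha \pmod{(p)}$, and $\det\alpha \in A^\times$ because $\alpha$ is invertible over $A$. By nilpotence of $(p)$, any $r \in R$ whose image in $A$ is a unit is itself a unit (if $r s \equiv 1$ then $rs = \mathbbm{1}+p w$ is a unit since $pw$ is nilpotent). Hence $\det\sigma(\alpha) \in R^\times$, and $\sigma(\alpha)^{-1} = (\det\sigma(\alpha))^{-1}\operatorname{adj}\sigma(\alpha)$.

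I do not anticipate a genuine obstacle here. The only points requiring a little care are: keeping the two-sided invertibility statement cleanly separated from the one-sided computations that actually get carried out; and noting explicitly that $\sigma$ is used only as a set-theoretic section (no ring-homomorphism property is needed), which is consistent with how $\sigma$ is introduced elsewhere in this section.
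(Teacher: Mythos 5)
Your argument is correct and is essentially the paper's own proof: both reduce $\sigma(\alpha)\sigma(\alpha^{-1})$ (or $\sigma(\alpha^{-1})\sigma(\alpha)$) modulo $(p)$ to see it is $\mathbbm{1}$ plus a matrix in $pM_n(R)$, then invert that perturbation by the finite geometric series using $p^k=0$. Your extra care about one-sided versus two-sided inverses and the determinant alternative are fine refinements but not a different route.
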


\begin{proof}
Note that for $a, b \in A$, $[ \sigma(a) \sigma(b) ] = [ \sigma(a b) ]$ and also $[\sigma(a) + \sigma(b)] = [\sigma(a+b)]$.  Therefore we have $[ \sigma(\alpha) \sigma(\alpha^{-1} ) ] = [ \sigma( \alpha \alpha^{-1} ) ] = [ \sigma ( \mathbbm{1} ) ] = \mathbbm{1}$.  Therefore
$\sigma(\alpha) \sigma(\alpha^{-1}) = 1 - p \beta$ for some $\beta \in M_n(R)$.  We have $\sigma(\alpha)^{-1} = \sigma(\alpha^{-1})(1 + p \beta + p^2 \beta^2 + \cdots + p^{k-1} \beta^{k-1})$.  
\end{proof}

\begin{proposition}
A matrix $\alpha \in M_{n,m}(R)$ is $A$-lnd if and only if $n \leq m$ and the Smith normal form of $[\alpha] \in M_{n,m}(A)$ has no zero rows.  \label{prop:lnd-characterization}
\end{proposition}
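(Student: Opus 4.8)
The plan is to reduce $A$-left-nondegeneracy to an injectivity statement over the PID $A$ and then read that statement off from Smith normal form. First I would observe that $A$-left-nondegeneracy of $\alpha$ depends only on $[\alpha]\in M_{n,m}(A)$: since the reduction map $R^n\to A^n$ is surjective and $[x\alpha]=[x][\alpha]$, the matrix $\alpha$ is $A$-lnd if and only if the $A$-linear map $L:A^n\to A^m$, $L(\bar x)=\bar x[\alpha]$ (row vector times matrix), has trivial kernel, i.e.\ is injective.

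Next I would apply Smith normal form to $[\alpha]$, writing $[\alpha]=UDV$ with $U\in M_n(A)$ and $V\in M_m(A)$ invertible and $D=\bigl(\begin{smallmatrix}\operatorname{diag}(\alpha_1,\dots,\alpha_t)&0\\0&0\end{smallmatrix}\bigr)$, where $\alpha_i\neq 0$ and $0\le t\le\min(n,m)$. Writing $L(\bar x)=(\bar x U)DV$ and using that $\bar x\mapsto\bar x U$ is an automorphism of $A^n$ and $\bar y\mapsto\bar y V$ is an automorphism of $A^m$, one gets that $L$ is injective if and only if $L_D:A^n\to A^m$, $L_D(\bar x)=\bar x D$, is injective. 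A direct computation with the row-vector convention gives $L_D(x_1,\dots,x_n)=(\alpha_1 x_1,\dots,\alpha_t x_t,0,\dots,0)\in A^m$.

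Then I would analyze $L_D$ by hand. If $t<n$, the $n$-th standard basis row vector $e_n\in A^n$ is nonzero while $L_D(e_n)=0$, so $L_D$ is not injective. If $t=n$ — which forces $n=t\le\min(n,m)\le m$ — then $L_D(\bar x)=0$ gives $\alpha_i x_i=0$ for every $i$, hence $x_i=0$ since $A\cong\Z_p[t^\pm]$ is an integral domain and each $\alpha_i\neq 0$; thus $L_D$ is injective. So $L_D$ is injective precisely when $t=n$. Finally, the zero rows of $D$ are exactly the rows $t+1,\dots,n$, so "$D$ has no zero rows" is equivalent to $t=n$, which already entails $n\le m$, while conversely $n\le m$ together with "no zero rows" gives $t=n$. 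Chaining the equivalences $\alpha$ is $A$-lnd $\iff$ $L$ injective $\iff$ $L_D$ injective $\iff$ $t=n$ $\iff$ ($n\le m$ and the Smith normal form of $[\alpha]$ has no zero rows) establishes the proposition.

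I do not anticipate a genuine obstacle here; the only points that need care are keeping the row-vector/right-multiplication conventions consistent so that $L_D$ comes out in the diagonal form above, and using that $A$ is a domain (so the nonzero $\alpha_i$ are non-zero-divisors) in the case $t=n$.
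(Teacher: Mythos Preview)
Your proof is correct and rests on the same core idea as the paper---reducing to Smith normal form of $[\alpha]$ over the PID $A$---but your route is slightly more direct. The paper invokes the earlier basis-independence result for $A$-lnd matrices and then \emph{lifts} the Smith-normal-form basis change from $A$ back to $R$ using Lemma~\ref{lem:invertible-matrix} (that $\sigma$ of an invertible $A$-matrix is invertible in $R$), so it can work with a congruent matrix $\alpha'=\sigma(u)\alpha\sigma(v)$ over $R$ whose reduction is already in SNF. You instead observe at the outset that the $A$-lnd condition factors through the reduction map---i.e.\ is exactly injectivity of $\bar x\mapsto \bar x[\alpha]$ over $A$---and then carry out the entire argument over $A$, never needing to lift anything back to $R$. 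This buys you a small economy: you bypass Lemma~\ref{lem:invertible-matrix} and the separate basis-independence proposition, at the cost of nothing. Both arguments finish the same way, reading injectivity off the diagonal of the SNF using that $A$ is a domain.
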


\begin{proof}
Because $A$-left-nondegeneracy is a basis-independent property, we might as well change to a convenient basis.  We choose invertible matrices $u \in M_n(A)$ and $v \in M_m(A)$ so that $u [\alpha] v$ is in Smith normal form.  By Lemma~\ref{lem:invertible-matrix}, $\sigma(u)$ and $\sigma(v)$ are also invertible, so we change basis by letting $\alpha' = \sigma(u) \alpha \sigma(v)$, and then $[\alpha']$ is in Smith normal form.  

If $n \leq m$ and $[\alpha']$ has no zero rows, then for any $x \in R^n$ with $[x] \neq 0$, we have $[x \alpha'] = [x] [\alpha'] \neq 0$.  Therefore $\alpha'$ and hence $\alpha$ is $A$-lnd.

Now assume $\alpha$ is $A$-lnd.  Suppose $[\alpha']$ has a zero row, then there is some $[x] \neq 0$ with $[x \alpha']  = 0$, a contradiction.  Since $[\alpha']$ is in Smith normal form, the only way to avoid having a zero row is for $n \leq m$. 
\end{proof}

\begin{theorem}
Let $M$ be a nonzero finitely generated torsion-free $R$-module, and let $\{ m^s_i \}$ be a filtered basis for the $p$-torsion filtration.  Considering a presentation $M(f) \cong M$ of the form Equation~\ref{eqn:f-form} as in Theorem~\ref{thm:f-presentations}, then
\begin{enumerate}[(i)]
\item there exists $s_0 \in \{1,\ldots, k\}$ such that  $r_s\neq 0$ if and only if $1\leq s \leq s_0$, and
\item for $s=2, \ldots, s_0$, $r_{s} \leq r_{s-1}$ and the matrix $f^{s, s-1}$ is $A$-lnd.
\end{enumerate}
The index $s_0$ is the smallest non-negative integer such that $p^{s_0}M=0$.

Conversely,  if $M(f)$ is a nonzero module of the form given in Equation~\ref{eqn:f-form} with $\ell = k$, and which satisfies the conditions (i) and (ii) above, then the $A$-filtration of $M(f)$ by 
\[C(f)^s = \langle m^s_1, \dots, m^s_{r_s}, \dots, m^1_1, \dots, m^1_{r_1} \rangle \subset M(f)\] 
is the $p$-torsion filtration. 
\end{theorem}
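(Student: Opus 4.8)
The plan is to prove both implications by working with the $p$-torsion filtration $M^s = \{x : p^s x = 0\}$ — which is a free $A$-filtration when $M$ is torsion-free by Proposition~\ref{prop:pt-is-A} — together with the defining relations $p\,m^s_i = \sum_{t<s} f^{st}_{ij} m^t_j$ of the presentation in Equation~\ref{eqn:f-form}. The computational heart of the argument, used in both directions, is the following observation: for an element $x = \sum_{t \le u} \sigma(a^t_i) m^t_i$ of height $h(x) = u \ge 2$ (so $[x]_u = \sum_i a^u_i [m^u_i]_u \ne 0$ in $C_u$, with $a^u$ a nonzero vector in $A^{r_u}$), passing $px$ to the quotient $C^{u-1}/C^{u-2}$ kills every term $\sigma(a^t_i) f^{tt'}_{ij} m^{t'}_j$ with $t' \le u-2$ and leaves only the block $t=u$, $t'=u-1$, so that
\[
[px]_{u-1} \;=\; \sum_j \Big( \sum_i a^u_i\, [f^{\,u,u-1}_{ij}] \Big) [m^{u-1}_j]_{u-1} \quad \text{in } C_{u-1},
\]
which is nonzero exactly when the vector $a^u \in A^{r_u}$ pairs nontrivially with the mod-$p$ reduction of $f^{u,u-1}$.

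For the forward direction, I would first note the elementary fact that $M^{s-1} = M^s$ forces $M^{s'} = M$ for all $s' \ge s$ (if $p^{s+1}x = 0$ then $px \in M^s = M^{s-1}$, hence $p^s x = 0$); this gives (i), with $s_0$ the largest index for which $C_{s_0} \ne 0$, and since a nonzero torsion-free module has $M^1 \ne 0$ we get $s_0 \ge 1$, while $M^{s_0} = M \supsetneq M^{s_0 - 1}$ identifies $s_0$ as the least integer with $p^{s_0} M = 0$. For (ii), fix $s \in \{2,\dots,s_0\}$ and an arbitrary nonzero $a^s \in A^{r_s}$; the element $x = \sum_i \sigma(a^s_i) m^s_i$ then has height exactly $s$, so $px \in M^{s-1}$ while $p^{s-2}(px) = p^{s-1}x \ne 0$ forces $h(px) = s-1$, i.e.\ $[px]_{s-1} \ne 0$. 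By the displayed identity this says precisely that $f^{s,s-1}$ is $A$-left-nondegenerate, and then $r_s \le r_{s-1}$ follows from Proposition~\ref{prop:lnd-characterization}.

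For the converse, I would invoke Theorem~\ref{thm:f-presentations}(ii) (valid since $\ell = k$) to get that the $C(f)^s$ form a free $A$-filtration with filtered basis $\{m^s_i\}$, so $M(f)$ is torsion-free. The inclusion $C(f)^s \subseteq M(f)^s$ is immediate from $p^t m^t_i = 0$, established in the proof of Theorem~\ref{thm:f-presentations} (where one shows $p^t x^t_i \in G^t$). The opposite inclusion reduces to the claim that $h(y) = u$ implies $p^{u-1} y \ne 0$, which I would prove by induction on $u$: the case $u = 1$ is trivial; for $u \ge 2$, since $h(y) = u$ forces $r_u \ne 0$, hence $u \le s_0$ by (i), part (ii) gives that $f^{u,u-1}$ is $A$-left-nondegenerate, so expanding $y$ by Proposition~\ref{prop:unique-expression} and applying the displayed identity (now with $C(f)$ in place of the $p$-torsion filtration) yields $[py]_{u-1} \ne 0$, hence $h(py) = u-1$; the inductive hypothesis applied to $py$ then gives $p^{u-1} y = p^{u-2}(py) \ne 0$. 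Therefore $p^s y = 0$ forces $u \le s$, i.e.\ $y \in C(f)^s$, so $M(f)^s = C(f)^s$ for all $s$ and the two filtrations coincide.

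I expect the main obstacle to be careful bookkeeping rather than a conceptual hurdle: one must verify precisely which blocks of $py = \sum_{t \le u}\sum_{t'<t} \sigma(a^t_i) f^{tt'}_{ij} m^{t'}_j$ survive in $C^{u-1}/C^{u-2}$ (only $(t,t') = (u,u-1)$), and one must correctly translate ``a coordinate vector is nonzero modulo $p$'' into the exact $A$-left-nondegeneracy condition, using a set-theoretic section $\sigma$ with $\sigma(0) = 0$ and the identity $[\sigma(a)] = a$ so that $R$-scalar multiplication on the $A$-modules $C_s$ factors through reduction mod $p$.
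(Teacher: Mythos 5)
Your proposal is correct and follows essentially the same route as the paper: both directions turn on the fact that multiplying a height-$s$ element by $p$ produces an element of height exactly $s-1$, which is precisely $A$-left-nondegeneracy of $f^{s,s-1}$, combined with Proposition~\ref{prop:unique-expression}, Proposition~\ref{prop:lnd-characterization}, and Theorem~\ref{thm:f-presentations}(ii). The only cosmetic differences are that the paper proves (ii) by putting $[f^{s,s-1}]$ in Smith normal form and ruling out zero rows (rather than testing arbitrary nonzero vectors $a^s$), and in the converse it telescopes the product $f^{s+1,s}f^{s,s-1}\cdots f^{21}$ all the way down to $C(f)^1$ (using that a product of $A$-lnd matrices is $A$-lnd) instead of your one-step induction on height.
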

\begin{proof}
If $C^s$ is the $p$-torsion filtration on a finitely generated, nonzero, torsion-free $R$-module $M$, if $C^{s} = C^{s-1}$ for some $s$, then $M=C^{k}=C^{k-1}$. Indeed, let $m\in C^k$, then  $p^{k-s}m\in C^s=C^{s-1}$. Therefore, $p^{s-1}p^{k-s}m=0$, and so, $p^{k-1}m =0$ and $m\in C^{k-1}$. So, $M$ is annihilated by $p^{k-1}$, and thus can be viewed as a $R/p^{k-1}=\Z_{p^{k-1}}[t^\pm]$. It follows inductively that there always exists $1\leq s_0\leq k$ such that $C^0 \subsetneq C^1 \subsetneq C^2 \subsetneq \cdots \subsetneq C^{k_0-1}\subsetneq C^{s_0} = C^{s_0+1} = \cdots = C^k$. Note that the index $s_0$ corresponds to the smallest non-negative integer such that $p^{s_0}M=0$. Since $r_s=0$ if and only if $C^s=C^{s-1}$, it follows that $r_s \neq 0$ for $ 1\leq s\leq s_0$ and  $r_s = 0$ for $s>s_0$.

The matrices  $f^{s,s-1}$ are thus defined for $2\leq s\leq s_0$. We have $p m^s_i = \sum_{t=1}^{s-1} f^{s t}_{i j} m^t_j$, which implies $[p m^s_i]_{s-1} = [f^{s,s-1}_{i j}] [m^{s-1}_j ]_{s-1} = a^{s,s-1}_{i j} [m^{s-1}_j]_{s-1}$.  If we make basis changes on the $[m^s_i]_s$ for $M_s$ and the $[m^{s-1}_i]_{s-1}$ for $M_{s-1}$, then we transform the matrix $a^{s,s-1}$ by $a^{s,s-1} \mapsto u a^{s, s-1} v$ for invertible matrices $u$ and $v$.  Moreover $\sigma(u)$ and $\sigma(v)$ are invertible by Lemma~\ref{lem:invertible-matrix}, so these basis changes lift to a basis change on the free module $F = R\{x^s_i\}$.  Therefore, by making a suitable basis change on $F$, we can put $a^{s,s-1}$ in Smith normal form.  The Smith normal form cannot have any zero rows, because a zero row would give $[p m^s_i]_{s-1} = 0$ for some $i$, implying $p m^s_i \in M^{s-2}$; this is a contradiction because $m^s_i$ has order $p^s$ and $p m^s_i$ thus has order $p^{s-1}$.  This implies $r_{s} \leq r_{s-1}$.  By Proposition~\ref{prop:lnd-characterization}, $f^{s,s-1}$ is $A$-lnd.

For the converse, first observe that $M(f)  = C(f)^{s_0}$, so $p^{s_0} x = 0$ for all $x \in M(f)$.  Therefore elements of $M$ have order at most $p^{s_0}$.  We need to show $C(f)^s = \{ m \in M(f) \mid p^s m = 0 \}$.  This clearly holds for $s \geq s_0$ by the preceding remarks.  

If $m \in C(f)^s$, then $p^s m = 0$, because $p C(f)^s \subset C(f)^{s-1}$ and thus $p^s C(f)^s = 0$.  We will show by induction that if $m \in M(f)$ has $p^s m = 0$, then $m \in C(f)^s$.  We already noted this holds for $s \geq s_0$.  For the inductive step, take $s < s_0$, and suppose the statement is true for all $t > s$. Consider $m \in M(f)$ with $p^s m = 0$.  Then $p^{s+1} m = 0$, so $m \in C(f)^{s+1}$.  We can thus write $m = \sum_{t \leq s+1} r^t_i m^t_i$.  Using $p^s C(f)^s = 0$, we have 
\begin{equation}
p^s m = p^s r^{s+1}_i m^{s+1}_i =  p^{s-1} r^{s+1}_i (p m^{s+1}_i) 
= p^{s-1} r^{s+1}_i \sum_{t < s+1} f^{s+1,t}_{i j} m^t_j = p^{s-1} r^{s+1}_i f^{s+1,s}_{i j} m^s_j \text{,} \nonumber
\end{equation}
where in the last equality we used $p^{s-1} C(f)^{s-1} = 0$.  Using the relations to get rid of powers of $p$ one at a time, we find the expression
\begin{equation}
0 = p^s m = r^{s+1}_i f^{s+1,s}_{i j_1} f^{s, s-1}_{j_1 j_2} \cdots f^{21}_{j_{s-1} j_s} m^1_{j_s} \in C(f)^1 \text{.}
\end{equation}
Since $C(f)^1$ is a free A-module with basis $[m^1_i]_1$, this implies $[r^{s+1}_i f^{s+1,s}_{i j_1} f^{s, s-1}_{j_1 j_2} \cdots f^{21}_{j_{s-1} j_s}] = 0$.  By $A$-left-nondegeneracy it follows that $[r^{s+1}_i] = 0$, using the fact that the product of $A$-lnd matrices is $A$-lnd.  Therefore we can use the relations to eliminate the $r^{s+1}_i m^{s+1}_i$ terms in $m$, and $m \in C(f)^s$.  
\end{proof}

\begin{remark}
It is instructive to present the module $S$ of the twisted 1-foliated fracton order of Example~\ref{ex:t1f} in the form of Equation~\ref{eqn:f-form}.  We have $n = 4$ and $R = \Z_4[t^\pm]$, so $p = 2$ and $k = 2$.  We recall that $S$ is generated by elements $e, m \in S$, with relations $2 e = 0$ and $2 m = (t+\bar{t})e$.  We set $x^1_1 = e$ and $x^2_1 = m$, and thus $r_1 = r_2 = 1$, with $f^{21} = t + \bar{t}$.  \label{rem:t1f}
\end{remark}

\subsection{Dual modules and filtrations}
\label{subsec:dual-st}

As above, we take $R = \Z_{p^k}[t^\pm]$ and $A = R/(p)$ throughout this subsection, with $p$ prime and $k \geq 1$.  We take the dual of an $R$-module $M$ to be $M^* = \Hom_R(M,R)$ with $R$-module structure given by $(r\varphi)(m) = \varphi(\bar{r}m) =\bar{r} \varphi(m)$.  We record the following standard general result:

\begin{proposition}
Let $\mathcal R$ be a commutative ring with involution and $M$ an $\mathcal R$-module.  Then $\Hom_{\cR}(M, \cR)$ is torsion-free. If $\cR$ is Noetherian and $M$ is finitely generated,  then ${\Hom_{\cR}}(M, \cR)$ is finitely generated.
\label{prop:noetherian-fg-dual}
\end{proposition}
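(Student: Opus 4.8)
The plan is to prove the two assertions in turn: torsion-freeness by a direct argument with the twisted module structure, and finite generation by embedding the dual into a finitely generated free module and invoking the Noetherian hypothesis.

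First I would establish torsion-freeness. Suppose $\varphi \in \Hom_{\cR}(M,\cR)$ is torsion, so $r\varphi = 0$ for some regular $r \in \cR$. Unwinding the module structure on the dual gives $\bar r\,\varphi(m) = 0$ for every $m \in M$. The point is that $\bar r$ is again regular: since the involution is a bijective ring homomorphism, $\bar r y = 0$ implies $r\bar y = 0$, hence $\bar y = 0$, hence $y = 0$ (the same observation used in the proof of Proposition~\ref{prop:nd-implies-tf}). Since $\cR$ is torsion-free as a module over itself, $\bar r\,\varphi(m) = 0$ forces $\varphi(m) = 0$ for all $m$, so $\varphi = 0$.

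For finite generation, assume $\cR$ is Noetherian and $M$ is finitely generated, and fix a surjection $\eta : \cR^n \twoheadrightarrow M$. Applying the left-exact contravariant functor $\Hom_{\cR}(-,\cR)$ yields an injection $\eta^* : \Hom_{\cR}(M,\cR) \hookrightarrow \Hom_{\cR}(\cR^n,\cR)$. I would then note that $\Hom_{\cR}(\cR^n,\cR)$ is finitely generated: it is generated as an $\cR$-module by the $n$ coordinate projections $\pi_1,\dots,\pi_n$, since a short computation with the twisted action $(r\varphi)(m) = \bar r\,\varphi(m)$ shows any $\varphi$ equals $\sum_i \overline{\varphi(e_i)}\,\pi_i$; equivalently, $\Hom_{\cR}(\cR,\cR) \cong \cR$ as $\cR$-modules via $\varphi \mapsto \overline{\varphi(1)}$, so $\Hom_{\cR}(\cR^n,\cR) \cong \cR^n$. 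Thus $\Hom_{\cR}(\cR^n,\cR)$ is a finitely generated module over the Noetherian ring $\cR$, hence a Noetherian module, so its submodule $\Hom_{\cR}(M,\cR)$ is finitely generated.

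There is no real obstacle here---the statement is standard---but the one point requiring mild care is the involution-twisted $\cR$-module structure on $\Hom_{\cR}(-,\cR)$: one must check that $\Hom_{\cR}(\cR,\cR)$ is still free of rank one (rather than a differently-behaved conjugate module) and that $\eta^*$ is $\cR$-linear for the twisted actions. It is also worth recording that only a surjection from a free module is used, so $M$ need only be finitely generated, not finitely presented.
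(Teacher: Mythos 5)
Your proof is correct, and the paper itself omits a proof of this proposition (it is recorded as a standard fact), so your argument is exactly the intended standard one: torsion-freeness from regularity of $\bar r$ together with $\cR$ being torsion-free over itself, and finite generation by dualizing a surjection $\cR^n \twoheadrightarrow M$ and using that submodules of the Noetherian module $\Hom_{\cR}(\cR^n,\cR) \cong \cR^n$ are finitely generated. Your care with the involution-twisted module structure (checking $\Hom_{\cR}(\cR,\cR)\cong\cR$ via $\varphi\mapsto\overline{\varphi(1)}$ and the $\cR$-linearity of $\eta^*$) is exactly the point worth verifying, and you have done so correctly.
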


\begin{defn}
Let $M$ be an $R$-module.  Given a length-$\ell$ filtration of $M$ by $C^s \subset M$, the {\bf dual filtration} of $M^*$ is
$0=D^0 \subset D^1 \subset \cdots \subset D^\ell = M^*$, where $D^s = \{ \lambda \in M^* \mid \lambda(C^{\ell-s}) = 0 \}$.  
\end{defn}

This is indeed a filtration because $C^{\ell-s} \subset C^{\ell - s + 1}$, so if $\lambda \in D^{s-1}$, $\lambda$ vanishes on $C^{\ell - s +1}$ by definition and thus also on $C^{\ell - s}$, so $\lambda \in D^s$ and $D^{s-1} \subset D^s$.

\begin{proposition}
Let $M$ be a finitely generated $R$-module with an $A$-filtration (not necessarily free) by $C^s \subset M$.  Then the dual filtration of $M^*$ is a free $A$-filtration.
\end{proposition}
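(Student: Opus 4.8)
The plan is to check that the dual filtration is an $A$-filtration by a direct computation with the twisted module structure on $M^*$, and then to show each associated graded module $D_s = D^s/D^{s-1}$ is a free $A$-module by realizing it as a submodule of a module of the form $\Hom_R(-,R)$. Such modules are finitely generated and torsion-free over $R$ by Proposition~\ref{prop:noetherian-fg-dual}, so freeness over $A$ follows from Theorem~\ref{thm:PID-structure}, since $A$ is a PID.

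First I would verify $pD^s \subseteq D^{s-1}$, so that each $D_s$ is indeed an $A$-module. Recall that the $R$-module structure on $M^* = \Hom_R(M,R)$ is $(r\lambda)(m) = \overline{r}\,\lambda(m)$, and the involution fixes the integer $p$. If $\lambda \in D^s$, i.e.\ $\lambda(C^{\ell-s}) = 0$, then for any $m \in C^{\ell-s+1}$ we have $(p\lambda)(m) = p\,\lambda(m) = \lambda(pm)$; since $C^\bullet$ is an $A$-filtration, $pm \in pC^{\ell-s+1} \subseteq C^{\ell-s}$, hence $(p\lambda)(m) = 0$. Thus $p\lambda \in D^{s-1}$.

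Next, for each $s$ set $C_{\ell-s+1} = C^{\ell-s+1}/C^{\ell-s}$ and consider restriction of functionals along $C^{\ell-s} \subseteq C^{\ell-s+1}$. A functional $\lambda \in D^s$ vanishes on $C^{\ell-s}$, so its restriction to $C^{\ell-s+1}$ descends to an element of $\Hom_R(C_{\ell-s+1}, R)$; the resulting $R$-linear map $D^s \to \Hom_R(C_{\ell-s+1},R)$ has kernel exactly $D^{s-1}$, namely the functionals vanishing on all of $C^{\ell-s+1}$. Equivalently, this is the segment of the long exact sequence obtained by applying the left-exact functor $\Hom_R(-,R)$ to $0 \to C_{\ell-s+1} \to M/C^{\ell-s} \to M/C^{\ell-s+1} \to 0$, after the identification $D^s \cong \Hom_R(M/C^{\ell-s},R)$. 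Either way, $D_s$ embeds as an $R$-submodule of $\Hom_R(C_{\ell-s+1},R)$. Since $R$ is Noetherian and $C_{\ell-s+1}$ is finitely generated (being a subquotient of $M$), Proposition~\ref{prop:noetherian-fg-dual} gives that $\Hom_R(C_{\ell-s+1},R)$ is finitely generated and torsion-free over $R$, hence so is its submodule $D_s$.

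Finally, because $pD_s = 0$, the $R$-action on $D_s$ factors through $A = R/(p)$, and $D_s$ is then finitely generated over $A$. To see it is torsion-free over $A$, take a nonzero $a \in A$ and lift it to $r \in R \setminus (p)$; by Proposition~\ref{prop:R-zero-div} such an $r$ is regular, and $ax = rx$ for all $x \in D_s$, so $ax = 0$ forces $x = 0$ by $R$-torsion-freeness. Theorem~\ref{thm:PID-structure} then shows a finitely generated torsion-free module over the PID $A$ is free, so every $D_s$ is a free $A$-module and the dual filtration is a free $A$-filtration. There is no serious obstacle here; the only point requiring care is keeping straight the two module structures ($R$ versus $A$) and the involution twist built into the definition of $M^*$, together with the bookkeeping of the shifted index $\ell - s$ in the dual filtration.
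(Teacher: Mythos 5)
Your proof is correct, and its first half (showing $pD^s \subseteq D^{s-1}$, so the dual filtration is an $A$-filtration) is exactly the paper's argument. Where you diverge is the freeness step: the paper checks $A$-torsion-freeness of $D_s$ directly, taking $\lambda \in D^s$ with $[r][\lambda]_s = 0$ and $[r] \neq 0$, observing $\bar r\,\lambda(x) = 0$ for all $x \in C^{\ell-s+1}$ and concluding $\lambda \in D^{s-1}$ from regularity of $\bar r$, then invoking finite generation of $M^*$ (Proposition~\ref{prop:noetherian-fg-dual}) plus Theorem~\ref{thm:PID-structure}. You instead embed $D_s$ into $\Hom_R(C_{\ell-s+1},R)$ via restriction of functionals -- this is precisely the injection $\zeta$ that the paper only introduces afterwards in Proposition~\ref{prop:zetasinjective} -- and then quote Proposition~\ref{prop:noetherian-fg-dual} for finite generation and $R$-torsion-freeness of that dual, transferring torsion-freeness to $A$ by lifting a nonzero $a \in A$ to a regular $r \in R\setminus(p)$ (Proposition~\ref{prop:R-zero-div}). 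The underlying computation (regularity of $\bar r$ and torsion-freeness of $R$-valued duals) is the same in both routes; the paper's version is shorter and self-contained, while yours is more structural, gets finite generation of each $D_s$ without separately remarking that $M^*$ is finitely generated, and effectively pre-proves the $\zeta_s$ maps used in the subsequent duality arguments. One small tidiness point: since the definition of an $A$-filtration is stated for finitely generated $R$-modules, you should still record explicitly that $M^*$ itself is finitely generated, which is again just Proposition~\ref{prop:noetherian-fg-dual}.
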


\begin{proof}
By Proposition~\ref{prop:noetherian-fg-dual}, $M^*$ is finitely generated.  Take $\lambda \in D^s$.  Then for any $x \in C^{\ell-s+1}$, 
$p \lambda(x) = \lambda(p x) = 0$, because $p x \in C^{\ell - s}$.  Therefore $p \lambda \in D^{s-1}$ and $p D^s \subset D^{s-1}$.  This proves that the dual filtration is an $A$-filtration.

To see that the dual filtration is free, take $\lambda \in D^s$ with $[r] [\lambda]_s = 0$ for $r \in R$ with $[r] \neq 0$.  Therefore $r \lambda \in D^{s-1}$.  For all $x \in C^{\ell - s + 1}$, we have $\bar{r} \lambda(x) = 0$, and thus $\lambda(x) = 0$ since $\bar{r}$ is regular.  We thus have $\lambda \in D^{s-1}$ and $[\lambda]_s = 0$, so $D_s$ is torsion-free.
\end{proof}

For $s,t \in \{1 ,\dots, \ell\}$, let $D(t,s)$ be the set of descending sequences $(\sigma_1, \dots, \sigma_n)$ from $t$ to $s$, \emph{i.e.} where $\sigma_1 = t$, $\sigma_n =s$, and $\sigma_i > \sigma_{i+1}$.  Denote by $|\sigma| = n$ the number of elements in the sequence.  Note that $D(t,s) = \varnothing$ if $t < s$.  If $t \geq s$, then $|\sigma| \leq t - s + 1$.  Using matrix notation and suppressing $i, j$ indices, given $\sigma \in D(t,s)$ we define a matrix $f^\sigma \in M_{r_t, r_s}(R)$ by
\begin{equation}
f^\sigma = \left\{ \begin{array}{ll}
\mathbbm{1} \text{,} & |\sigma| = 1 \\
f^{\sigma_1 \sigma_2} f^{\sigma_2 \sigma_3} \cdots f^{\sigma_{n-1} \sigma_n} \text{,} & |\sigma| > 1 
\end{array}\right. .
\nonumber
\end{equation}

\begin{proposition}
Let $M$ be a finitely generated $R$-module with a free $A$-filtration of length $k$ by $C^s \subset M$.  We choose a filtered basis $\{ m^s_i \}_{1\leq s \leq k, 1 \leq i \leq r_s}$ and a presentation as in Theorem~\ref{thm:f-presentations} with matrices $f^{st}$.  Then, for each $s$, there are maps $\lambda^s_1, \dots, \lambda^s_{r_s} \in M^*$ defined by their values on filtered basis elements, namely
\begin{equation}
\lambda^s_i (m^t_j) = \left\{ \begin{array}{ll}
 0 \text{,} & t < s \\
 \sum_{\sigma \in D(t,s)}  p^{k - |\sigma|} ( f^{\sigma T})_{i j}  \text{,} & t \geq s 
 \end{array}\right.  \text{,} \label{eqn:lambdas}
\end{equation}
where ``$T$'' denotes the transpose.   \label{prop:lambdas}
\end{proposition}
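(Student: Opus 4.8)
The plan is to construct each $\lambda^s_i$ by descent from a free module. Write $F = R\{x^t_j\}$ for the free module of Theorem~\ref{thm:f-presentations} and $G \subset F$ for the submodule generated by the relators $g^u_a := p x^u_a - \sum_{u'=1}^{u-1} f^{uu'}_{ab} x^{u'}_b$, so that $M \cong M(f) = F/G$ with $m^t_j$ the image of $x^t_j$. Since $\{x^t_j\}$ is an $R$-basis of $F$, there is a unique $R$-linear map $\widetilde\lambda^s_i : F \to R$ whose value on $x^t_j$ equals the right-hand side of Equation~\ref{eqn:lambdas}; denote this value by $v^s_i(t,j)$. To obtain $\lambda^s_i \in M^* = \Hom_R(M,R)$ it then suffices to check that $\widetilde\lambda^s_i$ annihilates $G$, i.e. $\widetilde\lambda^s_i(g^u_a) = 0$ for all $u, a$ and all $s, i$; by $R$-linearity this unwinds to the single identity $p\, v^s_i(u,a) = \sum_{u'<u} f^{uu'}_{ab}\, v^s_i(u',b)$ (implied sum on $b$). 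Throughout we use the standing convention that sequences and sums passing through an unpopulated level, where $r_\bullet = 0$, are omitted, so every product of $f$-matrices that appears is legitimate.

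The verification splits by the relative size of $u$ and $s$. If $u < s$, all terms vanish by the first line of Equation~\ref{eqn:lambdas}. If $u = s$, then $D(s,s) = \{(s)\}$ with $f^{(s)} = \mathbbm 1$ and $|\sigma| = 1$, so $v^s_i(s,a) = p^{k-1}\delta_{ia}$ and $p\, v^s_i(s,a) = p^k \delta_{ia} = 0$ in $R = \Z_{p^k}[t^\pm]$, while every term on the right has $u' < s$ and hence vanishes. The substantive case is $u > s$: here every $\sigma \in D(u,s)$ has $|\sigma| \geq 2$, and dropping the leading entry $u$ defines a bijection from $D(u,s)$ onto the union of the sets $D(u',s)$ over $s \leq u' < u$, sending $\sigma$ to $\tau \in D(u',s)$ with $u' = \sigma_2$, $|\tau| = |\sigma| - 1$, and $f^\sigma = f^{uu'} f^\tau$, whence $(f^{\sigma T})_{ia} = \sum_b (f^{\tau T})_{ib}\, f^{uu'}_{ab}$. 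Substituting this into $p\, v^s_i(u,a) = \sum_{\sigma \in D(u,s)} p^{k-|\sigma|+1}(f^{\sigma T})_{ia}$ and regrouping by $u'$ converts $p^{k-|\sigma|+1}$ into $p^{k-|\tau|}$, yielding exactly $\sum_{s \leq u' < u} f^{uu'}_{ab}\, v^s_i(u',b)$; since $v^s_i(u',b) = 0$ whenever $u' < s$, this is the claimed identity.

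We expect the only real difficulty to be bookkeeping discipline rather than any conceptual obstacle: one must keep straight which index of a product $f^{\sigma_1\sigma_2} f^{\sigma_2\sigma_3}\cdots$ becomes a row and which a column under transposition, and confirm that the power of $p$ attached to the shortened sequence $\tau$ is precisely the weight appearing in $v^s_i(u',b)$. We note that there is no subtlety from the involution-twisted $R$-action on $M^*$: that twist governs only how scalars act on the set of such homomorphisms, not the $R$-linearity each individual $\lambda^s_i$ must satisfy, and the relators $g^u_a$ involve no conjugation. Once the displayed identity is in hand, each $\widetilde\lambda^s_i$ factors through $M(f) \cong M$ to give $\lambda^s_i \in M^*$ with the prescribed values on the filtered basis, which is the assertion of the proposition.
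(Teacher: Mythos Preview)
Your proof is correct and follows essentially the same approach as the paper: lift the formula to the free module $F$, then verify that the resulting $R$-linear map kills each relator $g^u_a$ by the same three-way case split on $u$ versus $s$, using in the $u>s$ case the bijection ``drop the leading entry'' from $D(u,s)$ to $\bigcup_{s\le u'<u} D(u',s)$ together with $f^\sigma = f^{uu'} f^\tau$ and $|\tau|=|\sigma|-1$. Your added remarks about the transpose bookkeeping and the irrelevance of the involution-twist are accurate and do not depart from the paper's argument.
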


\begin{proof}
We let $F$ and $G$ be as in the proof of Theorem~\ref{thm:f-presentations}.  There are linear maps $\lambda^s_i : F \to R$ defined on basis elements by setting $\lambda^s_i(x^t_j)$ to be the expression on the right-hand side of Equation~\ref{eqn:lambdas}. 
We will show that the $\lambda^s_i$ vanish on $G \subset F$, thus giving the desired maps $\lambda^s_i \in M^*$.

It is enough to show $\lambda^s_i$ vanishes on generators of $G$, \emph{i.e.} we need $\lambda^s_i(p x^t_j - \sum_{t' < t} f^{t t'}_{j j'} x^{t'}_{j'} ) = 0$.  This is clearly zero if $t  < s$, and if $t = s$ it becomes $p \lambda^s_i( x^s_j) = p^k \delta_{i j} = 0$.  Assuming $t > s$,
\begin{align*}
\lambda^s_i(p x^t_j - \sum_{t' < t} f^{t t'}_{j j'} x^{t'}_{j'} ) &=
p \sum_{\sigma \in D(t,s)} p^{k - |\sigma|} (f^{\sigma {T}})_{i j} 
- \sum_{t' < t} { f^{t t'}_{j j'} } \lambda^s_i (x^{t'}_{j'} )  \\
&= \sum_{\sigma \in D(t,s)} p^{k - (|\sigma|-1)} (f^{\sigma {T}})_{i j} 
- \sum_{s \leq t' < t} { f^{t t'}_{j j'} } \sum_{\sigma \in D(t', s)} p^{k - |\sigma|} (f^{\sigma{T}}_{i j'})  \\ 
&= \sum_{\sigma \in D(t,s)} p^{k - (|\sigma|-1)} (f^{\sigma {T}})_{i j} 
- \sum_{s \leq t' < t} \sum_{\sigma \in D(t', s)} p^{k - |\sigma|} (f^{t t'} f^\sigma)^{T}{i j} = 0 \text{.}  \qedhere
\end{align*}
\end{proof}

\begin{proposition}\label{prop:zetasinjective}
Let $M$ be a finitely generated $R$-module with an $A$-filtration by $C^s \subset M$, and let $D^s \subset M^*$ be the dual filtration. Then there are injective $R$-module maps $\zeta_s : D_{\ell - s +1} \to C_s^*$ defined by 
$\zeta_s([\lambda]_{\ell - s + 1} )([x]_s) = \lambda(x)$.
\end{proposition}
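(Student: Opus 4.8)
The plan is to verify directly that $\zeta_s$ is well-defined, $R$-linear, and injective; each of these is a short diagram chase once the index shifts in the dual filtration are unwound.

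First I would unwind the definition of the dual filtration. Recall $D^j = \{\lambda \in M^* : \lambda(C^{\ell - j}) = 0\}$, so $D^{\ell - s + 1}$ consists of those $\lambda$ vanishing on $C^{s-1}$, while $D^{\ell - s}$ consists of those vanishing on $C^s$. Hence, given $\lambda \in D^{\ell - s + 1}$, the restriction $\lambda|_{C^s} : C^s \to R$ is $R$-linear and kills $C^{s-1}$, so it factors uniquely through the quotient map $C^s \to C_s = C^s/C^{s-1}$, yielding an element of $C_s^* = \Hom_R(C_s, R)$ whose value on $[x]_s$ is $\lambda(x)$. This defines a map $D^{\ell-s+1} \to C_s^*$; I would then observe that it factors through $D_{\ell-s+1} = D^{\ell-s+1}/D^{\ell-s}$, which is immediate since any $\mu \in D^{\ell-s}$ satisfies $\mu(x) = 0$ for all $x \in C^s$. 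This produces the map $\zeta_s$, and the same two observations show its value $\zeta_s([\lambda]_{\ell-s+1})([x]_s) = \lambda(x)$ is independent of both the representative $x \in C^s$ of $[x]_s$ and the representative $\lambda$ of $[\lambda]_{\ell-s+1}$. $R$-linearity of $\zeta_s$ follows from the twisted module structures $(r\lambda)(m) = \bar r \lambda(m)$ on $M^*$ and $(r\varphi)([x]_s) = \bar r \varphi([x]_s)$ on $C_s^*$, which give $\zeta_s(r[\lambda])([x]_s) = (r\lambda)(x) = \bar r \lambda(x) = \big(r\,\zeta_s([\lambda])\big)([x]_s)$.

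Finally, for injectivity, suppose $\zeta_s([\lambda]_{\ell-s+1}) = 0$. Then $\lambda(x) = 0$ for every $x \in C^s$, \emph{i.e.}\ $\lambda$ vanishes on $C^s = C^{\ell - (\ell - s)}$, which by definition of the dual filtration means exactly $\lambda \in D^{\ell - s}$, so $[\lambda]_{\ell-s+1} = 0$ in $D_{\ell-s+1}$. I do not expect any real obstacle: the argument is entirely formal, and the only point requiring care is keeping the two shifted indices $\ell - s + 1$ and $\ell - s$ straight, so that ``kills $C^{s-1}$ modulo kills $C^s$'' lines up with $C_s = C^s/C^{s-1}$. (One may note, though it is not needed, that $p\,\lambda(x) = \lambda(px) = 0$ since $px \in C^{s-1}$, so $\zeta_s([\lambda])$ in fact takes values in $\operatorname{Ann}_R(p) = (p^{k-1}) \cong A$, consistent with $C_s$ being an $A$-module.)
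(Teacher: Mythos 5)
Your proof is correct and follows essentially the same route as the paper's: unwind the index shift in the dual filtration to see that $D^{\ell-s+1}$ (resp.\ $D^{\ell-s}$) consists of functionals killing $C^{s-1}$ (resp.\ $C^s$), check well-definedness in both the $\lambda$ and $x$ representatives, and get injectivity directly from the definition of $D^{\ell-s}$. Your extra remarks on the twisted $R$-linearity and on the values landing in $\operatorname{Ann}_R(p)$ are correct but not needed; the paper simply notes $R$-linearity is clear.
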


\begin{proof}
First recall that $D_{\ell - s  + 1} = D^{\ell - s + 1} / D^{\ell - s}$, where $D^{\ell - s + 1}$ consists of maps in $M^*$ that vanish on $C^{s - 1}$, while $D^{\ell - s}$ contains the maps vanishing on $C^s$.  We need to check that $\zeta_s$ is well-defined.  Given $[\lambda]_{\ell - s + 1} \in D_{\ell - s + 1}$, pick two representatives $\lambda, \lambda' \in D^{\ell - s + 1}$.  We have $\lambda' = \lambda + \tilde{\lambda}$ for some $\tilde{\lambda} \in D^{\ell - s}$.  Now given $[x]_s \in C_s$, pick representatives $x, x' \in C^s$, where $x' = x + \tilde{x}$, where $\tilde{x} \in C^{s-1}$.  Observe that $\lambda(x) = \lambda(x')$, because $\lambda(\tilde{x}) = 0$.  Now $\lambda'(x) = \lambda(x) + \tilde{\lambda}(x) = \lambda(x)$, because $\tilde{\lambda}$ vanishes on $C^s$.  Therefore $\zeta_s$ is well-defined.  It is clearly an $R$-module homomorphism.

Now suppose $\zeta_s([\lambda]_{\ell - s + 1} ) = 0$, then $\lambda(x) = 0$ for all $x \in C^s$.  But then $\lambda \in D^{\ell - s}$ and $[\lambda]_{\ell - s + 1} = 0$.  So $\zeta_s$ is injective.
\end{proof}

\begin{lemma}
Let $M$ be a finitely generated free $A$-module viewed as an $R$-module.  Then the dual $M^*$ is a free $A$-module.  Moreover, if $m_1, \dots, m_n$ is an $A$-basis for $M$, then $m^*_1, \dots,m^*_n  \in M^*$ defined by $m^*_i(m_j) = p^{k-1} \delta_{i j}$ is an $A$-basis for $M^*$.  \label{lem:Adual}
\end{lemma}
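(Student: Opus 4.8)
The plan is to identify $M^*$ with the $A$-linear dual of $M$ through a natural isomorphism, and then simply read off the claimed basis. First I would record that since $M$ is a free $A$-module we have $pM = 0$, so for every $\varphi \in M^* = \Hom_R(M, R)$ and every $m \in M$ one has $p\,\varphi(m) = \varphi(pm) = 0$; hence $\varphi$ takes values in $\operatorname{Ann}_R(p) = \{r \in R \mid pr = 0\}$. A short direct computation shows $\operatorname{Ann}_R(p) = (p^{k-1})$: writing $r = \sum_i r_i t^i$, we have $pr = 0$ iff $p r_i = 0$ in $\Z_{p^k}$ for all $i$, iff $p^{k-1} \mid r_i$ for all $i$. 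In particular $pM^* = 0$, so $M^*$ is naturally an $A$-module, and moreover $M^* = \Hom_R\!\big(M, (p^{k-1})\big)$ since the inclusion $(p^{k-1}) \hookrightarrow R$ is $R$-linear.

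Next I would build an $R$-module isomorphism $\psi \colon (p^{k-1}) \to A$ by $\psi(p^{k-1} r) = [r]$. This is well defined and injective because $p^{k-1} r = p^{k-1} r'$ iff $p^{k-1}(r - r') = 0$ iff $r - r' \in (p)$ (by the annihilator computation) iff $[r] = [r']$; it is visibly surjective and $R$-linear, and it is compatible with the involutions since $\overline{[r]} = [\bar r]$. Post-composition with $\psi$ then yields an $A$-module isomorphism $\psi_* \colon M^* = \Hom_R\!\big(M, (p^{k-1})\big) \xrightarrow{\sim} \Hom_R(M, A)$, $\varphi \mapsto \psi \circ \varphi$; the check that $\psi_*$ is $R$-linear (hence $A$-linear) is routine, using $(\psi \circ (r\varphi))(m) = \psi(\bar r\, \varphi(m)) = \bar r\, \psi(\varphi(m))$, and $\psi_*$ is bijective with inverse $\chi \mapsto \psi^{-1}\circ\chi$. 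Since $pM = pA = 0$, every $R$-linear map $M \to A$ is automatically $A$-linear, so $\Hom_R(M, A) = \Hom_A(M, A)$.

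Finally, because $M$ is a finitely generated free $A$-module with basis $m_1, \dots, m_n$, the dual $\Hom_A(M, A)$ is a free $A$-module with the usual dual basis $\widehat m_1, \dots, \widehat m_n$ characterized by $\widehat m_i(m_j) = \delta_{ij}$ (Theorem~\ref{thm:PID-structure}). It then remains to observe that the $m_i^*$ are well-defined $R$-linear maps — using $M \cong R\{m_1, \dots, m_n\}/\langle p m_1, \dots, p m_n\rangle$ as $R$-modules, an $R$-linear map out of $M$ is determined by arbitrary images of the $m_j$ lying in $\operatorname{Ann}_R(p)$, and $p^{k-1}\delta_{ij} \in (p^{k-1})$ qualifies — and that $\psi_*(m_i^*)(m_j) = \psi(p^{k-1}\delta_{ij}) = [\delta_{ij}] = \delta_{ij}$, i.e. $\psi_*(m_i^*) = \widehat m_i$. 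Since $\psi_*$ is an $A$-module isomorphism carrying $m_1^*, \dots, m_n^*$ onto an $A$-basis, $\{m_i^*\}$ is itself an $A$-basis of $M^*$, which in particular is free over $A$. The one point requiring care throughout is the involution twist in the $R$- and $A$-module structures on these $\operatorname{Hom}$-modules together with the $p^{k-1}$ normalization of the $m_i^*$; once the isomorphism $(p^{k-1}) \cong A$ is in place, the rest is formal.
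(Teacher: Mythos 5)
Your proof is correct and takes essentially the same route as the paper's: both rest on the observation that any $R$-linear functional on a $p$-torsion module takes values in $\operatorname{Ann}_R(p)=p^{k-1}R\cong A$, which identifies $M^*$ with the $A$-linear dual (with its involution-twisted module structure), where the dual-basis claim is standard — the paper just does this componentwise via $M\cong A^n$ and $\Hom_R(A,R)\cong\overline{A}$, whereas you keep $M$ intact and match the $m_i^*$ to the dual basis explicitly. The only (harmless) imprecision is that injectivity of your $\psi$ uses $\operatorname{Ann}_R(p^{k-1})=(p)$, a companion one-line computation rather than literally the earlier one $\operatorname{Ann}_R(p)=(p^{k-1})$.
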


\begin{proof}
Let $ \overline{A}$ be the $A$-module whose underlying abelian group is $A$ and action is given by $a \cdot x =\overline{a} x$ for $a\in A$ and $x \in \overline{A}$. This is a free $A$-module isomorphic to $A$ where an isomorphism is given by the involution $a\mapsto \bar{a}$.
We have $M \cong A^n$ for some $n \in \N$, so $M^* \cong {\Hom_R}(A^n, R) \cong ( {\Hom_R}(A, R) )^n$. But, using that $A$ is $p$-torsion and $ p^{k-1}R\cong A$ as $R$-modules,
${\Hom_R}(A, R) = {\Hom_R}(A, p^{k-1}R)\cong {\Hom_R}(A, A)\cong {\Hom_A}(A, A) \cong \overline{A}$ where the last isomorphism is given by evaluation at $1\in A$. The claim about the dual basis is standard. 
\end{proof}

\begin{proposition}
Let $M$ be a finitely generated $R$-module with a free $A$-filtration by $C^s \subset M$, and a filtered basis $\{ m^s_i \}$.  Assume we have a set of maps $\{ \gamma^s_i \}_{1 \leq s \leq \ell, 1 \leq i \leq r_s} \subset M^*$ satisfying 
\begin{equation}
\gamma^{s}_i (m^t_j) = \left\{ \begin{array}{ll}
0 \text{,} & t < s \\
p^{k-1} \delta_{i j} \text{,} & t = s 
\end{array}\right. \nonumber
\end{equation}
Then $\{ \gamma^{\ell - s + 1}_i \}$ is a filtered basis for the dual filtration on $M^*$.    \label{prop:gammas}
\end{proposition}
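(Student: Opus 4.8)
The plan is to compare, for each level $s$, the image of the proposed element $[\gamma^s_i]_{\ell-s+1}$ under the injection $\zeta_s \colon D_{\ell-s+1} \to C_s^*$ of Proposition~\ref{prop:zetasinjective} with the explicit dual basis of $C_s^*$ furnished by Lemma~\ref{lem:Adual}, and to deduce from this comparison that $\zeta_s$ is in fact an isomorphism carrying $\{[\gamma^s_i]_{\ell-s+1}\}_i$ to that dual basis. This simultaneously pins down the rank of $D_{\ell-s+1}$ and identifies a basis of it, which is exactly what a filtered basis for the dual filtration requires.

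First I would verify that the elements sit at the correct level of the dual filtration. The hypothesis $\gamma^s_i(m^t_j) = 0$ for $t < s$ says precisely that $\gamma^s_i$ annihilates every filtered basis element generating $C^{s-1}$, hence (by $R$-linearity and Proposition~\ref{prop:unique-expression}) that $\gamma^s_i$ vanishes on all of $C^{s-1}$; therefore $\gamma^s_i \in D^{\ell - s + 1}$, so $[\gamma^s_i]_{\ell-s+1}$ is a genuine element of $D_{\ell-s+1}$ and it is legitimate to apply $\zeta_s$ to it.

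Next I would evaluate $\zeta_s([\gamma^s_i]_{\ell-s+1})$ on the $A$-basis $\{[m^s_j]_s\}$ of $C_s$: by the defining property of $\zeta_s$ and the remaining hypothesis $\gamma^s_i(m^s_j) = p^{k-1}\delta_{ij}$, one gets $\zeta_s([\gamma^s_i]_{\ell-s+1})([m^s_j]_s) = p^{k-1}\delta_{ij}$. By Lemma~\ref{lem:Adual}, the maps $(m^s_i)^* \in C_s^*$ determined by $(m^s_i)^*([m^s_j]_s) = p^{k-1}\delta_{ij}$ form an $A$-basis of $C_s^*$, and hence $\zeta_s([\gamma^s_i]_{\ell-s+1}) = (m^s_i)^*$. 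Thus the image of $\zeta_s$ contains an $A$-basis of $C_s^*$, so $\zeta_s$ is surjective; combined with its injectivity (Proposition~\ref{prop:zetasinjective}), $\zeta_s$ is an isomorphism. Consequently $D_{\ell-s+1}$ is free of rank $r_s$, and $\{[\gamma^s_i]_{\ell-s+1}\}_{1\le i\le r_s}$, being the $\zeta_s$-preimage of an $A$-basis under an isomorphism, is an $A$-basis of $D_{\ell-s+1}$.

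Finally I would assemble the conclusion: letting $u = \ell - s + 1$ range over $\{1,\dots,\ell\}$, the elements $\gamma^{\ell-u+1}_1,\dots,\gamma^{\ell-u+1}_{r_{\ell-u+1}}$ lie in $D^u$ and project to an $A$-basis of $D_u = D^u/D^{u-1}$, which is exactly the assertion that $\{\gamma^{\ell-s+1}_i\}$ is a filtered basis for the dual filtration. I do not expect a serious obstacle here; the only real care is in the index bookkeeping relating the filtration of $M$ to the dual filtration of $M^*$, and in noting that surjectivity of each $\zeta_s$ is obtained for free once its image is matched with the dual basis of $C_s^*$ — so that one need not separately compute the ranks $\operatorname{rank} D_s$.
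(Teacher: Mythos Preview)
Your proposal is correct and follows essentially the same argument as the paper: verify $\gamma^s_i \in D^{\ell-s+1}$ from the vanishing on $C^{s-1}$, compute $\zeta_s([\gamma^s_i]_{\ell-s+1})([m^s_j]_s) = p^{k-1}\delta_{ij}$, invoke Lemma~\ref{lem:Adual} to identify this with the dual basis of $C_s^*$, and conclude that $\zeta_s$ is surjective (hence an isomorphism by Proposition~\ref{prop:zetasinjective}), so the $[\gamma^s_i]_{\ell-s+1}$ form a basis of $D_{\ell-s+1}$. The paper's proof is essentially a condensed version of exactly this.
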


\begin{proof}
First observe that $\gamma^s_i(C^{s-1}) = 0$, so $\gamma^s_i \in D^{\ell - s + 1}$ and therefore $\gamma_i^{\ell -s  + 1} \in D^s$.  
We have 
\begin{equation}
\zeta_s([ \gamma^s_i ]_{\ell - s + 1} )( [m^s_j]_s ) = \gamma^s_i(m^s_j) = p^{k-1} \delta_{i j} \text{,} \nonumber
\end{equation}
so by Lemma~\ref{lem:Adual} $\{ \zeta_s ( [\gamma^s_i]_{\ell -s + 1}) \}_{1 \leq i \leq r_s}$ is a basis for the free $A$-module $C_s^*$.  This implies that $\zeta_s$ is surjective. It is injective by Proposition~\ref{prop:zetasinjective} and thus an isomorphism.  Therefore $\{ [\gamma^s_i]_{\ell - s  +1} \}_{1 \leq i \leq r_s }$ is a basis for $D_{\ell -s + 1}$ and hence $\{ [\gamma^{\ell - s + 1}_i ]_s \}_{1 \leq i \leq r_{\ell -s +1}}$ is a basis for $D_s$.
\end{proof}

\begin{theorem}
Let $M$ be a finitely generated $R$-module with a free $A$-filtration of length $k$ by $C^s \subset M$.  Then the maps $\{ \lambda^{\ell - s + 1}_i \}$ of Proposition~\ref{prop:lambdas} give a filtered basis for the dual filtration on $M^*$.  Moreover, $D_{\ell - s + 1}$ and $C_s$ have the same rank as free $A$-modules.
\end{theorem}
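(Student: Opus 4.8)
The plan is to recognize this theorem as an essentially formal consequence of Propositions~\ref{prop:lambdas} and~\ref{prop:gammas}, once a normalization is checked. First I would verify that the maps $\lambda^s_i$ produced by Proposition~\ref{prop:lambdas} satisfy precisely the hypotheses imposed on the maps $\gamma^s_i$ in Proposition~\ref{prop:gammas}. Equation~\ref{eqn:lambdas} already states $\lambda^s_i(m^t_j) = 0$ for $t < s$. For $t = s$, the only descending sequence from $s$ to $s$ is the one-term sequence $\sigma = (s)$, for which $|\sigma| = 1$ and $f^\sigma = \mathbbm{1}$ by definition; hence the sum in Equation~\ref{eqn:lambdas} collapses to the single term $p^{k-|\sigma|}(f^{\sigma T})_{ij} = p^{k-1}\delta_{ij}$. (Here the filtration has length $\ell = k$, so $p^{k-1}$ is indeed the power appearing in Proposition~\ref{prop:gammas}.) Thus $\lambda^s_i$ plays the role of $\gamma^s_i$.

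With that in hand, applying Proposition~\ref{prop:gammas} with $\gamma^s_i = \lambda^s_i$ gives immediately that $\{\lambda^{\ell - s + 1}_i\}$ is a filtered basis for the dual filtration $D^s \subset M^*$, which is the first assertion. For the rank statement, I would extract from the proof of Proposition~\ref{prop:gammas} that the injection $\zeta_s : D_{\ell - s + 1} \to C_s^*$ of Proposition~\ref{prop:zetasinjective} is in fact an isomorphism; alternatively, one simply counts that $\{[\lambda^s_i]_{\ell - s + 1}\}_{1 \le i \le r_s}$ is an $A$-basis of $D_{\ell - s + 1}$, so $D_{\ell - s + 1}$ is free of rank $r_s$. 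Since $C_s$ is free of rank $r_s$ by construction of the filtered basis (Theorem~\ref{thm:f-presentations}), the two have equal rank; one may also phrase this through Lemma~\ref{lem:Adual}, which identifies $D_{\ell - s + 1} \cong C_s^*$ as free $A$-modules of rank $\operatorname{rank}_A C_s$.

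There is no genuine obstacle here. The only point that requires care is index bookkeeping: confirming that the descending-sequence formula of Proposition~\ref{prop:lambdas} specializes at $t = s$ to exactly the $p^{k-1}\delta_{ij}$ normalization demanded by Proposition~\ref{prop:gammas}, and tracking the shift $s \mapsto \ell - s + 1$ consistently between the two statements and their respective filtrations. Once these are aligned, the theorem follows with no further computation.
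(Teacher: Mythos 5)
Your proposal is correct and matches the paper's proof, which likewise observes that the $\lambda^s_i$ satisfy the hypotheses placed on the $\gamma^s_i$ (the $t=s$ normalization $p^{k-1}\delta_{ij}$ coming from the single one-term descending sequence) and then simply invokes Proposition~\ref{prop:gammas}; your rank argument via $\zeta_s$ being an isomorphism is also the same mechanism the paper relies on.
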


\begin{proof}
The $\lambda^s_i$ satisfy the condition assumed on the maps $\gamma^s_i$ in Proposition~\ref{prop:gammas}, and we simply apply the proposition.
\end{proof}

\begin{corollary}\label{cor:totalrankdualfiltration}
Let $M$ be a finitely generated $R$-module with a free $A$-filtration of length $k$ by $C^s \subset M$.  Let $r_s$ be the rank of $C_s$. Let $r^*_s$ be the ranks of the $p$-torsion filtration of the dual $M^*$. Then
$\sum_{s} r_s = \sum_{s} r^*_s$. 
\end{corollary}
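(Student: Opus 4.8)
The plan is to deduce this as a formal consequence of the preceding theorem on the dual filtration together with the invariance of total rank, Theorem~\ref{thm:totalrankinvariance}. The key observation is that $M^*$ carries two natural $A$-filtrations — the dual filtration associated to the given $C^s$, and its own $p$-torsion filtration — so by Theorem~\ref{thm:totalrankinvariance} they must have the same total rank; it then suffices to compute the total rank of the dual filtration.

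First, by Proposition~\ref{prop:noetherian-fg-dual} the dual $M^* = \Hom_R(M,R)$ is a finitely generated, torsion-free $R$-module (finite generation because $R$ is Noetherian and $M$ is finitely generated). Hence by Proposition~\ref{prop:pt-is-A} its $p$-torsion filtration is a free $A$-filtration, whose graded pieces have ranks $r^*_s$, so its total rank is $\sum_s r^*_s$. On the other hand, the dual filtration $0 = D^0 \subset D^1 \subset \cdots \subset D^k = M^*$ associated to the given free $A$-filtration $C^s$ is itself a free $A$-filtration of $M^*$ of length $k$, and the preceding theorem gives $\operatorname{rank} D_{k-s+1} = \operatorname{rank} C_s = r_s$ for all $s$. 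Reindexing the sum via the bijection $s \mapsto k-s+1$ of $\{1,\dots,k\}$ with itself yields
\[
\sum_{s=1}^{k} \operatorname{rank} D_s \;=\; \sum_{s=1}^{k} \operatorname{rank} D_{k-s+1} \;=\; \sum_{s=1}^{k} r_s,
\]
so the total rank of the dual filtration is $\sum_s r_s$.

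Finally, both the $p$-torsion filtration and the dual filtration are $A$-filtrations of the finitely generated $R$-module $M^*$, so Theorem~\ref{thm:totalrankinvariance} forces them to have equal total rank. Comparing the two computations gives $\sum_s r^*_s = \sum_s r_s$, as claimed. There is no real obstacle here: the statement is essentially bookkeeping, and the only points requiring a little care are checking that the hypotheses of Theorem~\ref{thm:totalrankinvariance} apply to $M^*$ (immediate from Proposition~\ref{prop:noetherian-fg-dual}) and keeping track of the order-reversing reindexing $s \leftrightarrow k-s+1$ that relates the dual filtration to the filtration $C^s$.
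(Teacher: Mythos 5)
Your proof is correct and follows essentially the same route as the paper: compute the total rank of the dual filtration on $M^*$ via the preceding theorem (it equals $\sum_s r_s$), then invoke the filtration-independence of total rank (Theorem~\ref{thm:totalrankinvariance}) to equate it with the total rank $\sum_s r^*_s$ of the $p$-torsion filtration of $M^*$. The extra checks you include (finite generation and torsion-freeness of $M^*$, the reindexing $s \leftrightarrow k-s+1$) are harmless elaborations of what the paper leaves implicit.
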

\begin{proof}
By the previous result, the total rank of the dual filtration on $M^*$ is $\sum_s r_s$. But total rank is independent of the filtration by Theorem~\ref{thm:totalrankinvariance}. Hence, the $p$-torsion filtration of $M^*$ has the same total rank.
\end{proof}

\section{Excitation-detector principle, physical realizability and perfectness}
\label{sec:main-proof}

In this section we prove Theorem~\ref{thm:main-intro}.  As discussed in the Introduction, this shows that the excitation-detector principle, defined as nondegeneracy of $\tilde{b} : S \times \widetilde{S} \to \Q / \Z$, is a necessary condition for a theory of excitations to be physically realizable.  The excitation-detector principle is also equivalent to perfectness.

We begin by restating Theorem~\ref{thm:main-intro} with two additional equivalent properties, which are mainly technical in nature.  We also first state the theorem as a result about p-theories $(S,b)$, because the additional data of the quadratic form $\theta$ plays no role.  The corresponding result for theories of excitations $(S,\theta)$ follows immediately and is stated as Corollary~\ref{cor:main-result}

\begin{theorem}[Detailed statement of Theorem~\ref{thm:main-intro}]
Given a p-modular p-theory $(S, b)$, the following are equivalent:
\begin{enumerate}
\item The bilinear form $\tilde{b} : S \times \widetilde{S} \to \Q / \Z$ is nondegenerate.
\item The bilinear form $\tilde{b}^{per} : S \times \widetilde{S}^{per} \to \Q / \Z$ is nondegenerate. 
\item The compactified p-theory $(S_N, b_N)$ is a modular p-theory of 2d abelian anyons for all $N \in \N$.
\item There exists some $N_0 \in \N$ such that for all $N \geq N_0$, the compactified p-theory $(S_N, b_N)$ is a modular p-theory of 2d abelian anyons.
\item The p-theory $(S,b)$ is perfect.
\end{enumerate}
\label{thm:main-detailed}
\end{theorem}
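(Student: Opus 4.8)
The plan is to reduce all five conditions to a single algebraic statement: that the cokernel of $\Phi\colon S\to S^*$ vanishes. First I would invoke Theorem~\ref{thm:prime-decomp} together with the fact that properties (1)--(5) all respect stacking (Propositions~\ref{prop:pmod-perfect-respect-stacking}, \ref{prop:mod-respects-stacking} and~\ref{prop:ird-respects-stacking}, with the stacking statements for (3) and (4) following from Proposition~\ref{prop:nd-respects-stacking}) to assume the fusion order is a prime power $n=p^k$ and work over $R=\Z_{p^k}[t^\pm]$. Since $(S,b)$ is p-modular, $B\colon S\times S\to R$ is a nondegenerate $R$-Hermitian form and $R$ is torsion-free over itself, so Proposition~\ref{prop:nd-implies-tf} gives that $S$ is finitely generated and torsion-free; hence the structure theory of Section~\ref{sec:structure} applies. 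I would then record that $\operatorname{Ext}^1_R(S,R)=0$: the standard $2$-periodic free resolution of $A=R/(p)$ over $R$ gives $\operatorname{Ext}^{\ge 1}_R(A,R)=0$, hence the same for finitely generated free $A$-modules, and an induction along the $p$-torsion filtration of $S$ (Proposition~\ref{prop:pt-is-A}) using the long exact $\operatorname{Ext}$ sequences yields the claim. Because $1-t^N$ is regular in $R$ (Proposition~\ref{prop:R-zero-div}), $I_N$ is a free rank-one ideal, so $\operatorname{Ext}^1_R(S,I_N)=0$ as well; applying $\Hom_R(S,-)$ to $0\to I_N\to R\to R_N\to 0$ then identifies $\Hom_{R_N}(S_N,R_N)$ with $(S^*)_N=S^*/I_NS^*$, and under this identification $\Phi_N$ (Proposition~\ref{prop:PhiN}) is simply $\Phi$ reduced modulo $I_N$. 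Writing $Q=\operatorname{coker}\Phi$, which is a finite-length $R$-module (nondegeneracy of $b$ makes $\Phi$ injective and $Q$ torsion), we obtain $\operatorname{coker}\Phi_N\cong Q/I_NQ$, and $(S,b)$ is perfect precisely when $Q=0$.

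The crux is then the chain $(S_N,b_N)\text{ modular}\iff\Phi_N\text{ surjective (Corollary~\ref{cor:finite})}\iff Q/I_NQ=0$. What links this to perfectness is the observation that every maximal ideal $\mathfrak m$ of $R=\Z_{p^k}[t^\pm]$ contains $I_d$, where $d$ is the multiplicative order of $t$ in the finite residue field $R/\mathfrak m$; hence $\mathfrak m\supseteq I_N$ for every multiple $N$ of $d$. Consequently, if $Q\ne 0$ one may pick $\mathfrak m\supseteq\operatorname{ann}Q$ and a multiple $N$ of $d$ as large as desired, and Nakayama's lemma on the nonzero finite-length localization $Q_{\mathfrak m}$ gives $Q/I_NQ\ne 0$; conversely $Q=0$ forces $Q/I_NQ=0$ for all $N$. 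This immediately gives that (3), (4) and (5) are equivalent (with $(3)\Rightarrow(4)$ also trivial). To bring in (2), I would identify $\widetilde{S}^{per}$ with the union over $N$ of the images of $S\otimes_R\widetilde{R}^{per,N}\to\widetilde{S}$, where $\widetilde{R}^{per,N}=R\gamma_N\cong R_N$ is the submodule of period-$N$ Laurent formal sums; this realizes each such image as a quotient of $S_N$, and Equation~\ref{eqn:tb-bN} identifies the restriction of $\tilde b^{per}$ to $S\times S_N$ with $(x,\rho_N(y))\mapsto b_N(\rho_N(x),\rho_N(y))$. Right-nondegeneracy of $\tilde b^{per}$ is then equivalent to modularity of $b_N$ for all $N$, i.e. to (3), while left-nondegeneracy of $\tilde b^{per}$ holds automatically by p-modularity (given $x\ne 0$, choose $y$ with $b(x,y)\ne 0$ and $N$ exceeding the locality range, so $\tilde b^{per}(x,y\otimes\gamma_N)=b(x,y)\ne 0$). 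Finally $(1)\Rightarrow(2)$ is immediate, since $\tilde b^{per}$ is the restriction of $\tilde b$ to the submodule $\widetilde{S}^{per}\subseteq\widetilde{S}$.

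It remains to prove $(5)\Rightarrow(1)$, that perfectness forces $\tilde b$ nondegenerate; left-nondegeneracy is again automatic. For right-nondegeneracy I would use that $\widetilde{R}=\Z_n[[t^\pm]]\cong\Hom_{\Z}(\Ztpm,\Q/\Z)$ is the character module of $R$, hence an injective $R$-module, so $\Hom_R(S,\widetilde{R})\cong S^\vee$ and the associated map of $\tilde B$ factors as $\widetilde{S}=S\otimes_R\widetilde{R}\xrightarrow{\Phi\otimes\mathrm{id}}S^*\otimes_R\widetilde{R}\xrightarrow{\ \mu\ }\Hom_R(S,\widetilde{R})$ with $\mu$ the canonical evaluation map. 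When $\Phi$ is an isomorphism the first arrow is too, so it suffices that $\mu$ be injective; this follows from a finite presentation of $S$ together with the vanishing $\operatorname{Tor}_1^R(T,\widetilde{R})\cong(\operatorname{Ext}^1_R(T,R))^\vee=0$ for finitely generated torsion-free $T$ (again using the $\operatorname{Ext}$ computation above). Alternatively, $(5)\Rightarrow(1)$ can be done by hand: using a filtered basis of $S$ together with the explicit dual filtered basis of $S^*$ from Proposition~\ref{prop:lambdas} and the isomorphism $\Phi$, one writes down, for any nonzero $\tilde y\in\widetilde{S}$, an element of $S$ detecting it.

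I expect the main obstacle to be precisely this last step --- turning perfectness of the finite Hermitian form $B$ into nondegeneracy of the pairing $\tilde b$ with infinitely supported excitations --- together with the technical identification of $\widetilde{S}^{per}$ with $\bigcup_N\operatorname{im}(S\otimes_R\widetilde{R}^{per,N}\to\widetilde{S})$, which needs care because an element annihilated by $1-t^N$ in $S\otimes_R\widetilde{R}$ is not obviously a tensor involving a genuinely periodic formal sum. The remaining ingredients --- the $\operatorname{Ext}$ vanishing, the identification $\Hom_{R_N}(S_N,R_N)\cong(S^*)_N$, and the roots-of-unity observation about maximal ideals of $R$ --- are each short, but they are what make the equivalences go through.
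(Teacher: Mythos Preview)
Your approach is correct and takes a genuinely different, more homological route than the paper. The paper proves $4\Rightarrow 5$ constructively: it uses the filtration maps $\Phi_s$ of Proposition~\ref{prop:filtration-maps}, puts the first non-isomorphic $\Phi_{t_0}$ in Smith normal form to extract a non-unit diagonal entry $\Delta_1\in A$, and then invokes Lemma~\ref{lem:xi} to produce, for infinitely many $N$, an explicit element $\xi_N x^{t_0}_1$ whose image in $S_N$ lies in $\ker\Phi_N$. Your Nakayama argument via maximal ideals of $R$ (each contains $p$, hence has finite residue field in which $t$ has some finite order $d$, so $I_N\subset\mathfrak m$ for all $N$ divisible by $d$) is cleaner and sidesteps this machinery. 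Likewise, for $5\Rightarrow 1$ the paper works by hand with a filtered basis and the dual elements $\lambda^s_i$, whereas your Pontryagin-duality argument ($\tilde R\cong R^\vee$ is injective, and $\operatorname{Tor}_1^R(T,R^\vee)\cong\operatorname{Ext}^1_R(T,R)^\vee=0$ for torsion-free finitely generated $T$) is more conceptual. Your identification $\Hom_{R_N}(S_N,R_N)\cong(S^*)_N$ via $\operatorname{Ext}^1_R(S,R)=0$ is exactly the content of the paper's Lemma~\ref{lem:rhostar-surjective}, reproved homologically.

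Two remarks. First, you flag the identification of $\widetilde S^{per}$ with $\bigcup_N\operatorname{im}(S_N\to\widetilde S)$ as a potential obstacle, but your own machinery already resolves it: since $1-t^N$ is surjective on $\tilde R$ with kernel $R\gamma_N$, there is a short exact sequence $0\to R\gamma_N\to\tilde R\xrightarrow{1-t^N}\tilde R\to 0$, and tensoring with $S$ together with $\operatorname{Tor}_1^R(S,\tilde R)\cong\operatorname{Ext}^1_R(S,R)^\vee=0$ yields $\widetilde S^{t^N=1}\cong S\otimes_R R\gamma_N\cong S_N$. This simultaneously supplies the injectivity needed for $(2)\Rightarrow(3)$ (the paper's Lemma~\ref{lem:gntx-nonzero}) and the surjectivity onto period-$N$ elements needed for $(3)\Rightarrow(2)$; you should make this explicit. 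Second, your assertion that $Q=\operatorname{coker}\Phi$ has finite length is neither proved nor needed: $Q$ is finitely generated over the Noetherian ring $R$, and that alone suffices for the Nakayama step after localizing at a maximal ideal in its support.
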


\begin{corollary}
Given a p-modular theory of excitations $(S,\theta)$, Theorem~\ref{thm:main-detailed} holds as written, replacing the compactified p-theories with compactified theories of 2d abelian anyons $(S_N, \theta_N)$ in conditions \#2 and \#4.  \label{cor:main-result}
\end{corollary}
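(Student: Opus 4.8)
The plan is to deduce Corollary~\ref{cor:main-result} directly from Theorem~\ref{thm:main-detailed} by applying that theorem to the p-theory associated with $(S,\theta)$ and then translating each of the five properties back into the language of theories of excitations. First I would observe that a p-modular theory of excitations $(S,\theta)$ determines a p-theory $(S,b)$, where $b$ is the symmetric bilinear form associated with $\theta$ as in Definition~\ref{defn:quadratic}; since nondegeneracy of $\theta$ is by definition nondegeneracy of $b$, the pair $(S,b)$ is a p-modular p-theory, so Theorem~\ref{thm:main-detailed} applies and yields the mutual equivalence of its five listed conditions for $(S,b)$.

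Next I would check that each condition of the corollary coincides with the corresponding condition of Theorem~\ref{thm:main-detailed} for $(S,b)$. The modules $\widetilde{S}$ and $\widetilde{S}^{per}$ and the forms $\tilde b$ and $\tilde b^{per}$ are constructed purely from the p-theory data $(S,b)$ in Section~\ref{sec:infinite}, and perfectness of a theory of excitations is defined to mean perfectness of its associated p-theory; hence the conditions on $\tilde b$, on $\tilde b^{per}$, and on the associated map $\Phi$ are literally the same statements for $(S,\theta)$ as for $(S,b)$. For the two conditions that refer to compactifications, I would invoke the proposition following Definition~\ref{defn:compact-theory}, which shows that $(S_N,b_N)$ is exactly the p-theory of abelian anyons associated with the theory of abelian anyons $(S_N,\theta_N)$, together with the stipulation recorded just after that proposition that modularity of $(S_N,\theta_N)$ means precisely modularity of $(S_N,b_N)$. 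Consequently the statements ``$(S_N,\theta_N)$ is modular for all $N$'' and ``$(S_N,\theta_N)$ is modular for all $N\ge N_0$'' are identical to the corresponding conditions on $(S_N,b_N)$ appearing in Theorem~\ref{thm:main-detailed}.

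With this dictionary in place, the equivalences of Theorem~\ref{thm:main-detailed} transfer verbatim to $(S,\theta)$, which is the assertion of Corollary~\ref{cor:main-result}. There is essentially no obstacle here: the only thing to verify is that passing from $(S,\theta)$ to $(S,b)$ discards no data relevant to any of the five properties, and this is immediate since none of them involves the quadratic form $\theta$ beyond its associated bilinear form. The one place I would be careful is in citing the compatibility between $(S_N,\theta_N)$ and $(S_N,b_N)$ precisely, but this has already been established in Section~\ref{sec:compact}.
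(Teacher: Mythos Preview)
Your proposal is correct and follows exactly the paper's approach: pass to the associated p-theory $(S,b)$, observe that each of the five conditions depends only on $(S,b)$, and apply Theorem~\ref{thm:main-detailed}. The paper's proof is a two-sentence version of what you wrote; your additional care in citing the compatibility of $(S_N,\theta_N)$ with $(S_N,b_N)$ and the definition of modularity is exactly the content being invoked.
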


\begin{proof}
Given a p-modular theory of excitations $(S,\theta)$, the associated p-theory $(S,b)$ is p-modular.  Each of the conditions only depends on $(S,b)$, so we simply apply Theorem~\ref{thm:main-detailed} to $(S,b)$.
\end{proof}

Now we proceed to prove Theorem~\ref{thm:main-detailed}.

\begin{lemma}
If a p-theory $(S,b)$ is p-modular, then $\tilde{b}$ and $\tilde{b}^{per}$ are left-nondegenerate.
\label{lem:pmod-tbrnd}
\end{lemma}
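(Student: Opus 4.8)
The plan is to prove the statement directly by exhibiting, for each nonzero $x \in S$, a single explicit \emph{periodic} detector that registers $x$. Since $\widetilde{S}^{per} \subseteq \widetilde{S}$, it is enough to establish left-nondegeneracy of $\tilde{b}^{per}$: any witness $\tilde{y} \in \widetilde{S}^{per}$ with $\tilde{b}^{per}(x,\tilde{y}) \neq 0$ is also a witness for $\tilde{b}$.

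First I would fix a nonzero $x \in S$. Because $(S,b)$ is p-modular, $b$ is nondegenerate, and since $b$ is symmetric there is some $y \in S$ with $b(x,y) \neq 0$. Next, by locality of $b$ (Definition~\ref{defn:ti-local}) I would choose $\ell \in \N$ with $b(t^k x, y) = 0$ whenever $|k| \geq \ell$, and fix any $N \geq \ell$. The candidate detector is $\tilde{y} = y \otimes \gamma_N$, where $\gamma_N = \sum_{m \in \Z} t^{mN} \in \tZ$; since $t^N \gamma_N = \gamma_N$ we get $t^N \tilde{y} = \tilde{y}$, so indeed $\tilde{y} \in \widetilde{S}^{per}$. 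Then Equation~\ref{eqn:tb-bN} gives $\tilde{b}(x, \tilde{y}) = b_N(\rho_N(x), \rho_N(y))$, and Definition~\ref{defn:compact-ptheory} rewrites this as $\sum_{m \in \Z} b(t^{mN} x, y)$; as $N \geq \ell$ every term with $m \neq 0$ vanishes, leaving $\tilde{b}(x,\tilde{y}) = b(x,y) \neq 0$. This proves left-nondegeneracy of $\tilde{b}^{per}$, hence also of $\tilde{b}$. (Equivalently one can compute via Equation~\ref{eqn:Theta} without passing through compactified theories.)

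There is no genuine obstacle here; the statement is elementary once the right witness is written down. The only points needing mild care are checking that $y \otimes \gamma_N$ really lies in the periodic submodule, and invoking locality correctly to collapse the sum defining $b_N$ to its $m = 0$ term. I note that for $\tilde{b}$ alone one does not even need locality: taking the detector $y \otimes 1$, Equation~\ref{eqn:Theta} gives $\tilde{b}(x, y \otimes 1) = b(x,y) \neq 0$ directly; the large-$N$ argument is only what is needed to keep the witness inside $\widetilde{S}^{per}$.
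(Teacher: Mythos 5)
Your proof is correct and follows essentially the same route as the paper's: both pick $y$ with $b(x,y)\neq 0$ using p-modularity and exhibit the periodic witness $y\otimes\gamma_N$ for $N$ large enough that locality collapses the sum to $b(x,y)$. Your detour through $b_N$ via Equation~\ref{eqn:tb-bN} is a cosmetic variant of the paper's direct computation (which your parenthetical about Equation~\ref{eqn:Theta} already acknowledges), so there is nothing substantive to flag.
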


\begin{proof}
Take any nonzero $x \in S$, then there exists $y \in S$ with $b(x,y) \neq 0$.  By locality of $b$, for some sufficiently large $N \in \N$, we have $b(x,t^{m N} y) = 0$ for all nonzero $m \in \Z$.  Then for ${\tilde{y} =y\otimes ( \sum_{m \in \Z} t^{m N} )  \in \widetilde{S}^{per}}$, we have $\tilde{b}(x,\tilde{y}) =  \tilde{b}^{per}(x,\tilde{y}) = b(x,y) \neq 0$.
\end{proof}

\begin{lemma}
Let $S$ be a finitely generated torsion-free $R$-module where $R = \Z_{p^k}[t^\pm]$ with $p$ prime and $k \in \N$, and choose $N \in \N$.  Then the map $\rho_* : S^* \to {\Hom_{R_N} }(S_N, R_N)$ of Proposition~\ref{prop:rhostar} 
is surjective.  \label{lem:rhostar-surjective}
\end{lemma}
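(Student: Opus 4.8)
The plan is to deduce surjectivity of $\rho_*$ from a cardinality count. Note first that $1-t^N$ has constant term $1$, so it does not lie in $(p)$ and is therefore a regular element of $R$ by Proposition~\ref{prop:R-zero-div}. I would begin by pinning down $\ker\rho_*$: unwinding Proposition~\ref{prop:rhostar}, a map $\lambda \in S^* = \Hom_R(S,R)$ satisfies $\rho_*(\lambda)=0$ exactly when $\lambda(S) \subseteq (1-t^N)R$, and then regularity of $1-t^N$ lets one define $\mu(x)$ by $(1-t^N)\mu(x)=\lambda(x)$ and check directly that $\mu\colon S\to R$ is $R$-linear, so $\lambda=(1-t^N)\mu \in (1-t^N)S^*$. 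Hence $\ker\rho_* = (1-t^N)S^*$, and $\rho_*$ factors as a surjection $S^* \twoheadrightarrow S^*/(1-t^N)S^*$ followed by an injection $S^*/(1-t^N)S^* \hookrightarrow \Hom_{R_N}(S_N,R_N)$. Since $\Hom_{R_N}(S_N,R_N)$ is finite by Proposition~\ref{prop:finite}, it suffices to prove $|S^*/(1-t^N)S^*| = |\Hom_{R_N}(S_N,R_N)|$; then the injection is a bijection of finite sets and $\rho_*$ is onto.

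The technical heart is a count valid for every finitely generated torsion-free $R$-module $M$: $|M/(1-t^N)M| = p^{N[M]}$, where $[M]$ is the total-rank invariant of Theorem~\ref{thm:totalrankinvariance}. I would use the $p$-torsion filtration $0 = C^0 \subset \cdots \subset C^k = M$, which is a free $A$-filtration by Proposition~\ref{prop:pt-is-A}, with $C_s = C^s/C^{s-1} \cong A^{r_s}$ as $R$-modules. Resolving $R_N$ by $0 \to R \xrightarrow{1-t^N} R \to R_N \to 0$ (valid since $1-t^N$ is regular), one computes $\Tor_1^R(A,R_N) = \{a\in A : (1-t^N)a = 0\} = 0$, since $A = \Z_p[t^\pm]$ is a domain and $1-t^N \neq 0$ in $A$; hence $\Tor_1^R(C_s,R_N) = 0$, so tensoring each short exact sequence $0 \to C^{s-1} \to C^s \to C_s \to 0$ with $R_N$ over $R$ keeps it short exact. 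Since $C_s \otimes_R R_N \cong (A/(1-t^N)A)^{r_s}$ has $p^{Nr_s}$ elements, induction up the filtration gives $|M/(1-t^N)M| = \prod_{s=1}^k p^{Nr_s} = p^{N[M]}$, using that $[M] = \sum_s r_s$ is independent of the filtration (Theorem~\ref{thm:totalrankinvariance}).

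Applying this to $M=S$ and to $M=S^*$ (finitely generated and torsion-free, the latter by Proposition~\ref{prop:noetherian-fg-dual}) gives $|S^*/(1-t^N)S^*| = p^{N[S^*]}$ and $|S/(1-t^N)S| = |S_N| = p^{N[S]}$. Corollary~\ref{cor:totalrankdualfiltration}, applied to the length-$k$ free $A$-filtration of $S$, yields $[S^*] = [S]$, so $|S^*/(1-t^N)S^*| = |S_N|$. Finally $|\Hom_{R_N}(S_N,R_N)| = |S_N|$ by Proposition~\ref{prop:finite}, which identifies $\Hom_{R_N}(S_N,R_N)$ with $\Hom_\Z(S_N,\Q/\Z) \cong S_N$ as abelian groups. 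Combining the three cardinality statements closes the argument.

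The step I expect to be the main obstacle is the counting lemma $|M/(1-t^N)M| = p^{N[M]}$: everything there rests on the exactness of tensoring a free $A$-filtration with $R_N$, which is precisely the vanishing $\Tor_1^R(A,R_N)=0$; with that in hand the rest is bookkeeping, and the key input $[S^*]=[S]$ is exactly Corollary~\ref{cor:totalrankdualfiltration}. A secondary point requiring care is the identification $\ker\rho_* = (1-t^N)S^*$, which is where regularity of $1-t^N$ in $R$ enters.
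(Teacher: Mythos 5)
Your proposal is correct, but it takes a genuinely different route from the paper. The paper's proof is constructive: it takes the explicit dual maps $\lambda^s_i \in S^*$ of Proposition~\ref{prop:lambdas} (associated to a filtered basis of the $p$-torsion filtration), pushes them forward to $\varphi^s_i = \rho_*(\lambda^s_i)$, observes the triangular property $\varphi^s_i(\chi^t_j) = p^{k-1}\delta_{ij}$ for $t=s$ and $0$ for $t<s$, and then peels an arbitrary $\varphi \in \Hom_{R_N}(S_N,R_N)$ apart layer by layer to exhibit it as $\rho_*$ of an explicit combination of the $\lambda^s_i$. You instead argue by counting: you identify $\ker\rho_* = (1-t^N)S^*$ (using regularity of $1-t^N$, Proposition~\ref{prop:R-zero-div}), reduce to the cardinality equality $|S^*/(1-t^N)S^*| = |\Hom_{R_N}(S_N,R_N)|$, and establish $|M/(1-t^N)M| = p^{N[M]}$ for any finitely generated torsion-free $M$ by tensoring the $p$-torsion filtration (Proposition~\ref{prop:pt-is-A}) with $R_N$ and checking $\Tor_1^R(A,R_N)=0$ since $A=\Z_p[t^\pm]$ is a domain; the equality $[S^*]=[S]$ is exactly Corollary~\ref{cor:totalrankdualfiltration}, and $|\Hom_{R_N}(S_N,R_N)|=|S_N|$ comes from Proposition~\ref{prop:finite} together with the paper's identification $\Hom_{\Ztpm_N}(S_N,\Qlp_N)=\Hom_{R_N}(S_N,R_N)$ (valid here since $p^k S_N=0$). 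Both routes check out; note only that under the twisted module structure on $S^*$ the pointwise factorization $\lambda=(1-t^N)\mu$ literally reads $\lambda=(1-t^{-N})\cdot\mu$, which is harmless since $1-t^{-N}$ and $1-t^N$ differ by a unit, so the submodule is the same. What each approach buys: yours is cleaner conceptually and isolates the role of the additive invariant $[M]$, but it leans on Theorem~\ref{thm:totalrankinvariance} and Corollary~\ref{cor:totalrankdualfiltration} (hence on the $K$-theoretic machinery of Appendix~\ref{app:ktheory}) and is non-constructive; the paper's argument needs only Proposition~\ref{prop:lambdas} and, importantly, produces the explicit maps $\varphi^s_i$ that are reused verbatim in the proof of Theorem~\ref{thm:main-detailed} (in the step $4 \implies 5$ and in Lemma~\ref{lem:gntx-nonzero}), a payoff your cardinality argument does not provide.
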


\begin{proof}
We choose a filtered basis $\{x^s_i \} \subset S$ for the $p$-torsion filtration, and a presentation as in Theorem~\ref{thm:f-presentations}.  Let $\lambda^s_i \in S^*$ be the maps defined in Proposition~\ref{prop:lambdas}.  We have generators $\chi^s_i = \rho(x^s_i)$ for $S_N$.  Let $\varphi^s_i = \rho_* ( \lambda^s_i)$, then 
\begin{equation}
\varphi^s_i ( \chi^t_j ) = \rho_*(\lambda^s_i) (\rho(x^t_j )) = \rho (\lambda^s_i ( x^t_j )) \text{.}  \nonumber
\end{equation}
Therefore,
\begin{equation}
\varphi^s_i ( \chi^t_j ) =  \left\{ \begin{array}{ll}
p^{k-1} \delta_{i j}  , & \qquad t = s \\
0 , & \qquad t < s
\end{array}\right. \nonumber
\end{equation}

Now take $\varphi \in {\Hom_{R_N} }(S_N, R_N)$.  We have $p \varphi(\chi^1_i) = \varphi(p \chi^1_i) = \varphi(0) = 0$, so $\varphi(\chi^1_i) = p^{k-1} \rho(r^1_i)$ for some $r^1_i \in R$.  We define $\varphi^{(1)} = \varphi - \sum_i \overline{\rho(r^1_i)} \varphi^1_i$, and clearly $\varphi^{(1)}(\chi^1_i) = 0$.  Next consider $p \varphi^{(1)}(\chi^2_i) = \varphi^{(1)}(p \chi^2_i) = 0$, because $p \chi^2_i = \rho (p x^2_i)$ is a linear combination of the $\chi^1_j$.  Therefore $\varphi^{(1)}(\chi^2_i) = p^{k-1} \rho(r^2_i)$ for some $r^2_i \in R$, and we can define $\varphi^{(2)} = \varphi^{(1)} - \sum_i \overline{\rho(r^2_i)} \varphi^2_i$, which vanishes on all the $\chi^2_j$ and $\chi^1_j$.  Continuing in this manner, we find $\varphi = \sum_s \overline{\rho(r^s_i)} \varphi^s_i$, and thus $\varphi = \rho_* ( \sum_s \overline{r^s_i} \lambda^s_i )$.  
\end{proof}

\begin{lemma}
Let $S$ be a finitely generated torsion-free $R$-module.  Choose $N \in \N$ and let $\gamma_N \in \tilde{R}$ be defined by $\gamma_N = \sum_{m \in \Z} t^{m N}$.  For any $x \in S$, we have $ x \otimes \gamma_N  = 0$ if any only if $\rho_N(x) = 0$.  \label{lem:gntx-nonzero}
\end{lemma}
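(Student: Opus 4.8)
The plan is to recognize the lemma as the injectivity of a natural map $S_N \to \widetilde S$, where $\widetilde S = S\otimes_R\tR$. The ``if'' direction is immediate: if $\rho_N(x)=0$ then $x=(1-t^N)y$ for some $y\in S$, and since $t^N\gamma_N=\gamma_N$ we have $(1-t^N)\gamma_N=0$, hence $x\otimes\gamma_N=(1-t^N)y\otimes\gamma_N=y\otimes(1-t^N)\gamma_N=0$.

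For the converse I would first record two facts about $\gamma_N\in\tR$: its annihilator in $R$ is exactly $I_N=(1-t^N)$, so $R\gamma_N\cong R/I_N=R_N$ as $R$-modules with $\gamma_N\leftrightarrow 1$; and $R\gamma_N$ is precisely the kernel of multiplication by $1-t^N$ on $\tR$ (this kernel is the set of $N$-periodic Laurent formal sums, and every such sum equals $r\gamma_N$ for a suitable $r\in R$ of degree $<N$). I would then check that $1-t^N\colon\tR\to\tR$ is surjective: the coefficient recursion $\xi_m-\xi_{m-N}=\eta_m$ can be solved for any target $\eta$ by choosing the coefficients of $\xi$ in residues $0,\dots,N-1$ freely and propagating in both directions, with no convergence issue since each coefficient of $\xi$ lies in $\Z_{p^k}$. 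This yields a short exact sequence of $R$-modules
\begin{equation}
0\longrightarrow R\gamma_N\longrightarrow\tR\xrightarrow{\ 1-t^N\ }\tR\longrightarrow 0. \nonumber
\end{equation}

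Tensoring this sequence with $S$ over $R$, and using Proposition~\ref{prop:qti} to identify $S\otimes_R R\gamma_N\cong S\otimes_R R_N\cong S/I_NS=S_N$ (under which $x\otimes\gamma_N$ corresponds to $\rho_N(x)$), the long exact sequence shows that the kernel of the map $S_N\to S\otimes_R\tR=\widetilde S$, $\rho_N(x)\mapsto x\otimes\gamma_N$, is the image of $\Tor_1^R(S,\tR)$. So it suffices to prove $\Tor_1^R(S,\tR)=0$ for every finitely generated torsion-free $R$-module $S$. For this I would use the $p$-torsion filtration $0=S^0\subset\cdots\subset S^k=S$, which by Proposition~\ref{prop:pt-is-A} is a free $A$-filtration, exhibiting $S$ as an iterated extension of finite free $A$-modules $S_s\cong A^{r_s}$. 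From the ($2$-periodic, when $k\geq 2$; trivial when $k=1$) free resolution $\cdots\xrightarrow{\cdot p}R\xrightarrow{\cdot p^{k-1}}R\xrightarrow{\cdot p}R\to A\to 0$ of $A=R/(p)$ one computes directly that $\Tor_i^R(A,\tR)=0$ for all $i\geq 1$, since the relevant kernels and images in $\tR$ both equal $p^{k-1}\tR$ or $p\tR$; hence $\Tor_i^R(S_s,\tR)=0$ for all $s$ and $i\geq1$. Feeding this into the long exact sequences for $0\to S^{s-1}\to S^s\to S_s\to 0$ and inducting on $s$ gives $\Tor_1^R(S,\tR)=0$.

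The bookkeeping in the first two steps is routine, so I expect the main technical point to be the vanishing $\Tor_1^R(S,\tR)=0$; however, with the structure results of Section~\ref{sec:structure} in hand it reduces, as above, to a short computation with the periodic resolution of $A$ together with a filtration induction. (Alternatively, $\Tor_i^R(A,\tR)=0$ for $i\geq1$ follows from flat base change along $\Z_{p^k}\to R$, since $\tR$ is a product of copies of $\Z_{p^k}$ and hence flat over $\Z_{p^k}$.)
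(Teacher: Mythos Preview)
Your proof is correct and takes a genuinely different, more homological route than the paper. The paper argues the ``only if'' direction directly: assuming $\rho_N(x)\neq 0$, it chooses a filtered basis $\{x^s_i\}$ for $S$, normalizes the coefficients of $x$ in that basis so that each is either zero or nonzero modulo $p$ and has degree in $[0,N-1]$, and then applies the dual maps $\lambda^{s_0}_i\otimes\id$ (from Proposition~\ref{prop:lambdas}) to produce a nonzero element $p^{k-1}r^{s_0}_i\gamma_N\in\tR$, witnessing $x\otimes\gamma_N\neq 0$. Your approach instead packages the statement as injectivity of $S_N\to\widetilde S$ arising from the short exact sequence $0\to R\gamma_N\to\tR\xrightarrow{1-t^N}\tR\to 0$, and reduces to $\Tor_1^R(S,\tR)=0$, which you dispatch via the free $A$-filtration together with the elementary computation $\Tor_i^R(A,\tR)=0$ for $i\ge 1$.

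Both arguments ultimately rely on the free $A$-filtration of $S$ from Section~\ref{sec:structure}. The paper's proof is concrete and self-contained, constructing an explicit detecting functional; yours is cleaner conceptually and isolates exactly which algebraic fact is needed (vanishing of a single $\Tor$ group), at the cost of invoking the long exact $\Tor$ sequence. Your alternative remark that $\tR$ is flat over $\Z_{p^k}$ (as a product of copies of $\Z_{p^k}$ over a Noetherian ring) combined with flat base change along $\Z_{p^k}\to R$ is also valid and gives $\Tor_i^R(A,\tR)\cong\Tor_i^{\Z_{p^k}}(\Z_p,\tR)=0$ without writing out the periodic resolution.
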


\begin{proof}
First suppose $\rho_N(x) = 0$, then $x = (1-t^N) x'$ for some $x' \in S$.  Then
\begin{equation}
 x \otimes \gamma_N  =((1-t^N) x')  \otimes \gamma_N   = x'  \otimes ((1-t^N) \gamma_N)  = 0 \nonumber \text{.}
\end{equation}

Now suppose $\rho_N(x) \neq 0$.  We choose a filtered basis $\{x^s_i\}$ for $S$ and write $x = \sum_s r^s_i x^s_i$ for $r^s_i \in R$.
It is clear that $x\otimes \gamma_N $ only depends on the value of $\rho_N(x)$, so without loss of generality we can choose $x$ to be any convenient representative of $\rho_N(x) \in S_N$.  In particular we can assume each $r^s_i$ has $\deg_- r^s_i \geq 0$ and $\deg_+ r^s_i \leq N-1$.  If $r^k_i \neq 0$ for some $i$ and is divisible by $p$, then we can use the relations to eliminate $x^k_i$ in favor of terms with $s < k$.  Doing this will change the $r^s_i$ and may cause $\deg_- r^s_i < 0$ or  $\deg_+ r^s_i \geq N$ for some terms; if so, we can add elements of $I_N S$ to bring the degrees back within the desired range.  We can thus assume that every $r^k_i$ is either zero or has $[r^k_i] \neq 0$.  We can repeat this process considering the $r^{k-1}_i$ coefficients, and continue until every $r^s_i$ is either zero or has $[r^s_i] \neq 0$, with degrees ranging from 0 through $N-1$.

Let $s_0$ be the largest value of $s \leq k$ for which at least one of the $r^{s_0}_i \neq 0$.  We have
\begin{equation}
(  \lambda^{s_0}_i \otimes \operatorname{id} )(x \otimes \gamma_N  ) =   \lambda^{s_0}_i(x) \otimes \gamma_N = p^{k-1} r^{s_0}_i \gamma_N \text{.} \nonumber
\end{equation}
We have $p^{k-1} r^{s_0}_i \neq 0$ because $[r^{s_0}_i] \neq 0$.  Moreover because $p^{k-1} r^{s_0}_i$ has terms of degree zero through $N-1$, we have $p^{k-1} r^{s_0}_i \gamma_N \neq 0$, and it follows that $x \otimes \gamma_N   \neq 0$.
\end{proof}

\begin{lemma}
Let $\cR$ be a finite commutative ring and let $f \in \cR[t^\pm]$ be a nonzero polynomial with relative degree $\deg f  > 0$, whose highest and lowest degree coefficients are units in $\cR$.
Then there exists $N_0 \in \N$ such that for any $N \in \N$ divisible by $N_0$, there is a $\xi_N \in \cR[t^\pm]$ with $\xi_N f = 1 - t^N$.  The highest and lowest degree coefficients of $\xi_N$ are units in $\cR$.  Moreover we have $\deg \xi_N  < N$.
\label{lem:xi}
\end{lemma}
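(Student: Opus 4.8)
The plan is to reduce the claim to a single divisibility statement, $f \mid (1-t^N)$ in $\cR[t^\pm]$ for all $N$ in a suitable arithmetic progression, and then to extract the extreme coefficients and the relative degree of the quotient $\xi_N$ directly from the identity $\xi_N f = 1-t^N$.

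First I would normalize. Writing $a = \deg_- f$, one has $f = v\, t^a g$, where $v\in\cR$ is the highest-degree coefficient of $f$ (a unit by hypothesis) and $g\in\cR[t]$ is monic of degree $d = \deg f \geq 1$ with constant term equal to $v^{-1}$ times the lowest-degree coefficient of $f$, hence also a unit. Since $v$ and $t^a$ are units of $\cR[t^\pm]$, it suffices to prove $g\mid(1-t^N)$. For this I would pass to the quotient ring $\cR[t]/(g)$: it is free of rank $d\geq 1$ over the finite ring $\cR$, hence a finite nonzero ring, and because $g$ has unit constant term the image $\bar t$ of $t$ is a unit in it. Taking $N_0\in\N$ to be the order of $\bar t$ in the finite unit group of $\cR[t]/(g)$, one gets $\bar t^{\,N}=1$ whenever $N_0\mid N$, i.e.\ $g\mid(t^N-1)$ in $\cR[t]$, whence $f\mid(1-t^N)$ in $\cR[t^\pm]$; set $\xi_N = (1-t^N)/f$.

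The remaining assertions come from degree bookkeeping over a commutative ring. Since $\cR\neq 0$ (as $f\neq 0$) and $N\geq N_0\geq 1$, the polynomial $1-t^N$ has lowest-degree coefficient $1$ in degree $0$ and highest-degree coefficient $-1$ in degree $N$. In the product $\xi_N f$ the coefficient in degree $\deg_+\xi_N+\deg_+ f$ equals $wv$, where $w$ is the nonzero highest-degree coefficient of $\xi_N$; as $v$ is a unit this product is nonzero, so $\deg_+(\xi_N f)=\deg_+\xi_N+\deg_+ f$, forcing $\deg_+\xi_N = N-\deg_+ f$ and $wv=-1$, so that $w=-v^{-1}$ is a unit. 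The symmetric argument applied to lowest-degree coefficients gives $\deg_-\xi_N = -\deg_- f$ and shows the lowest-degree coefficient of $\xi_N$ is the inverse of that of $f$, again a unit. Subtracting, $\deg\xi_N = (N-\deg_+ f)-(-\deg_- f) = N-\deg f < N$.

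I do not anticipate a genuine obstacle: the only point needing a little care is the normalization step, where one must keep track of the shift $t^a$ and the unit $v$ so that divisibility of $1-t^N$ is correctly transferred to and from the monic polynomial $g$. Everything else rests on the elementary fact that multiplying Laurent polynomials whose extreme coefficients are units adds their relative degrees and multiplies their extreme coefficients.
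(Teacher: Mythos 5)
Your proof is correct and follows essentially the same route as the paper: both pass to the finite quotient by $(f)$ (you normalize to a monic $g$ and use the finite order of the unit $\bar t$ in $\cR[t]/(g)$, while the paper works in $\cR[t^\pm]/(f)$ and uses the finite order of the multiplication-by-$t$ automorphism), and then read off the extreme coefficients and degrees from $\xi_N f = 1-t^N$. Your degree bookkeeping is in fact a slightly more explicit version of the paper's closing assertions, so nothing is missing.
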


\begin{proof}
We consider the quotient ring $\cR' = \cR[t^\pm] / (f)$, which is a finite abelian group because the highest and lowest degree coefficients of $f$ are units, and is nonzero because $\deg f > 0$.  The map $\mu_t : \cR[t^\pm] \to \cR[t^\pm]$ defined by $\mu_t (r) = t r$ induces a group automorphism $\mu_t : \cR' \to \cR'$, because $t (f) = (f)$.  Because $\cR'$ is a finite group, its group of automorphisms is finite and so, there is some $N_0 \in \N$ for which $\mu^{N_0}_t = \mathbbm{1}$, where $\mathbbm{1} : \cR' \to \cR'$ is the identity automorphism.  For any $N \in \N$ divisible by $N_0$, we also have $\mu^N_t = \mathbbm{1}$.
Now we have $0 = (\mathbbm{1} - \mu^{N}_t) (1_{\cR'} ) = (\mathbbm{1} - \mu^{N}_t) ( 1 + (f) ) = (1 - t^{N}) + (f)$, which implies $1 - t^{N} \in (f)$, so we have $\xi_N \in \cR[t^\pm]$ with $\xi_N f = 1 - t^{N}$.  The highest and lowest degree coefficients of $\xi_N$ must be units in $\cR$ because the highest and lowest coefficients of $f$ are units.  The last statement holds because $\deg  f  + \deg \xi_N  = N$.  
\end{proof}

\begin{proof}[Proof of Theorem~\ref{thm:main-detailed}]
To prove $1 \implies 2$, we observe left-nondegeneracy of $\tilde{b}^{per}$ holds by Lemma~\ref{lem:pmod-tbrnd}. Right-nondegeneracy is trivial because $\tilde{b}^{per}$ is a restriction of $\tilde{b}$.  The implication $3 \implies 4$ is trivial.

To prove the remaining implications, we observe that each of the properties listed respects stacking, by Propositions~\ref{prop:pmod-perfect-respect-stacking}, \ref{prop:mod-respects-stacking} and~\ref{prop:ird-respects-stacking}.   Theorem~\ref{thm:prime-decomp} provides that $(S, b)$ is a stack of theories with prime-power fusion order $n = p^k$.  Therefore it is enough to focus on a theory with prime-power fusion order, which allows us to use the results of Sec.~\ref{sec:structure}.  For the remainder of the proof, we thus assume $(S, b)$ has prime-power fusion fusion order $p^k$, we let $R = \Z_{p^k}[t^\pm]$,  we view $S$ as an $R$-module, and set $S^* = {\Hom_R}(S,R)$.  Because $(S,b)$ is p-modular, Proposition~\ref{prop:nd-implies-tf} implies  $S$ is a torsion-free $R$-module.

$2 \implies 3$.  We will show the contrapositive, so we assume that some $(S_N, \Phi_N)$ is not modular.  Then there is $y \in S$ with $\rho_N(y) \neq 0$ and $b_N(\rho_N(x), \rho_N(y) ) = 0$ for all $x \in S$.  Therefore we have $\tilde{b}^{per}( x, y\otimes \gamma_N ) = 0$ for all $x \in S$ by Equation~\ref{eqn:tb-bN}.
Lemma~\ref{lem:gntx-nonzero} implies that $y\otimes \gamma_N  \neq 0$, so $\tilde{b}^{per}$ is degenerate.

$4 \implies 5$. We will show the contrapositive, so assume that $(S, \Phi)$ is not perfect.  We consider the $p$-torsion filtrations $S^t \subset S$ and $S^{* t} \subset S^*$.  We have free $A$-modules $S_t$ and $S^*_t$ with rank $r_t$ and $r^*_t$, respectively.  We define $\sigma : A \to R$ by $\sigma( \sum_{m \in \Z} a_m t^m ) = \sum_{m \in \Z} a_m t^m$, where $a_m \in \Z_p = \{ 0, \dots, p-1 \}$; clearly this is a section of the natural map $R \to A$ and $\sigma(0)=0$.

By Proposition~\ref{prop:filtration-maps}, we have injective maps $\Phi_t : S_t \to S^*_t$.  Moreover, because $\Phi$ is not an isomorphism, at least one of the $\Phi_t$ is not an isomorphism.  Let $t_0$ be the smallest value of $t$ for which $\Phi_t$ is not an isomorphism.  By injectivity of $\Phi_t$, we have $r_t \leq r^*_t$ for all $t$, because $A$ is a PID, and free submodules of a free module over a PID have rank less than or equal to the rank of the ambient module.  Therefore, using $\sum_t r_t = \sum_t r^*_t$ which holds by Corollary~\ref{cor:totalrankdualfiltration}
we have $r_t = r^*_t$ for all $t$.  

Let $\{ x^s_i \}$ and $x^{* s}_i$ be filtered bases for $S$ and $S^*$, respectively.  We assume these  bases have been chosen to put the matrix for $\Phi_{t_0}$ in Smith normal form.  To see that this can be done, we can put $\Phi_{t_0}$ in Smith normal form by changes of basis $[x^{t_0}_i]_{t_0} = u_{i j} [\tilde{x}^{t_0}_j]_{t_0}$ and $[x^{* t_0}_i]_{t_0} = v_{i j} [\tilde{x}^{* t_0}_j]_{t_0}$ for invertible $u, v \in M_{r_{t_0}}(A)$.  This lifts to a change of filtered bases $\{ x^s_i \}$ and $\{ x^{* s}_j\}$ because  $\sigma(u)$ and $\sigma(v)$ are invertible elements of $M_{t_0}(R)$ by Lemma~\ref{lem:invertible-matrix}.  Therefore $\Phi_{t_0}$ as a matrix is diagonal, and we write $\Phi_{t_0} ( [x^{t_0}_i ]_{t_0} ) = \Delta_i [ x^{* t_0}_i ]_{t_0}$, with no implied sum on the right-hand side.  We have $\Delta_i \in A$ nonzero for all $i = 1,\dots,r_{t_0}$ by injectivity of $\Phi_{t_0}$.  Moreover, at least one of the $\Delta_i$ is not a unit, since otherwise $\Phi_{t_0}$ would be an isomorphism.  If necessary we reorder basis elements so that $\Delta_1$ is not a unit.  We thus have $\deg_+ \Delta_1 - \deg_- \Delta_1  = \deg_+ \sigma(\Delta_1) - \deg_- \sigma(\Delta_1) > 0$.  Moreover, all the nonzero coefficients of $\sigma(\Delta_1)$ are units in $\Z_{p^k}$.    By Lemma~\ref{lem:xi}, there exists $N_0 \in \N$ such that for any $N \in \N$ divisible by $N_0$, there exists $\xi_N \in R$ whose highest and lowest nonzero coefficients are units in $\Z_{p^k}$ such that $\xi_N \sigma(\Delta_1) = 1 - t^{N}$, and  $\deg_+ \xi_N - \deg_- \xi_N < N$.

We have
\begin{equation}
\Phi_N ( \rho_N( \xi_N x^{t_0}_1 ) ) = \rho_{N *} ( \Phi( \xi_N x^{t_0}_1 ) ) = \rho_{N *} ( \xi_N \Delta_1 x^{* t_0}_1 ) =
\rho_N (\xi_N \Delta_1) \rho_{N *} ( x^{* t_0}_1 ) = 0     \nonumber \text{.}
\end{equation}
We claim $\rho_N ( \xi_N x^{t_0}_1 ) \neq 0$, implying that $\Phi_N$ is not injective and thus $(S_N, \Phi_N)$ is not modular.  To verify the claim, let $\varphi^s_i \in {\Hom_{R_N}}(S_N, R_N)$ be as in the proof of Lemma~\ref{lem:rhostar-surjective}.  Then $\varphi^{t_0}_1 ( \rho_N ( \xi_N x^{t_0}_1 )) = p^{k-1} \rho_N(\xi_N)$.  This is nonzero because $\deg_+ \xi_N - \deg_- \xi_N < N$, and because the highest and lowest degree coefficients of $\xi_N$ are units and thus are not annihilated by $p^{k-1}$.  Therefore $\rho_N ( \xi_N x^{t_0}_1 ) \neq 0$ as claimed.

$5 \implies 1$.  By Lemma~\ref{lem:pmod-tbrnd}, $\tilde{b}$ is left-nondegenerate; we need to show right-nondegeneracy.  By Proposition~\ref{prop:Theta-Bt-nd}, this is equivalent to right-nondegeneracy of $\tilde{B}$. 

Choose a filtered basis $\{ x^s_i \}$ for $S$.  Any $\tilde{y} \in \widetilde{S}$ can be written $\tilde{y} = \sum_s x^s_i \otimes \tilde{r}^s_i $ for some $\tilde{r}^s_i \in \tilde{R}$ (write $\tilde{y} = \sum_i y_i \otimes  \tilde{r}_i $ and expand $y_i$ in terms of filtered basis elements).  We can choose the $\tilde{r}^s_i$ so that each $\tilde{r}^s_i$ is either zero, or $p^{k-1} \tilde{r}^s_i \neq 0$.  To see this, suppose some $\tilde{r}^k_i$ is nonzero and has $p^{k-1} \tilde{r}^k_i = 0$.  Then $\tilde{r}^k_i = p \tilde{r}^{' k}_i$ for some $\tilde{r}^{' k}_i \in \tilde{R}$, and we can use the relations to eliminate the $x^k_i$ term in the expansion of $\tilde{y}$ in favor of terms with $s < k$.  We can repeat this process, starting with $s = k$ terms and moving to smaller values of $s$, until the expansion of $\tilde{y}$ satisfies the desired property.

Now assume that $\tilde{B}(x,\tilde{y}) = 0$ for all $x \in S$.  We will show that $\tilde{y} = 0$.  Observe that $\tilde{B}(x,\cdot) \in \Hom_R(\widetilde{S}, \tilde{R})$.  Given $\lambda \in S^*$, there exists $x \in S$ such that $\lambda = B(x, \cdot)$.   
Define $\tilde{\lambda} \in \Hom_R(\widetilde{S}, \tilde{R})$ by $\tilde{\lambda} =  {\lambda} \otimes \operatorname{id}_{\tilde{R}}$.  For any $y \in S$ and $\tilde{r} \in \tilde{R}$, we have $\tilde{\lambda}(y\otimes \tilde{r}) = B(x,y) \otimes \tilde{r}  = \tilde{B}( x, y\otimes \tilde{r})$.  Therefore $\tilde{\lambda} = \tilde{B}(x, \cdot)$.  By our assumption $\tilde{\lambda}(\tilde{y}) = 0$.  

We consider $\widetilde{\lambda^k_i}(\tilde{y}) = p^{k-1} \tilde{r}^k_i = 0$.  Therefore we have $\tilde{r}^k_i = 0$.  Repeating this argument for $\widetilde{\lambda^{k-1}_i}$ and moving to lower values of $s$, we find $\tilde{y} = 0$ as desired.

This completes the proof.  While it is logically superfluous, we also include a direct proof of $5 \implies 3$, which does not rely on considering infinitely supported planons. By Corollary~\ref{cor:finite}, it is enough to show $\Phi_N$ is surjective.  By the discussion of Section~\ref{sec:compact},
$\Phi_N \circ \rho_N = \rho_{N*} \circ \Phi$.  By assumption $\Phi$ is surjective, and Lemma~\ref{lem:rhostar-surjective} gives $\rho_{N *}$ surjective.  Therefore $\Phi_N \circ \rho_N$ is surjective, so $\Phi_N$ is surjective.
\end{proof}

\section{Planon-only fracton orders of prime fusion order}
\label{sec:decoupled}

In this section, as an application of the algebraic theory of planon-only fracton orders developed in this paper, we show that planon-only fracton orders of prime fusion order always consist of decoupled layers of 2d topological orders.  In proving this result, we consider perfect p-theories, because  Theorem~\ref{thm:main-detailed} implies that perfectness is a necessary condition for physical realizability.  We first introduce formally what we mean by decoupled layers in Section~\ref{sec:dlintro} and prove that planon-only fracton orders of prime fusion order are decoupled layers in Section~\ref{sec:primedecoupled}.

\subsection{Decoupled 2d layers}\label{sec:dlintro}

To begin, we need to give a formal definition of what it means for a p-theory to consist of decoupled 2d layers.  These theories of excitations and p-theories are precisely those induced from 2d theories of abelian anyons. For a reference with more general results on induction of quadratic and sesquilinear forms see, for example, \cite[Ch. 1, Ch. 7]{Knus_1991}.

\begin{defn}
Let $\mathcal{A}$ be a finite abelian group. Consider the $\Ztpm$-module $\mathcal{A}[t^\pm] =  \mathcal{A} \otimes \Ztpm$.  We write the elements of $\mathcal{A}[t^\pm]$ as Laurent polynomials with coefficients in $\mathcal{A}$. 
\begin{enumerate}[(a)]
\item If  $(\mathcal{A}, \theta)$ is a theory of abelian anyons as in Definition~\ref{defn:theoryofabeliananyons}, let $\theta^{\pm} \colon \mathcal{A}[t^\pm] \to\Q/\Z$ be defined by
 \begin{equation}
\label{eq:thetapm} \theta^{\pm}\Big( \sum_{n\in \Z} x_n t^n \Big) =\sum_{n\in \Z} \theta(x_n) 
\end{equation}
 for $x_n\in \mathcal{A}$.  
 \item If  $(\mathcal{A}, b)$ is a p-theory of abelian anyons, let $b^{\pm} \colon \mathcal{A}[t^\pm] \times \mathcal{A}[t^\pm] \to \Q/\Z$ be defined by
  \begin{equation}
\label{eq:bpm}
b^{\pm}\Big(\sum_{n\in \Z} x_n t^n, \sum_{m\in \Z}y_mt^m\Big) =\sum_{n \in \Z}b(x_n,y_n)
\end{equation}
 for $x_n,y_m\in \mathcal{A}$. 
 \end{enumerate}
\end{defn}

\begin{proposition}
Let $\mathcal{A}$ be a finite abelian group. 
\begin{enumerate}[(a)]
\item If  $(\mathcal{A}, \theta)$ is a theory of abelian anyons, then  $(\mathcal{A}[t^\pm], \theta^{\pm})$ is a theory of excitations. 
\item If  $(\mathcal{A}, b)$ is a p-theory of abelian anyons, then $(\mathcal{A}[t^\pm], b^{\pm})$ is a p-theory. 
 \end{enumerate}
 Furthermore, if $(\mathcal{A}, \theta)$ is a theory of abelian anyons and $(\mathcal{A}, b)$ is the associated p-theory of abelian anyons, then the  p-theory corresponding to the theory of excitations $(\mathcal{A}[t^\pm], \theta^{\pm})$ is  $(\mathcal{A}[t^\pm], b^{\pm})$.
 \end{proposition}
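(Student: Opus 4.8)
The statement has two parts: (a) showing $(\mathcal{A}[t^\pm], \theta^\pm)$ is a theory of excitations and $(\mathcal{A}[t^\pm], b^\pm)$ is a p-theory, and then (b) the compatibility claim that $b^\pm$ is the bilinear form associated to $\theta^\pm$. The plan is to verify the defining conditions in turn. First I would establish that $\mathcal{A}[t^\pm]$ is a finite order planon-only fusion theory: it is finitely generated over $\Ztpm$ (generated by any $\Z$-generating set of $\mathcal{A}$, placed in degree $0$); it is of finite order since $\mathcal{A}$ is finite, so if $N$ is the exponent of $\mathcal{A}$ then $N \cdot x = 0$ for all $x \in \mathcal{A}[t^\pm]$; and the condition that no nonzero element is fixed by a nontrivial subgroup of $\T_\perp$ holds because a nonzero Laurent polynomial $\sum x_n t^n$ cannot equal $\sum x_n t^{n+k}$ for $k \neq 0$ (comparing the top-degree coefficient, which is nonzero).

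Next I would check that $\theta^\pm$ is a translation-invariant, local quadratic form valued in $\Q/\Z$. Translation invariance is immediate from Equation~\eqref{eq:thetapm}: multiplying by $t$ just relabels the indices in the sum. For the quadratic form property, I would compute $b^{\pm}$ as defined by $\theta^{\pm}(x+y)-\theta^{\pm}(x)-\theta^{\pm}(y)$ coefficient-by-coefficient: writing $x = \sum x_n t^n$, $y = \sum y_n t^n$, the $\Z$-bilinearity of $b^\pm$ reduces to $\Z$-bilinearity of $b$ in each degree, and the identity $b^{\pm}(x,x) = 2\theta^{\pm}(x)$ reduces to $b(x_n, x_n) = 2\theta(x_n)$ for each $n$, which holds because $\theta$ is a quadratic form on $\mathcal{A}$. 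This simultaneously shows $b^\pm$ is the bilinear form associated to $\theta^\pm$, which is exactly the compatibility claim in part (b) — so parts are not really separate; the computation of the associated bilinear form of $\theta^\pm$ yields precisely $b^\pm$ from Equation~\eqref{eq:bpm}. For locality: given $x = \sum x_n t^n$ and $y = \sum y_m t^m$, both with finitely many nonzero coefficients, $b^{\pm}(t^k x, y) = \sum_n b(x_n, y_{n+k})$, which vanishes once $|k|$ exceeds the diameter of the (finite) union of the support sets of $x$ and $y$; so one can take $\ell$ to be that bound. Translation invariance of $b^{\pm}$ follows the same way, or directly from that of $\theta^{\pm}$.

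For the p-theory case, $(\mathcal{A}[t^\pm], b^\pm)$ requires only that $b^\pm$ be a translation-invariant, local, symmetric $\Z$-bilinear form, all of which were established above (symmetry of $b^\pm$ is inherited from symmetry of $b$). I do not expect any genuine obstacle here — every step is a routine "check degree-by-degree" argument, and the only mild subtlety is being careful that $\mathcal{A}[t^\pm] = \mathcal{A} \otimes_{\Z} \Ztpm$ (so it really is the free $\Ztpm$-module on any $\Z$-basis of $\mathcal{A}$ if $\mathcal{A}$ is free, or the appropriate quotient in general) and that the sums in Equations~\eqref{eq:thetapm} and~\eqref{eq:bpm} are finite, making everything well-defined. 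The cleanest organization is: (1) $\mathcal{A}[t^\pm]$ is a finite order planon-only fusion theory; (2) direct coefficient-wise computation that $\theta^\pm(x+y) - \theta^\pm(x) - \theta^\pm(y) = b^\pm(x,y)$ and $b^\pm(x,x) = 2\theta^\pm(x)$, which gives both that $\theta^\pm$ is a quadratic form and the associated-p-theory compatibility at once; (3) translation invariance and locality from the finiteness of supports. The "hard part," such as it is, is purely bookkeeping: keeping the two indexing variables straight when expanding $\theta^\pm(x+y)$ with $x+y = \sum_n (x_n + y_n)t^n$ and invoking that $\theta(x_n + y_n) = \theta(x_n) + \theta(y_n) + b(x_n, y_n)$ in each degree.
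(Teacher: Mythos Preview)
Your proposal is correct and follows essentially the same approach as the paper: verify the conditions coefficient-by-coefficient, compute the bilinear form associated to $\theta^\pm$ to be $b^\pm$, and then use locality of $b^\pm$ to get locality of $\theta^\pm$. You are in fact more thorough than the paper in explicitly checking that $\mathcal{A}[t^\pm]$ is a finite order planon-only fusion theory (finite generation, finite exponent, and the no-fixed-elements condition), which the paper leaves implicit.
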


 \begin{proof}
 In each of the expressions Equations~\ref{eq:thetapm} and \ref{eq:bpm}, only finitely many of the terms are non-zero, so these are well-defined as functions. That $\theta^\pm$ is a quadratic form and $b^\pm$ a sesquilinear form follows immediately from the same facts for $\theta$ and $b$.   Translation invariance of both $b^\pm$ and $\theta^\pm$ and locality of $b^\pm$ are clear from the defining formula. To show locality of $\theta^\pm$, we first establish the last claim.
 Let $(\mathcal{A}, \theta)$ be a theory of abelian anyons and $(\mathcal{A}, b)$ be the associated p-theory of abelian anyons. Let  $(\mathcal{A}[t^\pm], c)$ be the p-theory associated to $(\mathcal{A}[t^\pm], \theta^{\pm})$. Then
 \begin{align*} 
 c\left(\sum_n x_nt^n, \sum_n y_nt^n\right) &= \theta^{\pm}\left(\sum_n x_n+y_n\right)-\theta^{\pm}\left(\sum_n x_n\right)-\theta^{\pm}\left(\sum_n y_n\right) \\
 &= \sum_n \theta(x_n+y_n)-\theta(x_n)-\theta(y_n) \\
 &= \sum_{n}b(x_n,y_n) = b^\pm\left(\sum_n x_nt^n, \sum_n y_nt^n\right)  
 \end{align*}
 So, $c= b^{\pm}$ as claimed. Locality of $\theta^\pm$ now follows from locality of $b^\pm$.
 \end{proof}
 
 \begin{proposition}
 Let $(\mathcal{A},b)$ be a modular p-theory of abelian anyons. Then $(\mathcal{A}[t^\pm], b^{\pm})$ is a perfect p-theory. 
Consequently,  if $(\mathcal{A}, \theta)$ is a modular theory of abelian anyons, then  $(\mathcal{A}[t^\pm], \theta^{\pm})$ is perfect theory of excitations.
 \end{proposition}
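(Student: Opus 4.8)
The plan is to prove directly that the associated map of $(\mathcal{A}[t^\pm], b^{\pm})$ is an isomorphism. By Proposition~\ref{prop:RZnloc} it is equivalent, and more convenient, to work with the local version $\phi \colon \mathcal{A}[t^\pm] \to \HomlocZ(\mathcal{A}[t^\pm], \Q/\Z)$, $\phi(x) = b^{\pm}(x, \cdot)$; since $\phi$ is automatically $\Ztpm$-linear, it suffices to show it is bijective. The first step is to describe the target. Because $\mathcal{A}$ is \emph{finite}, a $\Z$-linear map $\varphi \colon \mathcal{A}[t^\pm] \to \Q/\Z$ is local if and only if its coefficient maps $\varphi_n := \bigl(a \mapsto \varphi(a t^n)\bigr) \in \mathcal{A}^* := \Hom_\Z(\mathcal{A}, \Q/\Z)$ vanish for $|n|$ large: locality need only be checked on the finite generating set $\mathcal{A} \subset \mathcal{A}[t^\pm]$, where it says precisely that $\varphi_n(a) = 0$ for $|n|$ past the maximum of the finitely many bounds. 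Thus $\varphi \mapsto \sum_{n} \varphi_n t^n$ is a bijection $\HomlocZ(\mathcal{A}[t^\pm], \Q/\Z) \cong \mathcal{A}^*[t^\pm]$.

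Under this identification I would then compute $\phi$. For $x = \sum_m x_m t^m$, the $n$-th coefficient map of $\phi(x) = b^{\pm}(x, \cdot)$ sends $a \in \mathcal{A}$ to $b^{\pm}(x, a t^n) = b(x_n, a)$, i.e.\ it equals $\phi^{\mathcal{A}}(x_n)$, where $\phi^{\mathcal{A}} \colon \mathcal{A} \to \mathcal{A}^*$, $\phi^{\mathcal{A}}(x)(a) = b(x,a)$, is the associated map of the 2d p-theory $(\mathcal{A}, b)$. Hence $\phi$ is $\phi^{\mathcal{A}}$ applied coefficientwise: $\phi\bigl(\sum_m x_m t^m\bigr) = \sum_m \phi^{\mathcal{A}}(x_m)\, t^m$. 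Modularity of $(\mathcal{A}, b)$ makes $\phi^{\mathcal{A}}$ an isomorphism by the argument recorded after Definition~\ref{defn:theoryofabeliananyons} ($\phi^{\mathcal{A}}$ is injective iff $b$ is nondegenerate, and it is a homomorphism between the isomorphic finite groups $\mathcal{A}$ and $\mathcal{A}^*$, hence injective iff surjective). Applying an isomorphism coefficientwise to $\bigoplus_{\Z} \mathcal{A}$ yields an isomorphism, so $\phi$ is bijective and $(\mathcal{A}[t^\pm], b^{\pm})$ is perfect. For the ``consequently'': if $(\mathcal{A}, \theta)$ is modular then its associated 2d p-theory $(\mathcal{A}, b)$ is modular, so $(\mathcal{A}[t^\pm], b^{\pm})$ is perfect; by the preceding proposition this is the p-theory associated to $(\mathcal{A}[t^\pm], \theta^{\pm})$, and a theory of excitations is perfect by definition exactly when its p-theory is, so $(\mathcal{A}[t^\pm], \theta^{\pm})$ is perfect.

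The only step where anything substantive happens is the identification of the target, and it is exactly there that finiteness of $\mathcal{A}$ is essential --- it is what forces ``local'' to mean ``finitely supported in $t$''; once $\phi$ is seen to be diagonal with respect to the $t$-grading, the conclusion is immediate. An alternative would be to argue with $\Phi$ and $S^* = \Hom_{\Ztpm}(\mathcal{A}[t^\pm], \Qlp)$ via the tensor--Hom adjunction $\Hom_{\Ztpm}(\mathcal{A}\otimes_\Z \Ztpm, \Qlp) \cong \Hom_\Z(\mathcal{A}, \Qlp) \cong \mathcal{A}^*[t^\pm]$ (again using that $\mathcal{A}$ is finite), but then one must keep track of the involution-twisted module structure, which makes the $\phi$-route cleaner.
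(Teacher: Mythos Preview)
Your proof is correct and follows essentially the same strategy as the paper: identify the dual of $\mathcal{A}[t^\pm]$ with $\mathcal{A}^*[t^\pm]$ and observe that the associated map is the coefficientwise extension of the 2d associated map $\phi^{\mathcal{A}}$, which is an isomorphism by modularity. The only difference is packaging: the paper works on the $\Phi$ side, using the tensor--Hom adjunction $\Hom_{\Ztpm}(\mathcal{A}\otimes_\Z \Ztpm,\Qlp)\cong \Hom_\Z(\mathcal{A},\Qlp)\cong \Hom_\Z(\mathcal{A},\Q/\Z)\otimes\Ztpm$ to write $\Phi^\pm=\Phi\otimes\id_{\Ztpm}$, whereas you work on the $\phi$ side via Proposition~\ref{prop:RZnloc} and identify $\HomlocZ(\mathcal{A}[t^\pm],\Q/\Z)\cong\mathcal{A}^*[t^\pm]$ directly from finiteness of $\mathcal{A}$. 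You even anticipate the paper's route in your closing remark; as you note, your choice sidesteps bookkeeping of the involution-twisted module structure, while the paper's formulation makes the ``$\Phi\otimes\id$'' punchline a one-liner.
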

 \begin{proof}
 The second claim follows from the fact that $(\mathcal{A}[t^\pm], b^{\pm})$ is the p-theory associated to  $(\mathcal{A}[t^\pm], \theta^{\pm})$ and perfectness of the latter is defined as perfectness of  $(\mathcal{A}[t^\pm], b^{\pm})$.
 
We need to prove that $\Phi^\pm \colon \mathcal{A}[t^\pm] \to (\mathcal{A}[t^\pm])^*$ given by
\[ \Phi^\pm(x)(y) = \sum_{m}b^\pm(t^mx, y) t^m =  \sum_{m}\sum_{m'} b(x_{m'-m},y_{m'} ) t^m , \]
where $x=\sum_m x_mt^m$ and $y=\sum_m y_mt^m$, is an isomorphism. But $\Phi(x_m)(y_m) = b(x_m,y_m)$ gives an isomorphism $\Phi \colon \mathcal{A}\to \Hom_{\Z}(\mathcal{A}, \Q/\Z)$,
\[  (\mathcal{A}[t^\pm])^* = \Hom_\Ztpm(\mathcal{A}\otimes \Ztpm,\Q/\Z\otimes \Ztpm) \cong \Hom_\Z(\mathcal{A},\Q/\Z\otimes \Ztpm)  \cong \Hom_\Z(\mathcal{A},\Q/\Z)\otimes \Ztpm  , \] 
and $\Phi^\pm = \Phi\otimes \id_{ \Ztpm } \colon \mathcal{A}\otimes \Ztpm \to \Hom_\Z(\mathcal{A},\Q/\Z)\otimes \Ztpm$.
\end{proof}
 
We are now ready to define what it means to be decoupled layers.
 \begin{defn}
 Let $S$ be a finitely generated torsion free $R$-module for $R=\Z_n[t^\pm]$, with $n$ the fusion order of $S$.
 \begin{enumerate}[(a)]
 \item A theory of excitations $(S,\theta)$ is \textbf{decoupled layers} if it is isomorphic to $(\mathcal{A}[t^\pm], \theta^{\pm})$ for some theory of abelian anyons  $(\mathcal{A}, \theta)$. 
\item A p-theory $(S,b)$ is \textbf{decoupled layers} if it is isomorphic to $(\mathcal{A}[t^\pm], b^{\pm})$ for some p-theory of abelian anyons $(\mathcal{A}, b)$.
 \end{enumerate}
\end{defn}

We next give an equivalent condition for a theory to be decoupled layers in terms of certain good choices of generators for the fusion module $S$.
We say that a Laurent polynomial $f = \sum_{m \in \Z} c_m t^m$ is {\bf constant} if all its coefficients are zero except possibly $c_0$. We say that a matrix with Laurent polynomial entries is constant if all of its entries are.
\begin{proposition}
A p-theory $(S,b)$ is decoupled layers if and only if there exist generators $e_1, \dots, e_\ell$ for $S$ such that the following two properties are satisfied:
	\begin{enumerate}
		\item The generators obey intra-layer relations.  Precisely, the associated first syzygy module, \\
		$I = \{ v \in \Ztpm^\ell \mid
			\sum_i v_i e_i = 0 \}$,
			is generated by constant vectors.  Here, $v_i$ denotes the $i$th component of the vector $v$.
		\item 
		Mutual statistics decouples across layers: $b(e_i, t^k e_j) = 0$ for all $k \neq 0$ and all generators $e_i, e_j$.
		Equivalently, $B(e_i, e_j)$ is constant for all $i, j$. 
	\end{enumerate}
\label{prop:ptheory-dcl}
\end{proposition}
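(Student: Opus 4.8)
The plan is to prove both directions by relating the two stated conditions to the structure of an isomorphism $(S,b) \cong (\mathcal{A}[t^\pm], b^\pm)$. For the ``only if'' direction, suppose we have such an isomorphism $\alpha \colon \mathcal{A}[t^\pm] \to S$. Pick any finite generating set $a_1,\dots,a_\ell$ of $\mathcal{A}$ as an abelian group, and set $e_i = \alpha(a_i)$ (viewing $a_i$ as constant polynomials); these generate $S$ as a $\Ztpm$-module since $\alpha$ is surjective and the $a_i$ generate $\mathcal{A}[t^\pm]$ over $\Ztpm$. For property 2, compute $b(e_i, t^k e_j) = b^\pm(a_i, t^k a_j)$; by the defining formula \eqref{eq:bpm}, this sum is $\sum_n b((a_i)_n, (t^k a_j)_n)$, which is $b(a_i, a_j)$ if $k = 0$ and $0$ otherwise, since $a_i$ is supported in degree $0$ and $t^k a_j$ in degree $k$. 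Hence $B(e_i,e_j)$ is constant, giving property 2. For property 1, I would identify the first syzygy module of the $e_i$ in $S$ with that of the $a_i$ in $\mathcal{A}[t^\pm]$ via $\alpha$, and then observe that $\sum_i v_i a_i = 0$ in $\mathcal{A}[t^\pm] = \mathcal{A}\otimes\Ztpm$ holds if and only if, collecting coefficients of each $t^m$, $\sum_i (v_i)_m a_i = 0$ in $\mathcal{A}$ for every $m$; this means $v$ is a $\Ztpm$-combination of the constant vectors $w$ with $\sum_i w_i a_i = 0$, so $I$ is generated by constant vectors.

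For the ``if'' direction, suppose generators $e_1,\dots,e_\ell$ of $S$ satisfy properties 1 and 2. Let $\mathcal{A} = S / (\text{submodule generated by all } t^k e_i - e_i)$, equivalently the cokernel of $(1 - t)$ acting on $S$; concretely, $\mathcal{A}$ is the $\Z$-module generated by the images $\bar e_i$ with relations given by the constant vectors in $I$. I would first check $\mathcal{A}$ is finite: property 1 says $S$ has a presentation $\Ztpm^\ell / I$ with $I$ generated by constant vectors, so $S \cong (\Z^\ell / I_0) \otimes_\Z \Ztpm = \mathcal{A} \otimes_\Z \Ztpm$ where $I_0$ is the $\Z$-span of those constant generators and $\mathcal{A} = \Z^\ell/I_0$; finiteness of $\mathcal{A}$ follows because $S$ has finite fusion order $n$, forcing $n\mathcal{A} = 0$, and $\mathcal{A}$ is finitely generated. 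This already produces a $\Ztpm$-module isomorphism $\beta\colon \mathcal{A}[t^\pm] \to S$ sending the constant generator $\bar e_i$ to $e_i$. Next, define $b_{\mathcal{A}}$ on $\mathcal{A}$ by $b_{\mathcal{A}}(\bar e_i, \bar e_j) = b(e_i, e_j)$ and extend $\Z$-bilinearly; I must check this is well-defined, i.e.\ respects the relations in $I_0$, which follows because if $\sum_i v_i e_i = 0$ in $S$ (with $v$ constant) then $\sum_i v_i b(e_i, e_k) = 0$. Then verify $\beta$ is an isometry: it suffices to check $b(e_i, e_j) = b^\pm(\bar e_i, \bar e_j)$ on generators, which is immediate, together with the observation that property 2 ($b(e_i, t^k e_j) = 0$ for $k \neq 0$, and translation-invariance) forces $b$ and $b^\pm \circ (\beta \times \beta)$ to agree on all of $\mathcal{A}[t^\pm]$ once they agree on the constant generators. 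Finally, symmetry of $b$ transfers to symmetry of $b_{\mathcal{A}}$, so $(\mathcal{A}, b_{\mathcal{A}})$ is a genuine p-theory of abelian anyons and $(S,b) \cong (\mathcal{A}[t^\pm], b^\pm_{\mathcal{A}})$.

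The main obstacle I anticipate is the well-definedness and finiteness bookkeeping around $\mathcal{A} = \Z^\ell / I_0$: one must be careful that ``$I$ generated by constant vectors'' genuinely means $I = I_0 \otimes_\Z \Ztpm$ (not merely that $I_0$ generates $I$ over $\Ztpm$ in some weaker sense), and that the natural map $(\Z^\ell/I_0)\otimes_\Z\Ztpm \to \Ztpm^\ell/I$ is an isomorphism — this uses right-exactness of $-\otimes_\Z\Ztpm$ applied to $0 \to I_0 \to \Z^\ell \to \mathcal{A} \to 0$ and flatness of $\Ztpm$ over $\Z$. Everything else is routine checking of bilinearity, translation-invariance, and the isometry condition on generators. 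I would also remark that the equivalence ``$b(e_i, t^k e_j) = 0$ for all $k\neq 0$'' $\iff$ ``$B(e_i,e_j)$ constant'' is immediate from the definition $B(x,y) = \sum_{m} b(t^m x, y) t^m$ in Proposition~\ref{prop:bijection} together with translation-invariance of $b$.
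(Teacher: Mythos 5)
Your argument is correct, and it reaches the same structural conclusion as the paper, but the converse direction is implemented by a genuinely different (dual) construction. The paper takes $\mathcal{A}$ to be the $\Z$-subgroup of $S$ generated by $e_1,\dots,e_\ell$ (finite because each $e_i$ has finite order), so the anyon form is literally the restriction of $b$ and no well-definedness check is needed; the cost is that one must prove the evaluation map $\Gamma:\mathcal{C}[t^\pm]\to S$ is injective, and that is exactly where the syzygy hypothesis enters (a relation $\sum_i(\sum_m z_{mi}t^m)e_i=0$ is split into its constant graded pieces, each of which must again lie in the syzygy module). You instead present $S\cong\Ztpm^\ell/I\cong(\Z^\ell/I_0)\otimes_\Z\Ztpm$ and set $\mathcal{A}=\Z^\ell/I_0$, so the $\Ztpm$-module isomorphism comes for free from right-exactness of the tensor product; the cost is the bookkeeping you yourself flagged: you must check that the matrix $(b(e_i,e_j))$ descends to the quotient (it does, since each constant vector of $I_0$ is a relation in $S$ and $b$ is bilinear) and that $\Z^\ell/I_0$ is finite. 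For the latter, note either that $\Z^\ell/I_0\cong S/(1-t)S$ is a quotient group of $S$ and hence killed by the fusion order $n$, or that $n\epsilon_i$ is a constant element of $I$ and every constant element of $I$ lies in $I_0$ (take degree-zero components of an expression in the constant generators); your sketch gestures at this but should state one of these explicitly. The isometry step---agreement on pairs $(e_i,t^ke_j)$ using translation invariance together with property \#2---and the forward direction are essentially identical to the paper's. Net effect: your route trades the paper's injectivity argument for a well-definedness and finiteness check, and both are sound.
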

\begin{proof}
It is clear from the definition that a p-theory which is decoupled layers satisfies the two properties; let $a_1, \dots, a_\ell$ be a set of generators for $\mathcal{A}$, and take $e_i = a_i \otimes 1 \in \mathcal{A}[t^\pm]$.  

For the converse, we take $\mathcal{A}$ to be the subgroup (\emph{i.e.}, $\Z$-submodule) of $S$ generated by $e_1,\ldots, e_\ell$.  We have $\mathcal{A}$ finite because each $e_i$ is of finite order.  To avoid notational overload, let $\mathcal{C}$ be a group isomorphic to $\mathcal{A}$ and $\gamma : \mathcal{C} \to \mathcal{A}$ an isomorphism.  The function $\Gamma : \mathcal{C}[t^\pm] \to S$ given by $\Gamma( \sum_{m \in \Z} c_m t^m ) = \sum_{m \in \Z} t^m \gamma(c_m)$ is clearly  $\Ztpm$-linear and surjective.  Suppose $\Gamma( \sum_{m \in \Z} c_m t^m) = \sum_{m \in \Z} t^m \gamma(c_m) = 0$.  For each $m \in \Z$ we have $\gamma(c_m) = \sum_i z_{m i} e_i$ for coefficients $z_{m i} \in \Z$, and thus $\sum_i ( \sum_{m \in \Z} z_{m i} t^m ) e_i = 0$.  The assumption on the syzygy module implies  $\sum_i z_{m i} e_i = 0$ for all $m \in \Z$, and therefore $\gamma(c_m) = 0$ and $c_m = 0$.  We have shown $\Gamma$ is injective and thus an isomorphism.

Now we define $b_c : \mathcal{C} \times \mathcal{C} \to \Q / \Z$ by $b_c(c, c') = b(\gamma(c), \gamma(c'))$ for $c, c' \in \mathcal{C}$.   Considering the p-theory $(\mathcal{C}[t^\pm], b_c^\pm)$, it is straightforward to use the assumption on $(S,b)$ that mutual statistics decouples across layers to show $b_c^\pm( y, y') = b(\Gamma(y), \Gamma(y'))$ for all $y, y' \in \mathcal{C}[t^\pm]$.  This shows that $(S,b) \cong (\mathcal{C}[t^\pm], b^\pm_c)$.
\end{proof}

\begin{proposition}
A theory of excitations $(S,\theta)$ is decoupled layers if and only if there exist generators $e_1, \dots, e_\ell$ for $S$ such that property \#1 of Proposition~\ref{prop:ptheory-dcl} is satisfied, and such that topological spins factorize across layers.  Precisely, $\theta(e_i + t^k e_j) = \theta(e_i) + \theta(t^k e_j)$ for all $k \neq 0$ and all generators $e_i, e_j$.
\end{proposition}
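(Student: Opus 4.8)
The plan is to reduce the statement to the already-proven p-theory version (Proposition~\ref{prop:ptheory-dcl}) together with the structure relating theories of excitations and their associated p-theories. The key observation is that for a theory of excitations $(S,\theta)$, the condition ``topological spins factorize across layers'' for a generating set is exactly the condition that, together with property \#1, makes $(S,\theta)$ decoupled layers as a \emph{theory of excitations}, not merely as a p-theory.

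First I would prove the easy direction: if $(S,\theta) \cong (\mathcal{A}[t^\pm],\theta^\pm)$ for a theory of abelian anyons $(\mathcal{A},\theta)$, pick generators $a_1,\dots,a_\ell$ of $\mathcal{A}$ and set $e_i = a_i \otimes 1$. Property \#1 holds as in the proof of Proposition~\ref{prop:ptheory-dcl}, and $\theta^\pm(e_i + t^k e_j) = \theta(a_i) + \theta(a_j) = \theta^\pm(e_i) + \theta^\pm(t^k e_j)$ for $k \neq 0$ by the defining formula~\eqref{eq:thetapm}, since $a_i t^0$ and $a_j t^k$ are supported in distinct degrees.

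For the converse, suppose generators $e_1,\dots,e_\ell$ exist satisfying property \#1 and the spin-factorization condition. The first step is to note that spin factorization implies mutual-statistics decoupling for the associated bilinear form $b$: indeed $b(e_i, t^k e_j) = \theta(e_i + t^k e_j) - \theta(e_i) - \theta(t^k e_j) = 0$ for $k \neq 0$. Hence $(S,b)$ satisfies both hypotheses of Proposition~\ref{prop:ptheory-dcl}, so there is an isomorphism $\Gamma : \mathcal{C}[t^\pm] \to S$ of p-theories with $(S,b) \cong (\mathcal{C}[t^\pm], b_c^\pm)$, where $\mathcal{C}$ is a finite abelian group isomorphic to the $\Z$-submodule $\mathcal{A} \subset S$ generated by the $e_i$, via $\gamma : \mathcal{C} \to \mathcal{A}$, and $\Gamma(\sum_m c_m t^m) = \sum_m t^m \gamma(c_m)$. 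Now define $\theta_c : \mathcal{C} \to \Q/\Z$ by $\theta_c(c) = \theta(\gamma(c))$; since $\gamma$ is a group isomorphism and $\theta$ restricted to $\mathcal{A}$ is a quadratic form, $\theta_c$ is a quadratic form on $\mathcal{C}$, and $b_c$ is its associated bilinear form. It remains to check $\theta^\pm_c(y) = \theta(\Gamma(y))$ for all $y = \sum_m c_m t^m \in \mathcal{C}[t^\pm]$. Using $\theta(x+x') = \theta(x) + \theta(x') + b(x,x')$ repeatedly to expand $\theta(\Gamma(y)) = \theta(\sum_m t^m\gamma(c_m))$, the cross terms are $\sum_{m \neq m'} b(t^m \gamma(c_m), t^{m'}\gamma(c_{m'}))$, which vanish by the mutual-statistics decoupling established above, leaving $\theta(\Gamma(y)) = \sum_m \theta(t^m\gamma(c_m)) = \sum_m \theta(\gamma(c_m)) = \sum_m \theta_c(c_m) = \theta^\pm_c(y)$, where we used translation-invariance of $\theta$. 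Thus $\Gamma$ is an isomorphism of theories of excitations $(\mathcal{C}[t^\pm], \theta_c^\pm) \xrightarrow{\sim} (S,\theta)$, so $(S,\theta)$ is decoupled layers.

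The main obstacle is a bookkeeping one rather than a conceptual one: one must be careful that the decomposition of $\theta(\Gamma(y))$ into diagonal and cross terms genuinely only involves the bilinear form $b$ evaluated on translates of the $e_i$ (so that decoupling applies), and that translation-invariance of $\theta$ is invoked correctly to replace $\theta(t^m\gamma(c_m))$ by $\theta(\gamma(c_m))$. One should also verify that $\theta_c$ is well-defined and quadratic — this is immediate since $\gamma$ is a bijection intertwining the group structures and $\theta$ is already quadratic on the subgroup $\mathcal{A}$. No new ideas beyond those in Proposition~\ref{prop:ptheory-dcl} are needed; the content is simply that a quadratic form refining a decoupled bilinear form over generators supported in a single degree is itself decoupled.
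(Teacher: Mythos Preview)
Your proof is correct and follows essentially the same approach as the paper: both directions parallel Proposition~\ref{prop:ptheory-dcl}, constructing the same isomorphism $\Gamma$ and the same $\theta_c$, with the key step being the verification that $\theta_c^\pm(y) = \theta(\Gamma(y))$. You have filled in the details that the paper leaves as ``clear,'' including the observation that spin factorization on generators implies $b(e_i, t^k e_j) = 0$ for $k \neq 0$, and the expansion of $\theta(\Gamma(y))$ via the quadratic-form identity (minor note: the cross terms should be indexed by $m < m'$ rather than $m \neq m'$, but since they all vanish this does not affect the argument).
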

\begin{proof}
If a theory is decoupled layers, then it is again clear from the definition that the desired properties are satisfied, so we only need to prove the converse.  The proof parallels that of Proposition~\ref{prop:ptheory-dcl}; the construction of the isomorphism $\Gamma : \mathcal{C}[t^\pm] \to S$ is identical.  We define $\theta_c : \mathcal{C} \to \Q / \Z$ by $\theta_c(c) = \theta(\gamma(c))$ for $c \in \mathcal{C}$, and it is clear that $\theta_c^\pm(y) = \theta(\Gamma(y))$ for all $y \in \mathcal{C}[t^\pm]$, so $(S,\theta) \cong (S, \theta^\pm_c)$.
\end{proof}

The following result, whose proof is trivial, shows that if our goal is to prove a theory of excitations is decoupled layers, it is enough to consider its p-theory.  

\begin{proposition}
Let $(S,\theta)$ be a theory of excitations and $(S,b)$ the associated p-theory.  Then $(S,b)$ is decoupled layers if and only if $(S, \theta)$ is decoupled layers.
\end{proposition}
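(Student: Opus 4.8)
The plan is to deduce the equivalence directly from the two generator-based characterizations of ``decoupled layers'' established just above: Proposition~\ref{prop:ptheory-dcl} for p-theories and its immediate analog for theories of excitations. Crucially, both characterizations demand the \emph{same} condition on a generating set $e_1,\dots,e_\ell$ of $S$ --- property \#1 of Proposition~\ref{prop:ptheory-dcl}, that the first syzygy module is generated by constant vectors --- and differ only in the second condition: for $(S,b)$ one requires $b(e_i, t^k e_j) = 0$ for all $k \neq 0$ and all $i,j$, whereas for $(S,\theta)$ one requires $\theta(e_i + t^k e_j) = \theta(e_i) + \theta(t^k e_j)$ for all $k \neq 0$ and all $i,j$.

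The only point to make is that these two second conditions coincide. Since $(S,b)$ is by hypothesis the p-theory associated with $(S,\theta)$, the bilinear form $b$ is the one induced by $\theta$, so the identity $b(x,y) = \theta(x+y) - \theta(x) - \theta(y)$ holds for all $x,y \in S$; specializing to $x = e_i$ and $y = t^k e_j$ gives $b(e_i, t^k e_j) = \theta(e_i + t^k e_j) - \theta(e_i) - \theta(t^k e_j)$. Hence, for any fixed generating set of $S$, the mutual-statistics decoupling condition holds if and only if the spin-factorization condition holds.

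Combining these observations: if $(S,\theta)$ is decoupled layers, pick generators $e_1,\dots,e_\ell$ witnessing this; they satisfy property \#1 together with spin-factorization, hence also satisfy property \#2, so by Proposition~\ref{prop:ptheory-dcl} the same generators witness that $(S,b)$ is decoupled layers. The converse direction reads the equivalence the other way around with no change. I expect no real obstacle here --- this is precisely the ``trivial'' argument the authors allude to --- and the only care needed is to observe that the two characterizations are phrased with literally the same property \#1, so nothing beyond the defining quadratic-form identity is required. (One could instead run the forward implication through the definitions, using that the p-theory associated to $(\mathcal{A}[t^\pm], \theta^\pm)$ is $(\mathcal{A}[t^\pm], b^\pm)$, but routing both directions through the generator characterization is cleaner and uniform.)
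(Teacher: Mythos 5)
Your proof is correct: since both generator characterizations impose literally the same property \#1 and their second conditions are interchanged by the identity $b(e_i, t^k e_j) = \theta(e_i + t^k e_j) - \theta(e_i) - \theta(t^k e_j)$, the equivalence follows at once, and this is exactly the routine argument the paper alludes to when it calls the proof trivial (no proof is written out there). Routing both directions through the characterizations is also the right call, since a p-theory isomorphism need not preserve $\theta$, so the converse is not immediate from the definitions alone.
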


\subsection{p-theories of prime fusion order are decoupled}\label{sec:primedecoupled}
Now we specialize to the case of prime fusion order.  Throughout this section, we view $S$ as an $R$-module where $R = \Z_p[t^\pm]$, with $b$ taking values in $\Z_p$ and $B$ taking values in $R$.  It will be important that $R$ is a PID.  We use the same symbol $B$ to denote both the $R$-valued bilinear form of a p-theory and the matrix $B_{i j} = B(e_i, e_j)$ of the same bilinear form in a generating set $e_1, \dots, e_\ell$. We say a square matrix $X$ with entries in $R$ is {\bf Hermitian} if $X = X^\dagger$, where $X^\dagger$ is the transpose and element-wise involution of $X$. A polynomial $f \in R$ is called Hermitian if $f = \overline{f}$.

The main result of this section is the following, which we will prove over the remainder of the section:

\begin{theorem}
	\label{theorem:decoupled-layers}
	Every perfect p-theory of prime fusion order is decoupled layers. 
\end{theorem}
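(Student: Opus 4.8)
Let $(S,b)$ be a perfect p-theory of prime fusion order $p$, so $S$ is a finitely generated torsion-free, hence free, module over the PID $R = \Z_p[t^\pm]$ (torsion-freeness comes from Proposition~\ref{prop:nd-implies-tf}, since $b$ is nondegenerate and $R$ is torsion-free over itself). Pick a basis $e_1,\dots,e_\ell$ of $S$ and form the Hermitian matrix $B \in M_\ell(R)$ with $B_{ij} = B(e_i,e_j)$. Perfectness says the associated map $\Phi : S \to S^*$ is an isomorphism; after identifying $S^*$ with $R^\ell$ via the dual basis (using that $R$ is its own dual as an $R$-module), $\Phi$ is represented by $B$, so perfectness is exactly the statement that $B$ is invertible over $R$, i.e.\ $\det B \in R^\times$. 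Since the units of $R = \Z_p[t^\pm]$ are precisely the monomials $c t^m$ with $c \in \Z_p^\times$, we get $\det B = c\,t^m$ for some $c \neq 0$ and $m \in \Z$. The goal, by Proposition~\ref{prop:ptheory-dcl}, is to find a new \emph{basis} of $S$ in which $B$ becomes a constant (degree-zero) Hermitian matrix; since the $e_i$ already form a basis, the syzygy condition \#1 is automatic (the syzygy module is zero), so only condition \#2 — constancy of $B$ — must be arranged.

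The heart of the argument is a normal-form result: \emph{every invertible Hermitian matrix over $R = \Z_p[t^\pm]$ is congruent, via $B \mapsto U B U^\dagger$ with $U \in GL_\ell(R)$, to a constant invertible Hermitian matrix.} I would prove this by induction on $\ell$. For the inductive step, the main task is to find a vector $v \in R^\ell$ with $v B v^\dagger$ a nonzero \emph{constant} (or, if $p=2$, one must also handle the case where $vBv^\dagger = 0$ for all $v$ and instead extract a hyperbolic plane, i.e.\ a pair $v,w$ with $vBv^\dagger = wBw^\dagger = 0$ and $vBw^\dagger = 1$); having found such a $v$, complete it to a basis, split off the rank-one (or rank-two) piece as an orthogonal summand using invertibility of $B$ to guarantee the orthogonal complement is again free with invertible Hermitian Gram matrix, and recurse. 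To produce such a $v$: start from any index $i$ and consider $B_{ii} = \overline{B_{ii}} \in R$, a Hermitian Laurent polynomial. If some $B_{ii}$ is a nonzero constant we are close to done for that coordinate; otherwise one manipulates — replacing $e_i$ by $e_i + f e_j$ shifts $B_{ii}$ by $f B_{ji} + \overline{f}\,\overline{B_{ji}} + f\overline{f} B_{jj}$, and one uses the fact that $\det B$ is a unit (so the $B_{ij}$ cannot all share a common non-unit factor, and in particular cannot all be divisible by $t - 1$ or any irreducible) together with a degree-reduction / division argument over the PID $R$ to drive the relative degree of a suitable diagonal entry down to zero. I would organize the degree bookkeeping using $\deg_+$, $\deg_-$ and the relative degree $\deg$ of Definition~\ref{defn:degrees}, noting that Hermiticity forces $\deg_+ B_{ii} = -\deg_- B_{ii}$, so a Hermitian polynomial of relative degree $0$ is automatically constant.

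Once the normal form is established, pick $U \in GL_\ell(R)$ with $C := U B U^\dagger$ constant and invertible Hermitian. The new basis $e_i' = \sum_j U_{ij} e_j$ is still a basis of $S$, and $B(e_i', e_j') = C_{ij}$ is constant for all $i,j$, so condition \#2 of Proposition~\ref{prop:ptheory-dcl} holds and condition \#1 holds trivially. Hence $(S,b)$ is decoupled layers, with the underlying 2d p-theory of abelian anyons being $\mathcal A = (\Z_p)^\ell$ equipped with the bilinear form given by the constant matrix $C$ (which is nondegenerate over $\Z_p$ since $C$ is invertible, so this 2d theory is modular — consistent with the expectation). The main obstacle I anticipate is the characteristic-$2$ subtlety in the normal-form induction: over $\Z_2[t^\pm]$ one cannot in general diagonalize a Hermitian form and must carry along hyperbolic planes, and one must check carefully that such a plane can always be split off with a \emph{free} orthogonal complement of invertible Gram matrix; this requires a little care with the PID structure of $R$ and with the behaviour of the involution $t \mapsto t^{-1}$, but it is standard once set up correctly.
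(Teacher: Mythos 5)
Your overall skeleton matches the paper's: show $S$ is free with invertible Hermitian Gram matrix $B$ (the paper's Lemma~\ref{lem:easy-part}), split off isotropic/hyperbolic pieces, and reduce the remaining anisotropic block to a constant matrix by congruence, then invoke Proposition~\ref{prop:ptheory-dcl}. (Two small slips along the way: the syzygy module $I \subset \Ztpm^\ell$ is not zero — it is generated by the constant vectors $p\,e_i$, which is why condition \#1 still holds; and isotropic vectors must be split off as $2\times 2$ blocks for every prime $p$, not only $p=2$, since over a ring with the involution $t \mapsto t^{-1}$ ``all vectors isotropic'' does not force $B=0$ even when $p$ is odd.)

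The genuine gap is the central normal-form claim: that an invertible (anisotropic) Hermitian matrix over $R=\Z_p[t^\pm]$ is congruent to a constant one. You reduce the theorem to this statement and then only gesture at its proof: ``replacing $e_i$ by $e_i+f e_j$ shifts $B_{ii}$ by $fB_{ji}+\overline{f}\,\overline{B_{ji}}+f\overline{f}B_{jj}$, and one uses the fact that $\det B$ is a unit \dots\ together with a degree-reduction / division argument over the PID $R$ to drive the relative degree of a suitable diagonal entry down to zero.'' This is exactly the hard part, and the naive Euclidean-style reduction does not obviously terminate: the involution couples $\deg_+$ and $\deg_-$, so a substitution that lowers $\deg_+ B_{ii}$ typically raises $-\deg_- B_{ii}$ through the $\overline{f}$ terms, and ``no common non-unit factor of the $B_{ij}$'' gives no control over this. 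The paper handles precisely this difficulty by generalizing Djokovi\'{c}'s dominant diagonal theorem: it introduces the partial orders $<_+,<_-,<_\pm$ on total degrees (Definition~\ref{defn:partial-orders}), proves a division-with-remainder lemma against a dominant-diagonal Hermitian matrix (Lemma~\ref{lem:dd-div-with-rem}), shows by a well-foundedness argument that every anisotropic Hermitian matrix is congruent to one with a dominant diagonal (Theorem~\ref{theorem:dominant-diagonal}), and only then concludes via the determinant computation (Lemma~\ref{lemma:dd-det}) that invertibility forces the diagonal entries to be Hermitian units, hence constants, and the off-diagonal entries to vanish. Until you supply an argument of comparable substance for your degree-reduction step (or an alternative route to producing a vector $v$ with $v^\dagger B v$ a unit in the anisotropic case), the proof is incomplete at its decisive point.
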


\begin{lemma}
If $(S,B)$ is a p-modular p-theory of prime fusion order $p$, then $S$ is a free $R$-module, and property \#1 of Proposition~\ref{prop:ptheory-dcl} is satisfied in any basis.  If $(S,B)$ is perfect, then in any basis for $S$, $B_{ij} = B(e_i, e_j) \in M_\ell(R)$ is an invertible Hermitian matrix.  \label{lem:easy-part}
\end{lemma}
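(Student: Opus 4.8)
The plan is to unpack the three claims of Lemma~\ref{lem:easy-part} in order, each following quickly from facts already established. First, since $(S,B)$ is p-modular, Proposition~\ref{prop:nd-implies-tf} (applied with $M = N = S$, $L = \Qlp$ torsion-free) shows $S$ is torsion-free as an $R$-module. But $R = \Z_p[t^\pm]$ is a PID (as noted at the start of Section~\ref{subsec:st}, and again when $k=1$ in Section~\ref{sec:structure}), so by Theorem~\ref{thm:PID-structure} a finitely generated torsion-free module over a PID is free. Hence $S \cong R^\ell$ for some $\ell$, and picking any basis $e_1, \dots, e_\ell$, the first syzygy module $I = \{v \in \Ztpm^\ell \mid \sum_i v_i e_i = 0\}$ is zero — in particular it is generated by (the empty set of) constant vectors — so property \#1 of Proposition~\ref{prop:ptheory-dcl} holds in any basis. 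Here one should be slightly careful: the syzygy module in Proposition~\ref{prop:ptheory-dcl} is taken over $\Ztpm$, not over $R$; but a $\Ztpm$-relation $\sum_i v_i e_i = 0$ in $S$ reduces to an $R$-relation $\sum_i [v_i] e_i = 0$ by applying $\rho$, which forces $[v_i] = 0$ in $R$ for all $i$ since the $e_i$ are $R$-free, i.e.\ $p \mid v_i$ in $\Ztpm$; iterating (or directly: $pS = 0$) shows $v = 0$ in $\Ztpm^\ell$, so indeed $I = 0$.

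Next, assume $(S,B)$ is perfect. By definition, perfectness means the associated map $\Phi : S \to S^* = \Hom_R(S, R)$ is an isomorphism. We want to identify the matrix of $\Phi$ in the basis $\{e_i\}$ with the Gram matrix $B_{ij} = B(e_i, e_j)$. Concretely, let $\{e_i^*\}$ be the dual basis of $S^* \cong \Hom_R(R^\ell, R) \cong R^\ell$, characterized by $e_i^*(e_j) = \delta_{ij}$ (this is the standard dual basis of a free module over a commutative ring; no torsion subtleties arise here since $k=1$, in contrast to Lemma~\ref{lem:Adual}). Then $\Phi(e_i) = B(e_i, \cdot)$, and $\Phi(e_i)(e_j) = B(e_i, e_j) = B_{ij}$, so $\Phi(e_i) = \sum_j B_{ij}\, e_j^*$ up to the involution twist in the $R$-module structure on $S^*$; tracking that twist, the matrix of $\Phi$ with respect to $\{e_i\}$ and $\{e_i^*\}$ is $B$ (or $\overline B$, which is invertible iff $B$ is). Since $\Phi$ is an isomorphism of free $R$-modules of the same finite rank, its matrix $B$ lies in $GL_\ell(R)$, i.e.\ $B$ is invertible. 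Finally, $B$ is Hermitian because the form $B$ is Hermitian (Proposition~\ref{prop:bijection} shows $B_b$ is a $\Ztpm$-Hermitian form): $B_{ij} = B(e_i, e_j) = \overline{B(e_j, e_i)} = \overline{B_{ji}}$, which is exactly $B = B^\dagger$.

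The only mildly delicate point — and the one I would write out most carefully — is bookkeeping the involution twist in the $R$-module structure on $S^* = \Hom_R(S,R)$ when asserting that the matrix of $\Phi$ equals $B$: one must confirm that invertibility of the $R$-linear map $\Phi$ is equivalent to invertibility of the matrix $B \in M_\ell(R)$, using that $r \mapsto \overline r$ is a ring automorphism of $R$ (so $B \in GL_\ell(R) \iff \overline B \in GL_\ell(R)$, and anti-linearity only rescales columns by an automorphism). Everything else is immediate from the cited results. I do not anticipate any genuine obstacle here; this lemma is, as its name suggests, the easy half, setting up the free basis and invertible Hermitian Gram matrix that the subsequent, harder arguments toward Theorem~\ref{theorem:decoupled-layers} will manipulate.
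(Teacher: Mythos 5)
Your overall route is the paper's route: p-modularity plus Proposition~\ref{prop:nd-implies-tf} gives torsion-freeness, the PID structure theorem gives freeness, and in the second half the Gram matrix $B_{ij}=B(e_i,e_j)$ is identified (up to the involution twist, which you track more explicitly than the paper does) with the matrix of $\Phi$, so perfectness gives invertibility and Hermiticity of the form gives Hermiticity of the matrix. That part is fine.

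However, your treatment of property \#1 contains a genuinely false step. You claim the first syzygy module $I=\{v\in\Ztpm^\ell \mid \sum_i v_i e_i=0\}$ is zero, arguing that $[v_i]=0$ in $R$ forces $p\mid v_i$ and then that ``iterating (or directly: $pS=0$) shows $v=0$.'' This is backwards: precisely because the fusion order is $p$ we have $pS=0$, so $p e_1=0$ in $S$ and the nonzero constant vector $(p,0,\dots,0)\in\Ztpm^\ell$ lies in $I$. In fact $p\mid v_i$ for all $i$ is not only necessary but sufficient for $\sum_i v_i e_i=0$ (write $v_i=pw_i$ and use $p e_i=0$), so $I=(p\Ztpm)^\ell\neq 0$. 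The desired conclusion survives, but for the reason the paper gives, not yours: $I$ is generated by the constant vectors $(p,0,\dots,0),(0,p,0,\dots),\dots$, and property \#1 of Proposition~\ref{prop:ptheory-dcl} asks only that $I$ be generated by constant vectors, not that it vanish. With that correction the lemma is proved; the rest of your argument requires no change.
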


\begin{proof}
Because $B$ is nondegenerate, $S$ is torsion-free and thus free.  Let $e_1, \dots, e_\ell$ be a basis, then consider vectors $v \in \Ztpm^\ell$ with $\sum_i v_i e_i = 0$.  This holds precisely when $v_i \in (p) \subset \Ztpm$, so the first syzygy module $I$ is generated by the vectors $(p, 0, 0, \dots)$, $(0, p, 0, \dots)$, and so on, and property \#1 holds.

For the associated map $\Phi : S \to S^*$, we have $\Phi(e_i) = \sum_j B_{i j} e^*_j$, where $e^*_1, \dots, e^*_r \in S^*$ is the dual basis.  Because $\Phi$ is an isomorphism, the matrix $B$ is invertible.  Hermiticity of the matrix $B$ follows immediately from the fact that the bilinear form $B$ is Hermitian.
\end{proof}

\begin{defn}
Two matrices $A, B \in M_\ell(R)$ are {\bf congruent} if there exists an invertible $U \in M_\ell(R)$ such that $A = U^\dagger B U$.
\end{defn}

We will argue that any invertible Hermitian matrix $B \in M_\ell(R)$ is
congruent to a constant matrix. Thus, there is always a choice of basis such that property \#2 of Proposition~\ref{prop:ptheory-dcl} is also satisfied. 
The key to showing this is a generalization of Djokovi\'{c}'s dominant diagonal theorem \cite{Djokovic_1976} to the Laurent
polynomial case.

Given $B \in M_\ell(R)$, there is an associated p-theory $(R^\ell, B)$ where $B(v,w) =v^\dagger Bw$.

\begin{defn}
	Let $(S, B)$ be a p-theory.  A nonzero element $x \in S$ is called {\bf isotropic} if $B(x,x) = 0$.  The p-theory $(S,B)$ is called {\bf anisotropic} if $S$ contains no isotropic elements.  A matrix $B \in M_\ell(R)$ is anisotropic if  $v^\dagger B v=\sum_{i, j = 1}^\ell \overline{v_i} v_j B_{i j} = 0$ implies $v = 0$ for all $v \in R^\ell$.
\end{defn}

It is clear that all the diagonal entries of an anisotropic matrix are nonzero, and that any matrix congruent to an anisotropic matrix is anisotropic.

\begin{lemma}
	\label{lemma:cyclic-submodule-generator}
	Let $R$ be a PID and $S$ a free, finitely generated $R$-module. Let $x \in S$ be any element. Then
	there exists a basis $e_1, \dots, e_\ell$ for $S$ such that $x = r e_1$ for some $r \in R$. 
\end{lemma}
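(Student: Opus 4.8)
The plan is to reduce the statement to the classical fact that a unimodular (primitive) vector in $R^\ell$ can be completed to a basis, and then to obtain that fact directly from Smith normal form, which is already available in the excerpt.

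First I would dispose of the trivial case: if $x = 0$, take $r = 0$ and any basis. So assume $x \neq 0$ and fix any basis $f_1, \dots, f_\ell$ of $S$, writing $x = \sum_{j=1}^{\ell} a_j f_j$ with $a_j \in R$ not all zero. Since $R$ is a PID, the ideal $(a_1, \dots, a_\ell)$ is principal, say $(a_1,\dots,a_\ell) = (d)$ with $d \neq 0$; then $a_j = d b_j$ for some $b_j \in R$, and because $R$ is a domain the equality $(d) = d\cdot(b_1,\dots,b_\ell)$ forces $(b_1,\dots,b_\ell) = R$. Thus the row vector $b = (b_1,\dots,b_\ell) \in M_{1,\ell}(R)$ is unimodular, and $x = d\sum_j b_j f_j$.

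Next I would apply the Smith normal form theorem to the $1\times\ell$ matrix $b$. Since $b \neq 0$ there are invertible matrices $U \in M_1(R)$ (that is, $U$ is a unit of $R$) and $V \in M_\ell(R)$ with $UbV = (\alpha,0,\dots,0)$ for some $\alpha \neq 0$. Comparing the ideals generated by the entries on the two sides — these are unchanged under multiplication by invertible matrices — gives $(\alpha) = (b_1,\dots,b_\ell) = R$, so $\alpha$ is a unit; absorbing it into $U$ we may arrange $UbV = (1,0,\dots,0)$, equivalently $b = (U^{-1},0,\dots,0)\,V^{-1}$. Setting $W = V^{-1}$ and $g_i = \sum_{j} W_{ij} f_j$, the tuple $g_1,\dots,g_\ell$ is again a basis of $S$ since $W$ is invertible, and $b_j = U^{-1}W_{1j}$ gives $\sum_j b_j f_j = U^{-1}g_1$. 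Hence $x = d\,(U^{-1}g_1)$, so putting $e_1 = U^{-1}g_1$ (a unit multiple of $g_1$, hence still a valid first basis vector) and $e_i = g_i$ for $i \geq 2$ yields a basis with $x = r e_1$, $r = d$.

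I do not expect a genuine obstacle here: the entire mathematical content is Smith normal form plus the elementary use of the PID hypothesis to extract the content $d = \gcd(a_1,\dots,a_\ell)$. The only point requiring care is the bookkeeping of row-versus-column conventions and of the change-of-basis matrix $V^{-1}$, which I would track explicitly as above.
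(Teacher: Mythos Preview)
Your proof is correct and takes essentially the same approach as the paper: both arguments reduce to Smith normal form of the coordinate vector of $x$. The paper is slightly more direct, applying Smith normal form immediately to the $\ell\times 1$ column of coordinates $(a_1,\dots,a_\ell)^T$ without first extracting the gcd $d$; the single nonzero entry of the resulting normal form is then automatically the required $r$.
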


\begin{proof}
	Choosing a basis, $x$ can be expressed as a column vector. Putting this $\ell \times 1$ matrix into
	Smith normal form,  we obtain the lemma. 
\end{proof}

\begin{lemma}
	\label{lemma:metabolic-anisotropic-decomposition}
	Let $B \in M_\ell(R)$ be  invertible and Hermitian. Then there exists an invertible matrix
	$U$ such that 
	\begin{align}
		U^\dagger B U 
		= \begin{pmatrix} B(a_1) &  & &  \\ & \cdots &  & \\ & & B(a_l) & & \\  & & & B'\end{pmatrix},
	\end{align}
	where each $a_i$ is constant in $R$, we have defined $B(a) = \begin{pmatrix} 0 & 1 \\ 1 & a
	\end{pmatrix}$ for $a \in R$, and $B'$ is anisotropic. 
\end{lemma}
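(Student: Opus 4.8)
The plan is to extract hyperbolic (``metabolic'') summands one at a time until only an anisotropic piece remains, using induction on $\ell$. The base case $\ell = 0$ is vacuous, and if $B$ itself is anisotropic there is nothing to do. So suppose $B$ is invertible, Hermitian, and not anisotropic, so there is a nonzero isotropic vector $v \in R^\ell$ with $v^\dagger B v = 0$.

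First I would arrange for the isotropic vector to be a basis element. Since $R = \Z_p[t^\pm]$ is a PID and $R^\ell$ is free, Lemma~\ref{lemma:cyclic-submodule-generator} provides a basis $f_1, \dots, f_\ell$ with $v = r f_1$ for some $r \in R$; after the corresponding change of basis we may replace $v$ by $f_1$ itself (which is still isotropic, since $\overline{r} r \cdot 0 = 0$, and is nonzero because $v \neq 0$ forces $r \neq 0$, hence $f_1 \neq 0$). So without loss of generality $B_{11} = 0$. Now invertibility of $B$ is the key leverage: the first row of $B$ cannot be identically zero, so some $B_{1j} \neq 0$ with $j \geq 2$; in fact I claim one can arrange $B_{1j}$ to be a unit. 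The reason is that $f_1$ spans a direct summand, and $B(f_1, \cdot) = \Phi(f_1)$ is a primitive element of the dual module $S^* = \HomR(R^\ell, R)$ — were it divisible by a non-unit $g$, then $g$ would divide $\det B$ up to units after expanding along that row, contradicting $\det B \in R^\times$; alternatively, since $\Phi$ is an isomorphism and $f_1$ is part of a basis, $\Phi(f_1)$ is part of a basis of $S^*$, hence unimodular, so its coordinates $B_{1j}$ generate the unit ideal. Multiplying the relevant dual basis vector by the inverse unit and permuting, we get $B_{12} = 1$.

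Next I would clean up the second coordinate. With $B_{11} = 0$ and $B_{12} = 1$, the $2 \times 2$ leading block is $\begin{pmatrix} 0 & 1 \\ 1 & B_{22} \end{pmatrix}$ with $B_{22} = \overline{B_{22}}$ Hermitian. The remaining task is to make $B_{22}$ constant and to kill the off-block entries $B_{1j}, B_{2j}$ for $j \geq 3$. For the off-block entries: adding multiples of $f_1$ and $f_2$ to $f_3, \dots, f_\ell$ (a congruence by an upper-triangular unipotent $U$) adjusts the entries in the third-through-last rows; because $B_{12} = 1$ is a unit, one can solve exactly for the multipliers to zero out $B_{1j}$ for all $j \geq 3$, and then — using $B_{11}=0$, so the first column entry is controlled by $B_{21}=1$ — zero out $B_{2j}$ as well. (This is the standard hyperbolic-plane splitting: a nondegenerate hyperbolic plane is always an orthogonal direct summand.) Concretely: replace $f_j \mapsto f_j - B(f_1,f_j) f_2 - \big(B(f_2,f_j) - B(f_1,f_j)B_{22}\big) f_1$ for $j \geq 3$; a direct check using $B(f_1,f_1)=0$, $B(f_1,f_2)=1$ gives $B(f_1, f_j') = B(f_2, f_j') = 0$. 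For $B_{22}$: the replacement $f_2 \mapsto f_2 - \tfrac{1}{2}(B_{22} - (B_{22})_0) f_1$ would change $B_{22}$ to its constant term $(B_{22})_0$ — but $2$ need not be invertible when $p = 2$. Instead I use $f_2 \mapsto f_2 + c f_1$, which sends $B_{22} \mapsto B_{22} + c + \overline{c}$; since $B_{22}$ is Hermitian, $B_{22} = a_1 + \sum_{m>0}(c_m t^m + \overline{c_m} t^{-m})$ for some constant term $a_1 \in \Z_p$ and coefficients $c_m$, so choosing $c = -\sum_{m>0} c_m t^m$ makes $B_{22} = a_1$ constant. This is compatible with, and can be performed simultaneously with, the off-block clean-up.

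After this congruence, $B$ is block-diagonal with leading block $B(a_1) = \begin{pmatrix} 0 & 1 \\ 1 & a_1 \end{pmatrix}$ ($a_1 \in \Z_p \subset R$ constant) and a complementary $(\ell-2)\times(\ell-2)$ block $\tilde B$. The block $B(a_1)$ has determinant $-1$, a unit, so it is invertible and Hermitian, whence $\tilde B$ is invertible and Hermitian as well. Applying the inductive hypothesis to $\tilde B$ produces a further congruence splitting $\tilde B$ into blocks $B(a_2), \dots, B(a_l)$ with constant $a_i$ and a final anisotropic block $B'$; assembling the block-diagonal change of basis (which is again given by an invertible matrix over $R$) yields the claimed form. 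The main obstacle is the unit-ness claim in the second paragraph — that after moving the isotropic vector to a basis element, invertibility of $B$ forces one of the remaining first-row entries to be a unit rather than merely nonzero. This is where perfectness (invertibility of $B$, equivalently $\Phi$ being an isomorphism) is used essentially, and getting the argument right — via unimodularity of $\Phi(f_1)$ as an element of the free dual $S^*$, or equivalently via the fact that $f_1$ being a basis vector forces its image under the isomorphism $\Phi$ to be part of a basis — is the crux; everything else is routine hyperbolic-plane bookkeeping.
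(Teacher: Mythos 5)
Your route is essentially the paper's: split off hyperbolic planes $B(a_i)$ one at a time, use invertibility of $B$ to make the pairing of the isotropic basis vector with a second basis vector equal to $1$, clear the off-block entries, and use Hermiticity to replace the remaining diagonal entry by its constant term via $f_2 \mapsto f_2 + c f_1$ with $c = -a_+$. All of that bookkeeping, and the induction on the complementary block, is correct.

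The one step that does not hold as written is the passage from primitivity of the first row to a unit entry. Your two arguments do correctly show that the row $(B_{12}, \dots, B_{1\ell})$ is unimodular (its entries generate the unit ideal, equivalently their gcd is a unit), but over $R = \Z_p[t^\pm]$ a unimodular row need not contain any unit entry -- for instance $(1+t,\ 1+t+t^2)$ over $\Z_2[t^\pm]$ is unimodular while both entries are non-units -- so ``multiplying the relevant dual basis vector by the inverse unit and permuting'' is not available. What is needed, and what the paper does, is a congruence acting on coordinates $2, \dots, \ell$ that puts the row into Smith normal form $(v_0, 0, \dots, 0)$; then $v_0$ is the gcd of the entries, hence a unit by your determinant-expansion argument, and a further rescaling gives $B_{12} = 1$. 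With that substitution your cleanup formula and the inductive step go through unchanged. A smaller point: your parenthetical that $f_1$ is isotropic is argued in the wrong direction; the correct statement is that $0 = B(v,v) = \overline{r} r\, B(f_1, f_1)$ with $\overline{r} r \neq 0$ in the integral domain $R$, hence $B(f_1, f_1) = 0$.
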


See also Ref.~\onlinecite{Knus_1991} \S7 Theorem 1.5.1. 

\begin{proof}
	If $B$ is already anisotropic, we are done.
	If not, by Lemma \ref{lemma:cyclic-submodule-generator},
	let $e_1, \dots, e_r$ be a basis for $S$ where $e_1$ is isotropic. Then we have
	\begin{align}
		B = \begin{pmatrix}
			0 & V
			\\
			V^\dagger & C
		\end{pmatrix},
	\end{align}
	where $V$ is a row vector in $R^{\ell-1}$. Let $U'$ be a matrix that puts $V$ into Smith normal
	form, $V U' = \begin{pmatrix} v_0 & 0 &  \dots  & 0 \end{pmatrix}$.
	Then we have
	\begin{align}
		B' =
		\begin{pmatrix}
			1 & \\ & U'^\dagger
		\end{pmatrix}
		B
		\begin{pmatrix}
			1 & \\ & U'
		\end{pmatrix}
		&=
		\begin{pmatrix}
			0 & v_0 & 0
			\\
			\overline{v_0} & a & W
			\\
			0 &  W^\dagger & \tilde B
		\end{pmatrix} \text{,}
	\end{align}
	for $W$ a row vector in $R^{\ell - 2}$.
	Since $v_0$ divides the determinant of  $B'$, invertibility requires
	that $v_0$ is a unit. By another congruence, we can assume $v_{0} = 1$.
	Next, we can eliminate $W$ by the congruence
	\begin{align}
		B'' =
		\begin{pmatrix}
			1 & &
			\\
			& 1 &
			\\
		- W^\dagger & & 1
		\end{pmatrix}
		B'
		\begin{pmatrix}
			1 & & - W
			\\
			& 1 &
			\\
		  &  & 1
		\end{pmatrix}
		=
		\begin{pmatrix}
			& 1 & \\
			1 & a &
			\\
			&&& \tilde B
		\end{pmatrix}
	\end{align}
	Now, we can express $a = a_0 + a_+ + a_-$, where $a_0$ is constant, $a_+$ contains terms of strictly positive degree
	and $a_- = \overline{a_{+}}$ by Hermiticity.
	By a further congruence, we have 
	\begin{align}
		B''' =
		\begin{pmatrix}
			1 & \\ - a_- & 1
			\\
			&&& 1
		\end{pmatrix}
		B''
		\begin{pmatrix}
			1 & - a_+
			\\
			& 1
			\\
			&&&1
		\end{pmatrix}
		&=
		\begin{pmatrix}
			& 1
			\\
			1 & a_0
			\\
			&&& \tilde B
		\end{pmatrix}.
	\end{align}
	Continuing inductively, the proof is complete.
\end{proof}

\begin{corollary}\label{cor:rank2plusanisotropic}
A perfect p-theory of prime fusion order is isomorphic to a stack of rank 2 p-theories of the form $(R^2, B(a))$ for $a\in \Z_p$, and a perfect p-theory $(S',B')$ where $B'$ is anisotropic.  Each $(R^2, B(a))$ summand is decoupled layers.  
\end{corollary}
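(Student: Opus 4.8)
The plan is to assemble Corollary~\ref{cor:rank2plusanisotropic} directly from the two preceding lemmas, with the only real work being the translation between the matrix-theoretic and the p-theory-theoretic language. Let $(S,B)$ be a perfect p-theory of prime fusion order $p$. First I would invoke Lemma~\ref{lem:easy-part}: since $B$ is perfect (hence p-modular), $S$ is free over $R=\Z_p[t^\pm]$, and in any basis $e_1,\dots,e_\ell$ the Gram matrix $B=(B(e_i,e_j))\in M_\ell(R)$ is invertible and Hermitian. Then I would apply Lemma~\ref{lemma:metabolic-anisotropic-decomposition} to this matrix to obtain an invertible $U\in M_\ell(R)$ with $U^\dagger B U$ block-diagonal, the blocks being rank-$2$ pieces $B(a_i)$ with each $a_i$ a constant of $R$ --- hence an element of $\Z_p$ --- together with a single anisotropic block $B'$ of some rank $\ell'$.

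Next I would reinterpret this matrix statement in the language of Section~\ref{subsec:definitions}. Changing basis by $U$ is an isomorphism of p-theories $(S,B)\cong(R^\ell,U^\dagger B U)$, and a block-diagonal Gram matrix exhibits the target as an orthogonal sum of forms in the sense of Definition~\ref{defn:ortho-sum}, equivalently a stack of p-theories $(S,B)\cong\bigominus_i (R^2,B(a_i))\,\bigominus\,(S',B')$ where $(S',B')=(R^{\ell'},B')$. I must check that each summand is genuinely a p-theory: each underlying module is finitely generated free, hence torsion-free; each attached bilinear form is symmetric and translation-invariant by construction, and local because any form given by a matrix over $R=\Z_p[t^\pm]$ automatically satisfies the locality condition (each $B(e_i,e_j)$ is a finite Laurent polynomial); and the triviality-of-stabilizer condition in the definition of a fusion theory follows from p-modularity, which holds for each summand since each Gram matrix is invertible. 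Perfectness of every summand --- in particular of $(S',B')$ --- then follows from Proposition~\ref{prop:nd-respects-stacking} (equivalently Proposition~\ref{prop:pmod-perfect-respect-stacking}), since perfectness respects stacking and $(S,B)$ is perfect. That $B'$ is anisotropic is part of the conclusion of Lemma~\ref{lemma:metabolic-anisotropic-decomposition}.

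Finally I would verify that each rank-$2$ summand $(R^2,B(a))$ with $a\in\Z_p$ is decoupled layers, using the criterion of Proposition~\ref{prop:ptheory-dcl} with the standard basis $e_1,e_2$ of $R^2$. Property~\#1 holds because the first syzygy module $\{v\in\Ztpm^2\mid v_1e_1+v_2e_2=0\}$ equals $(p)\Ztpm\oplus(p)\Ztpm$, generated by the constant vectors $(p,0)$ and $(0,p)$; property~\#2 holds because $B(a)$ has constant entries, so $B(e_i,e_j)$ is constant for all $i,j$. (Equivalently and more explicitly, one can identify $(R^2,B(a))\cong(\mathcal{A}[t^\pm],b_\mathcal{A}^\pm)$, where $\mathcal{A}=(\Z_p)^2$ carries the $\Z_p$-valued symmetric bilinear form with Gram matrix $B(a)$, which is the definition of being decoupled layers.)

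I do not expect a serious obstacle here: the corollary is essentially bookkeeping built on Lemma~\ref{lemma:metabolic-anisotropic-decomposition}. The one step that requires genuine care is the passage from the matrix statements (congruence, block diagonality) to the module statements (isomorphism, stacking of p-theories), and in particular confirming that each summand inherits the complete list of p-theory axioms --- above all locality and the stabilizer-triviality condition --- so that the stacking machinery of Proposition~\ref{prop:nd-respects-stacking} legitimately applies to conclude perfectness of the anisotropic remainder $(S',B')$.
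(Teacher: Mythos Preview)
Your proposal is correct and matches the paper's intended (but unwritten) argument: the corollary is stated without proof immediately after Lemma~\ref{lemma:metabolic-anisotropic-decomposition}, and your write-up supplies exactly the bookkeeping the paper leaves implicit---invoking Lemma~\ref{lem:easy-part} to realize $B$ as an invertible Hermitian matrix, applying Lemma~\ref{lemma:metabolic-anisotropic-decomposition}, reinterpreting the block decomposition as a stack, using Proposition~\ref{prop:pmod-perfect-respect-stacking} for perfectness of the anisotropic remainder, and checking Proposition~\ref{prop:ptheory-dcl} for the $(R^2,B(a))$ summands.
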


It remains to treat the case where $(S, B)$ is anisotropic.  The following simple result on degrees of Laurent polynomials will be useful; we omit the straightforward proof.
\begin{lemma}
If a Laurent polynomial $f \in R$ is a product of nonzero elements $f = f_1 \cdots f_m$ for $f_i \in R$, then $\deg_+ f = \sum_i \deg_+ f_i$ and $\deg_- f = \sum_i \deg_- f_i$.  \label{lem:degpm}
\end{lemma}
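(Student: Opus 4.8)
The plan is to reduce to the two-factor case and then induct on $m$. The essential point is that $R = \Z_p[t^\pm]$ is an integral domain, since $\Z_p$ is a field (here $p$ is prime); in particular the product of two nonzero elements of $\Z_p$ is again nonzero. This is exactly what prevents leading (or trailing) coefficients from cancelling, and over a general coefficient ring the statement would fail.

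First I would treat the case $m = 2$. Write $f_1 = \sum_i a_i t^i$ and $f_2 = \sum_j b_j t^j$ with $a_i, b_j \in \Z_p$, and set $d_1 = \deg_+ f_1$, $d_2 = \deg_+ f_2$, so $a_{d_1} \neq 0$, $b_{d_2} \neq 0$, while $a_i = 0$ for $i > d_1$ and $b_j = 0$ for $j > d_2$. The coefficient of $t^{d_1 + d_2}$ in $f_1 f_2$ is $\sum_{i + j = d_1 + d_2} a_i b_j$; since $i \leq d_1$ and $j \leq d_2$ force $i = d_1$ and $j = d_2$, this coefficient equals $a_{d_1} b_{d_2}$, which is nonzero because $\Z_p$ has no zero divisors. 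For any $k > d_1 + d_2$, every term $a_i b_j$ with $i + j = k$ has $i > d_1$ or $j > d_2$ and hence vanishes. Therefore $\deg_+(f_1 f_2) = d_1 + d_2$. The argument for $\deg_-$ is symmetric, using the lowest-degree coefficients $a_{\deg_- f_1}$ and $b_{\deg_- f_2}$.

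Then I would induct on $m$, the base cases $m = 1$ (trivial) and $m = 2$ being done. For $m \geq 3$, write $f = (f_1 \cdots f_{m-1}) f_m$. The product $f_1 \cdots f_{m-1}$ is nonzero since $R$ is an integral domain, so by the inductive hypothesis $\deg_+(f_1 \cdots f_{m-1}) = \sum_{i=1}^{m-1} \deg_+ f_i$; applying the two-factor case then gives $\deg_+ f = \sum_{i=1}^{m} \deg_+ f_i$, and the same argument with $\deg_-$ in place of $\deg_+$ completes the induction.

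I do not expect any genuine obstacle here — the computation is entirely routine. The only point that must not be overlooked is the use of the fact that $p$ is prime, which makes $R$ an integral domain and is exactly the hypothesis ensuring that the relevant products of coefficients do not vanish.
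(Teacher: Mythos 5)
Your proof is correct: the leading/trailing coefficient argument for two factors, using that $\Z_p$ is a field so $R=\Z_p[t^\pm]$ is a domain, followed by induction on $m$, is exactly the standard argument. The paper omits the proof of this lemma as ``straightforward,'' so there is no alternative route to compare against; your write-up supplies precisely the expected details, including the one point worth flagging, namely that primality of $p$ is what prevents cancellation of the extreme coefficients.
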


We recall Definition~\ref{defn:degrees}; for a Laurent polynomial $f \in R$, the total degree is
the pair $\delta f = (\deg_- f, \deg_+ f)$, and the total degree of the zero polynomial is defined to be
 $\delta 0 = (\infty, -\infty)$. Now we equip the set of total degrees with some partial orderings. 

	\begin{defn}
			For $f,g \in R$ we say 
			\[\begin{cases} \delta g <_+ \delta f \\ \delta g <_- \delta f \\ \delta g <_\pm \delta
				f \\ \delta g
				\leq \delta f \end{cases} \quad \quad
			\text{provided} \quad \quad
			\begin{cases}
				\text{$\deg_+ g < \deg_+ f$ and $\deg_- g \geq \deg_- f$}
				\\
				\text{$\deg_+ g \leq \deg_+ f$ and $\deg_- g > \deg_- f$}
				\\
				\text{$\deg_+ g < \deg_+ f$ and $\deg_- g > \deg_- f$}
				\\
				\text{$\deg_+ g \leq \deg_+ f$ and $\deg_- g \geq \deg_- f$}
			\end{cases}.\]
			  \label{defn:partial-orders}
		\end{defn}

It is straightforward to check that $<_+$, $<_-$ and $<_{\pm}$ are strict partial orders, that $\leq$ is the non-strict partial order associated with $<_\pm$, and that $\delta g <_{\pm} \delta f$ if and only if $\delta g <_+ \delta f$ and $\delta g <_- \delta f$.  Moreover, $\delta g <_+ \delta f$ or $\delta g <_- \delta f$ implies $\delta g \leq \delta f$.  Another useful fact is that $\delta f <_+ \delta g$ if and only if $\delta \overline{f} <_- \delta \overline{g}$.  It is also clear that $<_\pm$ gives a total order on the set of total degrees of Hermitian polynomials, because we have $\deg_+ f = - \deg_- f$ for nonzero such polynomials.  We adapt a definition of \cite{Djokovic_1976}:

\begin{defn}
	A matrix $B \in M_\ell(R)$ has a {\bf dominant diagonal} if its diagonal entries are nonzero,
	$\delta B_{ij} <_+ \delta B_{ii}$ for all $j > i$,
	and $\delta B_{ij} <_- \delta B_{ii}$ for all $j < i$.
\end{defn}

The condition of nonzero diagonal entries is superfluous except when $\ell = 1$, in which case the other two conditions are vacuous.

\begin{lemma}
	\label{lemma:dd-det}
	Suppose $B \in M_\ell(R)$ has a dominant diagonal. Then $\delta \det B
	= \delta \prod_i B_{ii}$. 
\end{lemma}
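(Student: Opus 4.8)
The plan is to expand $\det B$ by the Leibniz formula and show that the diagonal term $\prod_i B_{ii}$ strictly dominates every other term, in the sense that each off-diagonal term has strictly smaller $\deg_+$ and strictly larger $\deg_-$. Write $\det B = \sum_{\pi \in S_\ell} \mathrm{sgn}(\pi) \prod_{i=1}^\ell B_{i,\pi(i)}$, so that the identity permutation contributes $\prod_i B_{ii}$. Since the diagonal entries are nonzero and $R = \Z_p[t^\pm]$ is an integral domain, Lemma~\ref{lem:degpm} gives $\deg_+ \prod_i B_{ii} = \sum_i \deg_+ B_{ii}$ and $\deg_- \prod_i B_{ii} = \sum_i \deg_- B_{ii}$; in particular $\prod_i B_{ii} \neq 0$.

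Next I would fix a permutation $\pi \neq \mathrm{id}$ and bound $\delta \prod_i B_{i,\pi(i)}$. If $B_{i,\pi(i)} = 0$ for some $i$ the term vanishes, and $\delta 0 = (\infty, -\infty)$ trivially satisfies the strict inequalities below. Otherwise all factors are nonzero, so Lemma~\ref{lem:degpm} applies to $\prod_i B_{i,\pi(i)}$. Splitting the indices according to whether $\pi(i) = i$, $\pi(i) > i$, or $\pi(i) < i$ and using Definition~\ref{defn:partial-orders}: when $\pi(i) > i$ the dominant-diagonal condition $\delta B_{i,\pi(i)} <_+ \delta B_{ii}$ gives $\deg_+ B_{i,\pi(i)} < \deg_+ B_{ii}$ and $\deg_- B_{i,\pi(i)} \geq \deg_- B_{ii}$; when $\pi(i) < i$ the condition $\delta B_{i,\pi(i)} <_- \delta B_{ii}$ gives $\deg_+ B_{i,\pi(i)} \leq \deg_+ B_{ii}$ and $\deg_- B_{i,\pi(i)} > \deg_- B_{ii}$; and for $\pi(i) = i$ both degrees agree. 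Since $\pi \neq \mathrm{id}$ and $\sum_i (\pi(i) - i) = 0$, there is at least one index with $\pi(i) > i$ and at least one with $\pi(i) < i$. Summing the per-index (in)equalities and applying Lemma~\ref{lem:degpm} therefore yields $\deg_+ \prod_i B_{i,\pi(i)} < \sum_i \deg_+ B_{ii} = \deg_+ \prod_i B_{ii}$ and $\deg_- \prod_i B_{i,\pi(i)} > \sum_i \deg_- B_{ii} = \deg_- \prod_i B_{ii}$.

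Finally, in the Leibniz sum only the identity term reaches upper degree $\sum_i \deg_+ B_{ii}$, so the coefficient of $t^{\sum_i \deg_+ B_{ii}}$ in $\det B$ is the product of the leading coefficients of the $B_{ii}$, which is nonzero because $R$ is a domain; hence $\deg_+ \det B = \deg_+ \prod_i B_{ii}$. The identical argument for the lowest-degree coefficient gives $\deg_- \det B = \deg_- \prod_i B_{ii}$, so $\delta \det B = \delta \prod_i B_{ii}$. I expect no real obstacle here; the only delicate points are the bookkeeping for vanishing products and the observation that a non-identity permutation has both an ascent and a descent, and beyond that the argument is a direct Laurent-polynomial adaptation of Djokovi\'{c}'s proof in \cite{Djokovic_1976}.
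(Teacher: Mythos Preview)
Your proof is correct and follows essentially the same approach as the paper: expand $\det B$ via the Leibniz formula, observe that a non-identity permutation has both an ascent and a descent so the dominant-diagonal inequalities give $\delta B_\pi <_\pm \delta B_{\mathrm{id}}$ via Lemma~\ref{lem:degpm}, and conclude that the diagonal term controls the total degree. The paper's writeup is slightly terser (it does not separately treat vanishing off-diagonal products or spell out the leading-coefficient argument), but the logic is the same.
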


\begin{proof}
    Recall that the determinant of a matrix can be given as the signed sum over
    permutations $\sigma \in S_\ell$ of products of the form $B_\sigma := \prod_{i=1}^n B_{i\sigma(i)}$.  Since $B$ has a dominant diagonal, for any $\sigma \in S_\ell$, for all $i$ we have $\delta B_{i \sigma(i)} \leq \delta B_{i i}$, and thus $\deg_+ B_{i \sigma(i)} \leq \deg_+ B_{i i}$ and $\deg_- B_{i \sigma(i)} \geq \deg_- B_{i i}$.  Assume $\sigma$ is not the identity permutation, so then there is an $i_1$ such that $i_1 < \sigma(i_1)$ and another $i_2$ such that $i_2 > \sigma(i_2)$.  We have $\delta B_{i_1 \sigma(i_1)} <_+ \delta B_{i_1 i_1}$ and thus $\deg_+ B_{i_1 \sigma(i_1)} < \deg_+ B_{i_1 i_1}$.  Similarly, $\delta B_{i_2 \sigma(i_2)} <_- \delta B_{i_2 i_2}$ and thus $\deg_- B_{i_2 \sigma(i_2)} > \deg_- B_{i_2 i_2}$.  Using Lemma~\ref{lem:degpm}, this implies  $\delta B_\sigma <_{\pm} \delta B_{\text{id}}$.  The claim	follows, because if $\delta g <_\pm \delta f$, then $\delta (f + g) = \delta f$. 
\end{proof}

Hermitian matrices with dominant diagonals allow for a sort of division with remainder:

\begin{lemma}
	Suppose $B \in M_\ell(R)$ is Hermitian and has a dominant diagonal. Then for any vector $v \in R^\ell$ there exist $q, w \in R^\ell$ such that $v = B q + w$ and $\delta w_i <_+ \delta B_{ii}$ for all $1 \leq i \leq \ell$.   \label{lem:dd-div-with-rem}
\end{lemma}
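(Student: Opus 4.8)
The plan is to prove the lemma by a multivariate division-with-remainder, using the dominant diagonal of $B$ to single out, in each row, a pivot that controls all the other entries of that row. First I would record the one-variable version in $R = \Z_p[t^\pm]$: for nonzero $g$ and arbitrary $f$ there are $h, r \in R$ with $f = gh + r$ and $\delta r <_+ \delta g$. This follows by clearing negative powers (multiply by $t^{-\deg_- g}$), performing ordinary polynomial division in $\Z_p[t]$, and observing that the "reduced" remainders are exactly the $\Z_p$-span of $t^{\deg_- g}, \dots, t^{\deg_+ g - 1}$, hence a complete set of coset representatives for $(g)$. I would also extract the basic consequence of the dominant-diagonal and Hermiticity hypotheses: for $i \ne j$ one has $\deg_+ B_{ji} \le \deg_+ B_{ii}$ (strict when $j < i$) and $\deg_- B_{ji} \ge \deg_- B_{ii}$ (strict when $j > i$); that is, off the diagonal, column $i$ never sticks out past the pivot $B_{ii}$ at the top, and is pushed strictly inward on at least one side when crossing the diagonal.

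The reduction then proceeds iteratively. Start with $q := 0$, $w := v$. As long as some coordinate $w_i$ violates $\delta w_i <_+ \delta B_{ii}$ — either $\deg_+ w_i \ge \deg_+ B_{ii}$ or $\deg_- w_i < \deg_- B_{ii}$ — select a suitably extremal offending coordinate, subtract from $w$ a monomial multiple $c t^m$ of the $i$-th column of $B$ chosen to kill the offending top (resp. bottom) term of $w_i$, and record $c t^m$ in the $i$-th slot of $q$. By the inequalities above, the perturbation $-c t^m B_{ji}$ inflicted on another coordinate $w_j$ has top (resp. bottom) degree no larger (resp. no smaller) than the term just removed from $w_i$, so that a carefully chosen well-founded complexity measure built from the top- and bottom-excess degrees of the coordinates strictly decreases at each step. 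When the loop terminates, $w$ and $q$ are as required.

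The main obstacle is precisely the choice of this complexity measure and the verification that the dominant-diagonal bookkeeping forces it down: one must ensure that cancelling the top term of $w_i$ does not create an equally bad or worse violation at some coordinate $j > i$ (where the bound on $B_{ji}$ passes through $B_{jj}$ rather than $B_{ii}$ and Hermiticity must be invoked to relate the two), and one must control the lower degrees at the same time, since the target condition $<_+$ is the conjunction of an upper and a lower constraint. As an alternative that isolates the same core difficulty, one can argue non-constructively: Lemma~\ref{lemma:dd-det} gives $\det B \ne 0$ with $\deg \det B = \sum_i \deg B_{ii}$, whence $\lvert R^\ell / B R^\ell \rvert = p^{\sum_i \deg B_{ii}}$, which is exactly the number of reduced vectors $w$; so the map sending a reduced $w$ to its class in $R^\ell / B R^\ell$ is a map between finite sets of equal cardinality, and proving it injective — that a nonzero reduced vector cannot lie in $B R^\ell$, via a leading-term estimate using the dominant diagonal — would yield surjectivity, i.e. the existence of the decomposition $v = Bq + w$.
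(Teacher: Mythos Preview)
Your iterative column-reduction is the paper's strategy, and the paper resolves precisely the termination obstacle you flag by replacing the search for a single complexity measure with a nested induction. Writing $d_i = \deg_+ B_{ii}$ (so $\deg_- B_{ii} = -d_i$ by Hermiticity) and $V(c) = \{x \in R^\ell : \deg_+ x_i < d_i + c,\ \deg_- x_i \ge -d_i - c\}$, the paper shows $V(c+1) \subset \im B + V(c)$ for each $c \ge 0$; since $V(0)=V$ and every vector lies in some $V(c)$, this gives $R^\ell = \im B + V$. The inner step --- proving $t^{d_i+c}e_i \in \im B + V(c)$ --- is itself an induction on $i$, descending from $i=\ell$: the expression $B(t^c e_{i_0})$ lies in $\im B$, its terms with $j<i_0$ lie in $V(c)$ by the column bounds you wrote, and its terms with $j>i_0$ land in $\im B + V(c)$ by the inductive hypothesis on $i$. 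The mirror argument for the bottom degrees ascends from $i=1$. This double induction is the missing organizational device in your first approach.

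Your counting alternative is a genuinely different route, and it is correct --- in fact it closes more easily than you suggest. You have $\lvert R^\ell/BR^\ell\rvert = p^{\deg\det B} = p^{\sum_i 2d_i}$ by Smith normal form and Lemma~\ref{lemma:dd-det}, matching the $\prod_i p^{2d_i}$ reduced vectors. For injectivity, suppose $Bq$ is reduced with $q\neq 0$. Set $c = \max_j \deg_+ q_j$ and let $i^*$ be the smallest index attaining it; in $(Bq)_{i^*} = \sum_j B_{i^*j}q_j$ the diagonal term has top degree $d_{i^*}+c$, while for $j>i^*$ one has $\deg_+ B_{i^*j}<d_{i^*}$ and for $j<i^*$ one has $\deg_+ q_j<c$, so every other term has strictly smaller top degree. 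Thus $\deg_+(Bq)_{i^*}=d_{i^*}+c$, and reducedness forces $c<0$. The symmetric argument on lower degrees (taking $c'=\min_j\deg_- q_j$ and the largest index attaining it) forces $c'\ge 0$, contradicting $c'\le c$. This is shorter than the paper's proof; the paper's approach buys you an explicit algorithm producing $q$ and $w$.
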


See Lemma~2 of Ref.~\onlinecite{Djokovic_1976}.

\begin{proof}
Let $e_1, \dots, e_\ell$ be the standard basis for $R^\ell$, so that, in components, $(e_i)_j = \delta_{ij}$, and $B(e_i) = \sum_{j=1}^\ell B_{ji} e_j$.  Write $d_i = \deg_+ B_{ii}$. By Hermiticity, $\deg_- B_{ii} = - d_i$.  Define 
\begin{align}
V = \{ x \in R^\ell \mid \delta x_i <_+ \delta B_{ii}  \text{ for all } 1 \leq i \leq \ell \} \nonumber \text{,}
\end{align}
and, for any nonnegative $c \in \Z$,
\begin{align}
V(c) = \{ x \in R^\ell \mid \deg_+ x_i < d_i + c \text{ and } \deg_- x_i \geq -d_i - c  \text{ for all } 1 \leq i \leq \ell \} \nonumber \text{.}
\end{align}
These are  $\Z_p$-submodules of $R^\ell$, and $V = V(0)$.  We will show that $\im B + V = R^\ell$ as $\Z_p$-modules, where in this expression we view $B$ as a $\Z_p$-linear map $B: R^\ell \to R^\ell$.  Because every $x \in R^\ell$ is contained in some $V(c)$ it suffices to show that $V(c) \subset \im B + V$ for all $c \geq 0$. 

We proceed by induction.  Obviously we have $V(0) \subset \im B + V$.  We assume that $V(c) \subset \im B + V$, which immediately implies $\im B + V(c) \subset \im B + V$.  We will show that  $V(c+1) \subset \im B + V(c) \subset \im B + V$.  It is enough to show that $t^{d_i + c} e_i \in \im B + V(c)$ and $t^{-d_i - c - 1} e_i \in \im B + V(c)$ for all $i$, because other $\Z_p$-basis elements for $V(c+1)$ of the form $t^m e_i$ lie in $V(c)$.  

We show $t^{d_i + c} e_i \in \im B + V(c)$ using an inductive argument.  
We consider the expression
\begin{align}
B(t^c e_\ell) = t^c B_{\ell \ell} e_\ell + \sum_{i < \ell} t^c B_{i \ell} e_i \text{.} \nonumber
\end{align}
The left-hand side lies in $\im B$, while the second term on the right-hand side lies in $V(c)$ because $\deg_+ t^c B_{i \ell} < d_i + c$ and $\deg_- t^c B_{i \ell} \geq -d_i + c \geq -d_i-c$.  Therefore $t^c B_{\ell \ell} e_\ell \in \im B + V(c)$.   
We can write $t^c B_{\ell \ell}e_\ell = \sum_{m=-d_\ell + c}^{d_\ell + c-1}a_mt^me_\ell + a_{d_\ell + c}t^{d_\ell + c}e_\ell$ for $ a_{d_\ell + c}\neq 0$ (since $d_\ell = \deg_+ B_{\ell\ell}$). The first term is in $V(c)$ as shown above, and so $t^{d_\ell + c}e_\ell \in \im B + V(c)$. 
Now we suppose that for all $i > i_0$, $t^{d_i + c} e_i \in \im B + V(c)$.  We consider
\begin{align}
B(t^c e_{i_0} ) = t^c B_{i_0 i_0} e_{i_0} + \sum_{i < i_0} t^c B_{i i_0} e_i + \sum_{i > i_0} t^c B_{i i_0} e_i \text{.} \label{eqn:ni1}
\end{align}
The second term on the right-hand side lies in $V(c)$ because $\deg_+ t^c B_{i i_0} < d_i + c$ and $\deg_- t^c B_{i i_0} \geq -d_i + c \geq -d_i - c$.  For the third term, the dominant diagonal property gives $\deg_+ t^c B_{i i_0} \leq d_i + c$ and $\deg_- t^c B_{i i_0} > -d_i + c \geq -d_i - c$.  By the inductive assumption, this term lies in $\im B + V(c)$.  Therefore Equation~\ref{eqn:ni1} implies
$t^c B_{i_0 i_0} e_{i_0} \in \im B + V(c)$, which in turn gives $t^{d_{i_0} + c} e_{i_0} \in \im B + V(c)$.

A very similar argument shows that $t^{-d_i - c - 1} e_i \in \im B + V(c)$ for all $i$.  The argument, which we provide below, is essentially the same except that it starts at $i=1$ (instead of at $i = \ell$) and proceeds inductively to larger values of $i$ (instead of moving to smaller values).  To proceed, we consider the expression
\begin{align}
B(t^{-c-1} e_1) = t^{-c-1} B_{1 1} e_1 + \sum_{i > 1} t^{-c-1} B_{i 1} e_i \text{.} \nonumber
\end{align}
The left-hand side lies in $\im B$, while the second term on the right-hand side lies in $V(c)$ because $\deg_+ t^{-c-1} B_{i 1} \leq d_i -c -1 < d_i + c$ and $\deg_- t^{-c-1} B_{i 1} > -d_i - c - 1$, so $\deg_- t^{-c-1} B_{i 1} \geq -d_i-c$.  Therefore $t^{-c-1} B_{1 1} e_1 \in \im B + V(c)$. As before, this implies $t^{-d_1 - c -1} e_1 \in \im B + V(c)$.  Now we suppose that for all $i < i_0$, $t^{-d_i - c - 1} e_i \in \im B + V(c)$.  We consider
\begin{align}
B(t^{-c-1} e_{i_0} ) = t^{-c-1} B_{i_0 i_0} e_{i_0} + \sum_{i < i_0} t^{-c-1} B_{i i_0} e_i + \sum_{i > i_0} t^{-c-1} B_{i i_0} e_i \text{.} \label{eqn:ni2}
\end{align}
The third term on the right-hand side lies in $V(c)$ because $\deg_+ t^{-c-1} B_{i i_0} \leq d_i - c - 1 < d_i + c$ and
$\deg_- t^{-c-1} B_{i i_0} > - d_i - c - 1$ so $\deg_- t^{-c-1} B_{i i_0} \geq -d_i - c$.  For the second term we have $\deg_+ t^{-c-1} B_{i i_0} < d_i - c - 1 < d_i + c$ and $\deg_- t^{-c-1} B_{i i_0} \geq -d_i - c - 1$.  By the inductive assumption, this term lies in $\im B + V(c)$.  Therefore Equation~\ref{eqn:ni2} implies $t^{-c-1} B_{i_0 i_0} e_{i_0} \in \im B + V(c)$, which in turn gives $t^{-d_{i_0} - c - 1} e_{i_0} \in \im B + V(c)$.
\end{proof}

\begin{theorem}[Dominant Diagonal]
	\label{theorem:dominant-diagonal}
	An anisotropic Hermitian matrix over $R$ is congruent to a matrix with a dominant diagonal. 
\end{theorem}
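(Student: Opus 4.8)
The plan is to prove this by induction on $\ell$, mimicking the classical argument of Djokovi\'c \cite{Djokovic_1976} but adapted to the Laurent polynomial ring $R = \Z_p[t^\pm]$ equipped with the involution $\bar{t} = t^{-1}$. For $\ell = 1$ there is nothing to prove, since a $1 \times 1$ anisotropic Hermitian matrix $(f)$ with $f \neq 0$ has a dominant diagonal vacuously. So assume $\ell \geq 2$ and that the result holds for all smaller sizes.

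First I would choose among all vectors $v \in R^\ell$ with $v \neq 0$ one for which the Hermitian polynomial $B(v,v) = v^\dagger B v$ has total degree $\delta B(v,v)$ minimal with respect to the total order $<_\pm$ on total degrees of Hermitian polynomials (such a minimum exists because $\deg_- B(v,v) = -\deg_+ B(v,v) \geq $ some bound determined by the degrees of the entries of $B$, and the set of achievable values is bounded below; one should argue this carefully using Lemma~\ref{lem:degpm} and anisotropy to rule out $B(v,v) = 0$). By Lemma~\ref{lemma:cyclic-submodule-generator} there is a basis $e_1, \dots, e_\ell$ of $R^\ell$ with $v = r e_1$ for some $r \in R$; replacing $v$ by $e_1$ only lowers the degree of $B(v,v)$ by Lemma~\ref{lem:degpm}, so minimality forces $r$ to be a unit, and we may assume $v = e_1$ is a basis vector. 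Changing basis by the corresponding congruence, we arrange that $B_{11}$ is the entry of minimal total degree among all $B(w,w)$, $w \neq 0$.

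Next I would use the division-with-remainder idea: for each $j \geq 2$, write (via a congruence that subtracts an $R$-multiple of $e_1$ from $e_j$) the off-diagonal entry $B_{1j}$ reduced so that $\delta B_{1j} <_+ \delta B_{11}$, and correspondingly $\delta B_{j1} <_- \delta B_{11}$ by Hermiticity. The key point — and this is the main obstacle — is to show that after this reduction step one can continue: the reduction is performed by replacing $e_j \mapsto e_j - q_j e_1$ where $q_j$ is chosen so the $\deg_+$ part of $B_{1j}$ drops below $\deg_+ B_{11}$. One must check that this can be done while genuinely achieving $\delta B_{1j} <_+ \delta B_{11}$ (not merely $\leq$), and that the resulting lower-right $(\ell-1)\times(\ell-1)$ block $B'$ obtained after clearing the first row and column is still anisotropic and still has its diagonal entries of degree $\geq \delta B_{11}$ in the appropriate sense — the minimality of $B_{11}$ is what guarantees this and is precisely the subtle invariant that must be preserved. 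Concretely, after the congruence $e_j \mapsto e_j - B_{11}^{-1}(\dots)$ type moves are unavailable since $B_{11}$ need not be a unit; instead one argues that the remainder $B_{1j}$ can be made to satisfy $\delta B_{1j} <_+ \delta B_{11}$ by a polynomial division adapted to the two-sided degree, using that $B_{11}$ is Hermitian with $\deg_+ B_{11} = -\deg_- B_{11}$.

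Finally, once the first row and column are reduced so that $\delta B_{1j} <_+ \delta B_{11}$ for $j > 1$ and $\delta B_{j1} <_- \delta B_{11}$ for $j > 1$, the matrix is not yet block-diagonal, but the lower-right block $B'$ is an anisotropic Hermitian matrix of size $\ell - 1$, to which the inductive hypothesis applies, yielding a congruence $U'^\dagger B' U'$ with a dominant diagonal. Extending $U'$ by $1$ in the first slot gives a congruence of the whole matrix; one checks that this congruence does not disturb the dominant-diagonal inequalities already established for the first row and column (the off-diagonal entries $B_{1j}$ for $j > 1$ get multiplied by entries of $U'$, which by Lemma~\ref{lem:degpm} can only be controlled if we additionally track that the reduction in the first step produced entries of total degree bounded appropriately — here one should either interleave the reduction with the inductive step, or first apply induction to $B'$ and then reduce the first row using the resulting dominant-diagonal structure, which is the cleaner order). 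I would carry out the reduction of the first row/column \emph{after} diagonalizing $B'$, so that the dominant diagonal of $B'$ is already in place and the first-row reduction via division-with-remainder (in the spirit of Lemma~\ref{lem:dd-div-with-rem}) can be performed against the diagonal entries $B_{11}, B_{22}, \dots, B_{\ell\ell}$ without destroying anything. The main obstacle throughout is ensuring the strict inequalities $<_+$ and $<_-$ (rather than $\leq$) hold after reduction, which rests on the minimality choice of $B_{11}$ together with anisotropy forcing all the relevant Hermitian polynomials to be nonzero.
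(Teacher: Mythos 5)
Your overall strategy (induction on $\ell$, with a corner entry $B_{11}$ of minimal total degree) is not the paper's route -- the paper runs a descent argument on the set of matrices congruent to $B$ that satisfy a partial property $P(s)$ (upper-left $s\times s$ block with dominant diagonal \emph{and} diagonal degrees increasing), with a swap move and a termination argument -- and the place where your sketch breaks is exactly the point that forces that more elaborate structure. The gap is in your final assembly step. For the full matrix to have a dominant diagonal you need, for each $j\geq 2$, \emph{both} $\delta C_{1j} <_+ \delta C_{11}$ and (via Hermiticity) $\delta C_{1j} <_+ \delta C_{jj}$. Now consider the two possible orders you discuss. If you reduce the first row first (congruences $e_j \mapsto e_j - q_j e_1$, which fix $B_{11}$) and then apply the inductive hypothesis to the lower-right block, the congruence $\mathrm{diag}(1,U')$ replaces the first row by $E\,U'$ and the bounds $\delta C_{1j} <_+ \delta B_{11}$ are destroyed -- you noted this. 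But the order you settle on fails too: once the lower-right block $D$ has a dominant diagonal, the only congruences that reduce the first row while leaving $B_{11}$ fixed are again $e_j \mapsto e_j - q_j e_1$, and these change every entry $C_{ij}\mapsto C_{ij}-\overline{q_i}C_{1j}-C_{i1}q_j+\overline{q_i}B_{11}q_j$ of the lower-right block, so they do \emph{not} leave the dominant diagonal of $D$ intact; the claim that the reduction can be ``performed against the diagonal entries $B_{11},\dots,B_{\ell\ell}$ without destroying anything'' is false. The alternative reduction that does fix $D$, namely $e_1\mapsto e_1-\sum_{j\geq2}q_je_j$, changes $B_{11}$ and, when one applies Lemma~\ref{lem:dd-div-with-rem} to the first column, produces remainders satisfying $\delta C_{i1}<_+\delta D_{ii}$, whereas the dominant-diagonal condition on the first column is $\delta C_{i1}<_-\delta D_{ii}$; and even with a mirrored division lemma giving $<_-$ remainders, the condition relative to the corner entry would then require $\delta D_{jj}\leq\delta C_{11}$, which is the opposite of your minimality choice. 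So no single pass achieves both families of inequalities, and simply alternating ``reduce the first row'' with ``re-establish the dominant diagonal of the complement'' has no termination mechanism in your sketch.

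This interference is precisely what the paper's proof is built to handle: it never tries to finish one row/column for good, but instead grows a dominant-diagonal block whose diagonal degrees are kept in increasing order, reduces the next column against that block with Lemma~\ref{lem:dd-div-with-rem} (which leaves the block itself untouched), and when the new candidate diagonal entry has too small a degree it swaps it into the block and records strict progress in a well-founded partial order ($<_1$, $<_2$); termination then follows because relative degrees are nonnegative integers. Your minimality device and the observation that the restriction of an anisotropic form to a complementary summand is anisotropic are fine as far as they go, but without an analogue of the ordering-plus-swap-plus-descent mechanism the induction does not close, so the proposal as written has a genuine gap.
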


Our proof follows the strategy of Djokovi\'{c} \cite{Djokovic_1976}; upon introducing the partial orders of Definition~\ref{defn:partial-orders} on total degree to replace the obvious total order on degree of ordinary polynomials, the generalization to Laurent polynomials is straightforward.
However, even in the ordinary polynomial case, we were not able to verify the final assertion in the proof of Ref.~\onlinecite{Djokovic_1976} for ourselves -- namely, that $(u, s+1) < (b,s)$ in the notation used there. For this reason, we give a proof that slightly differs from that of Djokovi\'{c}. Specifically, we have made a small modification to the definition of the property $P(s)$ (see below).

\begin{proof}[Proof of Theorem~\ref{theorem:dominant-diagonal}]
	We say that an anisotropic Hermitian matrix $A$ satisfies property $P(s)$ (with $1 \leq s \leq \ell$) if $\delta A_{11} \leq \dots \leq
	\delta A_{ss}$ and the $s \times s$ submatrix in the upper left corner has a dominant diagonal.  Note that every anisotropic Hermitian matrix satisfies property $P(1)$.
	
	Let $B \in M_\ell(R)$ be anisotropic and Hermitian. If $\ell = 1$, the theorem is trivial, so assume $\ell
	\geq 2$. 
	Let $\mathscr S$ be the set of pairs $(C, s)$ where $C \in M_\ell(R)$ is congruent to $B$ and satisfies
	property $P(s)$. We have $(B,1) \in \mathscr S$, so $\mathscr S$ is nonempty.  
	
	It is straightforward to check that the following relations on $\mathscr S$ are strict partial orders:
	\begin{itemize}
		\item $(C, s) <_1 (C', s')$ if $s > s'$ and $\delta C_{tt} = \delta C'_{tt}$ for all $1 \leq t \leq
			s'$.
		\item $(C, s) <_2 (C', s')$ if, for some $t \leq \min(s, s')$, $\delta C_{tt} <_{\pm} \delta
			C'_{tt}$ and $\delta C_{uu} = \delta C_{uu}'$ for all $1 \leq u < t$. 
		\item $(C, s) < (C', s')$ if $(C, s) <_1 (C', s')$ or $(C, s) <_2 (C', s')$. 
	\end{itemize}

For any $(C, s) \in \mathscr S$ with $s < \ell$, we can produce another $(C_1, s_1) \in
	\mathscr S$ with $(C_{1}, s_1) < (C, s)$. We partition $C$ in block form as 
	\begin{align}
		C = \begin{pmatrix}
			D & E 
			\\
			E^\dagger & F
		\end{pmatrix},
	\end{align}
	where $D \in M_{s}(R)$ has a dominant diagonal. By Lemma~\ref{lem:dd-div-with-rem} applied to the columns of $E$, there are $s \times (\ell-s)$ matrices $Q$ and $E'$ such that $E = D Q + E'$, where $\delta E'_{ij} <_+	\delta D_{ii}$ (and thus also $\delta \overline{E'_{i j}} <_- \delta D_{i i}$) for all $1 \leq i \leq s$ and $1 \leq j \leq \ell - s$.	
	We consider the congruent matrix 
	\begin{align}
	 C' = \begin{pmatrix} 1 & 0 \\ - Q^\dagger & 1 \end{pmatrix} C \begin{pmatrix} 1 & - Q \\ 0 & 1\end{pmatrix}
	 = \begin{pmatrix}
			D & E'
			\\
			E'^\dagger & F'
		\end{pmatrix} \text{.}
	\end{align}

	Now there are
	two cases. First, if $\delta F'_{11} \geq \delta D_{ss}$, we have $(C', s+1) \in \mathscr S$ because, for $1 \leq i \leq s$,
	\begin{align}
	\delta C'_{i (s+1)} 	= \delta E'_{i 1} <_+ \delta D_{i i} = \delta C'_{ii} \nonumber \text{,}
	\end{align}
	and
	\begin{align}
	\delta C'_{(s+1) i} = \delta (E')^\dagger_{1 i} = \delta \overline{E'_{i 1}} <_-  \delta D_{i i} \leq \delta F'_{11} = \delta C'_{(s+1)(s+1)} \text{.} \nonumber
	\end{align}
	We have $(C', s+1) <_1 (C, s)$, and we take $(C_1, s_1) = (C',s+1)$.  Second, if $\delta F'_{11} <_\pm \delta D_{ss}$, 
	there is a smallest $t \in \{1, \dots, s\}$ such that $\delta F'_{11} <_{\pm} \delta
	D_{tt}$.  By the congruence that swaps the $(s+1)$ and $t$ rows and columns of $C'$, we
	obtain a matrix $C''$ such that $(C'', t) \in \mathscr S$ and $(C'', t) <_2 (C, s)$, and we take
	$(C_1, s_1) = (C'', t)$.

	Choosing $(C_0, s_0) = (B, 1)$ and iterating the above process, we obtain a  decreasing sequence 
	\begin{align}
	(C_0, s_0) > (C_{1}, s_1) > \cdots > (C_m, s_m) > \cdots \text{.} \nonumber
	\end{align}  
	The sequence continues until it reaches an element $(C_m, s_m)$ with $s_m = \ell$.  If this occurs, we are done, because $C_m$ has a dominant diagonal and is congruent to $B$.  If not, then $s_m < \ell$ for all $m \in \N$ and we have produced an infinite decreasing sequence; we complete the proof by arguing that this cannot occur.
	
	Suppose we have an infinite
	decreasing sequence $(C_0, s_0) > (C_1, s_1) > \dots$. 
	Let $t$ be an integer such that $s_m = t$ for infinitely many $m$, and consider the 
	refined sequence $(C_{m_1}, s_{m_1}) > (C_{m_2}, s_{m_2}) > \dots$ such that $s_{m_i} =
	t$ for all $i$; this is equivalently a decreasing sequence
	$(C_{m_1}, t) >_2 (C_{m_2}, t) >_2 \dots$.
	In the refined sequence, at each step, the total degree $\delta (C_{m_i})_{11}$ either strictly decreases
	(\emph{i.e.} $\delta (C_{m_i})_{11} >_\pm \delta (C_{m_{i+1}})_{11}$) or remains the same. At each step where
	$\delta (C_{m_i})_{11}$ strictly decreases, the relative degree $\deg (C_{m_i})_{11}$ also strictly
	decreases, but this can occur only a finite number of times in the sequence because relative
	degree is always nonnegative.  Therefore by moving sufficiently far along the sequence, we obtain
	a new infinite decreasing sequence where $\delta (C_{m_i})_{11}$ is unchanged at each step. It follows
	that $\delta (C_{m_i})_{22}$ must either decrease or remain the same at each step, and it can only
	decrease a finite number of times.  Continuing in this manner we can move sufficiently far along
	the sequence that the total degree of every diagonal entry is unchanged at each step, a
	contradiction. 
	\end{proof}

\begin{corollary}
	\label{cor:inv-herm-an-mat}
	An invertible anisotropic Hermitian matrix with entries in $R$ is congruent to a diagonal matrix with only constant entries. 
\end{corollary}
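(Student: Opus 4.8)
The plan is to run an invertible anisotropic Hermitian matrix through the Dominant Diagonal theorem and then extract sharp degree constraints from invertibility. First I would apply Theorem~\ref{theorem:dominant-diagonal} to the given matrix $B$ to obtain an invertible $U$ such that $C = U^\dagger B U$ has a dominant diagonal; note $C$ is again Hermitian and anisotropic, and is invertible since $U$ is. The key numerical input is that $\det C$ is a unit of $R = \Z_p[t^\pm]$, and the units of $R$ are exactly the monomials $a t^m$ with $a \in \Z_p^\times$, so $\deg_+ \det C = \deg_- \det C$, i.e.\ $\deg \det C = 0$.

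Next I would invoke Lemma~\ref{lemma:dd-det} to get $\delta \det C = \delta \prod_i C_{ii}$, and then Lemma~\ref{lem:degpm} to conclude $\deg \prod_i C_{ii} = \sum_i \deg C_{ii}$. Since each diagonal entry $C_{ii}$ of a Hermitian matrix is a Hermitian polynomial, and is nonzero by anisotropy, we have $\deg_- C_{ii} = -\deg_+ C_{ii}$, hence $\deg C_{ii} = 2\deg_+ C_{ii} \ge 0$. The sum of these nonnegative integers equals $\deg \det C = 0$, so every $C_{ii}$ has $\deg_+ C_{ii} = \deg_- C_{ii} = 0$; that is, each diagonal entry is a nonzero constant in $\Z_p$, and $\delta C_{ii} = (0,0)$.

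Finally I would feed $\delta C_{ii} = (0,0)$ back into the dominant-diagonal inequalities. For $j > i$, the condition $\delta C_{ij} <_+ \delta C_{ii}$ (see Definition~\ref{defn:partial-orders}) forces $\deg_+ C_{ij} < 0$ and $\deg_- C_{ij} \ge 0$; since any nonzero Laurent polynomial satisfies $\deg_- \le \deg_+$, this is impossible, so $C_{ij} = 0$. Symmetrically, for $j < i$ the condition $\delta C_{ij} <_- \delta C_{ii}$ forces $\deg_+ C_{ij} \le 0$ and $\deg_- C_{ij} > 0$, again impossible unless $C_{ij} = 0$. Hence $C$ is diagonal with only constant entries and is congruent to $B$, which is the claim.

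As this is a corollary, there is no serious obstacle: all the content lives in the results already established. The only place requiring care is the bookkeeping with the partial orders $<_+$ and $<_-$ — namely, remembering that $\delta C_{ii} = (0,0)$ collapses each dominant-diagonal inequality into a pair of incompatible degree bounds for any nonzero polynomial — together with the observation that invertibility of $C$ over $R$ is precisely the statement that $\det C$ is a monomial, which is what makes Lemma~\ref{lemma:dd-det} bite.
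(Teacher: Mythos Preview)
Your proof is correct and follows essentially the same route as the paper: apply the Dominant Diagonal theorem, use Lemma~\ref{lemma:dd-det} together with the fact that $\det C$ is a unit to force the (Hermitian) diagonal entries to be constants, and then read off from the dominant-diagonal inequalities that all off-diagonal entries vanish. You have simply spelled out the degree bookkeeping that the paper leaves implicit.
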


\begin{proof}
	By Theorem~\ref{theorem:dominant-diagonal}, such a matrix is congruent to one with a dominant diagonal. 
	By Lemma \ref{lemma:dd-det}, an invertible Hermitian matrix with a dominant diagonal must have
	Hermitian units on the diagonal; therefore the diagonal entries are all constant.  It follows immediately from the definition of dominant diagonal that the off-diagonal entries are zero and thus constant.
\end{proof}

\begin{proof}[Proof of Theorem \ref{theorem:decoupled-layers}]
Suppose $(S,B)$ is a perfect p-theory of prime fusion order. By Lemma~\ref{lem:easy-part}, $S$ is free and property \#1 of Proposition~\ref{prop:ptheory-dcl} is satisfied in any basis, and moreover $B_{ij} =B(e_i,e_j)$ is an invertible Hermitian matrix in $M_\ell(R)$. By Corollary~\ref{cor:rank2plusanisotropic}, $(S,B)$ is a stack of p-theories that are decoupled layers and a perfect p-theory $(S',B')$ such that $B'_{ij}=B'(e_i,e_j)$ is an anisotropic invertible Hermitian matrix. By Corollary \ref{cor:inv-herm-an-mat}, $B'$ is congruent to a matrix with constant entries and so $(S',B')$ is decoupled layers. Therefore, $(S,B)$ is a stack of p-theories that are decoupled layers, hence is itself decoupled layers.
\end{proof}

\section{Discussion}
\label{sec:discussion}

We conclude with a discussion of some questions raised by the results of this paper.  We found that the excitation-detector principle is equivalent to a necessary condition for physical realizability of a planon-only fracton order -- namely, perfectness of a theory of excitations -- which is stronger than the principle of remote detectability.  We conjecture that this condition is also sufficient for physical realizability; that is, every perfect finite-order planon-only theory of excitations can be realized by a suitable quantum lattice model.  A promising strategy to show this is to generalize the result that every modular 2d theory of abelian anyons can be realized by a multi-component ${\rm U}(1)$ Chern-Simons theory \cite{Wall_1963, Wall_1972, Nikulin_1980, Wang_2020} to the setting of infinite-component Chern-Simons theories, which are appropriate to describe planon-only fracton orders.  Work in this direction is in progress \cite{Dachuan_inprogress}.

Another important direction for future work will be to obtain more examples of planon-only fracton orders, and to characterize them in terms of simple and easily comparable invariants.  Indeed, the characterization of planon-only fracton orders in terms of such invariants is quite poorly understood, and we expect that the algebraic theory developed here, together with the structure theorems proved in Section~\ref{sec:structure}, will provide a useful starting point for future progress.  Part of this work will be to study the Witt classification of planon-only fracton orders; physically this is a coarser classification than phase-equivalence, where two states are considered equivalent if they can be joined by a gapped interface.  Moreover, in this paper we solved the problem of classifying planon-only fracton orders of prime fusion order $p$, by showing that all such fracton orders are decoupled layers.\footnote{This indeed gives a classification because theories of abelian anyons have been classified \cite{Galindo_2016}.}
It may be tractable to go further and classify planon-only fracton orders of fusion order $p^k$ for small values of $k$, starting with $k = 2$.

Beyond these specific future directions, promising areas for near-term study include planon-only fracton orders with infinite-order excitations and irrational braiding statistics \cite{MacDonald_1989,MacDonald_1990,Sondhi_2000, Sondhi_2001, Ma_2022, Chen_2023}, and p-modular fracton orders with planarity $k > 1$ \cite{Wickenden_2024}.  We hope our results will spur further progress on the characterization and classification of fracton quantum matter more generally. Toward this end, we anticipate that the notion of a perfect theory of excitations will serve as a blueprint for a cohesive mathematical theory of general abelian fracton orders. The development of such a framework, based on a novel structure dubbed a \emph{braided fusion complex}, will be the topic of future work \cite{Shirley_inprogress}.

\acknowledgments{We are grateful to Arpit Dua, Jeongwan Haah, Daniel Ranard and Dominic Williamson for useful discussions.  We are also grateful to Dachuan Lu and Bowen Yang for collaborations on related work.  The research of EW and MH is supported by the U.S. Department of Energy, Office of Science, Basic Energy Sciences (BES) under Award number DE-SC0014415.  The work of WS is supported by the Ultra-Quantum Matter Simons Collaboration (Simons Foundation grant no. 651444) and the Leinweber Institute for Theoretical Physics.  This material is based upon work supported by the National Science Foundation under Grant No. DMS 2143811 (AB).  The work of EW, WS and MH benefited from meetings of the UQM Simons Collaboration, supported by Simons Foundation grant no. 618615.}

\appendix
 
\section{Prime factorization of theories of excitations}
\label{app:local}

In this section, we explain Theorem~\ref{thm:prime-decomp}. That is, we explain why a  $p$-theory   $(S,b)$ of fusion order $n$ is a stack of theories of fusion order $p^k$ where $p$ is a prime divisor of $n$ and $k$ is the largest integer such that $p^k|n$. This kind of decomposition is standard and could stand without further justification except perhaps a few key references to commutative algebra textbooks. However, we add details here to keep the paper self-contained.

\subsection{Background on prime localizations}

We start with a detour to discuss localization at a prime $p$.  
 Throughout, by a ring $R$, we continue to mean an associative, commutative and unital ring.

For an $R$-module $M$ and a prime number $p$, define $M_{(p)}$ as follows. Elements are fractions $m/u$ with $u\in \Z$ coprime to $p$, with equivalence  $m/u = m'/u'$ if and only if there exists $v$ coprime to $p$ such that $ v(u'm-um')=0$.
We can add elements of $M_{(p)}$ following the rules for fractions,
$m/u+m'/u' = (u'm+um')/uu' $.
For $M=R$, we can also multiply following the rules for fractions,
$(r/u)(r'/u') = (rr')/(uu')$.
Then $R_{(p)}$ is a ring again, and $ M_{(p)}$ is an $R_{(p)}$-module with
$ (r/u)(m/u') = (rm)/(uu')$. 
Define 
\[\eta_{M,p} \colon M \to M_{(p)}\]
by $\eta_{M,p}(m) = m/1$. This is called the \emph{localization map} and it is a map of $R$-modules.
Note that $\eta_{M,p}(m) =m/1\in M_{(p)}$ is zero if and only if there exists $u\in\Z$ coprime to $p$ such that $um=0$. 

If $f\colon M \to N$ is a homomorphism of $R$-modules, we define 
$f_{(p)} \colon M_{(p)} \to N_{(p)}$ by
$f_{(p)}( m/u) =f(m)/u$. Then $f_{(p)}$ is a homomorphism of $R_{(p)}$-modules and $f_{(p)}\circ \eta_{M,p}=\eta_{N,p} \circ f$, \emph{i.e.,} 
the diagram
\begin{equation}\label{eq:localizationnatural}
\xymatrix{M  \ar[r]^-{\eta_{M,p}}  \ar[d]_-f & M_{(p)} \ar[d]^-{f_{(p)}} \\ 
N  \ar[r]^-{\eta_{N,q}}  &  N_{(p)}  
}
\end{equation}
is commutative.

\begin{example}
For $R=\Z$, the localization
$\Z_{(p)}$  can be identified with the subset of rational numbers whose denominators are coprime to $p$.
\end{example}

We use this example to give an equivalent description of $M_{(p)}$, which is shown in \cite[Prop. 3.5]{Atiyah_1969}.
\begin{proposition}
 Define a map
$\Z_{(p)} \otimes M \to M_{(p)}$
by
$ \frac{v}{u}\otimes m \mapsto (vm)/u$.
This is an isomorphism of $R$-modules.  Under this identification, the localization map sends $m$ to $1\otimes m$.
 If $f\colon M\to N$ is a map of $R$-modules, then $f_{(p)}$ corresponds to the homomorphism which maps $\frac{v}{u}\otimes m$ to $\frac{v}{u}\otimes f(m)$. 
\end{proposition}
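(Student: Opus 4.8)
The statement to prove is the final proposition: that the natural map $\Z_{(p)} \otimes_\Z M \to M_{(p)}$, $\frac{v}{u} \otimes m \mapsto (vm)/u$, is an isomorphism of $R$-modules, with the localization map $\eta_{M,p}$ corresponding to $m \mapsto 1 \otimes m$, and that for $f : M \to N$ the map $f_{(p)}$ corresponds to $\id_{\Z_{(p)}} \otimes f$. The plan is to construct the map and its inverse explicitly and check they are mutually inverse $R$-module homomorphisms. First I would verify that the assignment $\frac{v}{u} \otimes m \mapsto (vm)/u$ is well-defined: it arises from the $\Z$-bilinear map $\Z_{(p)} \times M \to M_{(p)}$, $(\frac{v}{u}, m) \mapsto (vm)/u$, so by the universal property of tensor product it descends to a homomorphism of abelian groups $\psi : \Z_{(p)} \otimes_\Z M \to M_{(p)}$; one checks bilinearity directly using the addition rules in $\Z_{(p)}$ and $M_{(p)}$ and the equivalence relation defining $M_{(p)}$. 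Then I would note $\psi$ is $R$-linear, since $R$ acts on $M$ through the left factor's trivial $R$-structure or equivalently through $M$, and the $R$-action on $M_{(p)}$ is $(r/u)(m/u') = (rm)/(uu')$, which matches.

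Next I would construct the inverse. Define $\chi : M_{(p)} \to \Z_{(p)} \otimes_\Z M$ by $\chi(m/u) = \frac{1}{u} \otimes m$. The main point here is well-definedness: if $m/u = m'/u'$ in $M_{(p)}$, there is $v$ coprime to $p$ with $v(u'm - um') = 0$ in $M$; I would then compute $\frac{1}{u} \otimes m - \frac{1}{u'} \otimes m' = \frac{1}{u u' v} \otimes v(u'm - um') = \frac{1}{uu'v} \otimes 0 = 0$, using that $\frac{1}{u} = \frac{u'v}{uu'v}$ in $\Z_{(p)}$ and pulling scalars across the tensor. One checks $\chi$ is additive (again using the fraction addition rules and the identity $\frac{1}{u} \otimes m + \frac{1}{u'} \otimes m' = \frac{1}{uu'} \otimes (u'm + um')$) and $R$-linear. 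Then $\psi \circ \chi (m/u) = \psi(\frac{1}{u} \otimes m) = (1\cdot m)/u = m/u$, and $\chi \circ \psi (\frac{v}{u} \otimes m) = \chi((vm)/u) = \frac{1}{u} \otimes vm = \frac{v}{u} \otimes m$; the last equality again uses that scalars move across $\otimes_\Z$. This establishes the isomorphism; the claim about $\eta_{M,p}$ is immediate since $\eta_{M,p}(m) = m/1 = \psi(1 \otimes m)$.

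For the naturality statement, I would simply trace an element: under the identification, $f_{(p)}(\frac{v}{u} \otimes m)$ should be read as $f_{(p)}(\psi(\frac{v}{u}\otimes m)) = f_{(p)}((vm)/u) = f(vm)/u = (v f(m))/u = \psi(\frac{v}{u} \otimes f(m))$, which is exactly $(\id \otimes f)(\frac{v}{u} \otimes m)$ transported along $\psi$. So $f_{(p)}$ corresponds to $\id_{\Z_{(p)}} \otimes f$ as claimed.

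Since the proposition is explicitly quoted as ``shown in \cite[Prop. 3.5]{Atiyah_1969}'', the honest realistic plan is to cite that reference rather than reprove it in full. If a self-contained argument is wanted, the only step requiring genuine care — the ``main obstacle'' such as it is — is the well-definedness of the inverse map $\chi$, i.e. showing the equivalence relation on fractions in $M_{(p)}$ is respected; everything else is routine bilinearity and fraction arithmetic. I would therefore present the construction of $\psi$ and $\chi$, verify that key well-definedness step carefully, and leave the remaining bookkeeping (additivity, $R$-linearity, the two composite identities) as straightforward verifications.
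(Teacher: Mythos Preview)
Your proposal is correct and is essentially the standard argument found in the cited reference. The paper itself does not give a proof of this proposition at all; it simply invokes \cite[Prop.~3.5]{Atiyah_1969} in the sentence preceding the statement, exactly as you anticipated in your final paragraph.
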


A powerful property of localization is the fact that it preserves many of the operations one encounters working with modules. The following result combines Prop. 3.3, 3.4 \&  3.7 of Chapter 3 in \cite{Atiyah_1969}.

\begin{proposition}
Let $M,N$ and $L$ be $R$-modules.
\begin{enumerate}[(i)]
\item If $ L \xrightarrow{f} M  \xrightarrow{g} N $
is an exact sequence of $R$-modules, then 
$ L_{(p)} \xrightarrow{f_{(p)}}M_{(p)} \xrightarrow{g_{(p)}} N_{(p)} $
is an exact sequence of $R_{(p)}$-modules.
\item There are isomorphisms of $R_{(p)}$-modules $(M\oplus N)_{(p)} \cong M_{(p)}\oplus N_{(p)}$,  $(M\cap N)_{(p)} \cong M_{(p)} \cap N_{(p)}$, $(M/N)_{(p)} \cong M_{(p)}/ N_{(p)}$ and
$M_{(p)} \otimes_{R_{(p)}} N_{(p)} \cong (M\otimes_R N)_{(p)}$. 
\end{enumerate}
\end{proposition}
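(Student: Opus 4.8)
The plan is to deduce everything from the exactness statement (i), which I would prove first by the standard ``clear denominators'' argument. Since localization is functorial, as recorded in the commutative diagram~\eqref{eq:localizationnatural}, exactness of $L\xrightarrow{f}M\xrightarrow{g}N$ gives $g_{(p)}\circ f_{(p)}=(g\circ f)_{(p)}=0$, hence $\im f_{(p)}\subseteq\ker g_{(p)}$. For the reverse inclusion, take $m/u\in\ker g_{(p)}$; then $g(m)/u=0$ in $N_{(p)}$, so there is $v\in\Z$ coprime to $p$ with $g(vm)=vg(m)=0$, whence $vm\in\ker g=\im f$ and $vm=f(\ell)$ for some $\ell\in L$. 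Then $m/u=f(\ell)/(vu)=f_{(p)}\bigl(\ell/(vu)\bigr)$, which proves $\ker g_{(p)}\subseteq\im f_{(p)}$.

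The four isomorphisms in (ii) then follow by applying (i) to suitable short exact sequences. For a submodule $N\subseteq M$, localizing $0\to N\to M\to M/N\to 0$ and using that $N_{(p)}\to M_{(p)}$ is injective (again by (i)) yields $(M/N)_{(p)}\cong M_{(p)}/N_{(p)}$. Taking $M,N$ to be submodules of a common module $P$, localizing the exact sequence $0\to M\cap N\to M\oplus N\to P$ with map $(x,y)\mapsto x-y$ and image $M+N$ identifies $(M\cap N)_{(p)}$ with the kernel of $M_{(p)}\oplus N_{(p)}\to P_{(p)}$, which is $M_{(p)}\cap N_{(p)}$ inside $P_{(p)}$. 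The direct sum case is immediate, either from localizing $0\to M\to M\oplus N\to N\to 0$ or directly from the fraction description. For the tensor product, I would first note that $M_{(p)}\cong R_{(p)}\otimes_R M$, a consequence of the identification $M_{(p)}\cong\Z_{(p)}\otimes_\Z M$ from the preceding proposition together with $R_{(p)}\cong\Z_{(p)}\otimes_\Z R$, and then string together the base-change and associativity isomorphisms
\[
M_{(p)}\otimes_{R_{(p)}}N_{(p)}\cong (R_{(p)}\otimes_R M)\otimes_{R_{(p)}}(R_{(p)}\otimes_R N)\cong R_{(p)}\otimes_R(M\otimes_R N)\cong (M\otimes_R N)_{(p)}.
\]

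The main obstacle is the tensor-product statement: verifying the chain of canonical isomorphisms relies on the base-change identity $(R'\otimes_R M)\otimes_{R'}(R'\otimes_R N)\cong R'\otimes_R(M\otimes_R N)$ for a ring homomorphism $R\to R'$, which is standard but slightly tedious to set up in full. An alternative that avoids this is to prove $M_{(p)}\otimes_{R_{(p)}}N_{(p)}\cong(M\otimes_R N)_{(p)}$ directly from a free presentation $\bigoplus_{j\in J}R\to\bigoplus_{i\in I}R\to M\to 0$ of $M$, applying the right-exact functors $-\otimes_R N$ and localization, comparing the two resulting presentations, and using that localization and tensoring are transparent on free modules. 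Since the paper only needs this result for the rings $\Z\T_\perp$ and their quotients, I would present the argument at roughly the level of detail of Atiyah--Macdonald, spelling out the denominator-clearing step in (i) and the reductions in (ii) while citing \cite{Atiyah_1969} for the remaining routine verifications.
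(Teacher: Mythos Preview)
Your proposal is correct and follows the standard textbook approach; in fact the paper does not give its own proof at all but simply cites Propositions~3.3, 3.4 and~3.7 of Atiyah--Macdonald~\cite{Atiyah_1969}, so your sketch is essentially an expansion of that reference.
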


As a result of the above, we obtain a criterion for detecting when a module is trivial. 
\begin{proposition}
An $R$-module $M$ is zero if and only if $M_{(p)}$ is zero for all prime numbers $p$. 
\end{proposition}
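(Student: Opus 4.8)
The statement to prove is: \emph{An $R$-module $M$ is zero if and only if $M_{(p)}$ is zero for all prime numbers $p$.}

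The plan is to prove both directions, with the forward direction being immediate and the reverse direction being the substantive one. First I would handle the easy implication: if $M = 0$, then every fraction $m/u \in M_{(p)}$ has $m = 0$, so $M_{(p)} = 0$ for all $p$. (Alternatively, using the tensor product description $M_{(p)} \cong \Z_{(p)} \otimes M$, tensoring the zero module with anything gives zero.) This requires essentially no work.

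For the converse, I would argue by contraposition: assume $M \neq 0$ and produce a prime $p$ with $M_{(p)} \neq 0$. Pick a nonzero element $m \in M$ and consider its annihilator ideal in $\Z$, namely $\operatorname{Ann}_\Z(m) = \{ u \in \Z \mid um = 0\}$. This is a proper ideal of $\Z$ (proper because $1 \notin \operatorname{Ann}_\Z(m)$, as $m \neq 0$), hence is contained in some maximal ideal of $\Z$, i.e. in $(p)$ for some prime number $p$. Now I claim $\eta_{M,p}(m) = m/1 \neq 0$ in $M_{(p)}$. Recall from the excerpt that $m/1 = 0$ in $M_{(p)}$ iff there exists $u \in \Z$ coprime to $p$ with $um = 0$. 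Such a $u$ would lie in $\operatorname{Ann}_\Z(m) \subseteq (p)$, forcing $p \mid u$, contradicting coprimality. Hence $m/1 \neq 0$ and $M_{(p)} \neq 0$, completing the contrapositive.

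The main subtlety — though it is minor — is making sure the annihilator argument is airtight: one must note $\operatorname{Ann}_\Z(m)$ is a genuine ideal of $\Z$ and that it is proper precisely because $m \neq 0$, so that it sits inside a maximal ideal $(p)$; then the characterization of when $m/1$ vanishes (quoted verbatim from the excerpt's description of $M_{(p)}$) does the rest. No deep input is needed beyond the fact that every proper ideal of $\Z$ lies in a maximal ideal $(p)$, which is elementary. I would write this as a short two-paragraph proof.

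\begin{proof}
If $M = 0$ then every element of $M_{(p)}$ is a fraction $0/u$, so $M_{(p)} = 0$ for every prime $p$.

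Conversely, suppose $M_{(p)} = 0$ for all primes $p$, and assume for contradiction that $M \neq 0$. Choose a nonzero $m \in M$ and let $\operatorname{Ann}_\Z(m) = \{ u \in \Z \mid u m = 0 \}$, which is an ideal of $\Z$. Since $m \neq 0$, we have $1 \notin \operatorname{Ann}_\Z(m)$, so $\operatorname{Ann}_\Z(m)$ is a proper ideal and is therefore contained in some maximal ideal $(p)$ of $\Z$, with $p$ prime. By hypothesis $M_{(p)} = 0$, so in particular $\eta_{M,p}(m) = m/1 = 0$ in $M_{(p)}$. By the description of $M_{(p)}$, this means there exists $u \in \Z$ coprime to $p$ with $u m = 0$, i.e. $u \in \operatorname{Ann}_\Z(m) \subseteq (p)$. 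Then $p \mid u$, contradicting the coprimality of $u$ and $p$. Hence $M = 0$.
\end{proof}
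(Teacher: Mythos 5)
Your proof is correct. The paper handles this proposition differently: it simply observes that any $R$-module is in particular a $\Z$-module and cites the standard localization fact (Proposition~3.8 of Atiyah--Macdonald with $A=\Z$), so its ``proof'' is essentially a one-line reference. What you have written is, in effect, the proof of that cited result specialized to $\Z$: the forward direction is trivial, and for the converse you take a nonzero $m\in M$, note that $\operatorname{Ann}_\Z(m)$ is a proper ideal of $\Z$ hence contained in a maximal ideal $(p)$ with $p$ prime, and then use the explicit description of when $m/1$ vanishes in $M_{(p)}$ (existence of $u$ coprime to $p$ with $um=0$) to get a contradiction. This matches the paper's conventions, since the localization $M_{(p)}$ defined there inverts exactly the integers coprime to $p$, so working with the $\Z$-annihilator is the right move. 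The trade-off is the usual one: the paper's citation keeps the appendix short and leans on a standard text, while your argument is self-contained and makes transparent exactly which elementary fact about $\Z$ (every proper ideal lies in some $(p)$) is doing the work; both are complete and there is no gap in your reasoning.
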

\begin{proof}
Any $R$-module is also a $\Z$-module. Apply \cite[p.40, Prop. 3.8]{Atiyah_1969} with $A=\Z$. 
\end{proof}

\begin{example}
Let $R = \Z_n$. Then $R_{(p)}$ is $\Z_{p^k}$ where $k$ is the largest integer such that $p^k|n$. In particular, it is zero if $p$ is coprime to $n$. Writing $\Z_n \cong \Z_{p^k} \times \Z_{m}$ where $m$ is coprime to $p$, the localization map is the projection onto $\Z_{p^k}$ and has a canonical splitting $\nu_p \colon \Z_{p^k} \to \Z_n$ (as $\Z_n$-modules), namely, the map which sends $1 \in \Z_{p^k}$ to the unique element of $\Z_n$ which is 1 modulo $p^k$ and 0 modulo $m$.  Furthermore, if $p'$ is coprime to $p$, then $\eta_{p'}\nu_p=0$. Similarly, if $R=\Z_n[t^{\pm}]$, $R_{(p)} = \Z_{p^k}[t^{\pm}]$ and the localization map has a canonical splitting which we call $\nu_{R,p}$, and again, if $p'$ is coprime to $p$,
\[\eta_{R,p'}\nu_{R,p} =\begin{cases} \id_{R_{(p)}} & p'=p\\
0 & p'\neq p.
\end{cases}
\]
\end{example}

\begin{proposition}
Let $R$ be $n$-torsion and $M$ be an $R$-module. If $M_{(p)}=0$ for all primes $p$ such that $p|n$, then $M=0$. 
\end{proposition}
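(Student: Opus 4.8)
The plan is to reduce the claim to the earlier proposition stating that an $R$-module $M$ vanishes if and only if $M_{(p)} = 0$ for every prime number $p$. By hypothesis we already know $M_{(p)} = 0$ for each prime $p$ dividing $n$, so the only thing left to verify is that $M_{(p)} = 0$ automatically whenever $p \nmid n$.

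For such a prime $p$, I would argue as follows. Since $R$ is $n$-torsion, $n M = 0$, and hence multiplication by $n$ is the zero endomorphism of $M_{(p)}$ (explicitly, $n\cdot(m/u) = (nm)/u = 0/u = 0$). On the other hand, $M_{(p)}$ is naturally a module over $\Z_{(p)}$ --- for instance via the identification $M_{(p)} \cong \Z_{(p)} \otimes M$ recorded above --- and when $p \nmid n$ the integer $n$ is coprime to $p$, hence a unit in $\Z_{(p)}$ with inverse $1/n \in \Z_{(p)}$. Therefore multiplication by $n$ on $M_{(p)}$ is also an isomorphism. An endomorphism of a module that is simultaneously the zero map and a bijection forces the module to be trivial, so $M_{(p)} = 0$.

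Combining the two cases gives $M_{(p)} = 0$ for all primes $p$, and the earlier criterion then yields $M = 0$. I do not anticipate a genuine obstacle here; the only point that deserves a moment's care is to state explicitly that $M_{(p)}$ carries a $\Z_{(p)}$-module structure compatible with its $R_{(p)}$-module structure, so that ``$n$ is invertible in $\Z_{(p)}$'' really does imply ``multiplication by $n$ is invertible on $M_{(p)}$'' --- and this is immediate from the tensor-product description of localization.
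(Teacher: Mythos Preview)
Your argument is correct and follows essentially the same route as the paper: both show that $M_{(p)}=0$ for primes $p\nmid n$ by exploiting that $n$ annihilates $M$ while being a unit in $\Z_{(p)}$, and then invoke the earlier criterion that $M=0$ iff all its prime localizations vanish. The only cosmetic difference is that the paper phrases the intermediate step as $R_{(p')}=0$ (hence any $R_{(p')}$-module is zero), whereas you argue directly on $M_{(p)}$.
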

\begin{proof}
If $p'$ is coprime to $n$, then $M_{(p')}=0$ as well since $R_{(p')}=0$. So, $M$ is zero after localization at every integer  prime, hence, it is zero.
\end{proof}

\begin{proposition}\label{lem:sumloc}
Let $R$ be a ring. Suppose there is an  $R$-module homomorphism $\nu_{R,p} \colon R_{(p)} \to R$ such that  $\eta_{R,p}\nu_{R,p} =\id_{R_{(p)}}$ for every $p$ and $\eta_{R,p'}\nu_{R,p} =0$ when $p\neq p'$. Suppose further that $R_{(p)}=0$ for all but finitely many primes $p$. 
Let $M$ be an $R$-module. Then
\[ \eta_M \colon M \to   \bigoplus_{p \ \mathrm{prime}} M_{(p)} \]
given by $\eta_M(x) = \sum_p \eta_{M,p}(x)$   is an
  isomorphism of $R$-modules. Furthermore, if $f \colon M\to N$ is a homomorphism of $R$-modules, then the following diagram 
\[ \xymatrix{M  \ar[r]^-{\eta_M}_-\cong  \ar[d]_-f & \bigoplus_p M_{(p)} \ar[d]^-{\bigoplus_p f_{(p)}} \\ 
N  \ar[r]^-{\eta_N}_-\cong  & \bigoplus_p N_{(p)}  
}\]
commutes. That is, $\eta_N\circ f = \left(\bigoplus_p f_{(p)}\right) \circ \eta_M$.
\end{proposition}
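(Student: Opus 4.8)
\textbf{Proof plan for Proposition~\ref{lem:sumloc}.}

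The plan is to build the inverse to $\eta_M$ directly out of the splittings $\nu_{R,p}$, using the tensor-product description of localization. First I would recall that $M_{(p)} \cong R_{(p)} \otimes_R M$ (this is the analogue for an arbitrary ring $R$ of the $\Z$-statement proved above; the same proof applies, or one tensors the $\Z$-version appropriately), so that $\eta_{M,p}$ corresponds to $m \mapsto \eta_{R,p}(1)\otimes m$ and $f_{(p)}$ to $\mathrm{id}\otimes f$. Under this identification the map $\nu_{R,p} \colon R_{(p)} \to R$ induces an $R$-linear map $\nu_{M,p} \colon M_{(p)} = R_{(p)}\otimes_R M \to R \otimes_R M = M$, namely $\nu_{R,p}\otimes \mathrm{id}_M$. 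The key computations, all immediate from the hypotheses on the $\nu_{R,p}$ and functoriality of $-\otimes_R M$, are: $\eta_{M,p}\circ \nu_{M,p} = \mathrm{id}_{M_{(p)}}$ for every $p$, and $\eta_{M,p'}\circ \nu_{M,p} = 0$ whenever $p\neq p'$. Here I am using that only finitely many $M_{(p)}$ are nonzero (since $R_{(p)}=0$ forces $M_{(p)} = R_{(p)}\otimes_R M = 0$), so the direct sum is really a finite sum and all the maps in sight are honest $R$-module homomorphisms with no convergence issues.

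Next I would define $\mu_M \colon \bigoplus_p M_{(p)} \to M$ by $\mu_M\big(\sum_p x_p\big) = \sum_p \nu_{M,p}(x_p)$ and check $\mu_M\circ \eta_M = \mathrm{id}_M$ and $\eta_M \circ \mu_M = \mathrm{id}_{\bigoplus_p M_{(p)}}$. For the first, $\mu_M(\eta_M(x)) = \sum_p \nu_{M,p}\eta_{M,p}(x)$; this equals $x$ because $\sum_p \nu_{R,p}(1) = \mathrm{id}_R(1) = 1$ in $R$ — indeed applying the previous proposition (a module is zero iff all its localizations vanish) to the cokernel of $\sum_p \nu_{R,p}\eta_{R,p} - \mathrm{id}_R$, or more simply: $\eta_{R,p}\big(\sum_{p'}\nu_{R,p'}(1)\big) = \sum_{p'}\eta_{R,p}\nu_{R,p'}(1) = \eta_{R,p}\nu_{R,p}(1) = \eta_{R,p}(1)$ for all $p$, so $\sum_{p'}\nu_{R,p'}(1)$ and $1$ have the same image in every $R_{(p)}$, hence are equal since $R \to \bigoplus_p R_{(p)}$ is injective (again by the vanishing criterion applied to its kernel, whose localizations are zero because $\eta_{R,p}$ restricted there is zero and $(\ker)_{(p)}\hookrightarrow R_{(p)}$). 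Tensoring the identity $\sum_p \nu_{R,p}\eta_{R,p} = \mathrm{id}_R$ with $M$ gives $\sum_p \nu_{M,p}\eta_{M,p} = \mathrm{id}_M$. For the second composite, on the $p$-summand we get $\eta_{M,p}\big(\sum_{p'}\nu_{M,p'}(x_{p'})\big) = \sum_{p'}\eta_{M,p}\nu_{M,p'}(x_{p'}) = \eta_{M,p}\nu_{M,p}(x_p) = x_p$, using the two key identities above. Hence $\eta_M$ is an isomorphism with inverse $\mu_M$.

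Finally, for the naturality square I would simply compute both composites applied to $x \in M$: $(\bigoplus_p f_{(p)})(\eta_M(x)) = \sum_p f_{(p)}(\eta_{M,p}(x)) = \sum_p \eta_{N,p}(f(x)) = \eta_N(f(x))$, where the middle equality is the commutativity of diagram~\eqref{eq:localizationnatural}, which holds for each prime $p$. This gives $\eta_N\circ f = (\bigoplus_p f_{(p)})\circ \eta_M$, as desired.

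I expect the only real subtlety — the "main obstacle" — to be the step establishing $\sum_p \nu_{R,p}\eta_{R,p} = \mathrm{id}_R$, i.e.\ that the $\nu_{R,p}$ assemble to a genuine two-sided inverse rather than just a one-sided one. The hypotheses only assert $\eta_{R,p}\nu_{R,p} = \mathrm{id}$ and the orthogonality $\eta_{R,p'}\nu_{R,p} = 0$; deducing $\sum_p \nu_{R,p}\eta_{R,p} = \mathrm{id}_R$ requires injectivity of $R \to \bigoplus_p R_{(p)}$, which follows from the vanishing criterion once one checks the kernel localizes to zero everywhere — and that in turn uses the orthogonality relations together with $\eta_{R,p}\nu_{R,p} = \mathrm{id}$. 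Everything else is bookkeeping with the tensor-product description and functoriality.
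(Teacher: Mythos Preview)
Your proposal is correct and follows essentially the same approach as the paper: both define $\nu_{M,p} = \nu_{R,p}\otimes \id_M$ via the tensor description of localization, use the resulting orthogonality relations to show surjectivity of $\eta_M$, and appeal to the vanishing criterion together with idempotence of localization for injectivity. The only organizational difference is that the paper applies the injectivity argument directly to $\ker(\eta_M)$, whereas you first prove $\sum_p \nu_{R,p}\eta_{R,p} = \id_R$ and then tensor with $M$; note that your terse justification that $(\ker\eta_{R,p})_{(p)}=0$ really does need the idempotence of localization (so that $R_{(p)}\to (R_{(p)})_{(p)}$ is injective), which the paper makes explicit.
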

\begin{proof}
When $p$ is coprime to $n$, $M_{(p)}=0$ so we can ignore these summands in the direct sum decomposition. So all sums are finite and the map $\eta_M$ is well-defined.
We have $M_{(p)} \cong R_{(p)} \otimes_{R} M$ and $\eta_{M,p} = \eta_{R,p}\otimes \id_M$. Define $\nu_{M,p}  \colon M_{(p)} \to M$ by
$\nu_{M,p} = \nu_{R,p}\otimes \id_M$. Then $\eta_{M,p} \nu_{M,p}  = \id_M$ and  $\eta_{M,p'} \nu_{M,p}  = 0$ if $p'\neq p$.
Let 
$ \sum_p x_p$ be an element of $\bigoplus_{p} M_{(p)} $ and $x = \sum_p  \nu_{M,p}(x_p) $ in $M$. Then 
\[ \eta_M(x)=  \sum_p \eta_M( \nu_{M,p}(x_p)) =  \sum_{p,p'}\eta_{M,p'}(\nu_{M,p}(x_p)) = \sum_p x_p.  \]
So $ \eta_M$ is surjective. Note that the kernel of $ \eta_M$ is the intersection of the kernels of $\eta_{M,p}$ for all primes $p$.
 We have an exact sequence
$0\to \ker(\eta_{M,p})\to M \to M_{(p)}$.
Localization is an idempotent operation, \emph{i.e.}, the natural map $\eta_{M_{(p)}, p}\colon M_{(p)} \to (M_{(p)})_{(p)}$ is an isomorphism. Since it preserves exact sequences,  
$0\to (\ker(\eta_{M,p}))_{(p)}\to M_{(p)} \to M_{(p)}$
is exact hence $\ker(\eta_{M,p})_{(p)}=0$. Therefore,
\[( \ker(\eta_M) )_{(p)}=\big( \bigcap_{p'} \ker(\eta_{M,p'})\big)_{(p)} =\bigcap_{p'} \ker(\eta_{M,p'})_{(p)} \subset  \ker(\eta_{M,p})_{(p)} =0 \]
for all primes $p$.
This implies that $\ker( \eta_M)=0$. So, $\eta_M$ is an isomorphism. 

The commutativity of the diagram follows directly from \eqref{eq:localizationnatural}.
\end{proof}

Localization also interacts well with duals of finitely presented modules. 
\begin{proposition}[{\cite[Lem. 3.3.8]{Weibel_1994}}]\label{lem:homdual}
Let $M$ be a finitely presented $R$-module. Then the $R_{(p)}$-module homomorphism
\[ \gamma \colon {\Hom_{R}}(M,R)_{(p)} \to  {\Hom_{R_{(p)}}}(M_{(p)},R_{(p)})\]
which takes $ \varphi/a$ to $\gamma( \varphi/a)( m/b) = \varphi(m)/(ab)$ is an isomorphism. 
\end{proposition}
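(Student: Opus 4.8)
The final statement to prove is Proposition~\ref{lem:homdual}: for a finitely presented $R$-module $M$ and prime $p$, the natural map $\gamma\colon \Hom_R(M,R)_{(p)} \to \Hom_{R_{(p)}}(M_{(p)}, R_{(p)})$, $\varphi/a \mapsto (m/b \mapsto \varphi(m)/(ab))$, is an isomorphism. The paper attributes this to \cite[Lem. 3.3.8]{Weibel_1994}, so I only need to supply the short self-contained argument.

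\textbf{Plan.} The plan is to reduce to the case $M = R^n$ by choosing a finite presentation and using exactness and additivity of localization and of $\Hom(-,R)$. First I would record two easy facts: (1) localization is exact (stated already in the excerpt as part of the Atiyah--Macdonald package), and (2) for any $R$-modules, $\Hom_R(M_1 \oplus M_2, R) \cong \Hom_R(M_1,R)\oplus\Hom_R(M_2,R)$, and likewise localization commutes with finite direct sums (also stated in the excerpt). The map $\gamma$ is natural in $M$, meaning that for an $R$-linear map $f\colon M \to M'$ the obvious square relating $\gamma_M$ and $\gamma_{M'}$ (with the induced maps $f^*$ on Hom's and $f_{(p)}^*$ after localizing) commutes; this is a routine check directly from the formula for $\gamma$, so I would state it and omit the verification.

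\textbf{Key steps.} Second, I would verify the claim directly for a finite free module $M = R^n$. Here $\Hom_R(R^n, R) \cong R^n$ (via $\varphi \mapsto (\varphi(e_1),\dots,\varphi(e_n))$), so $\Hom_R(R^n,R)_{(p)} \cong (R^n)_{(p)} \cong (R_{(p)})^n$, and on the other side $\Hom_{R_{(p)}}((R^n)_{(p)}, R_{(p)}) = \Hom_{R_{(p)}}((R_{(p)})^n, R_{(p)}) \cong (R_{(p)})^n$; one checks that $\gamma$ agrees with the resulting identity map under these identifications. Third, since $M$ is finitely presented, choose an exact sequence $R^m \xrightarrow{\psi} R^n \to M \to 0$. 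Applying the left-exact functor $\Hom_R(-,R)$ gives an exact sequence $0 \to \Hom_R(M,R) \to \Hom_R(R^n,R) \xrightarrow{\psi^*} \Hom_R(R^m,R)$, and then localizing at $p$ (which is exact) gives an exact sequence $0 \to \Hom_R(M,R)_{(p)} \to \Hom_R(R^n,R)_{(p)} \xrightarrow{(\psi^*)_{(p)}} \Hom_R(R^m,R)_{(p)}$. On the other side, localizing the presentation gives $R_{(p)}^m \xrightarrow{\psi_{(p)}} R_{(p)}^n \to M_{(p)} \to 0$ exact (again by exactness of localization), and applying $\Hom_{R_{(p)}}(-,R_{(p)})$ gives $0 \to \Hom_{R_{(p)}}(M_{(p)},R_{(p)}) \to \Hom_{R_{(p)}}(R_{(p)}^n,R_{(p)}) \to \Hom_{R_{(p)}}(R_{(p)}^m,R_{(p)})$ exact. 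Fourth, I would assemble these into a commutative ladder with vertical maps $\gamma_M$, $\gamma_{R^n}$, $\gamma_{R^m}$ (commutativity being naturality of $\gamma$), in which the latter two vertical maps are isomorphisms by the free case. A diagram chase — or equivalently the five lemma applied after extending both rows by zeros on the left — then forces $\gamma_M$ to be an isomorphism as well.

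\textbf{Main obstacle.} None of the steps is genuinely hard: the only points requiring care are (a) checking that $\gamma$ really is natural in $M$ straight from its defining formula, so that the ladder commutes, and (b) making sure the identifications $\Hom_R(R^n,R)\cong R^n$ and $\Hom_{R_{(p)}}(R_{(p)}^n,R_{(p)})\cong R_{(p)}^n$ are compatible with $\gamma$ in the free case — both are bookkeeping rather than substance. I expect the write-up to be short, with the mild subtlety being only that finite presentability (not mere finite generation) is what makes the argument go through, since we need the presenting map $\psi$ to have finitely generated source so that $\Hom_R(R^m,R)$ is again finitely generated and localization behaves; this is exactly where the hypothesis is used and should be flagged.
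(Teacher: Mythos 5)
Your argument is correct, but it is worth noting that the paper does not actually prove this statement at all: its ``proof'' is a one-line citation of Lemma~3.3.8 of Weibel's \emph{Introduction to Homological Algebra}, applied to the multiplicative set $\Z\setminus(p)$ of $\Z$. What you have written is essentially the standard argument behind that citation: verify $\gamma$ is an isomorphism for finite free modules, choose a finite presentation $R^m\xrightarrow{\psi}R^n\to M\to 0$, use left-exactness of $\Hom_R(-,R)$ and $\Hom_{R_{(p)}}(-,R_{(p)})$ together with exactness of localization to get two left-exact rows, and conclude by naturality of $\gamma$ and a four-term five-lemma chase. All of this is sound; the only quibble is your stated reason for needing finite presentation rather than finite generation. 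The point is not that $\Hom_R(R^m,R)$ must be finitely generated; it is that the right-hand column of the ladder must be a \emph{finite} free module $R^m$ so that $\gamma_{R^m}$ is known to be an isomorphism, which is exactly what the surjectivity half of the chase uses. (With mere finite generation one still gets injectivity of $\gamma_M$ from the square involving $\gamma_{R^n}$, but surjectivity can fail.) With that small clarification, your write-up is a complete, self-contained replacement for the external citation.
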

\begin{proof}
This is Lemma 3.3.8 of \cite{Weibel_1994} applied to the multiplicative subset of $\Z\backslash(p)$ of $\Z$. 
\end{proof}

The isomorphism of Proposition~\ref{lem:homdual} allows us to unambiguously write
\[ M^*_{(p)} = (M^*)_{(p)} \cong (M_{(p)})^*.\]
When $R$ is Noetherian, any finitely generated module is finitely presented. 

\subsection{Localization of $p$-theories}
We now apply these concepts to $p$-theories. Recall that if $(S,b)$ is a $p$-theory of fusion order $n$, then $S$ is a finitely generated $R$-module where $R=\Z_n[t^{\pm}]$. Furthermore, $S^*$ is also finitely generated by Proposition~\ref{prop:noetherian-fg-dual}.

\begin{defn}
Let $(S,b)$ be a $p$-theory of fusion order $n$. Define $(S_{(p)}, b_{(p)})$ to be the $p$-theory of fusion order $p^k$, where $S_{(p)}$ is the localization of $S$ at $p$ and $b_{(p)}$ is the form associated to  the localization of the associated map $\Phi$, \emph{i.e.},
$\Phi_{(p)} \colon S_{(p)} \to  S_{(p)}^*$. That is,
\[B_{(p)}(x/a,y/b) = \Phi(x)(y)/(ab) \]
and $b_{(p)}(x/a,y/b)$ is the coefficient of the unit in $B_{(p)}(x/a,y/b)$. We call $(S_{(p)}, b_{(p)})$ the \textbf{associated $p$-local p-theory}.
\end{defn}

The following result is a reformulation of Theorem~\ref{thm:prime-decomp}.
\begin{theorem}\label{thm:sumoflocal}
Let $(S,b)$ be a $p$-theory. Then
\[(S,b) \cong  \mathop{\bigominus}_{p \  \mathrm{prime}} (S_{(p)}, b_{(p)}).\]
\end{theorem}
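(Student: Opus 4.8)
The plan is to leverage the localization machinery developed in the preceding subsection, in particular Proposition~\ref{lem:sumloc} applied to the ring $R = \Z_n[t^\pm]$, for which the canonical splittings $\nu_{R,p}$ exist and $R_{(p)} = 0$ for all but the finitely many primes dividing $n$. First I would recall that the underlying module map $\eta_S : S \to \bigoplus_p S_{(p)}$ is an isomorphism of $R$-modules by Proposition~\ref{lem:sumloc}, and that $S_{(p)} = 0$ unless $p \mid n$, so the direct sum is really the finite sum $\bigoplus_{p \mid n} S_{(p)}$, matching the definition of stacking (Definition of $\bigominus$) at the level of modules. The remaining content is to check that $\eta_S$ is an isometry, i.e. that it intertwines $b$ with $\perp_p b_{(p)}$, equivalently that it intertwines the associated maps $\Phi$ and $\bigoplus_p \Phi_{(p)}$.

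The key steps, in order, are: (1) Note that under the module isomorphism $S \cong \bigoplus_p S_{(p)}$, the dual $S^*$ decomposes as $S^* = \Hom_R(S,R) \cong \bigoplus_p \Hom_R(S_{(p)}, R)$, and then apply Proposition~\ref{lem:homdual} (valid since $R$ is Noetherian and $S$ is finitely generated, hence finitely presented) to identify $\Hom_R(S_{(p)},R)_{(p)} \cong (S_{(p)})^*$; one must also observe that $\Hom_R(S_{(p)},R)_{(p')} = 0$ for $p' \neq p$ since $S_{(p)}$ is $p^k$-torsion while $R_{(p')}$ kills $p$-power torsion, so $\Hom_R(S_{(p)},R) = \Hom_R(S_{(p)},R)_{(p)} \cong (S_{(p)})^*$ already. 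This gives $S^* \cong \bigoplus_p (S_{(p)})^*$ compatibly with $\eta_S$ and $\eta_{S^*}$. (2) Use the naturality square \eqref{eq:localizationnatural} (or equivalently the last diagram in Proposition~\ref{lem:sumloc}) applied to the $R$-linear map $\Phi : S \to S^*$ to conclude $\eta_{S^*} \circ \Phi = \big(\bigoplus_p \Phi_{(p)}\big) \circ \eta_S$. (3) Translate this back to forms: since $B_{(p)}$ is by definition the form whose associated map is $\Phi_{(p)}$, and since passing from a form to its associated map is a bijection (Proposition~\ref{prop:b-Phi}, and Proposition~\ref{prop:bijection} relating $b$ and $B$), the identity on associated maps gives $B(x,y) = \sum_p B_{(p)}(\eta_{S,p}(x), \eta_{S,p}(y))$ under the identification, hence the same for $b$ by taking the coefficient of $t^0$. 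This shows $\eta_S$ is an isomorphism of p-theories $(S,b) \xrightarrow{\ \sim\ } \bigominus_p (S_{(p)}, b_{(p)})$.

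A couple of small technical points I would need to verify along the way: that the cross terms vanish, i.e. $b_{(p)}(\eta_{S,p}(x), \eta_{S,q}(y))$-type contributions do not arise because $\Phi_{(p)}(S_{(p)})$ lands in $(S_{(p)})^* \subset S^*$ which pairs trivially with $S_{(q)}$ for $q \neq p$ — this is exactly the orthogonality built into Proposition~\ref{lem:sumloc} via $\eta_{R,p'}\nu_{R,p} = 0$; and that the fusion order of $(S_{(p)}, b_{(p)})$ is indeed $p^k$ with $k$ the $p$-adic valuation of $n$, which follows because $R_{(p)} = \Z_{p^k}[t^\pm]$ and $S_{(p)}$ is a module over it with exponent $p^k$ (one should check the exponent is exactly $p^k$, not smaller, or else simply allow $k$ to be the actual exponent; the statement of Theorem~\ref{thm:prime-decomp} as written takes $p^{k_i}$ from the factorization of $n$, and if some $S_{(p_i)}$ has smaller exponent one can still absorb it, but cleanly one notes $n S = 0$ forces the lcm of the exponents of the $S_{(p_i)}$ to divide $n$, and conversely localization is faithful so no prime power is lost).

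The main obstacle, such as it is, is bookkeeping rather than conceptual: one must be careful that all the identifications — $S \cong \bigoplus_p S_{(p)}$, $S^* \cong \bigoplus_p (S_{(p)})^*$, and $\Hom_R(S_{(p)},R)_{(p)} \cong (S_{(p)})^*$ — are compatible with each other and with the localization maps, so that the single naturality diagram of Proposition~\ref{lem:sumloc} applied to $\Phi$ genuinely delivers the isometry statement. Once the diagram chase is set up correctly, the proof is short. I would also remark that the corresponding statement for theories of excitations $(S,\theta)$ then follows from Proposition~\ref{prop:thetasandbswithstacking}, recovering the Corollary stated after Theorem~\ref{thm:prime-decomp}.
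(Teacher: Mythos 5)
Your proposal is correct and follows essentially the same route as the paper: the module decomposition and naturality diagram of Proposition~\ref{lem:sumloc} applied to $\Phi$, the identification of localized duals via Proposition~\ref{lem:homdual}, and the observation that $\bigoplus_p \Phi_{(p)}$ is the associated map of the stack, which together show $\eta_S$ is an isometry. Your extra bookkeeping (the vanishing of cross terms, the $p$-locality of $\Hom_R(S_{(p)},R)$, and the remark on the exact exponent of $S_{(p)}$) only fills in details the paper leaves implicit.
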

\begin{proof}
Let $(S,b) = (S',b')\ominus (S'',b'')$. Then $S^* =  (S')^* \oplus (S'')^* $ and the associated map $\Phi$ is  the direct sum of the associated maps, \emph{i.e.},  $\Phi = \Phi'\oplus \Phi''$. By Proposition~\ref{lem:sumloc}, the diagram 
\begin{equation} \label{eq:sumptheories}
\xymatrix{S \ar[r]^-{\eta_S}_-\cong \ar[d]_-{\Phi}  & \displaystyle\mathop{\bigoplus}_{\text{$p$ prime}} S_{(p)}  \ar[d]^-{\bigoplus \Phi_{(p)}}\\
 S^* \ar[r]^-{\eta_{S^*}}_-\cong &  \displaystyle\mathop{\bigoplus}_{\text{$p$ prime}} S_{(p)}^*
}
\end{equation}
is commutative, \emph{i.e.}, $(\bigoplus \Phi_{(p)}) \circ \eta_S  = \eta_{S^*} \circ \Phi$. The summands on the right-hand side of \eqref{eq:sumptheories} are non-zero only if $p|n$. The right-hand side is the stack of the associated $p$-local p-theories for $p|n$.
\end{proof}

\section{Additive Invariant}
\label{app:ktheory}

In this section, we give the proof of Theorem~\ref{thm:totalrankinvariance}, which states that the total rank of an $A$-filtration of a finitely generated module $M$ only depends $M$, and not on the choice of filtration, and so is an invariant of $M$. This invariant is the zeroth $K$-group of the additive category of finitely generated $R$-modules, and is often denoted by $G_0(R)$. Theorem~\ref{thm:totalrankinvariance} is a special case of the Fundamental Theorem for $G_0$-theory of Rings \cite[II.6.5]{Weibel_2013}. To be self-contained, we explain this theorem, following \cite[Ch. II, \S6]{Weibel_2013}.

We assume that our associative, commutative, unital ring $R$ is also Noetherian.
Let $G_0(R)$ be the abelian group with presentation given by a generator $[M]$ for each finitely generated $R$-module $M$, and a relation  $[M] = [L]+[N]$ for each short exact sequence 
\begin{equation}\label{eq:exactsectmodR}
0 \to L \to M \to N \to 0
\end{equation}
of finitely generated $R$-modules.\footnote{To avoid size issues, we can choose $M$ from a small, equivalent, subcategory of the category of finitely generated $R$-modules.}
In the group $G_0(R)$, 
\begin{enumerate}
\item $[0] =0$,
\item if $M\cong N$, then $[M] = [N]$, and
\item $[M\oplus N] = [M] + [N]$.
\end{enumerate}
The group $G_0(R)$ has the following universal property. Let $Z$ be an abelian group. Any function which assigns an element $d(M)$ of $Z$ to each finitely generated $R$-module $M$  so that $d(M) = d(L)+d(N)$ when there is an exact sequence \eqref{eq:exactsectmodR} extends to a group homomorphism
$d \colon G_0(R)  \to Z$.

\begin{example}
If $F$ is a field, then finitely generated modules over $F$ are just finite-dimensional vector spaces. The dimension extends to an isomorphism
$\dim\colon G_0(F) \to \Z$ and $G_0(F)\cong \Z$ generated by $[F]$.
\end{example}

Under certain conditions, given a ring homomorphism $f\colon R \to S$ (between Noetherian rings), we can define certain group homomorphisms between $G_0(R)$ and $G_0(S)$.
Note that $f$ allows us to view $S$ as an $R$-module by letting $r\cdot s = f(r)s$. In fact, any $S$-module $N$ can be viewed as an $R$-module via $r\cdot n = f(r)n$. We denote this $R$-module by $f_*N$.  We can also view any $R$-module $M$ as an $S$-module by defining $f^*M = S\otimes_R M$. 
\begin{enumerate}
\item If $f_*S$ is a finitely generated $R$-module and $N$ is a finitely generated $S$-module, then $f_*N$ is a finitely generated $R$-module and
$f_* \colon G_0(S) \to G_0(R)$
 defined by $f_*([N])=[f_*N]$ is a group homomorphism.
\item Suppose that $f_*S$ is a flat $R$-module, then
$f^*\colon G_0(R) \to  G_0(S) $
defined by  $f^*([M])=[f^* M]$ is a homomorphism.
\item (Serre's Formula) More generally, noting that $f^* M = \Tor_0^R(S,M)$, suppose that $f_*S$ has a finite resolution by flat $R$-modules, then the formula 
\[f^*([M]) =\sum_{i\geq 0} (-1)^i [\Tor^R_i(S,M)]\]
is a finite sum and defines a homomorphism $f^*\colon G_0(R) \to  G_0(S) $.
\end{enumerate}

\begin{example}
If $R$ is an integral domain, then it embeds in its field of fractions $f\colon R \to F$ and $F$ is flat over $R$. We thus get a homomorphism 
$f^* \colon G_0(R) \to G_0(F)$.
Following it with the dimension, we get a homomorphism
$\dim \colon G_0(R)  \to \Z $
which sends $[M]$ to $\dim(F\otimes_R M)$.
\end{example}

\begin{example}
Suppose that $A$ is a P.I.D., then any finitely generated $A$-module $M$ is isomorphic to one of the form
$A^r \oplus A/(a_1)  \oplus \cdots \oplus A/(a_1) $ where, by definition, $r=\mathrm{rank}(M)$.
Since $A/(a)$ sits in an exact sequence
\[\xymatrix{ 0 \ar[r] &  A  \ar[r]^-{a} & A \ar[r] & A/(a) \ar[r] & 0} \]
we see that $[A/(a)]=0$ and that
$[M] =   [A^r] + [A/(a_1)]  + \cdots + [A/(a_1)] = r[A]$. The rank thus extends to 
an  isomorphism $\mathrm{rank} \colon G_0(A) \to \Z$, and $G_0(A)\cong \Z$ generated by $[A]$.
 \end{example}
 
 \begin{example}
 Our next example is $G_0(\Z_{p^k})$. This is treated in Examples 6.2.2 and 6.2.4 of \cite{Weibel_2013}. A finitely generated $\Z_{p^k}$-module is a finite abelian group annihilated by $p^k$. Any finite abelian group $M$ has a composition series 
 \[  0=M^0 \subset M^1 \subset M^2 \subset \cdots \subset M^\ell=M\]
 where the quotients $M^s/M^{s-1}$ are simple groups.  
 By the Jordan--H\"older Theorem, the length $\ell$ of the composition series is a well-defined function of $M$ which turns exact sequences into sums, and so extends to a homomorphism $\mathrm{length} \colon G_0(\Z_{p^k}) \to \Z$. The only simple $\Z_{p^k}$-module is $\Z_p$ with $\mathrm{length}(\Z_p)=1$. It follows that  $G_0(\Z_{p^k}) \cong \Z$ generated by $[\Z_p]$.
 \end{example}

Our goal, which was Theorem~\ref{thm:totalrankinvariance}, is an immediate consequence of the following result.
\begin{theorem}[Fundamental Theorem of $G_0$-Theory]
Let $R=\Z_{p^k}[t^{\pm }]$ and $A=R/(p)$. Let $f\colon R \to A$ be the quotient map. Then
\[f_* \colon G_0(A) \to G_0(R)\]
is an isomorphism, and so $G_0(R) \cong \Z$ generated by $[A]$. For any $A$-filtration of a finitely generated $R$-module $M$ by $C^s$ ($s=0,\ldots, \ell$), if  $r_s$ the rank of $C_s$ as an $A$-module,
we have
\[[M] =\left( \sum_{s=1}^\ell r_s \right)[f_*A] \in G_0(R). \]
\end{theorem}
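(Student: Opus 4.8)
The plan is to identify $G_0(R)$ with $G_0(A)$ via the two maps $f_* \colon G_0(A) \to G_0(R)$ and a base-change map in the other direction, using localization and the dévissage-type argument already implicit in Section~\ref{sec:structure}. First I would observe that $R = \Z_{p^k}[t^\pm]$ is Noetherian, and consider the quotient map $f \colon R \to A = R/(p)$. Since $A = R/(p)$ is obtained by killing a single element, $f_* A$ admits the finite free resolution $0 \to R \xrightarrow{\,p\,} R \to A \to 0$ (this uses that $p$ is a regular element of $R$, which holds because $p \notin (p^2)$ in $\Z_{p^k}$ for $k\geq 2$, i.e. $p$ is not a zero divisor in $R$ by Proposition~\ref{prop:R-zero-div} — here I must be slightly careful, since for $k\geq 2$, $p$ \emph{is} a zero divisor; the correct statement is that $f_* A$ still has a flat resolution because $\Tor^R_i(A, -)$ vanishes for $i$ large enough by Serre's formula with the $2$-periodic resolution $\cdots \to R \xrightarrow{p^{k-1}} R \xrightarrow{p} R \to A \to 0$, which is finite only after truncation — so instead I would invoke the cited Fundamental Theorem \cite[II.6.5]{Weibel_2013} for the reduction by the ideal $(p)$, whose hypotheses are that $R$ is Noetherian and $(p)$ is generated by a central non-zero-divisor; when $p$ is a zero divisor one instead uses the filtration argument below directly, bypassing $f^*$).

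The cleaner route, and the one I would actually carry out, is purely via filtrations. Given any finitely generated $R$-module $M$, Proposition~\ref{prop:pt-is-A} (or rather its proof, dropping the torsion-free hypothesis) shows the $p$-torsion filtration $0 = M^0 \subset M^1 \subset \cdots \subset M^k = M$ is an $A$-filtration, with quotients $M_s = M^s/M^{s-1}$ finitely generated $A$-modules. From the short exact sequences $0 \to M^{s-1} \to M^s \to M_s \to 0$ and induction on $s$, the relations defining $G_0(R)$ give $[M] = \sum_{s=1}^k [M_s]$ in $G_0(R)$, where each $[M_s] = [f_* M_s] = f_*([M_s])$. Since $A$ is a PID, the example computations in the excerpt give $G_0(A) \cong \Z$, generated by $[A]$, with $[M_s] = r_s [A]$ where $r_s = \operatorname{rank}_A M_s$. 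Hence $[M] = \bigl(\sum_s r_s\bigr)[f_* A]$ in $G_0(R)$, which shows $f_*$ is surjective. More generally, the same telescoping argument applied to an \emph{arbitrary} $A$-filtration $0 = C^0 \subset \cdots \subset C^\ell = M$ — using that $[C_s] = r_s[A]$ in $G_0(A)$ with $C_s$ a (possibly non-free) finitely generated $A$-module of rank $r_s$, where $[A/(a)] = 0$ kills the torsion — yields $[M] = (\sum_s r_s)[f_* A]$ for that filtration too; combined with the injectivity below this forces all $A$-filtrations to have the same total rank, which is exactly Theorem~\ref{thm:totalrankinvariance}.

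For injectivity of $f_*$, I would construct a left inverse $d \colon G_0(R) \to \Z$ by a localization-plus-rank argument. Pass to the ring of fractions: $R = \Z_{p^k}[t^\pm]$ is not a domain, but localizing at the prime $(p) \subset R$ (inverting everything not in $(p)$) gives a local ring $R_{(p)}$ with residue field $F = \Z_p(t)$, and $M \mapsto \dim_F\bigl( F \otimes_{R} (\text{top } p\text{-torsion layer})\bigr)$, more precisely $M \mapsto \sum_s \operatorname{rank}_A M_s$, is well-defined and additive on short exact sequences (the filtration layers of an extension are related to those of the sub and quotient by the Snake Lemma / long exact sequence argument used in Proposition~\ref{prop:filtration-maps}). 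This gives $d \colon G_0(R) \to \Z$ with $d(f_* A) = 1$, so $f_*$ is injective; together with surjectivity above, $f_*$ is an isomorphism and $G_0(R) \cong \Z \cdot [f_* A]$. The main obstacle is verifying that the assignment $M \mapsto \sum_s \operatorname{rank}_A M_s$ is genuinely additive on short exact sequences of \emph{arbitrary} finitely generated $R$-modules (not just torsion-free ones) — i.e. that $p$-torsion layers behave additively — which requires a careful diagram chase generalizing Proposition~\ref{prop:filtration-maps}; this is precisely where I would spend the real work, or alternatively cite \cite[II.6.5]{Weibel_2013} directly once the hypotheses are arranged.
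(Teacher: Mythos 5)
Your surjectivity argument (telescoping an arbitrary $A$-filtration and using $G_0(A)\cong\Z$ via rank, since $A$ is a PID and torsion classes die) is exactly the paper's, and that half is fine. The genuine gap is in injectivity: you never actually produce a homomorphism $G_0(R)\to\Z$ that is nonzero on $[f_*A]$. Your candidate is $d(M)=\sum_s\operatorname{rank}_A M_s$ (total rank of the $p$-torsion filtration), and you explicitly defer its additivity on arbitrary short exact sequences to ``a careful diagram chase generalizing Proposition~\ref{prop:filtration-maps}'' or to a citation. But that additivity \emph{is} the nontrivial content here --- it is essentially Theorem~\ref{thm:totalrankinvariance} itself, the statement this theorem is being invoked to prove --- and Proposition~\ref{prop:filtration-maps} does not supply it: it compares the filtrations of the source and target of a single map between torsion-free modules, not the three terms of an extension of arbitrary finitely generated modules (where, for instance, $L^s\to M^s\to N^s$ need not be exact on the right). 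The intermediate formula you float, $\dim_F\bigl(F\otimes_R(\text{top $p$-torsion layer})\bigr)$, also does not compute the total rank: for $M=R$ it gives $1$ while the total rank is $k$. So as written, injectivity is not established.

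The paper closes this step by a different transfer map: Serre's formula applied to $\rho\colon R\to R/(1-t)\cong\Z_{p^k}$, which is legitimate because $1-t$ is a non-zero-divisor, so $0\to R\xrightarrow{\,1-t\,}R\to\Z_{p^k}\to 0$ is a finite flat resolution (this sidesteps exactly the obstruction you ran into in your first paragraph, where $p$ being a zero divisor ruins the analogous resolution of $A$). One then computes $\rho^*f_*[A]=[\Tor_0^R(\Z_{p^k},A)]-[\Tor_1^R(\Z_{p^k},A)]=[\Z_p]$, a generator of $G_0(\Z_{p^k})\cong\Z$ (detected by length), so $\rho^*\circ f_*$ is an isomorphism and $f_*$ is injective. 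If you prefer your localization idea, it can be repaired by defining $d([M])=\operatorname{length}_{R_{(p)}}(M_{(p)})$ rather than via torsion layers: exactness of localization together with additivity of length over the Artinian local ring $R_{(p)}$ makes $d$ additive with no diagram chase, and $d(f_*A)=\dim_{\Z_p(t)}\operatorname{Frac}(A)=1$. Either route works, but some such construction must actually be carried out; in your proposal it is precisely the missing piece.
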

\begin{proof}
We first show surjectivity of $f_*$.
Fix any $A$-filtration of $M$. 
(Any finitely generated $R$-module $M$ has an $A$-filtration as we can take the $p$-torsion filtration.) Since $R$ is Noetherian, each $C^s$ is finitely generated and so are the quotients $C_s=C^s/C^{s-1}$. The exact sequences $0 \to C^{s-1} \to C^s \to C_s \to 0$ give the identity $ [C^s] = [C_s] +[C^{s-1}]$. 
Inductively, we deduce that
$ [M] = \sum_{s=1}^{\ell}  [C_s] $. 
Since $C_s$  is an $A$-module, it is in the image of $f_*$. Since  $A$ is a P.I.D., $[C_s] = r_{s} [A]$ in $G_0(A)$.
Therefore, $f_*$ is surjective and 
\[[M] = \left( \sum_{s=1}^\ell r_s \right)[ f_*A].\]

To show injectivity, we consider the ring homomorphism $\rho\colon R  \to R/(1-t)\cong \Z_{p^k}$. The exact sequence 
\[\xymatrix{ 0 \ar[r] &  R  \ar[r]^-{1-t} & R \ar[r] & \Z_{p^k} \ar[r] & 0} \]
 is a resolution of $ \Z_{p^k} $ by flat $R$-modules.
So, we have a homomorphisms
$\rho^* \colon G_0(R) \to G_0(\Z_{p^k})$
given by 
$\rho^*([M]) = \sum_i (-1)^i [\Tor^R_i(\Z_{p^k}, M)]$.
For $M=f_*A$, the $\Tor$ groups are computed as the homology of 
\[\xymatrix{ 0 \ar[r] &  A  \ar[r]^-{1-t} & A \ar[r]& 0} .\]
We thus have $\Tor_0^{R}(\Z_{p^k}, f_*A) = \Z_{p}$ and $\Tor_1^R(\Z_{p^k}, f_*A)=0$.   Therefore,
$\rho^*\circ f_*([A]) = \rho^*([f_*A]) = [\Z_p]$ in $G_0(\Z_{p^k})$. We conclude that
$ \rho^*\circ f_* \colon G_0(A) \to G_0(\Z_{p^k})$
is an isomorphism, and so $f_*$ is injective.
\end{proof}

\bibliography{bibliography}

\end{document}